\documentclass[a4paper,12pt]{article}
\usepackage{geometry}
\usepackage{amsmath}
\usepackage{amsthm}
\usepackage{authblk}
\usepackage{hyperref}
\usepackage{bm}
\usepackage{cite}
\usepackage{mathrsfs}
\usepackage{amssymb}
\usepackage{amsfonts}
\usepackage{graphicx}
\usepackage{subcaption}
\usepackage{tikz}
\usepackage{mathtools}

\usetikzlibrary{intersections}
\allowdisplaybreaks

\theoremstyle{plain}
\newtheorem{prop}{Proposition}[section]

\newtheorem{theo}{Theorem}[section]
\newtheorem{lem}{Lemma}[section]

\newtheorem{assu}{Assumption}[section]
\newtheorem{rhp}{RH problem}[section]
\newtheorem{Dbar}{$\bar{\partial}$ problem}[section]
\newtheorem{rhp-Dbar}{RH-$\bar{\partial}$ problem}[section]

\def\rd{{\rm d}}
\def\ri{{\rm i}}

\allowdisplaybreaks
\newtheorem{remark}{Remark}[section]

\title{\bf Asymptotic stability of solutions to the good Boussinesq equation in dispersive wave region}

\author{Yingmin Yang\thanks{Corresponding author: yangym@mail.bnu.edu.cn}}

\affil{\small School of Mathematical Sciences,
	Beijing Normal University, Beijing 100875, China}

\date{}

\begin{document}
	\maketitle
	
		\begin{abstract}
		This work studies the asymptotic stability of solutions to the good Boussinesq equation in dispersive wave region when the reflection coefficients associated with the initial data belong to weighted Sobolev space. The $\bar{\partial}$-steepest descent method is applied to the Riemann-Hilbert problem and the long-time asymptotic expansion of the solution are obtained up to an optimal error of order $\mathcal{O}(t^{-3/4})$. Compared with previous results, we extend the initial data from the rapidly decaying Schwartz space to a weighted Sobolev space, and prove the asymptotic stability of the solution in dispersive wave region.\\
\par
\vspace{6pt}
	{\footnotesize
		\noindent{\bf Keywords:} Asymptotic stability, Riemann-Hilbert problem, good Boussinesq equation, $\bar{\partial}$-steepest descent method}
		\end{abstract}
		
	  \section{Introduction}
	  \ \ \ \
	  Investigating the long-time asymptotic behaviors and asymptotic stability of solutions to integrable nonlinear evolution equations is a core and challenging topic \cite{CMP_2002,JPSJ_1979}. This research not only deeply reveals the intrinsic laws of nonlinear physical phenomena but also provides key examples for verifying and developing nonlinear mathematical theories. As a fundamental model describing a class of important physical processes, the long-time behavior analysis of solutions to the good Boussinesq equation \cite{PLA_1983,SIAM_JSSC_1984,MC_1991,JMP_1988} has long attracted significant attention. Traditional methods such as the Inverse Scattering Transform \cite{SIAM_1981,Studies_1974,Solitons_1980} often face bottlenecks when dealing with such problems due to the complexity of spectral issues and the difficulties in long-time asymptotic analysis. In recent years, the $\bar \partial$-nonlinear steepest descent method \cite{IMRP_2006,IMRN_2008} has emerged as a powerful and more universal analytical tool, demonstrating great potential in solving such problems. This paper will systematically study the long-time asymptotic behavior of solutions to the good Boussinesq equation using the $\bar \partial$-nonlinear steepest descent method.
	
	  The standard form of the good Boussinesq equation \cite{Indiana_2022,lenells-wang-good,Studies_gB_2024} is
	  \begin{equation}\label{gB_uxx}
	  	u_{tt}-u_{xx}+(u^2)_{xx}+u_{xxxx}=0,
	  \end{equation}
	  which holds profound physical significance in fluid dynamics \cite{fluid_CRM_2017} and nonlinear lattice dynamics. It was initially proposed as a ``well-posed" modification of the Boussinesq-type equations \cite{WaveMotion_2001, AMC_2007} for describing shallow water wave propagation in compressible fluids, avoiding the short-wave instability that may arise in its ``bad" Boussinesq counterpart . Furthermore, it also accurately describes nonlinear waves in nematic liquid crystals and long-wave vibrations in one-dimensional atomic chains.
	
	  This paper considers the following version of the good Boussinesq equation
	  \begin{equation}\label{gB}
	  	u_{tt}+\frac{4}{3}\left(u^2 \right)_{xx}+\frac{1}{3}u_{xxxx}=0,
	  \end{equation}
	  which can be obtained through a simple transformation $u\to u+\frac{1}{2}$ from the good Boussinesq \eqref{gB_uxx}. Equation \eqref{gB} can be rewritten as
	  \begin{equation}\label{regB}
	  	\left\lbrace
	  	\begin{aligned}
	  		&w_t+\frac{1}{3}u_{xxx}+\frac{4}{3}\left(u^2 \right)_x=0,\\
	  		&u_t=w_x,
	  	\end{aligned}\right.
	  \end{equation}
	  and possesses a $3\times3$ Lax pair. In recent years, significant progress has been made in the study of the equation \eqref{gB}. Researchers have utilized the inverse scattering transform \cite{CPAM_1982,Indiana_2022} to construct multi-soliton solutions \cite{ADE_2020} and rational solutions \cite{TMA_2017}, and have analyzed their stability \cite{CMP_1988}. Regarding asymptotic behavior, the classical Deift-Zhou method \cite{Annals_1993} has been applied by analyzing a regularized version of the RH problem and then transferring the results to the singular problem. This approach has provided the long-time asymptotic behavior for equation \eqref{gB} in \cite{lenells-wang-good} under Schwartz initial data with parameter $x/t$ within the compact subsets of $(0,\infty)$. However, this reduction process may sometimes obscure the essential structure of the problem and might not be applicable to more general boundary conditions or potential functions \cite{Advances_Tian_2022,CTP_2024,JDE_Geng_2024,AHPoincare_2022}.

	  The Deift-Zhou method relies on the ability to analytically extend the scattering matrix corresponding to the spectral problem to specific regions of the complex plane, thereby constructing a Riemann-Hilbert (RH) problem \cite{NoticesAMS_2003,SIAMJMA_1989,PRIMS_1984}. The $\bar \partial$-nonlinear steepest descent method \cite{Poincare_2018,Studies_2021,McLaughlin_2008}  emerged as a fundamental generalization of the Deift-Zhou framework. The core of this method lies in recognizing that the scattering data (or the extended matrix function) is not analytic over the entire complex plane but instead satisfies a $\bar \partial$-problem. By directly analyzing and controlling this $\bar \partial$-problem, we can bypass the strong constraint of analyticity, thereby enabling the treatment of a broader class of equations.
	
	  Although the $\bar \partial$-method offers significant advantages, its application, particularly in higher-order (e.g., $3\times3$) Lax pair problems \cite{JMP_2022,Tzitzeica_2024} are few. Most successful cases have been concentrated on $2\times2$ systems \cite{Advances_2022,CMP_2023,Nonlinearity_2017,Sigma_2022,Advances_Fan_2023,JDE_Tian_2023,Proceedings_2025,PhysicaD_2023}. For $3\times3$ systems \cite{Studies_Wang_2024,Studies_Fan_2016,Acta_2021}, the spectral structure is more complex, and the expressions for the $\bar \partial$-derivatives, along with the corresponding asymptotic analysis, become considerably more intricate. This constitutes a current cutting-edge challenge and difficulty in ongoing research.

	  This study constitutes a substantive endeavor to apply and advance the $\bar \partial$-method to higher-order Lax pairs. By comprehensively addressing the $3\times3$ $\bar \partial$-problem for the good Boussinesq equation, we will establish a standardized framework for tackling such complex systems. This framework will encompass the procedures for formulating the $\bar \partial$-problem, designing the steepest descent transformations, and performing precise error estimates, thereby providing a methodological reference for future research on more intricate $3\times3$ integrable systems.
	
	  Compared to the conclusions obtained in Ref. \cite{lenells-wang-good} using the Deift-Zhou nonlinear steepest descent method, this paper imposes significantly looser requirements on the initial data. It extends the initial data space from the Schwartz space to a weighted Sobolev space to which the scattering coefficients (obtained via the nonlinear Fourier transform of the initial data) belong. Consequently, this paper possesses stronger universality and can handle a broader range of initial data with slower decay. Furthermore, regarding the asymptotic expansion derived by the $\bar \partial$-method, the error term is obtained by analyzing the integral equation for the solution of the $\bar \partial$-problem. This approach focuses more on proving the existence and convergence of the solution, as well as describing its qualitative behavior. Finally, the conceptual framework of the $\bar \partial$-method is more readily generalizable to integrable systems with higher-order Lax pair, as exemplified in this paper by the good Boussinesq equation $\eqref{regB}$ which possesses a 3$\times$3 Lax pair.
	
	  The structure of this paper is organized as follows. The main results are presented in Section \ref{sec_2}. Section \ref{sec_dispersive} is devoted to a series of deformations of the original RH problem. Subsequently, the work in Section \ref{sec_4} focuses on constructing local parametrices at the three critical points, estimating the resulting small-norm RH problem, and deriving the asymptotic leading term of the solution to the RH problem. In Section \ref{sec_5}, error estimates are carried out for the two $\bar \partial$-problems. Finally, the last section provides the long-time asymptotic behavior of the solution to the good Boussinesq equation with initial data in a weighted Sobolev space, including detailed expressions for each asymptotic term and the corresponding error estimates.

	  \section{Main result}\label{sec_2}
	  \ \ \ \
	  Our results are almost entirely composed of two reflection coefficients
	  \begin{equation*}
	  	\left\lbrace
	  	\begin{aligned}
	  		&r_1(k)=\frac{s_{12}(k)}{s_{11}(k)}:\, (0,\infty)\to\mathbb{C},\\
	  		&r_2(k)=\frac{s^A_{12}(k)}{s^A_{11}(k)}:\, (-\infty,0)\to\mathbb{C},
	  	\end{aligned}
	  	\right.
	  \end{equation*}
	   which are determined by the initial values $u_0(x):=u(x,0)$ and $w_0(x):=w(x,0)$. Functions $r_1(k)$, $r_2(k)$, $s(k)$ and $s^A(k)$ mentioned above are fully defined in Ref. \cite{Indiana_2022}, and this paper also adopts the same construction. Define the weighted Sobolev space
	   {\begin{equation}\label{sobolev}
	  	H^{3,4}=\left\lbrace f(x)\in L^2\left| \right.\left(1+\left|x \right|^4  \right)\partial^j f(x)\in L^2, \text{for}\,\, j=1,2,3 \right\rbrace ,
	  \end{equation}
	  and in the following discussion, we assume that $r_1(k)\in H^{3,4}(0,\infty)$ and $r_2(k)\in H^{3,4}(-\infty
	  ,0)$, then they can be continuously extended to $k=0$. The selection of this space will be explained in the next subsection.}

	  \subsection{Riemann-Hilbert problem for $M(x,t,k)$ of the good Boussinesq equation} \ \ \ \
	  The RH problem corresponding to the good Boussinesq equation \eqref{gB} has already been fully constructed in Ref. \cite{Indiana_2022}. Here, we state some of the main content of the RH problem.
	
	  \begin{rhp}\label{rhp_M}
	  	Find a $3 \times 3$-matrix valued function $M(x, t, k)$ with the following properties:
	  	
	  	\begin{enumerate}
	  		\item The function $M(x, t, \cdot) : \mathbb{C} \setminus \Gamma \rightarrow \mathbb{C}^{3 \times 3}$ is analytic, where $\Gamma$ is defined by Fig. \ref{figGamma}.
	  		\item As $k$ approaches $\Gamma \setminus \{0\}$ from the left and right, the boundary values $M_+$ and $M_-$ of $M$ exist and are continuous on $\Gamma \setminus \{0\}$. Furthermore, they are related by
	  		\begin{equation*}
	  			M_+(x, t, k) = M_-(x, t, k) v(x, t, k), \quad k \in \Gamma,
	  		\end{equation*}
	  		where $v(x,t,k)=v_j(x, t, k)$ for $k \in \Gamma_j$,
	  		\begin{equation}\label{v}
	  			\begin{aligned}
	  				&v_1 = \begin{pmatrix}
	  					1 & -r_1(k){\rm{e}}^{-\theta_{21}} & 0 \\
	  					r_1^*(k){\rm{e}}^{\theta_{21}} & 1 - |r_1(k)|^2 & 0 \\
	  					0 & 0 & 1
	  				\end{pmatrix}, \quad
	  				v_2 = \begin{pmatrix}
	  					1 & 0 & 0 \\
	  					0 & 1 - r_2(\omega k)r_2^*(\omega k) & -r_2^*(\omega k){\rm{e}}^{-\theta_{32}} \\
	  					0 & r_2(\omega k){\rm{e}}^{\theta_{32}} & 1
	  				\end{pmatrix}, \\
	  				&v_3 = \begin{pmatrix}
	  					1 - r_1(\omega^2 k)r_1^*(\omega^2 k) & 0 & r_1^*(\omega^2 k){\rm{e}}^{-\theta_{31}} \\
	  					0 & 1 & 0 \\
	  					-r_1(\omega^2 k){\rm{e}}^{\theta_{31}} & 0 & 1
	  				\end{pmatrix}, \quad
	  				v_4 = \begin{pmatrix}
	  					1 - |r_2(k)|^2 & -r_2^*(k){\rm{e}}^{-\theta_{21}} & 0 \\
	  					r_2(k){\rm{e}}^{\theta_{21}} & 1 & 0 \\
	  					0 & 0 & 1
	  				\end{pmatrix}, \\
	  				&v_5 = \begin{pmatrix}
	  					1 & 0 & 0 \\
	  					0 & 1 & -r_1(\omega k){\rm{e}}^{-\theta_{32}} \\
	  					0 & r_1^*(\omega k){\rm{e}}^{\theta_{32}} & 1 - r_1(\omega k)r_1^*(\omega k)
	  				\end{pmatrix}, \quad
	  				v_6 = \begin{pmatrix}
	  					1 & 0 & r_2(\omega^2 k){\rm{e}}^{-\theta_{31}} \\
	  					0 & 1 & 0 \\
	  					-r_2^*(\omega^2 k){\rm{e}}^{\theta_{31}} & 0 & 1 - r_2(\omega^2 k)r_2^*(\omega^2 k)
	  				\end{pmatrix},
	  			\end{aligned}
	  		\end{equation}
	  		and $\theta_{ij}:=\theta_{ij}(x, t, k) = (l_i - l_j)x + (z_i - z_j)t$, $l_j(k) = \omega^j k$, $z_j(k) = \omega^{2j} k^2$, $\omega={\rm{e}}^{\frac{2\pi \ri}{3}}$.
	  		 {\item As $k \rightarrow \infty$, for
	  	 $k \in \mathbb{C}\setminus\Gamma$,
	  	 \begin{equation}\label{M_ktoinfty}
	  	 	M(x,t,k)=I+\frac{M^{(1)}(x,t)}{k}+\frac{M^{(2)}(x,t)}{k^2}+\mathcal{O}\left(\frac{1}{k^3} \right), 
	  	 \end{equation}
	  	 where the matrices $M^{(1)}$ and $M^{(2)}$ satisfy $M^{(1)}_{12}=M^{(1)}_{13}=M^{(2)}_{12}+M^{(2)}_{13}=0$.
	  	 	\item As $k\to 0$,  for $k \in \bar{D}_{1}$,
	  	 	\begin{equation}\label{M_kto0}
	  	 			M(x, t, k) = \sum_{l=-2}^{1} \mathcal{M}_{1}^{(l)}(x, t) k^{l} + \mathcal{O}(k^{2}),
	  	 	\end{equation}
	  	 where
	  			\begin{align*}
	  				&\mathcal{M}_1^{(-2)}(x,t)=\alpha(x,t)\begin{pmatrix}
	  					\omega & 0 & 0\\
	  					\omega & 0 & 0\\
	  					\omega & 0 & 0\\
	  				\end{pmatrix},\\
	  				&\mathcal{M}_1^{(-1)}(x,t)=\beta(x,t)\begin{pmatrix}
	  					\omega^2 & 0 & 0\\
	  					\omega^2 & 0 & 0\\
	  					\omega^2 & 0 & 0\\
	  				\end{pmatrix}+\gamma(x,t)\begin{pmatrix}
	  				\omega^2 & 0 & 0\\
	  				1 & 0 & 0\\
	  				\omega & 0 & 0\\
	  				\end{pmatrix}+\delta(x,t)\begin{pmatrix}
	  				0 & 1-\omega & 0\\
	  				0 & 1-\omega & 0\\
	  				0 & 1-\omega & 0\\
	  				\end{pmatrix},
	  			\end{align*}
	  			and the third column of $\mathcal{M}_1^{(0)}(x,t)$ is given by
	  			\begin{equation*}
	  				[\mathcal{M}_1^{(0)}(x,t)]_3=\epsilon(x,t)(1\,\, 1\,\, 1)^T,
	  			\end{equation*}
	  			holds for the scalar functions $\alpha$, $\beta$, $\gamma$,  $\delta$ and $\epsilon$ that are functions of $x$ and $t$ independent of the variable $k$.
	  			}
	  		\item $M$ satisfies the symmetries
	  		\begin{equation}\label{sym}
	  			M(x, t, k) = \mathcal{A} M(x, t, \omega k) \mathcal{A}^{-1} = \mathcal{B} \overline{M(x, t, \bar{k})} \mathcal{B}, \quad k \in \mathbb{C} \setminus \Gamma,
	  		\end{equation}
	  		where
	  		\begin{equation*}
	  			\mathcal{A} := \begin{pmatrix} 0 & 0 & 1 \\ 1 & 0 & 0 \\ 0 & 1 & 0 \end{pmatrix}, \quad  \quad \mathcal{B} := \begin{pmatrix} 0 & 1 & 0 \\ 1 & 0 & 0 \\ 0 & 0 & 1 \end{pmatrix}.
	  		\end{equation*}
	  	\end{enumerate}
	  \end{rhp}
	  \begin{figure}[htbp]
	  	\centering
	  	\begin{tikzpicture}[scale=1.1]
	  		\draw[very thick, black!20!blue, -latex] (0,0) -- (2,0);
	  		\draw [very thick,black!20!blue](0,0) -- (3,0);
	  		\draw[very thick, black!20!blue, latex-] (-2,0) -- (0,0);
	  		\draw [very thick,black!20!blue](-3,0) -- (0,0);
	  		\draw[very thick, black!20!blue, -latex] (0,0) -- (1,1.732);
	  		\draw [very thick,black!20!blue](0,0) -- (1.5,1.5*1.732);
	  		\draw[very thick, black!20!blue, latex-] (-1,-1.732) -- (0,0);
	  		\draw [very thick,black!20!blue](-1.5,-1.5*1.732) -- (0,0);
	  		\draw[very thick, black!20!blue, -latex] (0,0) -- (-1,1.732);
	  		\draw [very thick,black!20!blue](0,0) -- (-1.5,1.732*1.5);
	  		\draw[very thick, black!20!blue, latex-] (1,-1.732) -- (0,0);
	  		\draw [very thick,black!20!blue](0,0) -- (1.5,-1.5*1.732);
	  		\node[red!70!black,below] at (1.8,0) {$1$};
	  		\node[red!70!black,right] at (1,1.5) {$2$};
	  		\node[red!70!black,left] at (-1,1.5) {$3$};
	  		\node[red!70!black,below] at (-1.8,0) {$4$};
	  		\node[red!70!black,left] at (-1,-1.5) {$5$};
	  		\node[red!70!black,right] at (1,-1.5) {$6$};
	  		\node[right] at (3,0) {${\rm{Re}\,k}$};
	  		\draw (0:0.4cm) arc (0:60:0.4cm);
	  		\node[right] at (0.25,0.2) {\small$\pi/3$};
	  		\node at (1.732,0.8) {$D_1$};
	  		\node[right] at (-0.4,1.7) {$D_2$};
	  		\node[left] at (-1.3,0.8) {$D_3$};
	  		\node[below] at (-1.6,-0.7) {$D_4$};
	  		\node[left] at (0.4,-1.7) {$D_5$};
	  		\node[right] at (1.4,-0.95) {$D_6$};
	  	\end{tikzpicture}
	  	\caption{The jump contour $\Gamma$ decomposes the complex $k$-plane into six parts.}
	  	\label{figGamma}
	  \end{figure}

	   {\begin{remark}
	  	Regarding the existence and uniqueness of solutions to RH problem \ref{rhp_M} and the selection of the space to which the reflection coefficients belong, we provide the following explanations:
	  	\begin{description}
	  		\item[{\footnotesize Existence and uniqueness of solutions:}] It has been discussed in Ref.~\cite{lenells-2018} that results concerning the existence of solutions to RH problems are relatively rare and often depend on specific symmetries, and a theory for $n \times n$ matrix RH problems has been developed, which the case $n=3$ in this paper still applies. Ref.~\cite{Indiana_2022} discusses the existence of solutions to the RH problem associated with the good Boussinesq equation and due to its different symmetry conditions compared to the bad Boussinesq equation, to the best of our knowledge, the vanishing lemma has not yet been established. Ref.~\cite{Bona-1988} proves that the good Boussinesq equation is globally well-posed on the real line for initial data $(u(x,0),u_t(x,0))\in H^s\times H^{s-2}$ with $s>\frac{5}{2}$. During the steepest descent analysis, the existence and uniqueness of the solution will be established by analyzing a small-norm RH problem. In Ref.~\cite{Indiana_2022}, it was proved that for the good Boussinesq equation with initial data in Schwartz space, if a solution to the corresponding RH problem exists, then it is unique. In this paper, for the given weighted Sobolev space in (\ref{sobolev}), the uniqueness of the solution to the RH problem is given in Proposition \ref{prop_solution_exist}, and a brief proof is completed.
	  		\item[{\footnotesize Selection of the functional space:}]In the first point, we have explained that the good Boussinesq equation is globally well-posed for initial data $(u(x,0),u_t(x,0))\in H^s\times H^{s-2}$ with $s>\frac{5}{2}$ in \cite{Bona-1988}. Furthermore, in Ref.~\cite{Indiana_2022}, during the direct scattering process for the initial value problem, it is necessary to analyze the behavior of the eigenfunctions as the spectral parameter $k\to\infty$ and $k\to0$, which involves integration by parts on Volterra integral equations. In Proposition \ref{prop_reconst} described below, we need to ensure that the series expansion \eqref{M_ktoinfty} converges uniformly with respect to the variables $x$ and $t$ as $k \to \infty$, and that the series expansion \eqref{M_kto0} converges uniformly with respect to $x$ and $t$ as $k \to 0$. Based on these comprehensive considerations, we specify that the reflection coefficients belong to the weighted Sobolev space $H^{3,4}$. In fact, in the subsequent treatment of the $\bar{\partial}$-problem, our requirements for the reflection coefficients $r_1$ and $r_2$ are merely $k\,r_1(k)\in L^2(0,\infty)$ and $k\,r_2(k)\in L^2(-\infty,0)$; the weighting introduced here is to ensure the existence of solutions to the good Boussinesq equation.
	  	\end{description}
	  	Nevertheless, assuming the existence of the solution to the RH problem remains indispensable in certain propositions and theorems. The main contribution of this paper is that, compared to Ref.~\cite{lenells-wang-good}, we make weaker assumptions on the initial data, improve the error estimates, and provide a systematic framework for the application of the $\bar{\partial}$-steepest descent method to $3 \times 3$ spectral problems. Future work will also focus on the bijection problem for the good Boussinesq equation, investigating whether the assumptions on the initial data can be further relaxed.
	  \end{remark}}

	   {\begin{remark}
	  		In RH problem \ref{rhp_M}, condition 4 clearly states that the matrix eigenfunction $M(x,t,k)$ has a second-order pole as the spectral parameter $k\to0$, and specifies the exact form of this singularity. To address the difficulties caused by this singularity, introduce the vector eigenfunction $N(x,t,k)=(\omega \,\, \omega^2 \,\, 1)M(x,t,k)$. It is straightforward to verify that $N(x,t,k)$ satisfies the RH problem \ref{rhp_n} described below; for further details, see Ref. \cite{Indiana_2022}. Subsequently, when applying the $\bar\partial$-steepest descent method to analyze the RH problem, we will continue to work with $N(x,t,k)$, which remains non-singular as $k\to0$.
	  \end{remark}}
  
   {\begin{rhp}\label{rhp_n}
  		Find a $1 \times 3$-row-vector valued function $N(x, t, k)$ with the following properties:
  		\begin{enumerate}
  			\item The function $N(x, t, \cdot) : \mathbb{C} \setminus \Gamma \rightarrow \mathbb{C}^{1 \times 3}$ is analytic.
  			\item As $k$ approaches $\Gamma \setminus \{0\}$ from the left and right, the boundary values $N_+$ and $N_-$ of $n$ exist and are continuous on $\Gamma \setminus \{0\}$. Furthermore, they are related by
  			\begin{equation*}
  				N_+(x, t, k) = N_-(x, t, k) v(x, t, k), \quad k \in \Gamma,
  			\end{equation*}
  		   where $v(x,t,k)$ is given by equation \eqref{v}.
  			\item As $k \rightarrow \infty$,
  			$N(x, t, k) = (\omega \,\, \omega^2 \,\, 1) + \mathcal{O}\left(k^{-1} \right) $ for $k \in \mathbb{C}\setminus\Gamma$.
  			\item As $k\to 0$, $
  				N(x, t, k) =\mathcal{O}(1)$.
  			\item $N$ satisfies the symmetries
  			\begin{equation*}
  				N(x, t, k) = \omega\, N(x, t, \omega k) \mathcal{A}^{-1} = \overline{N(x, t, \bar{k})} \mathcal{B}, \quad k \in \mathbb{C} \setminus \Gamma.
  			\end{equation*}
  		\end{enumerate}
  \end{rhp}}

	  \subsection{Statement of the main result}
	  \ \ \ \
	  Our work is primarily based on the weighted Sobolev space \eqref{sobolev}, considering $\zeta=x/t$ in compact subsets of $(0,\infty)$. And before presenting the most important results of this paper, we make the following assumptions.
	
	  \begin{assu}\label{assu_solitonless}
	  Assume that $s_{11}(k)$ and $s^A_{11}(k)$ are nonzero for $ k \in \bar{D}_1 \setminus \{0\}$ and $k \in \bar{D}_4 \setminus \{0\}$, respectively.
	  \end{assu}
	
	  \begin{assu}\label{assu_k=0}
	  	Assume that
	  	\begin{equation*}
	  		\lim_{k \to 0} k^2 s_{11}(k) \neq 0, \quad \lim_{k \to 0} k^2 s^A_{11}(k) \neq 0.
	  	\end{equation*}
	  \end{assu}
	
	   {The two assumptions above imply that no solitons are present and that $s_{11}(k)$ and $s^A_{11}(k)$ have double poles at $k=0$. Moreover, the corresponding functions $r_1(k)\in H^{3,4}(0,\infty)$ and $r_2(k)\in H^{3,4}(-\infty,0)$ satisfy Assumptions \ref{assu_solitonless} and \ref{assu_k=0}.}

	   {\begin{prop}\label{prop_solution_exist}
	  		Let the reflection coefficients $r_1(k)\in H^{3,4}(0,\infty)$ and  $r_2(k)\in H^{3,4}(-\infty,0)$ such that Assumptions \ref{assu_solitonless} and \ref{assu_k=0} hold, and that $\lim\limits_{k\to0^+}r_1(k)=\omega$ and $\lim\limits_{k\to0^-}r_2(k)=1$. Then, the solution of the RH problem \ref{rhp_M} is unique, if it exists.
	\end{prop}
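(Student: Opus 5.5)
We argue by the standard Liouville-type approach: suppose $M$ and $\tilde M$ both solve RH problem \ref{rhp_M} and form $\tilde M M^{-1}$. The first ingredient is $\det M\equiv 1$. Every jump matrix $v_j$ in \eqref{v} has unit determinant (e.g. $\det v_1 = (1-|r_1|^2)+|r_1|^2=1$, and similarly for the others), so $\det M$ has no jump across $\Gamma\setminus\{0\}$. By \eqref{M_ktoinfty}, $\det M\to 1$ as $k\to\infty$. At $k=0$ the expansion \eqref{M_kto0} suggests at most a pole. Because the most singular coefficients $\mathcal{M}_1^{(-2)}$ and $\mathcal{M}_1^{(-1)}$ are concentrated in the first two columns with the special rank structure shown, a direct expansion of the determinant should show that $\det M$ is bounded near $0$. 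If a pole did survive, the symmetries \eqref{sym} ($\det M(k)=\det M(\omega k)=\overline{\det M(\bar k)}$) would restrict the Laurent coefficients, and we would try to show they vanish. Once $\det M$ is bounded, it is entire and bounded, so Liouville gives $\det M\equiv1$ and $M^{-1}$ exists on $\mathbb{C}\setminus\Gamma$.

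Next we set $E=\tilde M M^{-1}$. On $\Gamma\setminus\{0\}$ we have $E_+=\tilde M_- v v^{-1}M_-^{-1}=E_-$, so $E$ has no jump there. Continuity of boundary values, together with the regularity of $r_1,r_2\in H^{3,4}$ (continuous up to $k=0$), lets Morera's theorem extend $E$ analytically across $\Gamma\setminus\{0\}$. As $k\to\infty$, $E\to I$. So the only remaining issue is the isolated singularity at $k=0$.

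The main obstacle is showing that $E$ has at worst a removable singularity at $0$. By $\det M=1$, $M^{-1}$ is the adjugate of $M$, built from $2\times2$ minors of $M$. Using the structure of \eqref{M_kto0} on $\bar D_1$, we expect $M^{-1}=\mathcal{O}(k^{-2})$, with its singular part confined to specific rows; the rows are dual to the columns in which $\mathcal{M}_1^{(-2)}$ and $\mathcal{M}_1^{(-1)}$ are supported. Multiplying, $\tilde M M^{-1}$ is a priori $\mathcal{O}(k^{-4})$. We would then expand it to order $k^{0}$ and use three facts to cancel each negative power:
\begin{itemize}
\item the columns of $\mathcal{M}_1^{(-2)}$ and $\mathcal{M}_1^{(-1)}$ are proportional to the fixed vectors $(1,1,1)^T$ and $(\omega^2,1,\omega)^T$;
\item the third column of $\mathcal{M}_1^{(0)}$ is proportional to $(1,1,1)^T$;
\item the same structure holds for $\tilde M$, with different scalars $\tilde\alpha,\dots,\tilde\epsilon$.
\end{itemize}
The hypotheses $\lim_{k\to0^+}r_1=\omega$ and $\lim_{k\to0^-}r_2=1$, together with Assumption \ref{assu_k=0}, are exactly what guarantee that the expansion \eqref{M_kto0} holds in this rigid form with the same vectors for both solutions. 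The symmetry $M(k)=\mathcal{A}M(\omega k)\mathcal{A}^{-1}$ transfers the expansion from $\bar D_1$ to all sectors, so $E$ has a genuine isolated singularity rather than sector-dependent behaviour.

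If the direct cancellation proves too messy, the fallback is to work with $\tilde N-N$, using the row vector $N=(\omega\ \omega^2\ 1)M$ of RH problem \ref{rhp_n}. The singular terms are annihilated because $(\omega,\omega^2,1)\cdot(1,1,1)^T=0$ and $(\omega,\omega^2,1)\cdot(\omega^2,1,\omega)^T=0$, so $N$ is bounded at $0$. With this, $E$ is shown bounded near $0$, hence entire and tending to $I$, so Liouville gives $E\equiv I$, i.e. $\tilde M=M$.
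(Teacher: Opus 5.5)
\textbf{There is a genuine gap.} Your treatment of $\det M$ is fine: a possible term $d/k$ is killed by $\det M(k)=\det M(\omega k)$. So is your observation that $E=\tilde M M^{-1}$ has no jump on $\Gamma\setminus\{0\}$. The central step, however, does not work: the three structural facts at $k=0$ do not cancel every negative power of $E$.

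Write $m_j$ for the columns of $M$ and use $\det M=1$. Then the rows of $M^{-1}$ are
$(m_2\times m_3)^T=\mathcal{O}(1)$, $(m_3\times m_1)^T=\mathcal{O}(k^{-1})$ and $(m_1\times m_2)^T=\mathcal{O}(k^{-2})$. Hence $E=\sum_j\tilde m_j(M^{-1})_{j\cdot}=\mathcal{O}(k^{-2})$. Its $k^{-2}$ coefficient has the form $(1,1,1)^Tv^T$, where $v$ is a combination of the $2\times 2$ minors of $(\alpha,\delta,\epsilon)$ and $(\tilde\alpha,\tilde\delta,\tilde\epsilon)$. Nothing local forces those minors to vanish. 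This is exactly where the paper's local analysis stops (``at most a double pole'').

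Your fallback does not close this. Boundedness of $N$ and $\tilde N$ controls only one combination of rows, and even $(\omega,\omega^2,1)E=\tilde N M^{-1}$ involves the singular rows of $M^{-1}$. Working with $\tilde N-N$ directly would need a vanishing lemma, which, as the paper remarks, is not available for this problem.

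\textbf{The missing ingredient is global.} It is the normalization at infinity, $M^{(1)}_{12}=M^{(1)}_{13}=M^{(2)}_{12}+M^{(2)}_{13}=0$, which your argument never uses although it is part of RH problem \ref{rhp_M} precisely to secure uniqueness. The paper's route runs as follows.
\begin{enumerate}
\item $E$ is analytic off $0$, tends to $I$, and is $\mathcal{O}(k^{-2})$ at $0$. Liouville therefore gives $E=I+A/k+B/k^2$.
\item The $\mathcal{A}$-symmetry together with $M^{(1)}_{12}=M^{(1)}_{13}=0$ forces $M^{(1)}=M^{(1)}_{33}\,\mathrm{diag}(\omega^2,\omega,1)$. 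Hence $A=a\,\mathrm{diag}(\omega^2,\omega,1)$.
\item The local form $B=(1,1,1)^Tv^T$ plus the $\mathcal{A}$-symmetry forces $B=c\,(1,1,1)^T(1,\omega,\omega^2)$.
\item $B$ has the same off-diagonal part as $\tilde M^{(2)}-M^{(2)}$. So $M^{(2)}_{12}+M^{(2)}_{13}=0$ gives $B_{12}+B_{13}=c(\omega+\omega^2)=-c=0$.
\item Finally $\det E=1+a^3/k^3\equiv1$ forces $a=0$.
\end{enumerate}

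Symmetry and the determinant cannot replace the normalization. For $c\in\omega\mathbb{R}$, the matrix $I+c\,(1,1,1)^T(1,\omega,\omega^2)k^{-2}$ satisfies both symmetries in \eqref{sym}, has unit determinant, and matches the local $k^{-2}$ form. The singularity at $0$ is removed by these global constraints, not by local cancellation.
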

	\begin{proof}
		In Appendix A of Ref. \cite{Indiana_2022}, the uniqueness of the solution to the RH problem corresponding to the initial value problem of the good Boussinesq equation \eqref{gB}, with initial data belonging to the Schwartz space, has been proved in detail. Here, we consider reflection coefficients $r_1$ and $r_2$, which have different properties from those in \cite{Indiana_2022}, and explain the differences in the proof of the uniqueness of the solution to the RH problem \ref{rhp_M}.\\
        \indent  Consider $r_1(k)\in H^{3,4}(0,\infty)$ and  $r_2(k)\in H^{3,4}(-\infty,0)$. From the direct scattering theory, the eigenfunction $M(x,t,k)$ involved in the RH problem \ref{rhp_M} possesses at least the following asymptotic expansion as $k \to \infty$:
        \begin{equation*}
        	\begin{aligned}
        		M(x,t,k)=&I+\frac{M^{(1)}_{33}(x,t)}{k}\begin{pmatrix}
        		\omega^2 & 0 & 0\\
        		0 & \omega & 0\\
        		0 & 0 & 1
        	\end{pmatrix}+\frac{M^{(2)}_{33}(x,t)}{k^2}\begin{pmatrix}
        		\omega & 0 & 0\\
        		0 & \omega^2 & 0\\
        		0 & 0 & 1
        	\end{pmatrix}\\
        	&-\frac{M^{(2)}_{13}(x,t)}{(1-\omega)k^2}\begin{pmatrix}
        		0 & 1 & -1\\
        		-\omega & 0 & \omega\\
        		\omega^2 & -\omega^2 & 0
        	\end{pmatrix}+\mathcal{O}(k^{-3}).
        	\end{aligned}
        \end{equation*}
      And $M$ has a unit determinant, according to the properties satisfied in RH problem \ref{rhp_M} as $k\to\infty$ and $k\to0$. \\
      \indent Assuming the existence of two solutions, $M$ and $N$, to the RH problem \ref{rhp_M}, it can be obtained from the above properties that $\det M$ and $\det N$ are identically equal to one. Furthermore, the asymptotic forms of $N^A:=(N^{-1})^T$ as $k\to0$ and $k\to\infty$ can be obtained as follows:
      \begin{align*}
      	N^A(x,t,k)=&\frac{1}{k^2}\begin{pmatrix}
      		0 & 0 & *\\
      		0 & 0 & *\\
      		0 & 0 & *\\
      	\end{pmatrix}+\frac{1}{k}\begin{pmatrix}
      	0 & * & *\\
      	0 & * & *\\
      	0 & * & *\\
      	\end{pmatrix}+\mathcal{O}(1),\quad k\in D_1, \,\, k\to0,\\
     	N^A(x,t,k)=&I-\frac{N^{(1)}_{33}(x,t)}{k}\begin{pmatrix}
      	\omega^2 & 0 & 0\\
      	0 & \omega & 0\\
      	0 & 0 & 1\\
      \end{pmatrix}+\frac{N^{(1)}_{33}(x,t)^2-N^{(2)}_{33}(x,t)}{k^2}\begin{pmatrix}
      \omega & 0 & 0\\
      0 & \omega^2 & 0\\
      0 & 0 & 1\\
      \end{pmatrix}\\
      &-\frac{N^{(2)}_{13}(x,t)}{k^2}\begin{pmatrix}
      	0 & \omega & -\omega^2\\
      	-1 & 0 & \omega^2\\
      	1 & -\omega & 0\\
      \end{pmatrix}+\mathcal{O}(k^{-3}),\quad k\in D_1, \,\, k\to\infty,
      \end{align*}
      where $*$ represents matrix elements that are irrelevant to the current argument.\\
      \indent According to the above expansion, it can be concluded that $MN^{-1}$ has at most a double pole at $k=0$ and is analytic for $k\in\mathbb{C}\setminus\{0\}$. It satisfies that it approaches the identity matrix as $k\to\infty$. Furthermore, by utilizing symmetries \eqref{sym} and considering the properties of $MN^{-1}$ as $k\to0$ and $k\to\infty$, we can obtain:
      \begin{equation*}
      	(MN^{-1})(x,t,k)=I+\frac{M^{(1)}_{33}(x,t)-N^{(1)}_{33}(x,t)}{k}\begin{pmatrix}
      		\omega^2 & 0 & 0\\
      		0 & \omega & 0\\
      		0 & 0 & 1\\
      	\end{pmatrix}.
         \end{equation*}
         On the left-hand side of the above equation, we have $\det MN^{-1}=1$, and since $\det M=\det N=1$, it follows that $M^{(1)}_{33}(x,t)=N^{(1)}_{33}(x,t)$. That is to say, we have proved that $M=N$. Therefore, if a solution to RH problem \ref{rhp_M} exists, then the solution is unique.
\end{proof}}

	   {\begin{prop}\label{prop_reconst}
	  		Let the reflection coefficient 
	  	coefficients $r_1(k)\in H^{3,4}(0,\infty)$ and $r_2(k)\in H^{3,4}(-\infty,0)$ such that Assumptions 
    \ref{assu_solitonless} and \ref{assu_k=0} hold. Then the solution of the RH problem \ref{rhp_M} is unique whenever it exists. If the initial data $u_0(x)$ and $w_0(x)$ of the good Boussinesq equation \eqref{gB} satisfy Assumption \ref{assu_k=0}, the solutions $u(x,t)$ and $w(x,t)$ for all $(x,t)$ in an open subset of $\mathbb{R} \times [0, \infty) $ can be reconstructed from the RH problem \ref{rhp_M} in the form:
	  \begin{equation}\label{reconst}
	  		\left\lbrace
	  		\begin{aligned}
	  			&u(x,t)=-\frac{3}{2}\frac{\partial}{\partial x}\lim_{k\to\infty}k(N_3(x,t,k)-1),\\
	  			&w(x,t)=-\frac{3}{2}\frac{\partial}{\partial t}\lim_{k\to\infty}k(N_3(x,t,k)-1),
	  		\end{aligned}\right.
	  	\end{equation}
	where $N(x,t,k):=(N_1(x,t,k),N_2(x,t,k),N_3(x,t,k))$.
	  \end{prop}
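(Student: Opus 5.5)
The proposition has two parts, uniqueness and the reconstruction formula \eqref{reconst}, and I would treat them in that order.

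Uniqueness follows from Proposition \ref{prop_solution_exist}, whose argument does not use the normalization $\lim_{k\to0^+}r_1=\omega$, $\lim_{k\to0^-}r_2=1$ in an essential way. I would check this point once. The argument only needs three things:
\begin{itemize}
\item the unimodularity $\det M\equiv1$, which comes from $\det v_j=1$, Liouville, the normalization \eqref{M_ktoinfty}, and the pole structure \eqref{M_kto0}. In \eqref{M_kto0} the rank-one leading coefficients force $\det M$ to be bounded at $0$.
\item the symmetries \eqref{sym}.
\item the expansions at $0$ and $\infty$, which are guaranteed by $r_j\in H^{3,4}$.
\end{itemize}
With these, for two solutions $M$ and $\tilde M$, the product $M\tilde M^{-1}$ has no jump. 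It is therefore rational with at most a double pole at $0$. The $\mathcal{A}$-symmetry and the column structure of $\mathcal{M}_1^{(-2)}$, $\mathcal{M}_1^{(-1)}$ against the row structure of $\tilde M^{A}$ kill the $k^{-2}$ term. The determinant condition then kills the $k^{-1}$ term, so $M=\tilde M$.

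For the reconstruction I would follow the scheme of Ref.~\cite{Indiana_2022}, replacing Schwartz-class hypotheses by the $H^{3,4}$ regularity. The steps are as follows.

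\emph{Step 1: the direct problem.} Let $(u,w)$ be the global solution of \eqref{regB}, which exists by \cite{Bona-1988}. Define the Jost-type eigenfunctions of the $3\times3$ Lax pair via Volterra equations normalized at $x\to\pm\infty$. Then show that the sectionally analytic function $M$ built from them, as in \cite{Indiana_2022}, satisfies RH problem \ref{rhp_M} with the jumps \eqref{v}. The $t$-dependence enters only through $\mathrm{e}^{\theta_{ij}}$, because the time part of the Lax pair evolves the scattering matrix trivially. Assumptions \ref{assu_solitonless} and \ref{assu_k=0} ensure there are no extra poles and that the singularity at $0$ is exactly of the form \eqref{M_kto0}. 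By uniqueness, this $M$ is \emph{the} solution.

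\emph{Step 2: recovering $u$ and $w$.} Substitute the expansion \eqref{M_ktoinfty} into the $x$-part of the Lax pair, $M_x-[\mathcal{L},M]=\mathsf{U}M$, where $\mathcal{L}=\mathrm{diag}(l_1,l_2,l_3)$. Matching powers of $k$ gives $\mathsf{U}$ as an expression in the off-diagonal part of $[\mathcal{L},M^{(1)}]$ and $\partial_x M^{(1)}$. This yields $u=-\frac32\partial_x M^{(1)}_{33}$. The $t$-part of the Lax pair gives $w=-\frac32\partial_t M^{(1)}_{33}$ in the same way, consistent with $u_t=w_x$.

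\emph{Step 3: passing to $N$.} Write $N=(\omega\ \omega^2\ 1)M$. Then
\[
\lim_{k\to\infty}k(N_3-1)=\omega M^{(1)}_{13}+\omega^2M^{(1)}_{23}+M^{(1)}_{33}.
\]
Since $M^{(1)}$ is diagonal, with $M^{(1)}_{13}=0$ and $M^{(1)}_{23}=0$ by the structure recorded in the proof of Proposition \ref{prop_solution_exist}, this limit equals $M^{(1)}_{33}$. Together with Step 2, this gives \eqref{reconst}.

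The main obstacle is justifying the differentiation in $x$ and $t$ of the large-$k$ limit. This requires the expansion \eqref{M_ktoinfty} to hold uniformly in $(x,t)$, with an $\mathcal{O}(k^{-3})$ remainder that stays uniform after one derivative. I would establish this by integrating by parts three times in the Volterra equations, using the three derivatives and the weight $(1+|x|^4)$ in $H^{3,4}$ to control the remainders. The weight is where the space \eqref{sobolev} is genuinely needed. The same estimates near $k=0$ give \eqref{M_kto0} uniformly, which Step 1 requires.
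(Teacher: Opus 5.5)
Your outline follows the route the paper itself relies on. The paper gives no separate proof of this proposition: it takes uniqueness from Proposition~\ref{prop_solution_exist} and the formula \eqref{reconst} from Ref.~\cite{Indiana_2022}. Your Step~3 is also correct: the $\mathcal{A}$-symmetry makes $M^{(1)}$ diagonal, so $\lim_{k\to\infty}k(N_3-1)=M^{(1)}_{33}$. The gaps are in Step~1, in the step you flag as the main obstacle, and in the uniqueness sketch.

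\textbf{The Volterra step.} The Volterra equations are integral equations in $x$ whose kernels are built from $u(\cdot,t)$ and $w(\cdot,t)$. Integrating by parts in them gives remainders controlled by $x$-derivatives and $x$-moments of the potential. The hypothesis $r_1\in H^{3,4}(0,\infty)$, $r_2\in H^{3,4}(-\infty,0)$ is a statement about the reflection coefficients as functions of $k$: the weight is $1+|k|^4$ and the derivatives are $k$-derivatives. Nothing in the proposition transfers this to regularity or decay of $(u,w)(\cdot,t)$. That would need a bijectivity theorem for the scattering map, which the paper explicitly leaves open; its remark on the choice of $H^{3,4}$ asserts the uniformity without argument, so you inherit this gap rather than create it.

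\textbf{Step~1.} There is a matching problem on the existence side. \cite{Bona-1988} is a local well-posedness result for general data and does not give a global solution for arbitrary data. Indeed, the quadratic good Boussinesq equation has solutions that blow up in finite time, which is consistent with the statement being localized to an open subset of $\mathbb{R}\times[0,\infty)$. Moreover, $H^s$ well-posedness gives no spatial decay, and the Jost/Volterra construction and the derivation of \eqref{M_kto0} need decay.

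\textbf{How to close it.} Do what \cite{Indiana_2022} does. Assume a smooth, rapidly decaying (there, Schwartz-class) solution on the region in question. Run the direct problem there to obtain RH problem~\ref{rhp_M}; the expansions at $k=\infty$ and $k=0$ are then uniform because the potential decays. Finally, identify this $M$ with the RH solution by uniqueness. If you want to start from $r_j\in H^{3,4}$ instead, the uniform expansions would have to come from the RH side, via the singular-integral representation and $k$-moments of the jump. You would also need a dressing argument showing that \eqref{reconst} solves \eqref{regB}. The Volterra route is not available in that setting.

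\textbf{Uniqueness sketch.} Write $M\tilde M^{-1}=I+A/k+B/k^2$. The column/row bookkeeping at $0$ removes the $k^{-4}$ and $k^{-3}$ terms. It also shows $B=(1,1,1)^Tb$ for some row vector $b$. The $\mathcal{A}$-symmetry, $\mathcal{A}B\mathcal{A}^{-1}=\omega^2B$, then only gives $b\propto(\omega^2,1,\omega)$, not $b=0$. Indeed, $I+\beta k^{-2}(1,1,1)^T(\omega^2,1,\omega)$ has determinant $1$ (since $1+\omega+\omega^2=0$). It also satisfies both symmetries in \eqref{sym} whenever $\bar\beta=\omega^2\beta$. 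What kills $B$ is the normalization $M^{(2)}_{12}+M^{(2)}_{13}=0$ in \eqref{M_ktoinfty}. Since $M^{(1)}$ and $\tilde M^{(1)}$ are diagonal, it forces $B_{12}+B_{13}=0$, whereas here $B_{12}+B_{13}=\beta(1+\omega)\neq0$ unless $\beta=0$. You must invoke it explicitly.

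\textbf{Determinant of $M$.} The parallel leading columns in \eqref{M_kto0} remove only the $k^{-3}$ and $k^{-2}$ terms of $\det M$. The $k^{-1}$ term involves the unspecified second column of $\mathcal{M}_1^{(0)}$ and third column of $\mathcal{M}_1^{(1)}$. It is removed by $\det M(k)=\det M(\omega k)$, not by the rank-one structure. Once $B=0$, your determinant step goes through: $A=c\,\mathrm{diag}(\omega^2,\omega,1)$ and $\det(I+A/k)=1+c^3/k^3$ force $c=0$.
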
}

	   {\begin{theo}\label{theo_asymptotics}
	  	For initial values $u_0(x)$ and $w_0(x)$ such that the reflection coefficients $r_1(k)\in H^{3,4}(0,\infty)$ and $r_2(k)\in H^{3,4}(-\infty,0)$, and assuming Assumptions \ref{assu_solitonless} and \ref{assu_k=0} hold. Then the following asymptotic formula holds uniformly for $\zeta=x/t$ in compact subsets of $(0,\infty)$ as $t\to \infty$
	  	\begin{align*}
	  		&\left| u(x,t)+\frac{3^{5/4}k_0\sqrt{\nu}}{\sqrt{2t}}\sin\left(\frac{19\pi}{12}+\nu\ln \left(6\sqrt{3}tk_0^2 \right) -\sqrt{3}tk_0^2-\arg r_1(k_0)  \right. \right.\\
	  		&\left.-\arg\Gamma\left(i\nu \right)+ \left.\frac{1}{\pi} \int_{k_0}^{\infty}\ln\left(\frac{\left|s-k_0 \right| }{\left| s-\omega k_0\right| } \right)\rd \ln\left(1-\left|r_1(s) \right|^2  \right)  \right)\right| \leq \mathcal{O}\left(t^{-3/4}\right),
	  	\end{align*}
	  	where $\Gamma$ denotes the Gamma function, $k_0=\frac{\zeta}{2}$, $\nu=-\frac{1}{2\pi}\ln (1-\left| r_1(k_0)\right|^2 )$.
	  \end{theo}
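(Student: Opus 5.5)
We would prove the theorem by a $\bar\partial$-steepest descent analysis of RH problem \ref{rhp_n} for the row vector $N$, and then read off $u$ from the reconstruction formula \eqref{reconst} of Proposition \ref{prop_reconst}.

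\textbf{Signature table and critical points.} First we study the sign of ${\rm Re}\,\theta_{21}$ on $\Gamma_1=(0,\infty)$ for $\zeta=x/t$ in a compact subset of $(0,\infty)$. Writing
\[
\theta_{21}=t\bigl[(\omega^2-\omega)\zeta k+(\omega-\omega^2)k^2\bigr],
\]
we have $\omega^2-\omega=-\ri\sqrt3$, so the phase $-\ri\sqrt3 t(\zeta k-k^2)$ is stationary at $k_0=\zeta/2$. By the symmetry $k\mapsto\omega k$, the critical points are $k_0$, $\omega k_0$ and $\omega^2 k_0$, one on each of $\Gamma_1,\Gamma_3,\Gamma_5$. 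On $\Gamma_4,\Gamma_2,\Gamma_6$ the phase has no stationary point, so the $r_2$-jumps should decay.

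\textbf{Factorization and $\delta$-conjugation.} On $(k_0,\infty)$ we factor $v_1$ as a lower/upper triangular product; on $(0,k_0)$ we use the upper/diagonal/lower form. This requires a scalar function
\[
\delta(k)=\exp\Bigl\{\frac{1}{2\pi\ri}\int_{k_0}^{\infty}\frac{\ln(1-|r_1(s)|^2)}{s-k}\,\rd s\Bigr\}
\]
and its rotated copies. To stay compatible with the $\mathcal A$-symmetry, we conjugate $N$ by a diagonal matrix built from $\delta(k)$, $\delta(\omega k)$ and $\delta(\omega^2 k)$. The interaction between $\delta(k)$ and $\delta(\omega k)$ near $k_0$ is what produces the term $\frac1\pi\int_{k_0}^{\infty}\ln\frac{|s-k_0|}{|s-\omega k_0|}\,\rd\ln(1-|r_1|^2)$ in the phase.

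\textbf{Continuous extensions and the $\bar\partial$ problem.} Next we open lenses with the angular sectors around each critical point, and around $0$, where the six rays meet. On these sectors we define non-analytic extensions $R_j$ of $r_1$, $r_1^*/(1-|r_1|^2)$, etc. They have the standard interpolation form, for example
\[
R(k)=r_1(k_0)\,\delta\text{-factors}+\bigl(r_1({\rm Re}\,k)-r_1(k_0)\bigr)\cos(2\arg(k-k_0))
\]
(a cutoff may also be included). They satisfy
\[
|\bar\partial R|\lesssim |r_1'({\rm Re}\,k)|+|k-k_0|^{-1/2},
\]
and near $0$ they are matched to the boundary values $r_1(0)=\omega$ and $r_2(0)=1$, which keeps $N$ bounded at $0$. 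This gives a mixed RH-$\bar\partial$ problem. We write its solution as $N^{(3)}=N^{(D)}N^{RHP}$, where $N^{RHP}$ solves the pure RH problem obtained by setting $\bar\partial R\equiv0$.

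\textbf{Local parametrices and the small-norm problem.} We solve $N^{RHP}$ with local parametrices at $k_0$, $\omega k_0$, $\omega^2 k_0$. At $k_0$ only the $(1,2)$ block is active, so the parametrix is the $2\times2$ parabolic cylinder model embedded in $3\times3$; the other two parametrices follow from the $\mathcal A$-symmetry. The error matrix then solves a small-norm RH problem with jump $I+\mathcal O(t^{-1/2})$ on the circles around the critical points, and exponentially small jumps elsewhere, including away from $0$. This gives existence of the solution, with
\[
N^{RHP}=(\omega\,\,\omega^2\,\,1)\Bigl(I+\frac{t^{-1/2}A(\zeta)}{k}+\cdots\Bigr),
\]
where $A$ is assembled from the parabolic cylinder residues $\beta_{12}$ and $\beta_{21}$, which involve $\Gamma(\ri\nu)$, $\sqrt\nu$ and $(6\sqrt3 t k_0^2)^{\ri\nu}$.

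\textbf{The $\bar\partial$ estimates (main obstacle).} The pure $\bar\partial$ problem for $N^{(D)}$ is written as the integral equation
\[
N^{(D)}=(\omega\,\,\omega^2\,\,1)+\frac1\pi\iint \frac{N^{(D)}W}{s-k}\,\rd A(s),
\qquad W=N^{RHP}\,\bar\partial R\,(N^{RHP})^{-1}.
\]
We need to show that the integral operator has norm $\mathcal O(t^{-1/4})$ on $L^\infty$, and that the $k^{-1}$ coefficient of $N^{(D)}$ is $\mathcal O(t^{-3/4})$. This uses $|e^{\pm\theta_{21}}|\le e^{-ct|{\rm Im}\,k|\,|{\rm Re}\,k-k_0|}$ in the sectors together with the usual split $\|r_1'\|_{L^2}$, $|k-k_0|^{-1/2}$ and Hölder estimates. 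This step is the hardest. First, $W$ involves conjugation by the $3\times3$ matrix $N^{RHP}$, which carries all three blocks. Second, we must check that the sectors at $0$ contribute only $\mathcal O(t^{-1})$, even though the $\delta$-factors and the $r_j$ are merely continuous at $0$. For the sectors near $0$ we would first try to bound ${\rm Re}\,\theta$ below by $ct|k|^2\sin$ of the angle, and use $kr_j\in L^2$, which is exactly the hypothesis highlighted in the remark.

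\textbf{Recovering $u$.} Finally, we undo all transformations. For $k\to\infty$ inside a sector where they are the identity apart from the $\delta$-conjugation, we obtain
\[
\lim k(N_3-1)=\text{($\delta$ terms)}+t^{-1/2}(\cdots)+\mathcal O(t^{-3/4}).
\]
By \eqref{reconst}, $u$ is $-\frac32\partial_x$ of this quantity. The $x$-derivative hits the phase $e^{\ri\sqrt3 tk_0^2}$, which gives the factor $k_0$ and the $\sin$ form with the constant $19\pi/12$. To justify differentiating the error term, we rerun the estimates for $\partial_x N$; equivalently, we note that the expansion is uniform in $\zeta$ and differentiable in $x$. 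Combining the sine terms arising from $k_0$, $\omega k_0$ and $\omega^2k_0$ via symmetry then produces the stated amplitude $3^{5/4}k_0\sqrt\nu/\sqrt{2t}$.
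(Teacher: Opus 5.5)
Your plan follows essentially the same route as the paper. The ingredients match: $\delta$-conjugation by the diagonal matrix built from $\delta_1(k),\delta_1(\omega k),\delta_1(\omega^2 k)$; non-analytic extensions with $|\bar\partial R|\lesssim |r'({\rm Re}\,k)|+|k-k_0|^{-1/2}$, matched near $0$ to $r_1(0)=\omega$, $r_2(0)=1$; parabolic-cylinder parametrices at $k_0,\omega k_0,\omega^2k_0$ linked by the $\mathcal{A}$-symmetry; a small-norm problem with $\mathcal{O}(t^{-1/2})$ jumps on the circles; and $\bar\partial$ bounds of order $t^{-1/4}$ for the operator and $t^{-3/4}$ for the $k^{-1}$ coefficient. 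The only organizational difference is that the paper handles the $r_2$ jumps in a separate first step ($G^{(1)}$ and the $\bar\partial$ problem for $E^G$), whereas you fold them into a single $\bar\partial$ problem.

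There are three slips to fix:
\begin{itemize}
\item With your $\delta$, whose jump lies on $[k_0,\infty)$, the upper/diagonal/lower factorization belongs on $(k_0,\infty)$ and the lower/upper one on $(0,k_0)$, not the reverse as you wrote.
\item The $r_2$ jumps do not decay on $\Gamma_2,\Gamma_4,\Gamma_6$ themselves, since $|{\rm e}^{\theta_{21}}|=1$ on $\Gamma_4$. They become small only after being opened off those rays by $\bar\partial$ extensions, as the paper does.
\item $W$ has to be formed with the invertible $3\times 3$ model solution, not with the row vector $N^{RHP}$, which cannot be inverted.
\end{itemize}
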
}
	  The proof of this theorem is given in Sections \ref{sec_dispersive}-\ref{sec_u}. This provides the asymptotic stability of the solution to the good Boussinesq equation \eqref{gB} in dispersive shock wave region. 
	  
	   {\begin{remark}
	  	Obviously, the good Boussinesq equation \eqref{gB} contains a fourth-order partial derivative term with respect to $x$. Substituting the result on the asymptotic stability of the solution to the initial value problem for the good Boussinesq equation in Theorem \ref{theo_asymptotics} into equation \eqref{gB} involves the fourth-order derivatives of the reflection coefficients $r_1$ and $r_2$ with respect to their variables, which also justifies the reasonableness of our selection of the space to which the reflection coefficients belong.
	  \end{remark}}
	
	  {\begin{remark}
	(Asymptotic form in the left half-plane). Theorem \ref{theo_asymptotics} only presents the asymptotic expression of the solution $ u(x, t) $ for the good Boussinesq equation \eqref{gB} in a subsector of the right half-plane $ x > 0 $. Using the same method, a similar formula can be derived for the subsector in the symmetric region $ x < 0 $, where the critical point $ k_0 = \frac{x}{2t} < 0 $ will lie on the jump contour $\Gamma_4$ (see Fig. \ref{figGamma}). \\
	 \hspace*{2em}By employing the same opening procedure and error estimates, an analogous formula for the left half-plane subsector can be obtained. It is important to note that since the critical point $ k_0 $ lies on $\Gamma_4$, this manifests in the scattering data as follows: the reflection coefficient $ r_1(k) $ is incorporated into the error term in the subsequent analysis, while the reflection coefficient $ r_2(k) $ will replace the role of $ r_1(k) $ from the right region, becoming the primary constituent function for the leading asymptotic term of the solution to the initial-value problem in the left region. Consequently, the asymptotic form for the left half-plane can be derived directly from Theorem \ref{theo_asymptotics} by replacing reflection coefficient $r_1(k)$ with $r_2(k)$. Utilizing the invariance of the Boussinesq equation under space inversion, the solution is shown to satisfy the following estimate uniformly for $\zeta$ in compact subsets of $(-\infty,0)$ as $t\to \infty$:
	 \begin{align*}
	 	&\left| u(x,t)+\frac{3^{5/4}k_0\sqrt{\nu_1}}{\sqrt{2t}}\sin\left(\frac{19\pi}{12}+\nu_1\ln \left(6\sqrt{3}tk_0^2 \right) -\sqrt{3}tk_0^2-\arg r_2(k_0)  \right. \right.\\
	 	&\left.-\arg\Gamma\left(i\nu_1 \right)+ \left.\frac{1}{\pi} \int_{-\infty}^{k_0 }\ln\left(\frac{\left|s-k_0 \right| }{\left| s-\omega k_0\right| } \right)\rd \ln\left(1-\left|r_2(s) \right|^2  \right)  \right)\right| \leq \mathcal{O}\left(t^{-3/4}\right),
	 \end{align*}
	 where $\nu_1=-\frac{1}{2\pi}\ln (1-\left| r_2(k_0)\right|^2 )$.
	 \end{remark}}

	  \subsection{Notation}
	  \ \ \ \
	  We summarize some notation that will be used throughout the paper:
	  \begin{itemize}
	  	\item If $ A $ is an $ n \times m $ matrix, then $ |A| \geq 0 $ is defined by $ |A|^2 = \sum_{i,j} |A_{ij}|^2 $. Note that $ |A + B| \leq |A| + |B| $ and $ |AB| \leq |A||B| $.
	  	\item $C$ and $c$ will denote generic positive constants which may change within a computation.
	  	\item $\mathcal{I}$ is defined as any compact subset of $(0,\infty)$.
	  	\item $r^*(k)$ denotes the Schwartz reflection of $r(k)$.
	  	\item  {The notation $O$ appearing in the formulas of this paper is denoted as a $3\times 3$ zero matrix.}
	  \end{itemize}

		\section{Transformations of the RH problem}\label{sec_dispersive}
		\ \ \ \
		This paper employs the $\bar{\partial}$-steep descent method to analyze the long-time asymptotic properties of the good Boussinesq equation \eqref{gB}. During the process, the jump relations in the RH problem \ref{rhp_M} are decomposed, with some information being attenuated and useful information being transformed step by step into solvable problems. The long-time asymptotic solution is obtained, along with a more accurate error estimate. Furthermore, the asymptotic stability of the solution is proven.
		
		\subsection{Perform the first transformation}
		\ \ \ \
		First, consider the notation
		\begin{equation*}
			\theta_{ij}(\zeta,k):=t\Phi_{ij}(\zeta,k),
		\end{equation*}
	where $\zeta=\frac{x}{t}$ and
		\begin{align*}
			\Phi_{21}(\zeta, k) &= \omega(\omega - 1)(\zeta - k)k, \\
			\Phi_{31}(\zeta, k) &= (1 - \omega)(\zeta - \omega^2 k)k, \\
			\Phi_{32}(\zeta, k) &= (1 - \omega^2)(\zeta - \omega k)k,
		\end{align*}
		and $\partial\Phi_{ij}(\zeta, k)/\partial k=0$, $1\leq j<i\leq 3$ has the single zeros given by $
			 k_0=\zeta/2,\, \omega k_0,\, \omega^2k_0$.
		
		Since there are no singularities on the contours $\Gamma_2$, $\Gamma_4$, and $\Gamma_6$ in the jump relations satisfied by the RH problem \ref{rhp_M}, we first consider decomposing the corresponding three jump matrices as follows
		\begin{align*}
				&v_2(x,t,k)=\begin{pmatrix}
			 	1 & 0 & 0 \\
			 	0 & 1 & -r_2^*(\omega k){\rm{e}}^{-t\Phi_{32}} \\
			 	0 & 0 & 1
			 \end{pmatrix}
			 \begin{pmatrix}
			 	1 & 0 & 0 \\
			 	0 & 1  & 0 \\
			 	0 & r_2(\omega k){\rm{e}}^{t\Phi_{32}} & 1
			 \end{pmatrix},\\
				&v_4(x,t,k) =\begin{pmatrix}
				1 & -r_2^*(k){\rm{e}}^{-t\Phi_{21}} & 0 \\
				0 & 1 & 0 \\
				0 & 0 & 1
			\end{pmatrix}
			\begin{pmatrix}
				1  & 0 & 0 \\
				r_2(k){\rm{e}}^{t\Phi_{21}} & 1 & 0 \\
				0 & 0 & 1
			\end{pmatrix},\\
			&v_6(x,t,k) =\begin{pmatrix}
				1 & 0 & 0 \\
				0 & 1 & 0 \\
				-r_2^*(\omega^2 k){\rm{e}}^{t\Phi_{31}} & 0 & 1
			\end{pmatrix}
			\begin{pmatrix}
				1 & 0 & r_2(\omega^2 k){\rm{e}}^{-t\Phi_{31}} \\
				0 & 1 & 0 \\
				0 & 0 & 1
			\end{pmatrix}.
		\end{align*}
		
	The continuous extension of the reflection coefficient $r_2(k)$ is given by the following proposition.
	\begin{prop}\label{prop_G_exist}
		 {Assume $r_2\in H^{3,4}(-\infty,0)$, there exist continuous functions $G_j(k)$ for $k\in \bar D_j$ $j=3,4$ that satisfy the following boundary conditions:}
		\begin{align*}\label{Gj}
			 G_3(k)&=\left\lbrace\begin{aligned}
			 	&r_2^*( k), \quad &&k
			 	\in \Gamma_4,\\
			 	&g_3=r_2^*(0), \quad && k\in \Gamma_3,
			 \end{aligned}
			 \right.\\
			 G_4(k)&=\left\lbrace\begin{aligned}
			 	&r_2( k), \quad &&k
			 	\in \Gamma_4,\\
			 	&g_4=r_2(0), \quad && k\in \Gamma_5.
			 \end{aligned}
			 \right.
		\end{align*}
		And $G_j$ for $j=3,4$ satisfies the following estimates:
		\begin{equation}\label{dbarGj}
			\begin{aligned}
				&\left|\bar \partial G_j(k) \right| \leq c \left|k \right| ^{-1/2}+c \left|r'_2({\rm{Re}}\,k) \right|,\\
				&\left| G_j(k) \right| \leq c\sin^2(\arg k)+c [1+({\rm{Re}}\,k)^2]^{-1/4},
			\end{aligned}
		\end{equation}
		for positive real number  $c$.
	\end{prop}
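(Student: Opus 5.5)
The extension will be an explicit interpolation in polar coordinates $k=\rho e^{\ri\phi}$, in the spirit of the standard $\bar\partial$-steepest descent constructions. On $\bar D_4$, the sector between $\Gamma_4$ ($\phi=\pi$) and $\Gamma_5$ ($\phi=4\pi/3$), we use the ray $\phi=\pi$ and the angular cutoff
\[
\chi(\phi)=\cos\bigl(\tfrac{3}{2}(\phi-\pi)\bigr),
\]
so that $\chi(\pi)=1$ and $\chi(4\pi/3)=0$. We then define
\[
G_4(k)=g_4+\bigl(r_2(-\rho)-g_4\bigr)\chi(\phi),\qquad \rho=|k|.
\]
On $\Gamma_4$ this gives $G_4=r_2(k)$, and on $\Gamma_5$ it gives $G_4=g_4$. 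The extension $G_3$ on $\bar D_3$ (between $\phi=2\pi/3$ and $\phi=\pi$) is built the same way, with $r_2^*$ in place of $r_2$, $g_3=r_2^*(0)$, and the reflected cutoff.

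Continuity follows because $r_2\in H^{3,4}(-\infty,0)\subset H^1$ is continuous and extends continuously to $k=0$, as noted in Section 2. At the origin, $G_j(k)\to g_j$, since $r_2(-\rho)-g_j\to0$ as $\rho\to0$ and $\chi$ is bounded.

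For the $\bar\partial$ bound we use $\bar\partial=\tfrac12 e^{\ri\phi}(\partial_\rho+\tfrac{\ri}{\rho}\partial_\phi)$. This gives
\[
|\bar\partial G_4|\le \tfrac12|r_2'(-\rho)|+\tfrac{c}{\rho}\,|r_2(-\rho)-r_2(0)|.
\]
The second term is controlled by Cauchy--Schwarz:
\[
|r_2(-\rho)-r_2(0)|=\Bigl|\int_0^\rho r_2'\Bigr|\le \rho^{1/2}\|r_2'\|_{L^2},
\]
which yields $c\rho^{-1/2}=c|k|^{-1/2}$. The first term is written in terms of $r_2'(-|k|)$. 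To match the form $|r_2'({\rm Re}\,k)|$ in \eqref{dbarGj}, we should instead use ${\rm Re}\,k$ as the radial variable, i.e. define
\[
G_4(k)=g_4+\bigl(r_2({\rm Re}\,k)-g_4\bigr)\chi(\phi).
\]
Within the sector $\bar D_4$ we have $|{\rm Re}\,k|\ge |k|/2$, so the same estimates go through: the $\partial_{\bar k}$ of $r_2({\rm Re}\,k)$ is $\tfrac12 r_2'({\rm Re}\,k)$, and the derivative of $\chi$ contributes $O(|k|^{-1})\,|r_2({\rm Re}\,k)-g_4|\le c|k|^{-1/2}$. This is the route I would commit to.

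For the size bound, write
\[
|G_4|\le |g_4|(1-\chi)+|r_2({\rm Re}\,k)|\chi.
\]
Since $1-\chi(\phi)\le c\,(\phi-\pi)^2\le c\sin^2(\arg k)$ on the sector, the first term gives $c\sin^2(\arg k)$. For the second term, $r_2\in H^{3,4}$ implies the decay $|r_2(s)|\le c(1+s^2)^{-1/4}$, using the weighted $L^2$ control together with a Sobolev embedding. Hence $|r_2({\rm Re}\,k)|\chi\le c[1+({\rm Re}\,k)^2]^{-1/4}$.

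I expect the main obstacle to be this decay bound $|r_2(s)|\lesssim (1+s^2)^{-1/4}$. It needs an argument such as
\[
|s|\,|r_2(s)|^2\le \int_{|\sigma|>|s|} 2|\sigma|\,|r_2||r_2'|\,d\sigma\le c\|\sigma r_2\|_{L^2}\|r_2'\|_{L^2},
\]
which requires $k r_2\in L^2$, exactly the weighting the paper says it relies on. Combined with boundedness for $|s|\le1$, this gives the stated estimate.
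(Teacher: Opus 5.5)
Your construction is correct and follows the paper's approach. Both interpolate in the angular variable between $r_2({\rm Re}\,k)$ on $\Gamma_4$ and the constant $g_4$ on $\Gamma_5$, compute $\bar\partial$ in polar form, and bound $|r_2({\rm Re}\,k)-r_2(0)|\le c|k|^{1/2}$ by Cauchy--Schwarz.

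The one substantive difference is the cutoff, and yours is the right one. The paper uses $\cos(2\psi)$ on $\pi\le\psi\le 4\pi/3$, but $\cos(8\pi/3)=-\tfrac12\neq 0$. So the paper's formula gives $G_4=\tfrac32 g_4-\tfrac12 r_2({\rm Re}\,k)$ on $\Gamma_5$ rather than $g_4$; the factor $2$ belongs to $\pi/4$-sector constructions. Your $\cos\bigl(\tfrac32(\phi-\pi)\bigr)$ vanishes exactly on the $\pi/3$-sector boundary. It still satisfies $1-\chi\le c\sin^2(\arg k)$ and $|\chi'|\le c$, so all the estimates go through unchanged.

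You also prove the decay bound $|r_2(s)|\le c(1+s^2)^{-1/4}$, which the paper only asserts. Under the paper's definition of $H^{3,4}$ the weight sits on the derivatives. Then $(1+|s|^4)r_2'\in L^2$ and $r_2\in L^2$ already force $r_2(s)\to 0$ as $s\to-\infty$ and $|r_2(s)|\le\int_{-\infty}^{s}|r_2'|\le c(1+|s|)^{-7/2}$. This is stronger than needed, and in particular gives $s\,r_2\in L^2$, so the hypothesis your $|s|\,|r_2(s)|^2$ argument needs is justified from the definition.
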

	\begin{proof}
	Only prove the case for $G_4$. Here, define $k=\rho \text{e}^{\ri \psi}$ and $\bar k=\rho \text{e}^{-\ri \psi}$, then the following equation can be obtained:
		\begin{equation*}
			\bar \partial=\frac{\partial \rho}{\partial \bar k}\frac{\partial}{\partial \rho}+\frac{\partial \psi}{\partial \bar k}\frac{\partial}{\partial \psi}=\frac{1}{2}\text{e}^{\ri \psi}(\partial_{\rho}+\ri \rho^{-1}\partial_{\psi}).
 		\end{equation*}
		Construct the function
		\begin{equation*}
			G_4(k)=\cos(2\psi)r_2({\rm{Re}}\,k)+[1-\cos(2\psi)]g_4(k),
		\end{equation*}
		where $\rho\leq0$, $\pi\leq \psi\leq \frac{4\pi}{3}$, $\rho=\left|k \right| $ and ${\rm{Re}}\, k=\rho\cos \psi$.
		The function $G_4$ constructed above satisfies $G_4(k)=r_2(k)$ for $k \in\Gamma_4$, and $G_4(k)=g_4(k)$ for $k\in \Gamma_5$. And since $g_4$ is an analytic function, it follows that
		\begin{align*}
			\bar{\partial}G_4&=\bar{\partial}\left\lbrace\cos(2\psi)r_2({\rm{Re}}\,k)+[1-\cos(2\psi)]g_4(k) \right\rbrace \\
			&=\frac{1}{2}\text{e}^{\ri \psi}(\partial_{\rho}+\ri \rho^{-1}\partial_{\psi})\left\lbrace\cos(2\psi)r_2({\rm{Re}}\,k)+[1-\cos(2\psi)]g_4(k) \right\rbrace \\
			&=-\ri(r_2({\rm{Re}}\,k)-g_4(k))\text{e}^{\ri\psi}\rho^{-1}\sin(2\psi)+\frac{1}{2}\cos(2\psi)r'_2({\rm{Re}}\,k).
		\end{align*}
		Then
		\begin{equation*}
			\left| \bar \partial G_4\right|\leq \frac{c}{\left| k\right| }\left| r_2({\rm{Re}}\,k)-r_2(0)\right|+c\left|r'_2({\rm{Re}}\,k) \right|,
		\end{equation*}
		and
		\begin{align*}
			\left| r_2({\rm{Re}}\,k)-r_2(0)\right|&=\left| \int_{{\rm{Re}}\,k}^{0}r'_2(s)\text{d} s\right| \leq \int_{{\rm{Re}}\,k}^{0} \left| r'_2(s)\right| \text{d} s\\
			&\leq\left\| r'_2\right\|_{L^2({\rm{Re}}\,k,0)}\left\| 1\right\|_{L^2({\rm{Re}}\,k,0)} \leq c \left|k \right| ^{1/2}.
		\end{align*}
		Therefore, based on the above results, we obtain
		\begin{equation*}
			\left| \bar \partial G_4\right|\leq c \left|k \right| ^{-1/2}+c \left|r'_2({\rm{Re}}\,k) \right|.
		\end{equation*}		
		 {Furthermore, since $r_2\in H^{3,4}(-\infty,0)$, we can deduce  $\left| r_2({\rm{Re}}\,k)\right| \leq c[1+({\rm{Re}}\,k)^2]^{-1/4} $. Therefore}
		\begin{align*}
			\left|G_4\right|&\leq 2\sin^2\psi\left| g_1(k)\right| +\left| \cos(2\psi)\right| \left| r_2({\rm{Re}}\,k)\right| \\
			&\leq c\sin^2\psi+c [1+({\rm{Re}}\,k)^2]^{-1/4}.
		\end{align*}
	\end{proof}
	
	The functions $G_j$, $j=1,2,5,6$ can be obtained using symmetry and they satisfy the following conditions
		\begin{align*}
		G_j(k)=G_{j+2}(\omega  k),\quad & \left| G_j(k) \right| \leq c\sin^2(\arg  k)+c [1+({\rm{Re}}\, k)^2]^{-1/4},\\
		& \left|\bar \partial G_j(k) \right| \leq c \left|k \right| ^{-1/2}+c \left|r'_2({\rm{Re}}\,\omega k) \right|,\quad j=1,2,
			\end{align*}
			and
	\begin{align*}
		G_j(k)=G_{j-2}(\omega^2 k),\quad & \left| G_j(k) \right| \leq c\sin^2(\arg  k)+c [1+({\rm{Re}}\,k)^2]^{-1/4},\\
		&\left|\bar \partial G_j(k) \right| \leq c \left|k \right| ^{-1/2}+c \left|r'_2({\rm{Re}}\,\omega^2 k) \right|,\quad j=5,6.
	\end{align*}

	After providing the definitions of the functions $G_j$, $j=1,2,\cdots,6$, we define the following matrices in preparation for the first transformation:
	\begin{align*}
		&v_{2,U}=\begin{pmatrix}
			1 & 0 & 0 \\
			0 & 1 & -G_1(k){\rm{e}}^{-t\Phi_{32}} \\
			0 & 0 & 1
		\end{pmatrix},\quad
		&&v_{2,L}=\begin{pmatrix}
			1 & 0 & 0 \\
			0 & 1  & 0 \\
			0 & G_2(k){\rm{e}}^{t\Phi_{32}} & 1
		\end{pmatrix},\\
		&v_{4,U}=\begin{pmatrix}
			1 & -G_3(k){\rm{e}}^{-t\Phi_{21}} & 0 \\
			0 & 1 & 0 \\
			0 & 0 & 1
		\end{pmatrix},\quad
		&&v_{4,L}=\begin{pmatrix}
			1  & 0 & 0 \\
			G_4(k){\rm{e}}^{t\Phi_{21}} & 1 & 0 \\
			0 & 0 & 1
		\end{pmatrix},\\
		&v_{6,L}=\begin{pmatrix}
			1 & 0 & 0 \\
			0 & 1 & 0 \\
			-G_5(k){\rm{e}}^{t\Phi_{31}} & 0 & 1
		\end{pmatrix},
		&&v_{6,U}=\begin{pmatrix}
			1 & 0 & G_6(k){\rm{e}}^{-t\Phi_{31}} \\
			0 & 1 & 0 \\
			0 & 0 & 1
		\end{pmatrix}.
	\end{align*}
	Then define the transformation
	\begin{equation}\label{trans1}
		M^{(1)}(x,t,k)=M(x,t,k)G^{(1)}(x,t,k),
	\end{equation}
	 where $G^{(1)}$ is defined by
	\begin{equation*}
		G^{(1)}(x,t,k)=\left\lbrace \begin{aligned}
			&v_{2,U},\quad &&k\in D_1,\\
			&v_{2,L}^{-1},\quad &&k\in D_2,\\
			&v_{4,U},\quad &&k\in D_3,\\
			&v_{4,L}^{-1},\quad &&k\in D_4,\\
			&v_{6,L},\quad &&k\in D_5,\\
			&v_{6,U}^{-1},\quad &&k\in D_6.
			\end{aligned}
		 \right.
	\end{equation*}
		
	\begin{lem}\label{lem_G_bounded}
		As $\zeta\in\mathcal{I}$ and $t>1$, $G^{(1)}(x,t,k)$ and $(G^{(1)}(x,t,k))^{-1}$ are uniformly bounded for $k\in\mathbb{C}\setminus \Gamma$. More importantly, $\partial_x^l\lim\limits_{k\to\infty}k \left[ \left( G^{(1)}\right) ^{\pm1}-I\right]=O$ for $l=0,1$.
	\end{lem}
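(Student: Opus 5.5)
The plan is to reduce everything to a scalar estimate of the form $|G_j(k)\,{\rm e}^{\pm t\Phi_{ij}(\zeta,k)}|\le C$ in each sector $D_j$. Each $G^{(1)}$ is unipotent: it is the identity plus a single off-diagonal entry. Hence $(G^{(1)})^{-1}$ is obtained by flipping the sign of that entry, and both $G^{(1)}$ and its inverse are bounded as soon as that entry is bounded.

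\textbf{Step 1: boundedness of $G_j$.} By Proposition \ref{prop_G_exist} and its symmetric counterparts for $j=1,2,5,6$, we have
\[
|G_j(k)|\le c\sin^2(\arg k)+c[1+({\rm Re}\,k)^2]^{-1/4}\le 2c
\]
uniformly on $\bar D_j$. So it remains only to control the exponentials.

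\textbf{Step 2: sign of the exponent, checked in $D_4$.} Write $k=\rho{\rm e}^{\ri\psi}$ with $\pi\le\psi\le 4\pi/3$. Then
\[
\Phi_{21}=\omega(\omega-1)(\zeta-k)k,\qquad \omega(\omega-1)=-\sqrt3\,\ri\,{\rm e}^{\ri\pi/3}\cdot\frac{1}{\sqrt{\,\cdot\,}}\ \text{(up to the exact constant)},
\]
and one needs ${\rm Re}(t\Phi_{21})\le 0$ on $D_4$. Expanding,
\[
{\rm Re}\,\Phi_{21}=\zeta\,{\rm Re}(\omega(\omega-1)k)-{\rm Re}(\omega(\omega-1)k^2).
\]
Since $\omega(\omega-1)=\omega^2-\omega=-\sqrt3\,\ri$, this gives
\[
{\rm Re}\,\Phi_{21}=\sqrt3\,\zeta\,{\rm Im}\,k-\sqrt3\,{\rm Im}(k^2)=\sqrt3\,\rho\sin\psi\,\bigl(\zeta-2\rho\cos\psi\bigr).
\]
On $D_4$ we have $\sin\psi\le0$ and $\cos\psi<0$, so with $\zeta>0$ the bracket is positive. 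Therefore ${\rm Re}\,\Phi_{21}\le0$, and $|G_4{\rm e}^{t\Phi_{21}}|\le 2c$.

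Doing the same computation in $D_3$ (where $\sin\psi\ge0$) gives ${\rm Re}(-t\Phi_{21})\le0$. The sectors $D_1,D_2,D_5,D_6$ then follow from the rotational symmetry $k\mapsto\omega k$, which permutes the phases $\Phi_{ij}$ in the same way that $G_j(k)=G_{j\pm2}(\omega^{\pm1}k)$ permutes the extensions. This step is the main obstacle: one must make sure that the jump contours $\Gamma_2,\Gamma_4,\Gamma_6$, which contain no critical point $k_0$, are exactly the ones where each chosen factor decays on the correct side. The sign bookkeeping in the rotated sectors is where errors could occur, so I would verify each sector explicitly rather than rely only on symmetry.

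\textbf{Step 3: behaviour at infinity.} To show $\partial_x^l\lim_{k\to\infty}k[(G^{(1)})^{\pm1}-I]=O$, it suffices to prove $k\,G_j(k){\rm e}^{\pm t\Phi}\to0$ as $k\to\infty$ inside the sector, for fixed $x,t$, and likewise for its $x$-derivative. The $x$-derivative only brings down a factor $(l_i-l_j)$, which is $O(k)$.

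In the interior of the sector, ${\rm Re}\,\Phi\le -c\rho^2|\sin(\text{angle to boundary})|$, so the Gaussian factor kills any polynomial growth. Near the boundary ray $\Gamma_4$, by contrast, $G_4\to r_2$ and the exponential is only oscillatory. There I would interpret the limit as $k\to\infty$ along nontangential directions, which is the sense used in \eqref{M_ktoinfty}, or equivalently away from $\Gamma$. Along any fixed direction strictly inside $D_j$, decay of order ${\rm e}^{-ct|k|^2}$ holds, so the limit vanishes for $l=0,1$.
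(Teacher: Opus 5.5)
Your proposal is correct and takes essentially the paper's route: the unipotent structure of $G^{(1)}$, the bound on $G_j$ from Proposition \ref{prop_G_exist}, the sign of ${\rm Re}\,\Phi$ on each sector, and Gaussian decay at infinity. The only difference is that you check the sign on $D_4$ and $D_3$ via ${\rm Re}\,\Phi_{21}=\sqrt3\,\rho\sin\psi\,(\zeta-2\rho\cos\psi)$ and transfer it by $k\mapsto\omega k$, whereas the paper factors ${\rm Re}\,\Phi_{32}$ directly on $D_1$. Your caveat that $k\to\infty$ must be taken away from the jump rays $\Gamma_2,\Gamma_4,\Gamma_6$, where the exponential is only oscillatory, is the same restriction the paper makes implicitly by letting $u,v\to\infty$ inside $D_1$.
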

	 {\begin{proof}
		For $k = u + \ri v\in D_1$, where $0 < u < \infty$ and $0 < v < \sqrt{3}u$. The calculation yields
		\begin{equation*}
			t{\rm{Re}}\Phi_{32}(\zeta,u,v)=-\frac{\sqrt{3}t}{2}(u+\sqrt{3}v+2k_0)(v-\sqrt{3}u)>0, 
		\end{equation*}
	 and utilizing Proposition \ref{prop_G_exist}, we have
		\begin{align*}
			\left| G^{(1)}_1(k)-I\right|&= \left| v_{2,U}(k)-I\right|=\left| G_1(k){\rm{e}}^{-t\Phi_{32}}\right|\\
			&\leq c\left\lbrace  \sin^2(\arg k)+ [1+({\rm{Re}}\,k)^2]^{-1/4} \right\rbrace {\rm{e}}^{-t\,{\rm{Re}}\,\Phi_{32}}.
		\end{align*}
		Thus, 
	\begin{align*}
		\lim_{k\to\infty}\left| G^{(1)}_1(k)-I\right|\leq \lim_{u,v\to\infty}c\left[   1+ (1+u^2)^{-1/4} \right]  {\rm{e}}^{-t\,{\rm{Re}}\,\Phi_{32}(\zeta,u,v)}=0.
	\end{align*}
	Hence, it follows that  $\lim\limits_{k\to\infty}G^{(1)}_1(k)=I$, $k\in D_1$. Similarly, $\lim\limits_{k\to\infty}\left( G^{(1)}_1(k)\right) ^{-1}=I$. 
	According to the form of $G^{(1)}_1(k)-I$, only the $(2,3)$-position is non-zero:
		\begin{align*}
		   \lim_{k\to\infty}k\left(  G_1(k){\rm{e}}^{-t\Phi_{32}}\right)
			&\leq c\lim_{k\to\infty} k\left\lbrace  \sin^2(\arg \omega k)+ [1+({\rm{Re}}\,\omega k)^2]^{-1/4} \right\rbrace {\rm{e}}^{-t\Phi_{32}}   \\
			&\leq c\lim_{u,v\to\infty} \left(u+\ri v+\frac{u+\ri v}{\sqrt{u^2+v^2} } \right) {\rm{e}}^{-t\,{\rm{Re}}\,\Phi_{32}(\zeta,u,v)}=0.
		\end{align*}
		The proof for matrix $\left( G^{(1)}(k)\right) ^{-1}$ follows the same procedure as above. This is because the only difference between $G^{(1)}_1(k)$ and $\left( G^{(1)}(k)\right) ^{-1}$ is that the (2,3) entry is the opposite in sign, yet both decay rapidly as $k\to\infty$. The case for $l = 1$ will not be elaborated further and the proofs for $k$ belonging to the other five regions can be similarly obtained.
		In summary, $\left( G^{(1)}\right) ^{\pm1}=I+o(k^{-1})$.
	\end{proof}}
	 {\begin{lem}\label{lemDbarG1}
		$G^{(1)}(x,t,k)$ is a continuous but non-analytic function within its domain, and it satisfies the relation
		\begin{equation}\label{G_inverse_DbarG}
			\left( G^{(1)}(k)\right) ^{-1}\bar{\partial}G^{(1)}(k)=\bar{\partial}G^{(1)}(k),
		\end{equation}
		where the forms of $\bar{\partial}G^{(1)}_j(k)$ for $k\in D_j$ are as follows
		\begin{equation}\label{DbarG1}
			\begin{aligned}
				&\bar{\partial}G^{(1)}_1(k)=\begin{pmatrix}
					0 & 0 & 0 \\
					0 & 0 & -\bar{\partial}G_1(k){\rm{e}}^{-t\Phi_{32}} \\
					0 & 0 & 0
				\end{pmatrix},\quad
				&&\bar{\partial}G^{(1)}_2(k)=\begin{pmatrix}
					0 & 0 & 0 \\
					0 & 0  & 0 \\
					0 & -\bar{\partial}G_2(k){\rm{e}}^{t\Phi_{32}} & 0
				\end{pmatrix},\\
				&\bar{\partial}G^{(1)}_3(k)=\begin{pmatrix}
					0 & -\bar{\partial}G_3(k){\rm{e}}^{-t\Phi_{21}} & 0 \\
					0 & 0 & 0 \\
					0 & 0 & 0
				\end{pmatrix},\quad
				&&\bar{\partial}G^{(1)}_4(k)=\begin{pmatrix}
					0  & 0 & 0 \\
					-\bar{\partial}G_4(k){\rm{e}}^{t\Phi_{21}} & 0 & 0 \\
					0 & 0 & 0
				\end{pmatrix},\\
				&\bar{\partial}G^{(1)}_5(k)=\begin{pmatrix}
					0 & 0 & 0 \\
					0 & 0 & 0 \\
					-\bar{\partial}G_5(k){\rm{e}}^{t\Phi_{31}} & 0 & 0
				\end{pmatrix},\quad
				&&\bar{\partial}G^{(1)}_6(k)=\begin{pmatrix}
					0 & 0 & -\bar{\partial}G_6(k){\rm{e}}^{-t\Phi_{31}} \\
					0 & 0 & 0 \\
					0 & 0 & 0
				\end{pmatrix}.
			\end{aligned}
		\end{equation}
	\end{lem}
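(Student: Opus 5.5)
The plan is to argue region by region, exploiting that on each sector $D_j$ the matrix $G^{(1)}$ is a unipotent elementary matrix $I+E_j$, where $E_j$ has a single nonzero off-diagonal entry. First, continuity and non-analyticity. In $D_j$ the only nonconstant entry of $G^{(1)}$ has the form $\pm G_m(k)\mathrm{e}^{\pm t\Phi_{ab}}$. The exponential is entire in $k$, and $G_m$ is continuous on $\bar D_j$ by Proposition \ref{prop_G_exist} and by the symmetry relations $G_j(k)=G_{j\pm2}(\omega^{\pm1}k)$. Hence $G^{(1)}$ is continuous on each closed sector. It is not analytic in general, because the construction $G_4=\cos(2\psi)r_2(\mathrm{Re}\,k)+[1-\cos(2\psi)]g_4$ gives the nonzero $\bar\partial G_4$ computed in that proof; the other $G_j$ inherit this through the rotations.

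Second, the explicit formulas \eqref{DbarG1}. Since $\mathrm{e}^{\pm t\Phi_{ab}(\zeta,k)}$ is holomorphic in $k$, the Leibniz rule gives $\bar\partial\big(G_m\mathrm{e}^{\pm t\Phi_{ab}}\big)=(\bar\partial G_m)\mathrm{e}^{\pm t\Phi_{ab}}$. Applying $\bar\partial$ entrywise to $v_{2,U}$, $v_{2,L}^{-1}$, $v_{4,U}$, $v_{4,L}^{-1}$, $v_{6,L}$, $v_{6,U}^{-1}$ then gives a matrix supported at the single position $(2,3)$, $(3,2)$, $(1,2)$, $(2,1)$, $(3,1)$, $(1,3)$ respectively. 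Taking inverses of elementary matrices only flips the sign of the off-diagonal entry, and this accounts for the signs displayed. At this point I would verify the signs against the definitions carefully, since that is where slips occur. For instance, $v_{2,L}^{-1}$ has entry $-G_2\mathrm{e}^{t\Phi_{32}}$ at $(3,2)$, which matches the formula. For $D_1$, however, $v_{2,U}$ itself carries $-G_1$, which is consistent as displayed.

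Third, the identity \eqref{G_inverse_DbarG}. In each $D_j$ write $G^{(1)}=I+E$ with $E=\epsilon e_{ab}$, $a\neq b$. Then $(G^{(1)})^{-1}=I-E$, because $E^2=\epsilon^2e_{ab}e_{ab}=0$. Moreover $\bar\partial G^{(1)}=(\bar\partial\epsilon)e_{ab}$, so $E\,\bar\partial G^{(1)}=\epsilon\,\bar\partial\epsilon\,e_{ab}e_{ab}=0$. Therefore
\begin{equation*}
(G^{(1)})^{-1}\bar\partial G^{(1)}=(I-E)\bar\partial G^{(1)}=\bar\partial G^{(1)}.
\end{equation*}
The only step requiring care is the nilpotency observation: $E$ and $\bar\partial G^{(1)}$ are supported at the same off-diagonal position, and $e_{ab}e_{ab}=\delta_{ba}e_{ab}=0$ because $a\neq b$. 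The rest is bookkeeping over the six sectors. I would write out $D_1$ fully and invoke the symmetry $k\mapsto\omega k$ together with conjugation by $\mathcal{A}$ for the remaining sectors.
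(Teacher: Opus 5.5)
Your proposal is correct and follows the same route as the paper's own justification in Remark \ref{remark_3.1}. In both, continuity and (generic) non-analyticity come from the $G_j$ of Proposition \ref{prop_G_exist}, the entries of $\bar\partial G^{(1)}_j$ come from the holomorphy of $\mathrm{e}^{\pm t\Phi_{ab}}$, and \eqref{G_inverse_DbarG} follows by direct matrix multiplication. The paper only remarks that a single entry of $\bar\partial G^{(1)}$ is nonzero; your point that $G^{(1)}-I$ sits at the same off-diagonal position $(a,b)$, so that $e_{ab}e_{ab}=0$, is the fact that actually makes that multiplication work.
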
}
 {\begin{remark}\label{remark_3.1}
		From the proof of Proposition \ref{prop_G_exist}, it can be seen that the constructions of functions  $G_j(k)$, $j=1,2,\cdots,6,$ satisfy the definition of continuous functions, but they are not analytic functions. Consequently, the matrix $G_j^{(1)}(k)$ defined by $G_j(k)$ must be continuous but non-analytic. The lemma above also provides the specific form of $\bar\partial G^{(1)}_j$ for $k\in D_j$.
		Equation \eqref{G_inverse_DbarG} can be easily proved by noting that only one element among the nine elements of matrix $\bar{\partial}G^{(1)}(k)$ in \eqref{dbarGj} is nonzero. We can directly compute $\left( G^{(1)}(k)\right) ^{-1}$ and verify the result by performing matrix multiplication.
\end{remark}}
	
	 {Under the above transformation \eqref{trans1}, the new vector eigenfunction $N^{(1)}$$(x,t,k)$$=$$(\omega\,\,\omega^2\,\,1)$$M^{(1)}$ $(x,t,k)$ obtained satisfies the jump relations $N^{(1)}_+(x,t,k)$$=$$N^{(1)}_-$$(x,t,k)v_j^{(1)}(x,t,k)$ for $k\in\Gamma_j^{(1)}$, $j=1,2,3$ (see Fig. \ref{figGamma1}).} The specific forms of $v_j^{(1)}$, $j=1,2,3,$ are given as follows:
	\begin{align}\label{v1}
		&v_1^{(1)}(x,t,k)=\begin{pmatrix}
			1 & -r_1(k){\rm{e}}^{-t \Phi_{21}} & \left[ r_1(k)g_1(k)+g_6(k)\right]{\rm{e}}^{-t \Phi_{31}}\\
			r_1^*(k){\rm{e}}^{t \Phi_{21}} & 1-\left|r_1(k) \right|^2 & -g_1(k)(1-\left| r_1(k)\right|^2) {\rm{e}}^{-t \Phi_{32}}\\
			0 & 0 & 1
		\end{pmatrix},\nonumber\\
		&v_2^{(1)}(x,t,k)=\begin{pmatrix}
			1-\left|r_1(\omega^2 k) \right|^2 & -g_3(k)( 1-\left|r_1(\omega^2 k) \right|^2) {\rm{e}}^{-t \Phi_{21}} & r_1^*(\omega^2 k){\rm{e}}^{-t \Phi_{31}}\\
			0 & 1 & 0\\
			-r_1(\omega^2 k){\rm{e}}^{t \Phi_{31}} & \left[r_1(\omega^2 k)g_3(k)+g_2(k)\right]{\rm{e}}^{t \Phi_{32}} & 1
		\end{pmatrix},\\
		&v_3^{(1)}(x,t,k)=\begin{pmatrix}
			1 & 0 & 0\\
			[r_1(\omega k)g_5(k)+g_4(k)]{\rm{e}}^{t \Phi_{21}} & 1 & -r_1(\omega k){\rm{e}}^{-t \Phi_{32}}\\
			-g_5(k) (1-\left|r_1(\omega k) \right|^2) {\rm{e}}^{t \Phi_{31}} & r_1^*(\omega k) {\rm{e}}^{t \Phi_{32}} & 1-\left|r_1(\omega k) \right|^2
		\end{pmatrix},\nonumber
	\end{align}
	and they satisfy the following symmetry relationships:
	\begin{align*}
		v^{(1)}_2(k)=\mathcal{A}^{-1}v_1^{(1)}(\omega^2k)\mathcal{A},\quad
		v^{(1)}_3(k)=\mathcal{A}^{-2}v_1^{(1)}(\omega k)\mathcal{A}^2.
	\end{align*}
	
		\begin{figure}[htbp]
		\centering
		\begin{tikzpicture}[scale=1.2]
			
			\draw[very thick, black!20!blue, -latex] (0,0) -- (2,0);
			\draw[very thick, black!20!blue, latex-] (-1,2*1.732/2) -- (0,0);
			\draw[very thick, black!20!blue, latex-] (-1,-2*1.732/2) -- (0,0);
			
			\draw [very thick,black!20!blue](0,0) -- (3,0);
			\draw [very thick,black!20!blue](-1.5,1.5*1.732) -- (0,0);
			\draw [very thick,black!20!blue](0,0) -- (-1.5,-1.5*1.732);
			
			\draw [dashed,very thick,white!50!blue](0,0) -- (-3,0);
			\draw [dashed,very thick,white!50!blue](1.5,-1.5*1.732) -- (0,0);
			\draw [dashed,very thick,white!50!blue](0,0) -- (1.5,1.5*1.732);

			\node[red!70!black,below] at (1.85,-0.1) {$1$};
			\node[red!70!black,right] at (-0.9,1*1.65) {$2$};
			\node[red!70!black,right] at (-0.9,-1.6) {$3$};

			\node[below] at (1,0) {$k_0$};
			\node[left] at (-0.5,1.732/2) {$\omega k_0$};
			\node[left] at (-0.5,-1.732/2) {$\omega^2 k_0$};
			
			\fill (0,0) circle (1pt);
			\fill (1,0) circle (1.5pt);
			\fill (-1/2,1.732/2) circle (1.5pt);
			\fill (-1/2,-1.732/2) circle (1.5pt);
			
			\node[right] at (3,0) {${\rm{Re}\,k}$};
			\draw (0:0.4cm) arc (0:60:0.4cm);
			\node[right] at (0.25,0.2) {{\small  $\pi/3$}};
		\end{tikzpicture}
		\caption{The jump contour $\Gamma^{(1)}$ in the complex $k$-plane.}
		\label{figGamma1}
	\end{figure}
	
	 {The eigenfunction $N^{(1)}(x,t,k)$ satisfies the following mixed problem.}
	 {\begin{rhp-Dbar}\label{rhp-Dbar1}
		Find a $1 \times 3$-row-vector valued function $N^{(1)}(x, t, k)$ with the following properties:
		\begin{enumerate}
			\item $N^{(1)}(x, t, \cdot)$ is continuous for $k\in \mathbb{C}\setminus\Gamma^{(1)}$ with continuous boundary values $N^{(1)}_{\pm}$.
			\item The jump relation
			\begin{equation*}
				N^{(1)}_+(x, t, k) = N^{(1)}_-(x, t, k) v^{(1)}(x, t, k), \quad k \in \Gamma^{(1)},
			\end{equation*}
			where $v^{(1)}(x,t,k)=v^{(1)}_j(x,t,k)$ for $k\in \Gamma^{(1)}_j$, $j=1,2,3$ are given by equation \eqref{v1}, and the contour of  $\Gamma^{(1)}$ is given by Fig. \ref{figGamma1}.
			\item $N^{(1)}(x, t, k)\rightarrow (\omega \,\, \omega^2 \,\, 1)$ as $k\to \infty$ in $\mathbb{C}\setminus\Gamma^{(1)}$.
			\item As $k\to 0$, $
			N^{(1)}(x, t, k) =\mathcal{O}(1)$.
			\item For $k\in\mathbb{C}\setminus\Gamma^{(1)}$, there exists the $\bar{\partial}$ problem $\bar{\partial}N^{(1)}( k)=N^{(1)}(k)\bar{\partial}G^{(1)}(k) $.
		\end{enumerate}
	\end{rhp-Dbar}}
	
	We decompose $M^{(1)}(x,t,k)$ in equation \eqref{trans1} as follows
	\begin{equation}\label{M1factorisation}
		M^{(1)}(x,t,k)=E^G(x,t,k)M^{RH}(x,t,k),
	\end{equation}
	 {where $E^{G}(x,t,k)$ is the solution to the following pure $\bar{\partial}$ problem \ref{DbarEG}, and $M^{RH}(x,t,k)$ is the solution to a pure matrix RH problem. The vector eigenfunction $N^{RH}(x,t,k)=(\omega\,\,\omega^2\,\,1)M^{RH}(x,t,k)$, which has no singularity as $k \to 0$ as given by $M^{RH}$, is the solution to the following pure RH problem \ref{rhpMRH}.}
	 {\begin{rhp}\label{rhpMRH}
		Find a $1 \times 3$-row-vector valued function $N^{RH}(x, t, k)$ with the following properties:
		\begin{enumerate}
			\item  $N^{RH}(x, t, \cdot) : \mathbb{C} \setminus \Gamma^{(1)} \rightarrow \mathbb{C}^{1 \times 3}$ is analytic, that is, $\bar{\partial}N^{RH}(k)=O_{1\times 3}$.
			\item The jump relation
			\begin{equation*}
				N^{RH}_+(x, t, k) = N^{RH}_-(x, t, k) v^{(1)}(x, t, k), \quad k \in \Gamma^{(1)},
			\end{equation*}
			where $v^{(1)}(x,t,k)=v^{(1)}_j(x,t,k)$ for $k\in \Gamma^{(1)}_j$, $j=1,2,3$ are given by equation \eqref{v1}.
			\item $N^{RH}(x, t, k)\to (\omega \,\, \omega^2 \,\, 1)$ as $k\to \infty$ in $\mathbb{C}\setminus\Gamma^{(1)}$.
			\item As $k\to 0$, $
			N^{RH}(x, t, k) =\mathcal{O}(1)$.
		\end{enumerate}
	\end{rhp}}
	
	\begin{Dbar}\label{DbarEG}
		Find a $3 \times 3$-matrix valued function $E^{G}(x, t, k)$ with the following properties:
		\begin{enumerate}
			\item $E^{G}$ is continuous in the complex plane $\mathbb{C}$.
			\item $E^{G}(x, t, k)\to I$ as $k\to \infty$ in $\mathbb{C}$.
			\item $\bar{\partial}E^{G}(k)=E^{G}(k)Y(k)$,
			where
			\begin{equation*}
				Y(k)=M^{RH}(k)\bar{\partial}G^{(1)}_j(k)\left(M^{RH}(k) \right)^{-1},\quad k\in D_j,\quad j=1,2,\cdots,6,
			\end{equation*}
			and $\bar{\partial}G^{(1)}_j(k)$, $j=1,2,\cdots,6$, are given by equation \eqref{DbarG1}.
			\end{enumerate}
	\end{Dbar}
	\begin{remark}
		We note that $M^{RH}(x,t,k)$ is analytic and $M^{(1)}(x,t,k)$ is continuous in $D_j$, $j=1,2,\cdots,6$, hence
	\begin{align*}
			\bar{\partial} E^{G}(k)=&\bar{\partial}\left[M^{(1)}(k)\left(M^{RH}(k) \right)^{-1}   \right]\\
			=&\bar{\partial}M^{(1)}(k)\left(M^{RH}(k) \right)^{-1}\\
			=&M^{(1)}(k)\bar{\partial}G^{(1)}(k)\left(M^{RH}(k) \right)^{-1}\\
			=&E^G(k)\left[ M^{RH}(k)\bar{\partial}G^{(1)}(k)\left(M^{RH}(k) \right)^{-1}\right].
		\end{align*}
		In other words, $Y(k)=M^{RH}(k)\bar{\partial}G^{(1)}(k)\left(M^{RH}(k) \right)^{-1}$.
	\end{remark}

	\subsection{Perform the second transformation }
	\ \ \ \
	The scalar RH problem is first introduced as follows:
	\begin{align*}
		&\delta_{1+}(\zeta,k)=\left\lbrace\begin{aligned}
			&\delta_{1-}(\zeta,k)\left( 1-\left|r_1(k) \right|^2\right) ,\quad &&k\in \left[ k_0,\infty\right), \\
			&\delta_{1-}(\zeta,k), && k\in \left( -\infty,k_0\right),
		\end{aligned}
		\right.\\
		&\delta_1(\zeta,k)\to 1,\quad k\to\infty.
	\end{align*}
	Based on the relationship $k_0={\zeta}/{2}$, we obtain that there exists $\epsilon>0$ such that the following inequality holds for all $k\in \left[ k_0,\infty\right)$ and $\zeta\in \mathcal{I}$
	\begin{equation*}
		\left| r_1(k_0)\right| \leq 1-\epsilon.
	\end{equation*}
	
	By Plemeli formulas, we obtain
	\begin{equation}\label{delta1}
		\delta_1(\zeta,k)={\rm{exp}}\left\lbrace \frac{1}{2 \pi \ri} \int_{\left[ k_0,\infty\right)} \frac{\ln \left(1-\left|r_1(s)\right|^2\right)}{s-k} {\rm{d}}s \right\rbrace,\quad k\in \mathbb{C}\setminus \left[ k_0,\infty\right).
	\end{equation}

	\begin{prop}\label{prop_delta}
		 {Assuming $\zeta\in \mathcal{I}$, $r_1\in H^{3,4}(0,\infty)$, and based on the properties of the reflection coefficient $\left\| r_1\right\|_{L^\infty(0,\infty)}\leq \sigma<1 $, then the following properties hold:}
		\begin{enumerate}
			\item $\delta_1(k)$ is analytic on $\mathbb{C}\setminus\left[ k_0,\infty\right) $ and can be written as
			\begin{equation*}
				\delta_1(\zeta,k)={\rm{e}}^{-\ri\nu \ln_0(k-k_0)}{\rm{e}}^{-\chi(\zeta,k)},
			\end{equation*}
			where
			\begin{align*}
				&\nu=-\frac{1}{2\pi}\ln(1-\left|r_1(k_0) \right|^2 ),\\
				&\chi(\zeta,k)=\frac{1}{2\pi \ri}\int_{k_0}^{\infty}\ln_0(k-s){\rm{d}}\ln(1-\left|r_1(s)\right|^2 ).
			\end{align*}
			\item $\delta_1(k)\overline{\delta_1(\bar k)}=1$, $\left\|\delta_{1\pm} -1\right\|_{L^2}\leq \frac{c\left\|r_1 \right\|_{L^2} }{1-\sigma} $.
			\item $(1-\sigma^2)^{1/2}\leq \left| \delta_1(k)\right| \leq (1-\sigma^2)^{-1/2} $.
			\item  As $k\rightarrow k_0$ along a path which is nontangential to  $(k_0, \infty)$,  we have
			\begin{align*}
				&|\chi_1(\zeta, k) - \chi_1(\zeta, k_0)| \leq C\left| k - k_0\right|^{1/2} , \\
				&|\partial_x(\chi_1(\zeta, k) - \chi_1(\zeta, k_0))| \leq \frac{C}{t}(1 + |\ln|k - k_0||),
			\end{align*}
				where $C$ is independent of $\zeta \in \mathcal{I}$.  Furthermore,
				\begin{equation*}
					|\partial_x \chi_1(\zeta, k_0)| = \frac{1}{t}\left|\partial_u \chi_1(u, v)|_{(u,v)=(\zeta,k_0)} + \frac{1}{2}\partial_v \chi_1(u, v)|_{(u,v)=(\zeta,k_0)}\right| \leq \frac{C}{t}
				\end{equation*}
				and
				\begin{equation*}
					\partial_x(\delta_1(\zeta, k)^{\pm 1}) = \frac{\pm \ri\nu}{2t(k - k_0)}\delta_1(\zeta, k)^{\pm 1}.
				\end{equation*}
		\end{enumerate}
	\end{prop}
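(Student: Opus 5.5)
The proof works directly from the Cauchy-integral representation \eqref{delta1}, treating it as a scalar jump problem on the half-line $[k_0,\infty)$. The four items are taken in order.

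\textbf{Item 1.} Analyticity of $\delta_1$ off $[k_0,\infty)$ is immediate. The density $\ln(1-|r_1|^2)$ is bounded, since $\|r_1\|_{L^\infty}\le\sigma<1$, and it lies in $L^1\cap L^2$, since $|\ln(1-|r_1|^2)|\le c|r_1|^2$ and $r_1\in H^{3,4}$. Hence the Cauchy integral is holomorphic there. For the factorized form, integrate by parts in $s$, writing $\frac{1}{s-k}=-\partial_s\ln_0(k-s)$ with a branch cut chosen along $[k_0,\infty)$. The boundary term at $s=k_0$ produces $\ln(1-|r_1(k_0)|^2)\ln_0(k-k_0)/(2\pi\ri)=-\ri\nu\ln_0(k-k_0)$. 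The boundary term at infinity vanishes because $r_1$ decays. What remains is $-\chi(\zeta,k)$ with the Stieltjes integral against $\rd\ln(1-|r_1|^2)$. One checks that $\ln_0(k-s)$ and $\ln_0(k-k_0)$ use a consistent branch, so that the argument jumps exactly by $2\pi$ across the cut.

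\textbf{Items 2 and 3.} The identity $\delta_1(k)\overline{\delta_1(\bar k)}=1$ follows by conjugating the exponent: the density is real, so $\overline{\frac{1}{2\pi\ri}\int\frac{f(s)}{s-\bar k}\rd s}=-\frac{1}{2\pi\ri}\int \frac{f(s)}{s-k}\rd s$. Alternatively, both sides solve the same scalar RH problem, and one invokes uniqueness via Liouville.

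For item 3, write $\delta_1=\mathrm{e}^{C f}$ with $f=\ln(1-|r_1|^2)$. Then $|\delta_1|=\exp(\mathrm{Re}\, Cf)$, and $\mathrm{Re}\,\frac{1}{2\pi\ri}\int\frac{f(s)}{s-k}\rd s=\frac{1}{2\pi}\int\frac{f(s)\,\mathrm{Im}k}{|s-k|^2}\rd s$. This is half a Poisson average of $f$, and $f\in[\ln(1-\sigma^2),0]$. Hence $|\delta_1|$ lies between $(1-\sigma^2)^{1/2}$ and $(1-\sigma^2)^{-1/2}$ in both half-planes.

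For the $L^2$ bound, use $\delta_{1\pm}-1=\delta_{1\pm}\big(1-\mathrm{e}^{-C_\pm f}\big)$ together with $|1-\mathrm{e}^{z}|\le|z|\mathrm{e}^{|\mathrm{Re}\, z|}$. The boundedness of $C_\pm$ on $L^2$ then gives $\|C_\pm f\|_{L^2}\le\|f\|_{L^2}\le c\|r_1\|_{L^2}/(1-\sigma)$, using $|f|\le |r_1|^2/(1-\sigma^2)\le\sigma|r_1|/(1-\sigma^2)$.

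\textbf{Item 4: the main obstacle.} For the Hölder estimate, note that $\chi(k)-\chi(k_0)=\frac{1}{2\pi\ri}\int_{k_0}^\infty[\ln_0(k-s)-\ln_0(k_0-s)]\,\phi'(s)\rd s$ with $\phi=\ln(1-|r_1|^2)$, so $\phi'\in L^2$. Split the integral at $|s-k_0|=2|k-k_0|$. On the near part, the logarithms are integrable with $L^2$-norm $O(|k-k_0|^{1/2}\,|\ln|k-k_0||)$. A sharper treatment, rewriting the difference as $\int_{k_0}^{k}\frac{\rd\ell}{\ell-s}$ and using nontangentiality so that $|\ell-s|\gtrsim|\ell-k_0|+|s-k_0|$, yields $C|k-k_0|^{1/2}$ via Cauchy--Schwarz with $\|\phi'\|_{L^2}$. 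On the far part, $|\ln\frac{k-s}{k_0-s}|\le c|k-k_0|/|s-k_0|$, whose $L^2$-norm over $|s-k_0|>2|k-k_0|$ is $O(|k-k_0|^{1/2})$. Because $\zeta\in\mathcal I$, $k_0$ ranges over a compact set, so the constants are uniform.

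For the $x$-derivatives, use $\partial_x=t^{-1}\partial_\zeta$ and $\partial_\zeta k_0=1/2$. Differentiating the lower limit gives $\partial_\zeta\chi(\zeta,k)=-\frac{1}{2}\cdot\frac{1}{2\pi\ri}\ln_0(k-k_0)\phi'(k_0)$. This is bounded by $C(1+|\ln|k-k_0||)$, using $\phi'\in L^\infty$ from the Sobolev embedding $H^{3,4}\subset C^1$, and the same quantity evaluated at $k_0$ is bounded. The identity for $\partial_x\delta_1^{\pm1}$ comes from differentiating \eqref{delta1} directly: only the endpoint $k_0$ depends on $x$, giving $\partial_x\ln\delta_1=-\frac{1}{2t}\cdot\frac{1}{2\pi\ri}\frac{\ln(1-|r_1(k_0)|^2)}{k_0-k}=\frac{\ri\nu}{2t(k-k_0)}$.

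The delicate point is getting exactly the exponent $1/2$, without a logarithmic loss, in the Hölder bound near the endpoint singularity. Here nontangential approach and the $L^2$ control of $\phi'$ are essential.
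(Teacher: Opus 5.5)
Your plan is the direct estimation from \eqref{delta1} that the paper alludes to; the paper gives no further detail. Several parts go through:
\begin{itemize}
\item items 1--3;
\item the H\"older bound via $\ln_0(k-s)-\ln_0(k_0-s)=\int_{k_0}^{k}\frac{\rd\ell}{\ell-s}$ with the nontangential lower bound on $|\ell-s|$;
\item the bound on $\partial_x\chi(\zeta,k)$ at fixed $k$;
\item the formula for $\partial_x\delta_1^{\pm1}$.
\end{itemize}

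\textbf{The gap.} The estimate $|\partial_x\chi(\zeta,k_0)|\le C/t$ is not proved, and the second inequality of item 4 also rests on it, since $|\partial_x(\chi(\zeta,k)-\chi(\zeta,k_0))|\le|\partial_x\chi(\zeta,k)|+|\partial_x[\chi(\zeta,k_0(\zeta))]|$. You dispose of it with ``the same quantity evaluated at $k_0$ is bounded.'' But with your $\phi=\ln(1-|r_1|^2)$, the quantity $-\frac{1}{4\pi\ri}\ln_0(k-k_0)\phi'(k_0)$ blows up as $k\to k_0$ unless $\phi'(k_0)=0$. It is also only half of the derivative. $\chi(\zeta,k_0(\zeta))$ depends on $\zeta$ through the evaluation point as well, which contributes $\frac12\partial_v\chi(u,v)=\frac{1}{4\pi\ri}\int_{k_0}^{\infty}\frac{\phi'(s)}{v-s}\,\rd s$. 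This term likewise diverges logarithmically at $v=k_0$. The combination $\partial_u\chi+\frac12\partial_v\chi$ in the statement is finite only because these two logarithmic singularities cancel, and your argument never shows this.

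\textbf{The fix.} Substitute $s=k_0+w$. Then $\chi(\zeta,k_0)=\frac{1}{2\pi\ri}\int_0^\infty(\ln w+\ri\pi)\,\phi'(k_0+w)\,\rd w$, and all the $\zeta$-dependence sits inside $\phi'$. Differentiating under the integral gives $\partial_x\chi(\zeta,k_0)=\frac{1}{4\pi\ri t}\int_0^\infty(\ln w+\ri\pi)\,\phi''(k_0+w)\,\rd w$. This is $\mathcal{O}(t^{-1})$ uniformly for $\zeta\in\mathcal{I}$, because $r_1\in H^{3,4}$ makes $\phi''$ bounded with $(1+|s|^4)\phi''\in L^2$. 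Note that this needs a second derivative of $\phi$; $\phi'\in L^\infty$ alone is not enough.

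\textbf{Two smaller slips.}
\begin{itemize}
\item In item 1 the identity is $\frac{1}{s-k}=+\partial_s\ln_0(k-s)$. The boundary term at the lower limit is therefore $-\frac{1}{2\pi\ri}\ln(1-|r_1(k_0)|^2)\ln_0(k-k_0)=-\ri\nu\ln_0(k-k_0)$. Your intermediate signs are inconsistent, although the final formula is right.
\item In the $L^2$ bound your chain drops the factor $|\delta_{1\pm}|\,\mathrm{e}^{|\mathrm{Re}\,C_\pm f|}\le(1-\sigma^2)^{-1}$, so it only yields $\sigma(1-\sigma^2)^{-2}\|r_1\|_{L^2}$. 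To get the stated $c/(1-\sigma)$, use $\delta_{1\pm}-1=C_\pm(\delta_{1+}-\delta_{1-})$ with $\delta_{1+}-\delta_{1-}=-|r_1|^2\delta_{1-}$ on $(k_0,\infty)$. This gives $\|\delta_{1\pm}-1\|_{L^2}\le\sigma(1-\sigma^2)^{-1/2}\|r_1\|_{L^2}\le\|r_1\|_{L^2}/(1-\sigma)$.
\end{itemize}
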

	\begin{proof}
		The proof can be completed by directly estimating based on equation \eqref{delta1}.
	\end{proof}

	By utilizing the symmetry, we can still provide the definitions of the functions $\delta_2$ and $\delta_3$ as follows
	\begin{align*}
		&\delta_2(\zeta,k)=\delta_1(\zeta,\omega^2k),\quad k\in \mathbb{C}\setminus \omega\left[ k_0,\infty\right),\\
		&\delta_3(\zeta,k)=\delta_1(\zeta,\omega k), \quad \,\, k\in \mathbb{C}\setminus \omega^2\left[ k_0,\infty\right).
	\end{align*}
	More importantly,
	\begin{align*}
		&\delta_2(\zeta,k)={\rm{exp}}\left\lbrace \frac{1}{2 \pi \ri} \int_{\omega\left[ k_0,\infty\right)} \frac{\ln \left(1-\left|r_1(s)\right|^2\right)}{s-k} {\rm{d}}s \right\rbrace,\quad &&k\in \mathbb{C}\setminus \omega\left[ k_0,\infty\right),\\
		&\delta_3(\zeta,k)={\rm{exp}}\left\lbrace \frac{1}{2 \pi \ri} \int_{\omega^2\left[ k_0,\infty\right)} \frac{\ln \left(1-\left|r_1(s)\right|^2\right)}{s-k} {\rm{d}}s \right\rbrace, &&k\in \mathbb{C}\setminus \omega^2\left[ k_0,\infty\right).
	\end{align*}

Construct the transformation $\Delta(\zeta,k)$ to obtain the new eigenfunction
\begin{equation*}\label{trans2}
	M^{(2)}(x,t,k)=M^{RH}(x,t,k)\Delta(\zeta,k),
\end{equation*}
 where
\begin{equation*}
	\Delta(\zeta,k)=
	\begin{pmatrix}
		\frac{\delta_1(\zeta,k)}{\delta_2(\zeta,k)} & 0 & 0\\
		0 & \frac{\delta_3(\zeta,k)}{\delta_1(\zeta,k)} & 0\\
		0 & 0 &\frac{\delta_2(\zeta,k)}{\delta_3(\zeta,k)}
	\end{pmatrix}.
\end{equation*}
	Based on Proposition \ref{prop_delta}, it can be concluded that $\Delta$ and $\Delta^{-1}$ are uniformly bounded for $\zeta\in\mathcal{I}$ and $k\in \mathbb{C}\setminus\Gamma^{(2)}$ (see Fig. \ref{figGamma2}), and satisfy $\Delta(\zeta,k)=I+\mathcal{O}(k^{-1})$ as $k\to \infty$.
	\begin{figure}[htbp]
		\centering
		\begin{tikzpicture}[scale=1]
			\draw[very thick, black!20!blue, -latex] (0,0) -- (1.3,0);
			\draw[very thick, black!20!blue, latex-] (-1.3/2,1.3*1.732/2) -- (0,0);
			\draw[very thick, black!20!blue, latex-] (-1.3/2,-1.3*1.732/2) -- (0,0);
			
			\draw[very thick, black!20!blue, -latex] (0,0) -- (3,0);
			\draw[very thick, black!20!blue, latex-] (-3/2,3*1.732/2) -- (0,0);
			\draw[very thick, black!20!blue, latex-] (-3/2,-3*1.732/2) -- (0,0);

			\draw [very thick,black!20!blue](0,0) -- (4,0);
			\draw [very thick,black!20!blue](-2,2*1.732) -- (0,0);
			\draw [very thick,black!20!blue](0,0) -- (-2,-2*1.732);
			
			\draw [dashed,very thick,white!50!blue](0,0) -- (-4,0);
			\draw [dashed,very thick,white!50!blue](2,-2*1.732) -- (0,0);
			\draw [dashed,very thick,white!50!blue](0,0) -- (2,2*1.732);

			\node[red!70!black,below] at (2.8,-0.1) {$1$};
			\node[red!70!black,left] at (-1.45,1.4*1.65) {$2$};
			\node[red!70!black,left] at (-1.45,-1.4*1.7) {$3$};
			\node[red!70!black,below] at (1.13,-0.1) {$4$};
			\node[red!70!black,left] at (-1.13/2,1.13*1.732/2) {$5$};
			\node[red!70!black,left] at (-1.13/2,-1.13*1.732/2) {$6$};
			
			\node[above] at (2,0) {$k_0$};
			\node[right] at (-1,1.732) {$\omega k_0$};
			\node[right] at (-1,-1.732) {$\omega^2 k_0$};
			
			\fill (2,0) circle (1.8pt);
			\fill (-1,1.732) circle (1.8pt);
			\fill (-1,-1.732) circle (1.8pt);
			\fill (0,0) circle (1pt);
			
			\node[right] at (4,0) {${\rm{Re}\,k}$};
			\draw (0:0.4cm) arc (0:60:0.4cm);
			\node[right] at (0.25,0.2) {{\small  $\pi/3$}};
		\end{tikzpicture}
		\caption{The jump contour $\Gamma^{(2)}$ in the complex $k$-plane.}
		\label{figGamma2}
	\end{figure}
	
	The jump relationship $M^{(2)}_+(\zeta,k)=M^{(2)}_-(\zeta,k)v^{(2)}(\zeta,k)$ for $k\in\Gamma^{(2)}$ as shown in the Fig. \ref{figGamma2} satisfied by $M^{(2)}$, with the corresponding jump matrix $v^{(2)}=\Delta_-^{-1}v^{(1)}\Delta_+$, are given in the following form
	\begin{align*}
		v_1^{(2)}
			&=\begin{pmatrix}
				1-\left|r_1(k) \right|^2 & -\frac{\delta_2\delta_3}{ \delta_{1-}^2} \frac{r_1(k)}{1-\left|r_1(k) \right|^2}{\rm{e}}^{-t \Phi_{21}} & \frac{\delta_2^2}{\delta_{1-}\delta_3}[r_1(k)g_1(k)+g_6(k)]{\rm{e}}^{-t \Phi_{31}}\\
				\frac{\delta_{1+}^2}{\delta_2\delta_3}\frac{r_1^*(k)}{1-\left|r_1(k) \right|^2}{\rm{e}}^{t \Phi_{21}} & 1 & -\frac{\delta_{1+}\delta_2}{\delta_3^2}g_1(k){\rm{e}}^{-t \Phi_{32}}\\
				0 & 0 & 1
			\end{pmatrix},\\
			v_2^{(2)}
			&=\begin{pmatrix}
				1 & -\frac{\delta_{2+}\delta_3}{\delta_1^2}g_3(k){\rm{e}}^{-t \Phi_{21}} & \frac{\delta_{2+}^2}{\delta_1\delta_3}\frac{r_1^*(\omega^2k)}{1-\left| r_1(\omega^2 k)\right|^2 }{\rm{e}}^{-t \Phi_{31}}\\
				0 & 1 & 0\\
				-\frac{\delta_1\delta_3}{\delta_{2-}^2}\frac{r_1(\omega^2k)}{1-\left| r_1(\omega^2 k)\right|^2 }{\rm{e}}^{t \Phi_{31}} & \frac{\delta_3^2}{\delta_1\delta_{2-}}[r_1(\omega^2k)g_3(k)+g_2(k)]{\rm{e}}^{t \Phi_{32}} & 1-\left| r_1(\omega^2 k)\right|^2
			\end{pmatrix},\\
			v_3^{(2)}
			&=\begin{pmatrix}
				1 & 0 & 0\\
				\frac{\delta_1^2}{\delta_2\delta_{3-}}[r_1(\omega k)g_5(k)+g_4(k)] {\rm{e}}^{t \Phi_{21}} & 1-\left|r_1(\omega k) \right|^2 & -\frac{\delta_1\delta_2}{\delta_{3-}^2} \frac{r_1(\omega k)}{1-\left|r_1(\omega k) \right|^2}{\rm{e}}^{-t \Phi_{32}}\\
				-\frac{\delta_1\delta_{3+}}{\delta_2^2}g_5(k){\rm{e}}^{t \Phi_{31}} & \frac{\delta_{3+}^2}{\delta_1\delta_2} \frac{r_1^*(\omega k)}{1-\left|r_1(\omega k) \right|^2}{\rm{e}}^{t \Phi_{32}} & 1
			\end{pmatrix},\\
			v_4^{(2)}&=\begin{pmatrix}
				1 & -\frac{\delta_2\delta_3}{\delta_1^2}r_1(k) {\rm{e}}^{-t \Phi_{21}} & \frac{\delta_2^2}{\delta_1\delta_3}[r_1(k)g_1(k) + g_6(k)]{\rm{e}}^{-t \Phi_{31}} \\
				\frac{\delta_1^2 }{\delta_2\delta_3}r_1^*(k) {\rm{e}}^{t \Phi_{21}} & 1-\left| r_1(k)\right|^2  & -\frac{\delta_1\delta_2}{\delta_3^2}g_1(k) (1-\left| r_1(k)\right|^2){\rm{e}}^{-t \Phi_{32}} \\
				0 & 0 & 1
			\end{pmatrix},\\
			v_5^{(2)}&=\begin{pmatrix}
				1 - \left|r_1(\omega^2 k) \right|^2  & -\frac{\delta_2\delta_3}{\delta_1^2}g_3(k)(1 - \left|r_1(\omega^2 k) \right|^2) {\rm{e}}^{-t\Phi_{21}}& \frac{\delta_2^2 }{\delta_1\delta_3}r_1^*(\omega^2 k) {\rm{e}}^{-t\Phi_{31}} \\
				0 & 1 & 0 \\
				-\frac{\delta_1\delta_3}{\delta_2^2}r_1(\omega^2 k){\rm{e}}^{t\Phi_{31}} & \frac{\delta_3^2 }{\delta_1 \delta_2}[r_1(\omega^2 k)g_3(k) + g_2(k)] {\rm{e}}^{t\Phi_{32}} & 1
			\end{pmatrix},\\
			v_6^{(2)}&=\begin{pmatrix}
				1 & 0 & 0 \\
				\frac{\delta_1^2 }{\delta_2\delta_3}[ r_1(\omega k)g_5(k)+g_4(k) ] {\rm{e}}^{t \Phi_{21}} & 1 & -\frac{\delta_1\delta_2}{\delta_3^2} r_1(\omega k){\rm{e}}^{-t \Phi_{32}} \\
				-\frac{\delta_1\delta_3}{\delta_2^2}(1-\left|r_1(\omega k) \right|^2 ) g_5(k){\rm{e}}^{t \Phi_{31}} & \frac{\delta_3^2}{\delta_1\delta_2}r_1^*(\omega k){\rm{e}}^{t \Phi_{32}} & 1-\left|r_1(\omega k) \right|^2
			\end{pmatrix}.
	\end{align*}

	\subsection{Perform the third transformation}	
	\ \ \ \
	To simplify the notation of the reflection coefficient in the jump matrix $v^{(2)}(x,t,k)$, we define
	\begin{equation*}
	   \rho_1(k)=\frac{r_1(k)}{1-\left|r_1(k) \right|^2 },\quad k\in \mathbb{R}_+.
	\end{equation*}
	And we provide the following triangular decomposition of the jump matrices:
	
	\begin{align*}
		v_1^{(2)}(\zeta,k)
		=&\begin{pmatrix}
			1 & -\frac{\delta_2\delta_3}{ \delta_{1-}^2} \rho_1(k){\rm{e}}^{-t \Phi_{21}} & \frac{\delta_2^2}{\delta_{1-}\delta_3}r_2(0){\rm{e}}^{-t \Phi_{31}}\\
			0 & 1 & 0\\
			0 & 0 & 1
		\end{pmatrix}
		\begin{pmatrix}
		1 & 0 & 0\\
		\frac{\delta_{1+}^2}{\delta_2\delta_3}\rho_1^*(k){\rm{e}}^{t \Phi_{21}} & 1 & -\frac{\delta_{1+}\delta_2}{\delta_3^2}r_2^*(0){\rm{e}}^{-t \Phi_{32}}\\
		0 & 0 & 1
		\end{pmatrix}\\
		:=&v_{1,below}^{(2)}(\zeta,k)v_{1,above}^{(2)}(\zeta,k),\\
				v_4^{(2)}(\zeta,k)
			=&\begin{pmatrix}
				1 & 0 & 0 \\
				\frac{\delta_1^2 }{\delta_2\delta_3}r_1^*(k) {\rm{e}}^{t \Phi_{21}} & 1 & -\frac{\delta_1\delta_2}{\delta_3^2}[r_2^*(0)+r_1^*(k)r_2(0)]{\rm{e}}^{-t \Phi_{32}} \\
				0 & 0 & 1
			\end{pmatrix}\\
			&\begin{pmatrix}
			1 & -\frac{\delta_2\delta_3}{\delta_1^2}r_1(k) {\rm{e}}^{-t \Phi_{21}} & \frac{\delta_2^2}{\delta_1\delta_3}[r_1(k)r_2^*(0) + r_2(0)]{\rm{e}}^{-t \Phi_{31}} \\
		0 & 1 & 0 \\
			0 & 0 & 1
			\end{pmatrix}\\
			=&v_{4,below}^{(2)}(\zeta,k)v_{4,above}^{(2)}(\zeta,k).
		\end{align*}
		The remaining decompositions are given based on symmetries:
		\begin{align*}
			&v_2^{(2)}(\zeta,k)=\mathcal{A}^{-1}v_1^{(2)}(\zeta,\omega^2 k)\mathcal{A}=[\mathcal{A}^{-1}v_{1,below}^{(2)}(\zeta,\omega^2 k)\mathcal{A}][\mathcal{A}^{-1}v_{1,above}^{(2)}(\zeta,\omega^2 k)\mathcal{A}],\\
		    &v_3^{(2)}(\zeta,k)=\mathcal{A}^{-2}v_1^{(2)}(\zeta,\omega k)\mathcal{A}^{2}=[\mathcal{A}^{-2}v_{1,below}^{(2)}(\zeta,\omega k)\mathcal{A}^{2}][\mathcal{A}^{-2}v_{1,above}^{(2)}(\zeta,\omega k)\mathcal{A}^{2}],\\
			&v_5^{(2)}(\zeta,k)=\mathcal{A}^{-1}v_4^{(2)}(\zeta,\omega^2 k)\mathcal{A}=[\mathcal{A}^{-1}v_{4,below}^{(2)}(\zeta,\omega^2 k)\mathcal{A}][\mathcal{A}^{-1}v_{4,above}^{(2)}(\zeta,\omega^2 k)\mathcal{A}],\\
		    &v_6^{(2)}(\zeta,k)=\mathcal{A}^{-2}v_4^{(2)}(\zeta,\omega k)\mathcal{A}^{2}=[\mathcal{A}^{-2}v_{4,below}^{(2)}(\zeta,\omega k)\mathcal{A}^{2}][\mathcal{A}^{-2}v_{4,above}^{(2)}(\zeta,\omega k)\mathcal{A}^{2}].
		\end{align*}
		
		The continuous extension properties of the reflection coefficient $r_1(k)$ are given below.
			\begin{prop}\label{R_exist}
			  {Assume $r_1\in H^{3,4}(0,\infty)$, there exist functions $R_j(k)$ for $k\in \bar V_j$ $j=1,\cdots,8$ (see Fig. \ref{figGamma3}) that satisfy the following boundary conditions:}
			\begin{align*}\label{Gj}
				R_1(k)&=\left\lbrace\begin{aligned}
					&\rho_1^*(k), \qquad k
					\in (k_0,\infty),\\
					&f_1=\rho_1^*(k_0)(k-k_0)^{-2\ri\nu}{\rm{e}}^{-2\chi{(\zeta,k_0)}}\delta_2^{-1}(\zeta,k_0)\delta_3^{-1}(\zeta,k_0)\frac{\delta_2(k)\delta_3(k)}{\delta_{1+}^2(k)}, \qquad  k\in \Gamma_1^{(3)},
				\end{aligned}
				\right.\\
				R_2(k)&=\left\lbrace\begin{aligned}
					&r_1(k), \qquad k
					\in \left( \frac{k_0(\sqrt{3}-1)}{2},k_0\right) ,\\
					&f_2=r_1(k_0)(k-k_0)^{2\ri\nu}{\rm{e}}^{2\chi{(\zeta,k_0)}}\delta_2(\zeta,k_0)\delta_3(\zeta,k_0)\frac{\delta_{1}^2(k)}{\delta_2(k)\delta_3(k)}, \qquad  k\in \Gamma_2^{(3)},
				\end{aligned}
				\right.\\
				R_3(k)&=\left\lbrace\begin{aligned}
					&r_1^*( k), \qquad k
					\in \left( \frac{k_0(\sqrt{3}-1)}{2},k_0\right),\\
					&f_3=r_1^*(k_0)(k-k_0)^{-2\ri\nu}{\rm{e}}^{-2\chi{(\zeta,k_0)}}\delta_2^{-1}(\zeta,k_0)\delta_3^{-1}(\zeta,k_0)\frac{\delta_2(k)\delta_3(k)}{\delta_{1}^2(k)}, \qquad k\in \Gamma_3^{(3)},
				\end{aligned}
				\right.\\
				R_4(k)&=\left\lbrace\begin{aligned}
					&\rho_1( k), \qquad k
					\in (k_0,\infty),\\
					&f_4=\rho_1(k_0)(k-k_0)^{2\ri\nu}{\rm{e}}^{2\chi{(\zeta,k_0)}}\delta_2(\zeta,k_0)\delta_3(\zeta,k_0)\frac{\delta_{1-}^2(k)}{\delta_2(k)\delta_3(k)}, \qquad  k\in \Gamma_4^{(3)},
				\end{aligned}
				\right.\\
				R_5(k)&=\left\lbrace\begin{aligned}
					&r_1( k), \qquad k
					\in \left(0, \frac{k_0(\sqrt{3}-1)}{2}\right),\\
					&f_5=r_1(0), \qquad  k\in \Gamma_{19}^{(3)},
				\end{aligned}
				\right.\\
				R_6(k)&=\left\lbrace\begin{aligned}
					&r_1^*( k), \qquad k
					\in \left(0, \frac{k_0(\sqrt{3}-1)}{2}\right),\\
					&f_6=r_1^*(0), \qquad  k\in \Gamma_{21}^{(3)},
				\end{aligned}
				\right.\\
				R_j(k)&=I,\qquad k\in V_j,\quad j=7,8.
			\end{align*}
			Moreover, $R_j$ satisfies the following estimates
			\begin{equation*}
				\left| R_j(k)\right|\leq c\left( 1+\left|{\rm{Re}}\,k \right|^2 \right)^{-1/4}, \quad j=1,\cdots,8,
			\end{equation*}
			and
			\begin{align*}
				&\left|\bar \partial R_1(k) \right| \leq c\left| k-k_0\right|^{-1/2} +c\left| {\rho_1}'({\rm{Re}}\,k)\right| ,\quad &&\left|\bar \partial R_2(k) \right| \leq c\left| k-k_0\right|^{-1/2} +c\left| r_1'({\rm{Re}}\,k)\right| ,\\
				&\left|\bar \partial R_3(k) \right| \leq c\left| k-k_0\right|^{-1/2} +c\left| {r_1}'({\rm{Re}}\,k)\right| ,\quad &&\left|\bar \partial R_4(k) \right| \leq c\left| k-k_0\right|^{-1/2} +c\left| \rho_1'({\rm{Re}}\,k)\right| ,\\
				&\left|\bar \partial R_5(k) \right| \leq c\left| k\right|^{-1/2} +c\left| r_1'({\rm{Re}}\,k)\right| ,\quad &&\left|\bar \partial R_6(k) \right| \leq c\left| k\right|^{-1/2} +c\left| {r_1}'({\rm{Re}}\,k)\right|,
			\end{align*}
			for positive real number $c$.
		\end{prop}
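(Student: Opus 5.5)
I would mimic the proof of Proposition \ref{prop_G_exist}, now with interpolation in polar coordinates centred at $k_0$ for $R_1,\dots,R_4$ and centred at $0$ for $R_5,R_6$. Write $k-k_0=\rho{\rm e}^{\ri\psi}$, so that
\[
\bar\partial=\tfrac12{\rm e}^{\ri\psi}\bigl(\partial_\rho+\ri\rho^{-1}\partial_\psi\bigr).
\]
Let $\psi_j$ be the opening angle of the sector $V_j$ bounded by the real segment and the ray $\Gamma^{(3)}_j$. For $R_1$ I would set
\[
R_1(k)=f_1(k)+\bigl[\rho_1^*({\rm Re}\,k)-f_1(k)\bigr]\cos\bigl(\tfrac{\pi}{2\psi_1}\psi\bigr).
\]
The cosine factor equals $1$ on $(k_0,\infty)$ and $0$ on $\Gamma^{(3)}_1$, so the boundary values are right. $R_2,R_3,R_4$ are built the same way from $r_1$, $r_1^*$, $\rho_1$ and $f_2,f_3,f_4$. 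For $R_5$ and $R_6$ I would use the construction of $G_4$ verbatim with the constants $f_5=r_1(0)$, $f_6=r_1^*(0)$, and I would put $R_7=R_8=I$.

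The key observation is that each $f_j$ is analytic in the interior of $V_j$. It is a product of $(k-k_0)^{\pm2\ri\nu}$ with ratios of $\delta_1,\delta_2,\delta_3$. The cuts of these functions lie on $[k_0,\infty)$ and on $\omega[k_0,\infty)$, $\omega^2[k_0,\infty)$, and none of these meets the open sectors. Hence $\bar\partial f_j=0$, and
\[
\bar\partial R_j=\tfrac12\cos(\cdot)\,(\rho_1^*)'({\rm Re}\,k)\;-\;\tfrac{\ri}{2}\cdot\tfrac{\pi}{2\psi_j}\,{\rm e}^{\ri\psi}\rho^{-1}\sin(\cdot)\bigl[\rho_1^*({\rm Re}\,k)-f_j(k)\bigr].
\]
I would bound the first term by $c|\rho_1'({\rm Re}\,k)|$ (or $c|r_1'({\rm Re}\,k)|$). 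For the second term, split
\[
\rho_1^*({\rm Re}\,k)-f_1(k)=\bigl[\rho_1^*({\rm Re}\,k)-\rho_1^*(k_0)\bigr]+\bigl[\rho_1^*(k_0)-f_1(k)\bigr].
\]
The first bracket is at most $\|\rho_1'\|_{L^2}|k-k_0|^{1/2}$ by Cauchy--Schwarz, exactly as in Proposition \ref{prop_G_exist}. Here $\rho_1\in H^1$ holds because $r_1\in H^{3,4}$ and $|r_1|\le\sigma<1$.

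For the second bracket I would use that $f_1(k)=\rho_1^*(k_0)\exp\{2\chi(\zeta,k_0)-2\chi(\zeta,k)\}\times(\text{ratio of }\delta_2\delta_3)$. This is obtained by writing $\delta_1={\rm e}^{-\ri\nu\ln(k-k_0)}{\rm e}^{-\chi}$ from Proposition \ref{prop_delta}. The prefactors $(k-k_0)^{-2\ri\nu}$ then cancel, and $|\rho_1^*(k_0)-f_1(k)|$ is controlled by $|\chi(\zeta,k)-\chi(\zeta,k_0)|$ plus $|\delta_2\delta_3(k)-\delta_2\delta_3(k_0)|$. The first is $\le C|k-k_0|^{1/2}$ by item 4 of Proposition \ref{prop_delta}, since $\Gamma^{(3)}_j$ is nontangential. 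The second is $\le C|k-k_0|$ because $\delta_2,\delta_3$ are analytic near $k_0$, at positive distance from their cuts, uniformly in $\zeta\in\mathcal I$. Multiplying by $\rho^{-1}$ gives the $c|k-k_0|^{-1/2}$ term. $R_5,R_6$ are handled identically with base point $0$ and constant $f_j$.

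The sup bound $|R_j|\le c(1+|{\rm Re}\,k|^2)^{-1/4}$ follows from two facts. First, $|r_1(s)|,|\rho_1(s)|\le c(1+s^2)^{-1/4}$, which comes from the Sobolev embedding/weight, and this handles the unbounded sectors $V_1,V_4$. Second, the $\delta$'s are bounded above and below by item 3 of Proposition \ref{prop_delta}, and $|f_j|$ is bounded, which handles the bounded sectors $V_2,V_3,V_5,V_6$. On the unbounded sectors $f_1,f_4$ do not decay, however, since $|f_1|\to|\rho_1(k_0)|\,|{\rm e}^{\ldots}|$. I therefore expect the main obstacle to be reconciling the sup bound there. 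I would fix it by multiplying the $f_j$-part by a smooth cutoff $\chi_0(|k-k_0|)$ supported near $k_0$; this adds only a bounded $\bar\partial$-contribution supported in an annulus, which is absorbed in the constant. Alternatively, I would use the $\nu$-dependent decay of ${\rm e}^{-2\chi}$ along the ray. Controlling $\chi$ uniformly in $\zeta\in\mathcal I$ along the rays is the step needing the most care.
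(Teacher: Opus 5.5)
You have found the real weak point, but neither of your remedies closes it. The sup bound $|R_j|\le c(1+|{\rm Re}\,k|^2)^{-1/4}$ on the unbounded sectors $V_1,V_4$ is false whenever $r_1(k_0)\neq0$, given the prescribed boundary values.

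On $\Gamma^{(3)}_1$ the statement forces $R_1=f_1=\rho_1^*(k_0){\rm e}^{2\chi(\zeta,k)-2\chi(\zeta,k_0)}\delta_2(k)\delta_3(k)/(\delta_2(\zeta,k_0)\delta_3(\zeta,k_0))$. As $k\to\infty$ along this ray, $\delta_j(k)\to1$ and $|{\rm e}^{2\chi(\zeta,k)}|=|\delta_1(k)|^{-2}|(k-k_0)^{-2\ri\nu}|\to{\rm e}^{\pi\nu/2}$. So $|f_1(k)|$ tends to the nonzero constant $|\rho_1(k_0)|{\rm e}^{\pi\nu/2}|{\rm e}^{-2\chi(\zeta,k_0)}|/|\delta_2(\zeta,k_0)\delta_3(\zeta,k_0)|$ while ${\rm Re}\,k\to\infty$.

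\textbf{Your two remedies.} The second one fails outright: there is no decay of ${\rm e}^{-2\chi}$ along the ray. The first one, the cutoff $\chi_0$, gives decay only by replacing the boundary value $f_1$ with $\chi_0f_1$ on the far part of $\Gamma^{(3)}_1$. That proves a different proposition.

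\textbf{The paper has the same flaw.} Its proof uses the same interpolation as yours: with $\psi_1=\pi/4$, your factor $\cos(\frac{\pi}{2\psi_1}\psi)$ is exactly its $\cos(2\phi_1)$. Its last step, $c\sin^2\phi_1+c[1+({\rm Re}\,k)^2]^{-1/4}\le c(1+|{\rm Re}\,k|^2)^{-1/4}$, is false on $V_1$ for $\phi_1$ bounded away from $0$. Read literally, the bound also fails for $j=7,8$, where $R_j=I$ on sectors containing points with ${\rm Re}\,k\to\infty$; your proposal does not check these cases.

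\textbf{What you can prove instead.} State one of the following explicitly.
(i) Keep the boundary values as given and prove $|R_j|\le c\sin^2(\arg(k-k_0))+c(1+({\rm Re}\,k)^2)^{-1/4}$ for $j=1,4$, exactly parallel to Proposition \ref{prop_G_exist}. The $\sin^2$ term is harmless later: $R_1$ (resp. $R_4$) always appears multiplied by ${\rm e}^{t\Phi_{21}}$ (resp. ${\rm e}^{-t\Phi_{21}}$), whose modulus is ${\rm e}^{-\sqrt3t|k-k_0|^2|\sin2\phi|}$.
(ii) Keep your cutoff construction with $\chi_0\equiv1$ on $B_\epsilon(k_0)$, and require $R_1=f_1$ only on $\Gamma^{(3)}_1\cap B_\epsilon(k_0)$. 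That is all the local parametrix uses.

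\textbf{Smaller points.}
First, item 4 of Proposition \ref{prop_delta} controls $\chi(\zeta,k)-\chi(\zeta,k_0)$ only along nontangential paths. You need $|\rho_1^*(k_0)-f_1(k)|\le C|k-k_0|^{1/2}$ at every $k\in V_1$, including points approaching $k_0$ tangentially to the cut. This uniform bound follows from Cauchy--Schwarz against $\frac{\rd}{\rd s}\ln(1-|r_1(s)|^2)\in L^2$ after rescaling $s=k_0+|k-k_0|\tau$.

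Second, for $R_2,R_3,R_5,R_6$ the paper uses smooth angular cutoffs $\mathcal{K}_2,\mathcal{K}_5$ rather than cosines. The estimates in the proposition come out the same either way. However, the cutoffs are tuned so that $R_2=R_5=r_1({\rm Re}\,k)$ on the part of $\Gamma^{(3)}_5$ with ${\rm Im}\,k<k_0/3$, which is what gives $v^{(3)}_5=I$ there in Remark \ref{rem_v56}. With cosines, $R_5-R_2=\mathcal{O}(({\rm Im}\,k)^2)$ there, and that jump is only polynomially small in $t$.

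Third, copying the construction of $G_4$ verbatim for $R_5$ would be wrong, since $\cos2\psi$ does not vanish on the edge of a sector of opening $\pi/3$. Use your normalized $\cos(\frac32\psi)$.
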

		
		\begin{proof}
			We only prove the proposition for $R_1$, $R_2$ and $R_5$, the proofs of $R_3$, $R_4$ and $R_6$ are be obtained by the relations
		\begin{equation*}
				R_3(k)={R_2( k)},\quad R_4(k)={R_1( k)},\quad R_6(k)={R_5(k)}.
			\end{equation*}
			The function $R_1(k)$ can be defined as
			\begin{equation}\label{R1}
				R_1(k)=\cos(2\phi_1)\rho_1^*({\rm{Re}}\,k)+[1-\cos(2\phi_1)]f_1(k),
			\end{equation}
			where $k=k_0+\rho{\rm{e}}^{\ri\phi_1}$, $\rho\geq0$, $0\leq \phi_1\leq \pi/4$. It can be readily verified that this function satisfies the boundary conditions.
			Noting that $f_1(k)$ is an analytic function on $\bar V_1$,  the calculation yields
			\begin{align*}
				\bar \partial R_1&=\left[ \rho_1^*({\rm{Re}}\,k)-f_1(k)\right] \bar \partial \cos(2\phi_1)+\cos(2\phi_1) \bar \partial\rho_1^*({\rm{Re}}\,k)\\
				&=-\frac{\ri{\rm{e}}^{\ri\phi_1}}{\rho}\sin(2\phi_1)[\rho_1^*({\rm{Re}}\,k)-f_1(k)]+\frac{1}{2}\cos(2\phi_1)\rho_1^{*'}({\rm{Re}}\,k).
			\end{align*}
			Then,
			\begin{align*}
				\left|\bar \partial R_1 \right| \leq \frac{c}{\left| k-k_0\right| }\left[ \left|\rho_1^*({\rm{Re}}\,k)-\rho_1^*(k_0) \right| +\left|\rho_1^*(k_0)-f_1(k) \right| \right] +c\left| \rho_1^{'}({\rm{Re}}\,k)\right|.
			\end{align*}
			First, we have
			\begin{align*}
				\left|\rho_1^*({\rm{Re}}\,k)-\rho_1^*(k_0) \right|&=\left| \int_{k_0}^{{\rm{Re}}\,k}\rho_1^{*'}(s)\rd s\right| \leq \int_{k_0}^{{\rm{Re}}\,k}\left| \rho_1^{*'}(s)\right| \rd s\\
				&\leq\left\| \rho_1^{*'}\right\|_{L^2(k_0,{\rm{Re}}\,k)} \left\| 1\right\|_{L^2(k_0,{\rm{Re}}\,k)} \leq c \left| k-k_0\right|^{1/2}.
			\end{align*}
			Secondly,
			\begin{align*}
				f_1(k)&=\rho_1^*(k_0)(k-k_0)^{-2i\nu}{\rm{e}}^{-2\chi{(\zeta,k_0)}}\delta_2^{-1}(\zeta,k_0)\delta_3^{-1}(\zeta,k_0)\frac{\delta_2(k)\delta_3(k)}{\delta_{1+}^2(k)}\\
				&=\rho_1^*(k_0){\rm{e}}^{2\chi(\zeta,k)-2\chi{(\zeta,k_0)}}\frac{\delta_1(\omega^2k)}{\delta_1(\zeta,\omega^2k_0)}\frac{\delta_1(\omega k)}{\delta_1(\zeta,\omega k_0)},
			\end{align*}
			is uniformly bounded for $k\in \Gamma_1^{(3)}$ and $\zeta\in\mathcal{I}$, using Proposition \ref{prop_delta}.
			Within $V_1$, we have
			\begin{align*}
				\left|\rho_1^*(k_0)-f_1(k) \right|=&\left| \rho_1^*(k_0)-\rho_1^*(k_0){\rm{e}}^{2\chi(\zeta,k)-2\chi{(\zeta,k_0)}}\frac{\delta_1(\omega^2k)}{\delta_1(\zeta,\omega^2k_0)}\frac{\delta_1(\omega k)}{\delta_1(\zeta,\omega k_0)}\right|\\
				=&\left|\rho_1^*(k_0) \right|\left|1-{\rm{exp}}\left\lbrace2\chi(\zeta,k)-2\chi{(\zeta,k_0)}-\ri\nu \ln_0\left( \frac{\omega^2 k-k_0}{\omega^2 k_0-k_0}\right)\right.   \right.\\
				&\left.\left.  -[\chi(\zeta,\omega^2k)-\chi(\zeta,\omega^2k_0)]-\ri\nu \ln_0\left( \frac{\omega k-k_0}{\omega k_0-k_0}\right) -[\chi(\zeta,\omega k)-\chi(\zeta,\omega k_0)]  \right\rbrace \right| \\
				=&\left|\rho_1^*(k_0) \right|\left|1-{\rm{exp}}\left\lbrace  \mathcal{O}\left( \left|k-k_0 \right|^{1/2} \right)  \right\rbrace\right| \\
				=&\left|\rho_1^*(k_0) \right| \mathcal{O}\left( \left|k-k_0 \right|^{1/2} \right).
			\end{align*}
			Based on the above estimates, we finally obtain
			\begin{equation*}
				\left|\bar \partial R_1 \right| \leq c\left| k-k_0\right|^{-1/2} +c\left| \rho_1^{'}({\rm{Re}}\,k)\right| .
			\end{equation*}
			 {Using equation \eqref{R1} and $r_1\in H^{3,4}(0,\infty)$, we have}
			\begin{align*}
				\left| R_1\right| &\leq \sin^2(\phi_1) \left|2 f_1(k) \right| +\left|\cos(2\phi_1)\right| \left| \rho_1^*({\rm{Re}}\,k) \right|\\
				&\leq c \sin^2(\phi_1)+c[1+({\rm{Re}}\, k)^2]^{-1/4}\\
				&\leq c\left( 1+\left|{\rm{Re}}\,k \right|^2 \right)^{-1/4}.
			\end{align*}
			
			For the case of $R_2$, we provide a construction that is somewhat different from $R_1$. The function $R_2(k)$ can be defined as
			\begin{equation}\label{R2}
				R_2(k)=f_2(k)+[r_1({\rm{Re}}\,k)-f_2(k)]\mathcal{K}_2(\phi_2),
			\end{equation}
			where $k=k_0+\rho{\rm{e}}^{\ri\phi_2}$, $\rho\geq0$, $\phi_2=\arg_0 (k-k_0)\in \left[\frac{3\pi}{4},\pi \right] $, $\mathcal{K}_2$ is a smooth function on $\left( \frac{3\pi}{4},\pi\right) $ and
			\begin{equation*}
				\mathcal{K}_2(\phi_2)=\left\lbrace
				\begin{aligned}
					&1, \quad &&\phi_2\in \left[ \pi-\arctan\left(  \frac{3+\sqrt{3}}{9}\right) ,\pi\right] ,\\
					&0, \quad &&\phi_2\in \left[ \frac{3\pi}{4},\pi-\arctan\left(  \frac{3+\sqrt{3}}{6}\right)\right] .\\
				\end{aligned}
				\right.
			\end{equation*}
			 It can be readily verified that the constructed function $R_2$ satisfies the boundary conditions. In addition,
			 \begin{align*}
			 	\bar{\partial}R_2&=[r_1({\rm{Re}}\,k)-f_2(k)]\bar{\partial}\mathcal{K}_2(\phi_2)+\mathcal{K}_2(\phi_2)\bar{\partial}r_1({\rm{Re}}\,k)\\
			 	&=-\frac{i{\rm{e}}^{i\phi_2}}{\rho}\mathcal{K}_2'(\phi_2)[r_1({\rm{Re}}\,k)-f_2(k)]+\frac{1}{2}\mathcal{K}_2(\phi_2)r_1'({\rm{Re}}\,k),
			 \end{align*}
			 and similarly to the proof for $R_1$, we obtain the following estimate
			 \begin{align*}
			 	\left|\bar{\partial}R_2 \right| \leq c\left| k-k_0\right|^{-1/2} +c\left| r_1'({\rm{Re}}\,k)\right| .
			 \end{align*}
			Based on the construction of $R_2$ in equation \eqref{R2}, we have
				\begin{align*}
				\left| R_2\right| \leq  c\left( \left|f_2(k) \right| + \left| r_1({\rm{Re}}\,k) \right|\right) \leq c\left( 1+\left|{\rm{Re}}\,k \right|^2 \right)^{-1/4}.
			\end{align*}
			
		Define
			 \begin{equation}\label{R5}
			 	R_5(k)=f_5(k)+[r_1({\rm{Re}}\,k)-f_5(k)]\mathcal{K}_5(\phi_5),
			 \end{equation}
			 where $k=\rho{\rm{e}}^{\ri\phi_5}$, $\rho\geq0$, $\phi_5=\arg_0 k\in \left[0,\frac{\pi}{3} \right] $, $\mathcal{K}_5$ is a smooth function on $\left( 0,\frac{\pi}{3}\right) $ and
			 \begin{equation*}
			 	\mathcal{K}_5(\phi_5)=\left\lbrace
			 	\begin{aligned}
			 		&1, \quad &&\phi_5\in \left[ 0,\arctan\left(  \frac{1+\sqrt{3}}{3}\right)\right] ,\\
			 		&0, \quad &&\phi_2\in \left[ \arctan\left(  \frac{1+\sqrt{3}}{2}\right),\frac{\pi}{3}\right] .\\
			 	\end{aligned}
			 	\right.
			 \end{equation*}
			 The proof for $R_5$ follows from Proposition \ref{prop_G_exist}, yielding
			 \begin{equation*}
			 	\left|\bar{\partial}R_5 \right| \leq c\left| k\right|^{-1/2} +c\left| r_1'({\rm{Re}}\,k)\right|,
			 \end{equation*}
			 and from equation \eqref{R5},
			 \begin{align*}
			 	\left| R_5\right| &\leq  c\left( \left|f_5(k) \right| + \left| r_1({\rm{Re}}\,k) \right|\right) \\
			 	&\leq c\left( 1+\left|{\rm{Re}}\,k \right|^2 \right)^{-1/4}.
			 \end{align*}

		\end{proof}
		
		\begin{figure}[htbp]
			\centering
			\begin{tikzpicture}[scale=1.6]
				\draw [dashed,very thick,white!50!blue](3,0) -- (-3,0);
				
				\fill (2,0) circle (1.5pt);
                
				\draw [very thick,black!20!blue](2/2.732,2*1.732/2.732) -- (2.8,-0.8);

				\draw [very thick,black!20!blue, latex-](1.3,0.7) -- (2,0);
				\draw [very thick,black!20!blue, -latex](2,0) -- (2.5,-0.5);
				
				\draw [very thick,black!20!blue, latex-](1.3,-0.7) -- (2,0);
				\draw [very thick,black!20!blue, -latex](2,0) -- (2.5,0.5);
				
				\draw [very thick,black!20!blue](2/2.732,2*1.732/2.732) -- (2/2.732,-2*1.732/2.732);
				\draw [very thick,black!20!blue](2/2.732,2*1.732/2.732) -- (2/2.732,-2*1.732/2.732);
				
				\draw [very thick,black!20!blue](2/2.732,-2*1.732/2.732) -- (2.8,0.8);
				
				\draw [very thick,black!20!blue, -latex](2/2.732,0) -- (2/2.732,-0.7);
				\draw [very thick,black!20!blue, latex-](2/2.732,0.7) -- (2/2.732,0);
				
				\fill (2/2.732,0) circle (1pt);
				
				\node[red!70!black,above] at (2.3,0.4) {$1$};
				\node[red!70!black,above] at (1.4,0.65) {$2$};
				\node[red!70!black,below] at (1.4,-0.65) {$3$};
				\node[red!70!black,below] at (2.3,-0.4) {$4$};
				\node[gray,below] at (0.25,0.7) {$19$};
				\node[red!70!black,right] at (2/2.732,0.6) {$5$};
				\node[red!70!black,right] at (2/2.732,-0.6) {$6$};
				
				\coordinate (center) at (0,0);
				\begin{scope}[rotate around={120:(center)}]
					\draw [dashed,very thick,white!50!blue](3,0) -- (-3,0);
					
					\fill (2,0) circle (1.5pt);
					
					\draw [very thick,black!20!blue](2/2.732,2*1.732/2.732) -- (2.8,-0.8);
					
					\draw [very thick,black!20!blue, latex-](1.3,0.7) -- (2,0);
					\draw [very thick,black!20!blue, -latex](2,0) -- (2.5,-0.5);
					
					\draw [very thick,black!20!blue, latex-](1.3,-0.7) -- (2,0);
					\draw [very thick,black!20!blue, -latex](2,0) -- (2.5,0.5);
					
					\draw [very thick,black!20!blue](2/2.732,2*1.732/2.732) -- (2/2.732,-2*1.732/2.732);
					\draw [very thick,black!20!blue](2/2.732,2*1.732/2.732) -- (2/2.732,-2*1.732/2.732);

					\draw [very thick,black!20!blue](2/2.732,-2*1.732/2.732) -- (2.8,0.8);
					
					\draw [very thick,black!20!blue, -latex](2/2.732,0) -- (2/2.732,-0.7);
					\draw [very thick,black!20!blue, latex-](2/2.732,0.7) -- (2/2.732,0);
					
					\fill (2/2.732,0) circle (1pt);
					
					\node[red!70!black,below] at (2.35,0.45) {$7$};
					\node[red!70!black,above] at (1.35,0.85) {$8$};
					\node[red!70!black,below] at (1.4,-0.6) {$9$};
					\node[red!70!black,right] at (2.4,-0.4) {$10$};
					\node[gray,below] at (0.45,0.45) {$20$};
					\node[red!70!black,right] at (2.7/2.732,0.7) {$11$};
					\node[red!70!black,right] at (2.7/2.6,-0.4) {$12$};
				\end{scope}
				
				\coordinate (center) at (0,0);
				\begin{scope}[rotate around={240:(center)}]
					\draw [dashed,very thick,white!50!blue](3,0) -- (-3,0);
					
					\fill (2,0) circle (1.5pt);
					
					\draw [very thick,black!20!blue](2/2.732,2*1.732/2.732) -- (2.8,-0.8);
					
					\draw [very thick,black!20!blue, latex-](1.3,0.7) -- (2,0);
					\draw [very thick,black!20!blue, -latex](2,0) -- (2.5,-0.5);
					
					\draw [very thick,black!20!blue, latex-](1.3,-0.7) -- (2,0);
					\draw [very thick,black!20!blue, -latex](2,0) -- (2.5,0.5);
					
					\draw [very thick,black!20!blue](2/2.732,2*1.732/2.732) -- (2/2.732,-2*1.732/2.732);
					\draw [very thick,black!20!blue](2/2.732,2*1.732/2.732) -- (2/2.732,-2*1.732/2.732);

					\draw [very thick,black!20!blue](2/2.732,-2*1.732/2.732) -- (2.8,0.8);
					
					\draw [very thick,black!20!blue, -latex](2/2.732,0) -- (2/2.732,-0.7);
					\draw [very thick,black!20!blue, latex-](2/2.732,0.7) -- (2/2.732,0);
					
					\fill (2/2.732,0) circle (1pt);
					
					\node[red!70!black,above] at (2.5,0.7) {$13$};
					\node[red!70!black,above] at (1.35,0.6) {$14$};
					\node[red!70!black,below] at (1.35,-0.9) {$15$};
					\node[red!70!black,above] at (2.4,-0.45) {$16$};
					\node[gray,below] at (0.2,0.45) {$21$};
					\node[red!70!black,right] at (2.7/2.732,0.5) {$17$};
					\node[red!70!black,right] at (2.7/2.732,-0.7) {$18$};
				\end{scope}
				
				\node[above] at (2,0.1) {$k_0$};
				\node[right] at (-0.9,1.9) {$\omega k_0$};
				\node[right] at (-0.9,-1.9) {$\omega^2 k_0$};
				
				\node[above] at (2.7,0.1) {$V_1$};
				\node[above] at (1.25,0.1) {$V_2$};
				\node[below] at (1.25,-0.2) {$V_3$};
				\node[below] at (2.7,-0.2) {$V_4$};
				\node[above] at (0.5,0.1) {$V_5$};
				\node[below] at (0.5,-0.1) {$V_6$};
				\node[below] at (2,1.3) {$V_7$};
				\node[below] at (2,-1.3) {$V_8$};
				
				\node[right] at (3,0) {${\rm{Re}\,k}$};
				
				\draw [very thick,dashed] (2+0.2/1.414,0.2/1.414) -- (2+0.2*1.414,0);
				\draw [very thick,dashed] (2+0.2*1.414,0) -- (2+0.2/1.414,-0.2/1.414);
				
				\node[red!70!black,above] at (2.3,0.4) {$1$};
				\node[red!70!black,above] at (1.4,0.65) {$2$};
				\node[red!70!black,below] at (1.4,-0.65) {$3$};
				\node[red!70!black,below] at (2.3,-0.4) {$4$};
				\node[gray,below] at (0.25,0.7) {$19$};
				\node[red!70!black,right] at (2/2.732,0.6) {$5$};
				\node[red!70!black,right] at (2/2.732,-0.6) {$6$};
				
				\fill (0,0) circle (1pt);
			\end{tikzpicture}
			\caption{The jump contour $\Gamma^{(3)}$ in the complex $k$-plane.}
			\label{figGamma3}
		\end{figure}
		
		Noticing that combinations of some values of the reflection coefficients $r_j$, $j=1,2$, appear in the jump matrix $v^{(2)}$, we define the notation for the sake of simplifying the writing
			\begin{equation*}
			\alpha_j(k)=r_2(0)+R_j(k)r_2^*(0),\quad k\in \Gamma^{(3)}_j,\,j=1,2,\cdots,6.
		\end{equation*}

		Next, we provide six matrix functions in the following form
		\begin{align*}
			&v_{1,R}=\begin{pmatrix}
				1 & 0 & 0\\
				\frac{\delta_{1+}^2}{\delta_2\delta_3}R_1(k){\rm{e}}^{t \Phi_{21}} & 1 & -\frac{\delta_{1+}\delta_2}{\delta_3^2}r_2^*(0){\rm{e}}^{-t \Phi_{32}}\\
				0 & 0 & 1
			\end{pmatrix}, \,\, v_{2,R}=\begin{pmatrix}
			1 & -\frac{\delta_2\delta_3}{\delta_1^2}R_2(k) {\rm{e}}^{-t \Phi_{21}} & \frac{\delta_2^2}{\delta_1\delta_3}\alpha_2(k){\rm{e}}^{-t \Phi_{31}} \\
			0 & 1 & 0 \\
			0 & 0 & 1
			\end{pmatrix},\\
			&  v_{3,R}=\begin{pmatrix}
				1 & 0 & 0 \\
				\frac{\delta_1^2 }{\delta_2\delta_3}R_3(k) {\rm{e}}^{t \Phi_{21}} & 1 & -\frac{\delta_1\delta_2}{\delta_3^2}\alpha_3^*(k){\rm{e}}^{-t \Phi_{32}} \\
				0 & 0 & 1
			\end{pmatrix},\,\,\,  v_{4,R}=\begin{pmatrix}
			1 & -\frac{\delta_2\delta_3}{ \delta_{1-}^2} R_4(k){\rm{e}}^{-t \Phi_{21}} & \frac{\delta_2^2}{\delta_{1-}\delta_3}r_2(0){\rm{e}}^{-t \Phi_{31}}\\
			0 & 1 & 0\\
			0 & 0 & 1
			\end{pmatrix},\\
				& v_{5,R}=\begin{pmatrix}
				1 & -\frac{\delta_2\delta_3}{\delta_1^2}R_5(k) {\rm{e}}^{-t \Phi_{21}} & \frac{\delta_2^2}{\delta_1\delta_3}\alpha_5(k){\rm{e}}^{-t \Phi_{31}} \\
				0 & 1 & 0 \\
				0 & 0 & 1
			\end{pmatrix},\,v_{6,R}=\begin{pmatrix}
				1 & 0 & 0 \\
				\frac{\delta_1^2 }{\delta_2\delta_3}R_6(k) {\rm{e}}^{t \Phi_{21}} & 1 & -\frac{\delta_1\delta_2}{\delta_3^2}\alpha_6^*(k){\rm{e}}^{-t \Phi_{32}} \\
				0 & 0 & 1
			\end{pmatrix}.
		\end{align*}
		Define the matrix-valued function
		\begin{equation*}
			R^{(3)}_{k_0}(\zeta,k)=\left\lbrace \begin{aligned}
				& (v_{1,R}(k))^{-1},\quad &&k\in V_1,\\
				& (v_{2,R}(k))^{-1},\quad &&k\in V_2,\\
				& v_{3,R}(k),\quad &&k\in V_3,\\
				& v_{4,R}(k),\quad &&k\in V_4,\\
				& (v_{5,R}(k))^{-1},\quad &&k\in V_5,\\
				& v_{6,R}(k),\quad &&k\in V_6,\\
				& I, \quad && k\in V_7\cup V_8,
			\end{aligned}
			\right.
		\end{equation*}
		and
		\begin{align*}
			R^{(3)}_{\omega k_0}(\zeta,k)=\mathcal{A}^{-1}R^{(3)}_{k_0}(\zeta,\omega^2 k)\mathcal{A},\qquad R^{(3)}_{\omega^2 k_0}(\zeta,k)=\mathcal{A}^{-2}R^{(3)}_{k_0}(\zeta,\omega k)\mathcal{A}^{2}.
		\end{align*}
		According to the above definitions, we construct the following matrix function:
		\begin{equation}\label{R3}
			R^{(3)}(\zeta,k)=\bigcup_{j=0}^2R^{(3)}_{\omega^jk_0}(\zeta,k).
		\end{equation}
		 {Here, some explanations are needed. It can be noted that $R^{(3)}_{\omega^jk_0}(\zeta,k)$ are all defined in the sectors with an angle of ${2\pi}/{3}$ centered at the critical points $\omega^jk_0$ , $j=0,1,2,$ in Fig. \ref{figGamma3}, respectively. The meaning of the union in the above expression is that the piecewise-defined functions $R^{(3)}_{\omega^jk_0}$ in these three sectors together combine to form the function $R^{(3)}$ defined on $\mathbb{C}\setminus\Gamma^{(3)}$.}
		
		 {\begin{lem}\label{R_bounded}
			$R^{(3)}(x,t,k)$ and $(R^{(3)}(x,t,k))^{-1}$ are uniformly bounded for $k\in\mathbb{C}\setminus \Gamma^{(3)}$, $t>0$.
			More importantly, $\partial_x^l\lim\limits_{k\to\infty}k \left[ \left( R^{(3)}(k)\right) ^{\pm1}-I\right]=O$ for $k\in V_j$, $j=1,4,7,8$ and $l=0,1$.
		\end{lem}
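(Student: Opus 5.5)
By the symmetry definitions of $R^{(3)}_{\omega k_0}$ and $R^{(3)}_{\omega^2 k_0}$ in terms of $R^{(3)}_{k_0}$, and since $\mathcal{A}$ is a permutation matrix (so conjugation preserves norms), it suffices to treat $R^{(3)}_{k_0}$ on the sector $V_1\cup\cdots\cup V_8$ around $k_0$. On $V_7\cup V_8$ we have $R^{(3)}=I$, so everything there is trivial. In each of $V_1,\dots,V_6$ the matrix $R^{(3)}_{k_0}$ (or its inverse) is unipotent: $I$ plus one off-diagonal row or column with at most two nonzero entries, and $v_{j,R}^{-1}$ has the same structure with the signs flipped. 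So $|(R^{(3)})^{\pm1}-I|$ is bounded by a constant times the sum of the moduli of those entries. Each entry has the form (ratio of $\delta$'s)$\times$(bounded function)$\times \mathrm{e}^{\pm t\Phi_{ij}}$. The bounded function is $R_j$, $r_2(0)$, $r_2^*(0)$ or $\alpha_j$, all bounded by Proposition \ref{R_exist}. The $\delta$ ratios are bounded uniformly for $\zeta\in\mathcal{I}$ by Proposition \ref{prop_delta}(3), applied to $\delta_1$ and to $\delta_{2,3}(k)=\delta_1(\omega^{\mp1}k)$.

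\textbf{Key step: sign of the real parts.} It remains to show that every exponential appearing has nonpositive real part in its region. Writing $k=u+\ri v$, I would compute ${\rm Re}\,\Phi_{21}$, ${\rm Re}\,\Phi_{31}$ and ${\rm Re}\,\Phi_{32}$ explicitly, as in the proof of Lemma \ref{lem_G_bounded}. For example, ${\rm Re}\,\Phi_{21}=\sqrt3\,v(2u-\zeta)$ up to the orientation convention. I would then check the signs region by region:
\begin{itemize}
\item $\mathrm{e}^{t\Phi_{21}}$ decays in $V_1$ (where $u>k_0$, $v>0$) and in $V_3$ (where $u<k_0$, $v<0$), and these are exactly the regions where $R_1$ and $R_3$ appear;
\item $\mathrm{e}^{-t\Phi_{21}}$ decays in $V_2$ and $V_4$;
\item $V_5$ and $V_6$, which lie near the origin, are handled in the same way.
\end{itemize}
The exponentials $\mathrm{e}^{\mp t\Phi_{31}}$ and $\mathrm{e}^{-t\Phi_{32}}$ attached to the constant entries involving $r_2(0)$ need the same check. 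Their sign regions are bounded by the lines through $\omega k_0$ and $\omega^2 k_0$, which is exactly why the sectors in Fig.~\ref{figGamma3} were cut along the lines joining the $\omega^j k_0$ and through $k_0(\sqrt3-1)/2$. This geometric check, confirming that every sub-region lies on the decaying side of all the relevant exponentials, is the main obstacle. It is a careful but elementary case analysis of quadratic forms in $(u,v)$. Once it is done, $|\mathrm{e}^{\pm t\Phi}|\le1$ for all $t>0$, and uniform boundedness of $(R^{(3)})^{\pm1}$ follows.

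\textbf{Limit at infinity.} Among the regions $V_1,V_4,V_7,V_8$, only $V_1$ and $V_4$ are unbounded with nontrivial $R^{(3)}$; on $V_7\cup V_8$ the claim is immediate. In $V_1$ and $V_4$ the real part of the relevant phase is comparable to $-|k|^2$ (more precisely $\ge c\,v(u-k_0)$), and I would treat two regimes.
\begin{itemize}
\item Away from the real axis, $k\,\mathrm{e}^{-ct|k|^2}\to0$.
\item Along the real-axis edge, the factor $\cos(2\phi_1)\rho_1^*({\rm Re}\,k)$ in \eqref{R1} comes into play. Here I would use $|R_1|\le c(1+u^2)^{-1/4}$ together with the decay of $\rho_1\in H^{3,4}$; since $\rho_1$ lies in $H^{3,4}$ it decays faster than $u^{-1}$, so $u\,\rho_1(u)\to0$. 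Combining this with the exponential factor, by the same argument as in Lemma \ref{lem_G_bounded}, gives $k\left[(R^{(3)})^{\pm1}-I\right]\to O$.
\end{itemize}

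\textbf{The case $l=1$.} For $l=1$, the $x$-derivative produces only factors of $\partial_x\Phi_{ij}$ (linear in $k$) and of $\partial_x\delta^{\pm1}$, which are $O(t^{-1}|k-k_0|^{-1})$ by Proposition \ref{prop_delta}(4). These factors are absorbed by the same exponential decay, so the limit still vanishes.
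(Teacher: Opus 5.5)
Your route matches the paper's. You reduce by the $\mathcal{A}$-symmetry to the sector at $k_0$, and use that each $v_{j,R}^{\pm1}$ is $I$ plus a nilpotent part whose entries are (bounded $\delta$-ratio)$\times$($R_j$, $\alpha_j$ or $r_2^{(*)}(0)$)$\times\mathrm{e}^{\pm t\Phi_{ij}}$. You then check the sign of $\mathrm{Re}\,\Phi_{ij}$ region by region and show $k\times$(entries)$\to0$ in $V_1$ and $V_4$. Note that in fact $\mathrm{Re}\,\Phi_{21}=\sqrt3\,v(\zeta-2u)=-2\sqrt3\,v(u-k_0)$, the opposite of what you wrote; your region-by-region decay conclusions are the correct ones for this sign.

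Your separate treatment of the real-axis edge of $V_1$ is more careful than the paper. The paper asserts $\mathrm{e}^{t\mathrm{Re}\,\Phi_{21}}=\mathcal{O}(\mathrm{e}^{-ct|k|^2})$ on all of $V_1$. That holds only along rays $\arg(k-k_0)=\phi\in(0,\pi/4]$, with $c$ depending on $\phi$, not uniformly up to $(k_0,\infty)$.

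The step that fails as written is $l=1$. Having taken the limit uniformly over $V_1$, you cannot then say the new factors are ``absorbed by the same exponential decay'': at the edge $(k_0,\infty)$ one has $|\mathrm{e}^{t\Phi_{21}}|=\mathrm{e}^{-2\sqrt3tv(u-k_0)}\to1$. There, $\partial_x$ of the $(2,1)$ entry of $k[(R^{(3)})^{\pm1}-I]$ contains the term $k\,\partial_x(t\Phi_{21})\,\frac{\delta_1^2}{\delta_2\delta_3}R_1\mathrm{e}^{t\Phi_{21}}$. Since $\partial_x(t\Phi_{21})=(\omega^2-\omega)k=-\ri\sqrt3\,k$, this term has modulus comparable to $|k|^2|\rho_1(u)|$. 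What makes it vanish is $u^2\rho_1(u)\to0$, not the exponential. This does follow from the hypothesis: $|\rho_1|\le(1-\sigma^2)^{-1}|r_1|$ and $|r_1(u)|\le\int_u^\infty|r_1'|\le\|(1+s^4)r_1'\|_{L^2}(\int_u^\infty s^{-8}\rd s)^{1/2}\lesssim u^{-7/2}$.

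Your list of $x$-dependences also omits that $R_1$ itself depends on $x$ through $k_0=x/(2t)$. This enters via $\phi_1=\arg(k-k_0)$ and via $f_1$, which contains $(k-k_0)^{-2\ri\nu(k_0)}$ (producing a factor $\ln(k-k_0)$ after differentiation) and $\chi(\zeta,k_0)$. All these terms carry a factor $\sin^2\phi_1$, either from $1-\cos2\phi_1$ or from $\partial_x\cos(2\phi_1)=-\sin(2\phi_1)\sin\phi_1/(t|k-k_0|)$. Since $u-k_0\ge|k-k_0|/\sqrt2$ in $V_1$, one has $|k|^m\sin^2\phi_1\,\mathrm{e}^{-2\sqrt3tv(u-k_0)}\lesssim t^{-2}|k-k_0|^{m-4}$ for large $|k-k_0|$. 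This kills the terms for $m=1,2$, and log factors are harmless.

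The same $\sin^2\phi_1$ mechanism is what controls the $f_1$-part of $R_1$ at $l=0$ in the transition zone $v\sim(t|k|)^{-1}$. That zone lies between your two regimes and you do not address it. All of this applies verbatim to $V_4$ with $R_4$ and $\mathrm{e}^{-t\Phi_{21}}$.

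Alternatively, you could take the limit along rays interior to $V_1$ and $V_4$, as the paper implicitly does. Then your $l=1$ sentence is correct, but your real-axis discussion becomes unnecessary. As written, your $l=0$ and $l=1$ arguments rely on two different meanings of $\lim_{k\to\infty}$.
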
}
		\begin{proof} {
		Since $R^{(3)}_{\omega^jk_0}(\zeta,k)=I$ for $k\in \omega^j V_m$, $j=0,1,2$, $m=7,8$, the conclusion holds automatically. The proof process is similar to the proof of Lemma \ref{lem_G_bounded}, and we provide a brief explanation for region $V_1$. For $k\in V_1$, following a calculation similar to Lemma \ref{lem_G_bounded}, we have ${\rm{e}}^{t{\rm{Re}}\,\Phi_{21}}=\mathcal{O}({\rm{e}}^{-ct|k|^2})$ and  ${\rm{e}}^{-t{\rm{Re}}\,\Phi_{32}}=\mathcal{O}({\rm{e}}^{-ct|k|^2})$ as $k\to\infty$. Calculate the following expression:
			\begin{align*}
				\left| R^{(3)}_1(k)-I\right|&= \left| \left( v_{1,R}(k)\right) ^{-1}-I\right|=\left| \frac{\delta_{1+}^2}{\delta_2\delta_3}R_1(k){\rm{e}}^{t \Phi_{21}}\right|+\left|\frac{\delta_{1+}\delta_2}{\delta_3^2}r_2^*(0){\rm{e}}^{-t \Phi_{32}} \right| \\
				&\leq c\left( 1+\left|{\rm{Re}}\,k \right|^2 \right)^{-1/4}  {\rm{e}}^{t\,{\rm{Re}}\,\Phi_{21}} +c{\rm{e}}^{-t\,{\rm{Re}}\,\Phi_{32}}.
			\end{align*}
			Then, we have  $\lim\limits_{k\to\infty}R^{(3)}_1(k)=I$, as $k\in V_1$. Similarly, $\lim\limits_{k\to\infty}\left( R^{(3)}_1(k)\right) ^{-1}=I$ 
			and calculate $\lim\limits_{k\to\infty}k\left(  R^{(3)}_1(k)-I\right)$, with only the elements at positions (2,1) and (2,3) are non-zero:
			\begin{equation*}
				\lim_{k\to\infty}\left| k \frac{\delta_{1+}^2}{\delta_2\delta_3}R_1(k){\rm{e}}^{t \Phi_{21}}\right|
				\leq c\lim_{k\to\infty} \left| k \left( 1+\left|{\rm{Re}}\,k \right|^2 \right)^{-1/4} \right|  {\rm{e}}^{t\, {\rm{Re}}\,\Phi_{21}} =0,
			\end{equation*}
			and
			\begin{equation*}
				\lim_{k\to\infty}\left| k \frac{\delta_{1+}\delta_2}{\delta_3^2}r_2^*(0){\rm{e}}^{-t \Phi_{32}}\right|
				\leq c\lim_{k\to\infty} \left| k\right|  {\rm{e}}^{-t\, {\rm{Re}}\,\Phi_{32}}  =0.
			\end{equation*}
			The above equation also holds for $\left( R^{(3)}(k)\right) ^{-1}$ and the case for $l=1$.
			 The proofs for other regions can be similarly obtained, and thus will not be elaborated upon here.
			In summary, $\left( R^{(3)}\right) ^{\pm1}=I+o(k^{-1})$.}
		\end{proof}
		
			\begin{lem}\label{lemDbarR3}
			$R^{(3)}(x,t,k)$ is a continuous but non-analytic function within its domain, and it satisfies the relation
			\begin{equation*}
				\left( R^{(3)}(k)\right) ^{-1}\bar{\partial}R^{(3)}(k)=\bar{\partial}R^{(3)}(k),
			\end{equation*}
			where
			\begin{align}\label{DbarR3}
				\bar{\partial}R^{(3)}_1(k)&=\begin{pmatrix}
					0 & 0 & 0\\
					-\bar{\partial}R_1(k)\frac{\delta_{1+}^2}{\delta_2\delta_3}{\rm{e}}^{t \Phi_{21}} & 0 & 0\\
					0 & 0 & 0
				\end{pmatrix},\qquad && k\in V_1,\nonumber\\
				\bar{\partial}R^{(3)}_2(k)&=\begin{pmatrix}
					0 & \bar{\partial}R_2(k)\frac{\delta_2\delta_3}{\delta_1^2} {\rm{e}}^{-t \Phi_{21}} & -\bar{\partial}\alpha_2(k)\frac{\delta_2^2}{\delta_1\delta_3}{\rm{e}}^{-t \Phi_{31}} \\
					0 & 0 & 0 \\
					0 & 0 & 0
				\end{pmatrix}, && k\in V_2,\nonumber\\
				\bar{\partial}R^{(3)}_3(k)&=\begin{pmatrix}
					0 & 0 & 0 \\
					\bar{\partial}R_3(k)\frac{\delta_1^2 }{\delta_2\delta_3} {\rm{e}}^{t \Phi_{21}} & 0 & -\bar{\partial}\alpha_3^*(k)\frac{\delta_1\delta_2}{\delta_3^2}{\rm{e}}^{-t \Phi_{32}} \\
					0 & 0 & 0
				\end{pmatrix}, && k\in V_3,\nonumber\\
				\bar{\partial}R^{(3)}_4(k)&=\begin{pmatrix}
					0 & -\bar{\partial}R_4(k)\frac{\delta_2\delta_3}{ \delta_{1-}^2} {\rm{e}}^{-t \Phi_{21}} & 0\\
					0 & 0 & 0\\
					0 & 0 & 0
				\end{pmatrix}, && k\in V_4,\\
				\bar{\partial}R^{(3)}_5(k)&=\begin{pmatrix}
					0 & \bar{\partial}R_5(k)\frac{\delta_2\delta_3}{\delta_1^2} {\rm{e}}^{-t \Phi_{21}} & -\bar{\partial}\alpha_5(k)\frac{\delta_2^2}{\delta_1\delta_3}{\rm{e}}^{-t \Phi_{31}} \\
					0 & 0 & 0 \\
					0 & 0 & 0
				\end{pmatrix}, && k\in V_5,\nonumber\\
				\bar{\partial}R^{(3)}_6(k)&=\begin{pmatrix}
					0 & 0 & 0 \\
					\bar{\partial}R_6(k)\frac{\delta_1^2 }{\delta_2\delta_3} {\rm{e}}^{t \Phi_{21}} & 0 & -\bar{\partial}\alpha_6^*(k)\frac{\delta_1\delta_2}{\delta_3^2}{\rm{e}}^{-t \Phi_{32}} \\
					0 & 0 & 0
				\end{pmatrix}, && k\in V_6,\nonumber\\
				\bar{\partial}R^{(3)}_j(k)&=O, && k\in V_j,\, j=7,8,\nonumber
			\end{align}
		and
		 \begin{equation*}
		 	\bar{\partial}R^{(3)}_j(k)=\mathcal{A}^{-1}\bar{\partial}R^{(3)}_{j-8}(\omega^2k)\mathcal{A},\quad j=9,10,\cdots,24.
		 \end{equation*}
		\end{lem}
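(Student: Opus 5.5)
The approach mirrors Lemma \ref{lemDbarG1} and Remark \ref{remark_3.1}: verify continuity, compute $\bar\partial R^{(3)}$ region by region, and obtain the identity $(R^{(3)})^{-1}\bar\partial R^{(3)}=\bar\partial R^{(3)}$ from the nilpotent structure of each piece.

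\textbf{Continuity and non-analyticity.} Each $R_j$ in Proposition \ref{R_exist} is continuous on $\bar V_j$. The scalar factors $\delta_j^{\pm1}$ and ${\rm e}^{\pm t\Phi_{ij}}$ are analytic inside the sectors $V_j$, which do not meet the cuts of the $\delta_j$. Hence each block of $R^{(3)}$ is continuous on $\bar V_j$.

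Across the internal boundaries the pieces must match. On $\Gamma^{(3)}$ the boundary values of $R^{(3)}$ reproduce the triangular factors of $v^{(2)}$, so $R^{(3)}$ is continuous in each open region with continuous boundary values. On the rays separating the $k_0$-sector from the $\omega k_0$- and $\omega^2k_0$-sectors, $R^{(3)}=I$ on both sides, because $V_7,V_8$ and their rotations carry the identity. Non-analyticity follows because $R_j$ contains $\rho_1^*({\rm Re}\,k)$ or $r_1({\rm Re}\,k)$, so $\bar\partial R_j\neq0$.

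\textbf{Computing $\bar\partial R^{(3)}$.} The only non-analytic entries are $R_j(k)$ and $\alpha_j(k)=r_2(0)+R_j(k)r_2^*(0)$; everything multiplying them is analytic. Therefore $\bar\partial$ falls only on $R_j$ and $\alpha_j$, with $\bar\partial\alpha_j=r_2^*(0)\bar\partial R_j$.

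The sign conventions need care. Where $R^{(3)}=v_{j,R}^{-1}$ (namely $V_1,V_2,V_5$), I first invert the unipotent matrix $v_{j,R}=I+N_j$, whose off-diagonal part $N_j$ has nonzero entries confined to a single row. Then $N_j^2=0$, so $v_{j,R}^{-1}=I-N_j$. This produces the minus signs in \eqref{DbarR3} for $V_1,V_2,V_5$, and plus signs for $V_3,V_4,V_6$, where $R^{(3)}=v_{j,R}$.

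Some entries have constant coefficients. In $V_1$ the $(2,3)$ entry involves only $r_2^*(0)$, and in $V_4$ the $(1,3)$ entry involves only $r_2(0)$. These are annihilated by $\bar\partial$, leaving the single entries displayed in \eqref{DbarR3}. On $V_7,V_8$ we have $R^{(3)}=I$, so $\bar\partial R^{(3)}=O$.

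\textbf{The identity $(R^{(3)})^{-1}\bar\partial R^{(3)}=\bar\partial R^{(3)}$.} In each region, $R^{(3)}=I+N$ with $N$ supported in one row $i$ and vanishing diagonal, and $\bar\partial R^{(3)}=\bar\partial N$ is supported in the same row $i$. Then $(R^{(3)})^{-1}=I-N$, and
\[
(I-N)\,\bar\partial N=\bar\partial N-N\,\bar\partial N .
\]
The product $N\,\bar\partial N$ has $(a,b)$ entry $\sum_m N_{am}(\bar\partial N)_{mb}$. This can be nonzero only if $m=i$ and $N_{ai}\neq0$. But $N_{ai}\neq0$ forces $a=i$, and $N_{ii}=0$, so $N\,\bar\partial N=0$. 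The same argument handles the three-entry-free case of $V_1$ and $V_4$.

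\textbf{Rotated sectors.} For $V_9,\dots,V_{24}$, the definitions $R^{(3)}_{\omega k_0}(k)=\mathcal{A}^{-1}R^{(3)}_{k_0}(\omega^2k)\mathcal{A}$ and its analogue for $\omega^2k_0$ transfer the formulas. Since $k\mapsto\omega^2k$ is holomorphic, $\bar\partial[F(\omega^2k)]=\overline{\omega^2}\,(\bar\partial F)(\omega^2k)$. The stated formula therefore holds up to this unimodular constant, which I would absorb into the convention or note explicitly. Conjugation by $\mathcal{A}$ preserves the one-row nilpotent structure, so the identity persists.

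\textbf{Main obstacle.} The step needing real care is the continuity check along the sector boundaries, in particular along $\Gamma^{(3)}_{19}$–$\Gamma^{(3)}_{21}$ near the origin and at the rays where pieces of different sectors meet. I would verify this first by evaluating $R_5,R_6$ at $f_5=r_1(0)$ and $f_6=r_1^*(0)$ and checking consistency with the rotated blocks via the symmetry.
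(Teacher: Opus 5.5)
Your proposal is correct and is essentially the paper's argument, which simply defers to Remark~\ref{remark_3.1} (continuity inherited from the $R_j$, blockwise $\bar\partial$, and the identity by direct multiplication). Your one-row nilpotency argument is a cleaner version of that multiplication which also covers the two-entry blocks on $V_2,V_3,V_5,V_6$, and your observation that the rotated formula really carries the factor $\overline{\omega^2}=\omega$ is correct.

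The only slip is in your continuity paragraph. $R^{(3)}$ is merely continuous on each $\bar V_j$: it jumps across $(0,\infty)$ and its rotations, and there (not on $\Gamma^{(3)}$) its one-sided values are (inverses of) the triangular factors of $v^{(2)}$. On the inner segments $\Gamma^{(3)}_{19}$--$\Gamma^{(3)}_{21}$ it is not $I$ on either side, so matching there does not follow from symmetry alone. It follows from $r_1(0)=\omega$, $r_2(0)=1$ (i.e.\ $1+\omega+\bar\omega=0$), exactly as in the paper's remark showing $v^{(3)}_{19}=I$.
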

		 {The situation here is similar to Remark \ref{remark_3.1}, so no further details are provided.}

		Construct the third transformation
		\begin{equation*}\label{trans3}
			M^{(3)}(\zeta,k)=M^{(2)}(\zeta,k)R^{(3)}(\zeta,k).
		\end{equation*}
		The eigenfunction $M^{(3)}$ obtained under this transformation satisfies the jump relation $M^{(3)}_+(\zeta,k)=M^{(3)}_-(\zeta,k)v^{(3)}_j(\zeta,k)$ for $k\in \Gamma^{(3)}_j$, $j=1,2,\cdots,18$, (see Fig. \ref{figGamma3}). In addition, $V_j$, $j=1,\cdots,8$, have been marked in Fig. \ref{figGamma3}, and $V_j$ for $j=9,\cdots,24$, are given according to the symmetry relation $V_{j}=\omega V_{j-8}$, which is not marked in Fig. \ref{figGamma3} for the aesthetic appearance of the figure. More importantly, $v^{(3)}_j(\zeta,k)$, $j=1,2,\cdots,18,$ are given by
		\begin{align}\label{v3_1}
			&v^{(3)}_1(k)=v_{1,R}(k)=\begin{pmatrix}
				1 & 0 & 0\\
				\frac{\delta_{1+}^2(k)}{\delta_2(k)\delta_3(k)}R_1(k){\rm{e}}^{t \Phi_{21}} & 1 & -\frac{\delta_{1+}(k)\delta_2(k)}{\delta_3^2(k)}r_2^*(0){\rm{e}}^{-t \Phi_{32}}\\
				0 & 0 & 1
			\end{pmatrix},\nonumber\\
			 &v_2^{(3)}(k)=v_{2,R}^{-1}(k)=\begin{pmatrix}
			1 & \frac{\delta_2(k)\delta_3(k)}{\delta_1^2(k)}R_2(k) {\rm{e}}^{-t \Phi_{21}} & -\frac{\delta_2^2(k)}{\delta_1(k)\delta_3(k)}\alpha_2(k){\rm{e}}^{-t \Phi_{31}} \\
			0 & 1 & 0 \\
			0 & 0 & 1
			\end{pmatrix},\nonumber\\ &v^{(3)}_3(k)=v_{3,R}^{-1}(k)=\begin{pmatrix}
			1 & 0 & 0 \\
			-\frac{\delta_1^2(k) }{\delta_2(k)\delta_3(k)}R_3(k) {\rm{e}}^{t \Phi_{21}} & 1 & \frac{\delta_1(k)\delta_2(k)}{\delta_3^2(k)}\alpha_3^*(k){\rm{e}}^{-t \Phi_{32}} \\
			0 & 0 & 1
			\end{pmatrix},\\
			 &v^{(3)}_4(k)=v_{4,R}(k)=\begin{pmatrix}
			 	1 & -\frac{\delta_2(k)\delta_3(k)}{ \delta_{1-}^2(k)} R_4(k){\rm{e}}^{-t \Phi_{21}} & \frac{\delta_2^2(k)}{\delta_{1-}(k)\delta_3(k)}r_2(0){\rm{e}}^{-t \Phi_{31}}\\
			 	0 & 1 & 0\\
			 	0 & 0 & 1
			 \end{pmatrix},\nonumber\\
			 &v^{(3)}_5(k)=v_{2,R}(k)v_{5,R}^{-1}(k)=\begin{pmatrix}
			 	1 & \frac{\delta_2(k)\delta_3(k)}{\delta_1^2(k)}[R_5(k)\!-\!R_2(k)]{\rm{e}}^{-t\Phi_{21}} & -\frac{\delta_2^2(k)}{\delta_1(k)\delta_3(k)}[\alpha_5(k)\!-\!\alpha_2(k)]{\rm{e}}^{-t\Phi_{31}}\\
			 	0 & 1 & 0\\
			 	0 & 0 & 1
			 \end{pmatrix},\nonumber\\ &v^{(3)}_6(k)=v_{6,R}^{-1}(k)v_{3,R}=\begin{pmatrix}
			 	1 & 0 & 0\\
			 	-\frac{\delta_1^2(k)}{\delta_2(k)\delta_3(k)}[R_3(k)+R_6(k)]{\rm{e}}^{t\Phi_{21}} & 1 & \frac{\delta_1(k)\delta_2(k)}{\delta_3^2(k)}[\alpha_3^*(k)+\alpha_6^*(k)]{\rm{e}}^{-t\Phi_{32}}\\
			 	0 & 0 & 1
			 \end{pmatrix},\nonumber
		\end{align}
		and
		\begin{equation}\label{v3_2}
			v^{(3)}_j(k)=\mathcal{A}^{-1}v^{(3)}_{j-6}(\omega^2k)\mathcal{A},\quad j=7,\cdots,18.
		\end{equation}

	\begin{remark}
		In Fig. \ref{figGamma3}, there are no jumps on lines $\Gamma^{(3)}_{19}$, $\Gamma^{(3)}_{20}$ and $\Gamma^{(3)}_{21}$. Taking $\Gamma^{(3)}_{19}$ as an example for illustration. For $k\in\Gamma^{(3)}_{19}$,
		\begin{align*}
			v^{(3)}_{19}(k)&=v_{5,R}(k)\mathcal{A}^{-1}v_{6,R}(\omega^2k)\mathcal{A}\\
			&=\begin{pmatrix}
				1 & -\frac{\delta_2(k)\delta_3(k)}{\delta_1^2(k)}[\alpha_5^*(k)+R_5(k)]{\rm{e}}^{-t\Phi_{21}} & \frac{\delta_2^2(k)}{\delta_1(k)\delta_3(k)}[\alpha_6^*(\omega^2k)+R_6(\omega^2k)]{\rm{e}}^{-t\Phi_{31}}\\
				0 & 1 & 0\\
				0 & 0 & 1
			\end{pmatrix},
		\end{align*}
	and $\alpha_5^*(k)+R_5(k)=r_2^*(0)+r_1^*(0)r_2(0)+r_1(0)=1+\bar{\omega}+\omega=0$, $\alpha_6^*(\omega^2k)+R_6(\omega^2k)=r_2^*(0)+r_1(0)r_2(0)+r_1^*(0)=1+\omega+\bar{\omega}=0$. In other words, $v^{(3)}_{19}=I$.
	\end{remark}
	\begin{remark}\label{rem_v56}
		Noting the definitions of $R_j(k)$ given by equations \eqref{R2} and \eqref{R5} in regions $V_j$ $(j=2,3,5,6)$, it involves two new jumps on $\Gamma^{(3)}_5$ and $\Gamma^{(3)}_6$ in Fig. \ref{figGamma3}, and the jump matrices $v_5^{(3)}$ and $v_6^{(3)}$ can be further expressed in the following forms
		\begin{equation*}
			v_5^{(3)}=\left\lbrace
			\begin{aligned}
					&I, \quad \qquad \qquad \qquad \qquad \qquad \qquad k\in \left( \frac{k_0(\sqrt{3}-1)}{2}, \frac{k_0(\sqrt{3}-1)}{2}+\frac{\ri k_0}{3} \right), \\
				&\begin{pmatrix}
					1 & \frac{\delta_2(k)\delta_3(k)}{\delta_1^2(k)}[R_5(k)-R_2(k)]{\rm{e}}^{-t\Phi_{21}} & -\frac{\delta_2^2(k)}{\delta_1(k)\delta_3(k)}[\alpha_5(k)-\alpha_2(k)]{\rm{e}}^{-t\Phi_{31}}\\
					0 & 1 & 0\\
					0 & 0 & 1
				\end{pmatrix}, \\
				&\qquad\qquad \qquad \qquad \qquad\qquad\qquad k\in \left( \frac{k_0(\sqrt{3}-1)}{2}+\frac{\ri k_0}{3}, \frac{k_0(\sqrt{3}-1)}{2}+{\ri k_0} \right)  ,
			\end{aligned}
			\right.
		\end{equation*}
			\begin{equation*}
			v_6^{(3)}=\left\lbrace
			\begin{aligned}
				&I, \quad \qquad \qquad \qquad \qquad \qquad \qquad k\in \left(\frac{k_0(\sqrt{3}-1)}{2}-\frac{\ri k_0}{3}, \frac{k_0(\sqrt{3}-1)}{2} \right), \\
				&\begin{pmatrix}
					1 & 0 & 0\\
					-\frac{\delta_1^2(k)}{\delta_2(k)\delta_3(k)}[R_3(k)+R_6(k)]{\rm{e}}^{t\Phi_{21}} & 1 & \frac{\delta_1(k)\delta_2(k)}{\delta_3^2(k)}[\alpha_3^*(k)+\alpha_6^*(k)]{\rm{e}}^{-t\Phi_{32}}\\
					0 & 0 & 1
				\end{pmatrix}, \\
				&\qquad\qquad \qquad \qquad \qquad\qquad\qquad k\in \left(\frac{k_0(\sqrt{3}-1)}{2}-{\ri k_0}, \frac{k_0(\sqrt{3}-1)}{2}-\frac{\ri k_0}{3} \right) .
			\end{aligned}
			\right.
		\end{equation*}
		Therefore, the jump matrices $v^{(3)}_j$, $j=5,6,11,12,17,18$, exponentially decay to the identity matrix as $t\to\infty$.
	\end{remark}
		
	 {The mixed problem concerning the eigenfunction $N^{(3)}(x,t,k)=(\omega\,\,\omega^2\,\, 1)M^{(3)}(x,t,k)$ is given by the following conditions:}
     {\begin{rhp-Dbar}\label{rhp-Dbar2}
    	Find a $1 \times 3$-row-vector valued function $N^{(3)}(x, t, k)$ with the following properties:
    		\begin{enumerate}
    			\item $N^{(3)}(x, t, \cdot)$ is continuous for $k\in \mathbb{C}\setminus\Gamma^{(3)}$ with continuous boundary values $N^{(3)}_{\pm}$.
    		\item The jump relation
    		\begin{equation*}
    			N^{(3)}_+(x, t, k) = N^{(3)}_-(x, t, k) v^{(3)}(x, t, k), \quad k \in \Gamma^{(3)},
    		\end{equation*}
    		where $v^{(3)}(k)=v^{(3)}_j(k)$ for $k\in \Gamma^{(3)}_j$, $j=1,\cdots,18$, are given by equations \eqref{v3_1} and \eqref{v3_2}, and the contour of  $\Gamma^{(3)}$ is given by Fig. \ref{figGamma3}.
    		\item $N^{(3)}(x, t, k)\rightarrow (\omega \,\, \omega^2 \,\, 1)$ as $k\to \infty$ in $\mathbb{C}\setminus\Gamma^{(3)}$.
    		\item As $k\to 0$, $
    		N^{(3)}(x, t, k) =\mathcal{O}(1)$.
    		\item For $k\in\mathbb{C}\setminus\Gamma^{(3)}$, there exists the $\bar{\partial}$ problem $\bar{\partial}N^{(3)}( k)=N^{(3)}(k)\bar{\partial}R^{(3)}(k) $.
    		\end{enumerate}
    \end{rhp-Dbar}}
		
		Similar to the operations after the first transformation, here we also perform the following factorisation
		\begin{equation}\label{M3factorisation}
			M^{(3)}(x,t,k)=E^{R}(x,t,k)M^{LC}(x,t,k),
		\end{equation}
		where $E^{R}(x,t,k)$ is the solution to the following pure $\bar{\partial}$ problem \ref{DbarER}, and  {$N^{LC}(x,t,k)$$=$$(\omega\,\,\omega^2\,\, 1)$ $M^{LC}(x,t,k)$} is the solution to the following pure RH problem \ref{rhpMLC}.
		 {\begin{rhp}\label{rhpMLC}
			Find a $1 \times 3$-row-vector valued function $N^{LC}(x, t, k)$ with the following properties:
			\begin{enumerate}
				\item  $N^{LC}(x, t, \cdot) : \mathbb{C} \setminus \Gamma^{(3)} \rightarrow \mathbb{C}^{1 \times 3}$ is analytic, that is, $\bar{\partial}N^{LC}(k)={O}_{1\times3}$.
				\item The jump relation
				\begin{equation*}
					N^{LC}_+(x, t, k) = N^{LC}_-(x, t, k) v^{(3)}(x, t, k), \quad k \in \Gamma^{(3)},
				\end{equation*}
				where $v^{(3)}(k)=v^{(3)}_j(k)$ for $k\in \Gamma^{(3)}_j$, $j=1,\cdots,18$, are given by equations \eqref{v3_1} and \eqref{v3_2}.
				\item $N^{LC}(x, t, k)\rightarrow (\omega \,\, \omega^2 \,\, 1)$ as $k\to \infty$ in $\mathbb{C}\setminus\Gamma^{(3)}$.
				\item As $k\to 0$, $
				N^{LC}(x, t, k) =\mathcal{O}(1)$.
			\end{enumerate}
		\end{rhp}}
		
		\begin{Dbar}\label{DbarER}
			Find a $3 \times 3$-matrix valued function $E^{R}(x, t, k)$ with the following properties:
			\begin{enumerate}
				\item $E^{R}$ is continuous in the complex plane $\mathbb{C}$.
				\item $E^{R}(x, t, k)\to I$ as $k\to \infty$ in $\mathbb{C}$.
				\item $\bar{\partial}E^{R}(k)=E^{R}(k)W(k)$,
				where
				\begin{equation*}
					W(k)=M^{LC}(k)\bar{\partial}R^{(3)}(k)\left(M^{LC}(k) \right)^{-1},\quad k\in V_j,\quad j=1,\cdots,24,
				\end{equation*}
				and $\bar{\partial}R^{(3)}(k)$ is given by equation \eqref{DbarR3}.
			\end{enumerate}
		\end{Dbar}
		
		\begin{lem}\label{estV56}
			The jump matrix $v^{(3)}(x,t,k)\to I$ and $\partial_xv^{(3)}(x,t,k)\to O$  uniformly for $\zeta\in\mathcal{I}$ and $k\in \Gamma^{(3)}$ as $t\to\infty$, except near the three critical points $k_0$, $\omega k_0$ and $\omega^2k_0$. Moreover, the jump matrices $v_j^{(3)}$, $j=5,6,$ satisfy
	\begin{align*}
		    &\left\|(1+\left|\,\cdot \,\right| ) \partial_x^l(v^{(3)}_5-I)\right\| _{(L^1\cap L^\infty)\left( \Gamma^{(3)}_5\right) }=\mathcal{O}({\rm{e}}^{-ct})   ,\\	&\left\|(1+\left|\,\cdot \,\right| ) \partial_x^l(v^{(3)}_6-I)\right\| _{(L^1\cap L^\infty)\left( \Gamma^{(3)}_6\right) }=\mathcal{O}({\rm{e}}^{-ct})  ,
			\end{align*}
	uniformly for $l=0,1$, and the same property applies to $v^{(3)}_j$, $j=11,12,17,18$.
		\end{lem}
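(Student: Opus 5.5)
The plan is to reduce everything to sign information on ${\rm Re}\,\Phi_{ij}$ along each ray of $\Gamma^{(3)}$, together with the uniform bounds already available for the $\delta_j$ and the extensions $R_j$. By the symmetry \eqref{v3_2}, $v^{(3)}_j(k)=\mathcal{A}^{-1}v^{(3)}_{j-6}(\omega^2k)\mathcal{A}$, and $\mathcal{A}$ is a constant permutation matrix. It therefore suffices to treat $\Gamma^{(3)}_1,\dots,\Gamma^{(3)}_6$, which lie near $k_0$; the remaining contours follow by rotation.

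The first step is to record the signs of the phases on each ray. Writing $k=u+\ri v$, the explicit formulas for $\Phi_{21},\Phi_{31},\Phi_{32}$ give
\[
t\,{\rm Re}\,\Phi_{21}=-\sqrt3\,t\,v\,(2u-\zeta),
\]
with analogous quadratic expressions for ${\rm Re}\,\Phi_{31}$ and ${\rm Re}\,\Phi_{32}$; for $\Phi_{32}$ this is the formula already used in Lemma \ref{lem_G_bounded}.

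On $\Gamma^{(3)}_1$ we have $u>k_0$ and $v>0$, so ${\rm e}^{t\Phi_{21}}$ decays like ${\rm e}^{-ct|k-k_0|^2}$. The entry ${\rm e}^{-t\Phi_{32}}$ decays like ${\rm e}^{-ct(1+|k|^2)}$ because $\Gamma^{(3)}_1$ stays inside $D_1$, as in Lemma \ref{lem_G_bounded}. Similar sign checks handle $\Gamma^{(3)}_2,\Gamma^{(3)}_3,\Gamma^{(3)}_4$. The prefactors are bounded uniformly: the $\delta$-ratios by Proposition \ref{prop_delta}(3), and $R_j,\alpha_j$ by Proposition \ref{R_exist}, with $|R_j|\le c$. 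Hence every off-diagonal entry is $\mathcal{O}({\rm e}^{-ct|k-k_0|^2})$. This gives $v^{(3)}\to I$ uniformly on $\Gamma^{(3)}$ outside any fixed disk around the three critical points.

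For the $x$-derivative, I write $\partial_x=t^{-1}\partial_\zeta$ and note that $\partial_x\theta_{ij}=l_i-l_j$ is $\mathcal{O}(|k|)$. Proposition \ref{prop_delta}(4) gives $\partial_x\delta_1^{\pm1}=\pm\ri\nu\,\delta_1^{\pm1}/(2t(k-k_0))$. Consequently $\partial_x(v^{(3)}-I)$ is bounded by $c(|k|+t^{-1}|k-k_0|^{-1})$ times the same Gaussian factor, which still tends to $0$ away from the critical points. The $\zeta$-dependence of $k_0$ inside $R_j$ contributes only bounded factors, since $f_j$ is explicit in $k_0$ and $\partial_x k_0=1/(2t)$.

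The second step handles $\Gamma^{(3)}_5$ and $\Gamma^{(3)}_6$ using Remark \ref{rem_v56}. The jump is the identity on the segment near the real axis, where the cutoffs $\mathcal{K}_2$ and $\mathcal{K}_5$ both equal $1$ and the entries cancel. On the remaining part the point stays at distance at least $k_0/3$ from the real axis, with $u=k_0(\sqrt3-1)/2<k_0$. There ${\rm Re}\,\Phi_{21}$ has the sign making ${\rm e}^{-t\Phi_{21}}$ (on $\Gamma_5$) and ${\rm e}^{t\Phi_{21}}$ (on $\Gamma_6$) bounded by ${\rm e}^{-ct}$, with $c$ depending on $\mathcal{I}$ through $\inf k_0>0$. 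The $\Phi_{31}$ and $\Phi_{32}$ entries are treated likewise, checking the sign at $u=k_0(\sqrt3-1)/2$, $|v|\in[k_0/3,k_0]$. Because these segments are compact and the factor $1+|k|$ is bounded on them, the $L^1$ and $L^\infty$ norms are both $\mathcal{O}({\rm e}^{-ct})$. The $x$-derivative only adds polynomial factors in $t$ and $|k|$, which are absorbed by shrinking $c$. The symmetry then transfers the estimate to $j=11,12,17,18$.

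I expect the main obstacle to be the sign verification for ${\rm Re}\,\Phi_{31}$ and ${\rm Re}\,\Phi_{32}$ on the vertical segments $\Gamma^{(3)}_5$ and $\Gamma^{(3)}_6$. The contour geometry was chosen specifically so that these signs work out, and one must confirm that the segment's endpoint $k_0(\sqrt3-1)/2\pm\ri k_0$ still lies on the correct side of the zero lines of ${\rm Re}\,\Phi_{31}$ and ${\rm Re}\,\Phi_{32}$. If a sign were to fail, I would fall back on the uniform boundedness of that entry together with the vanishing of $\alpha_5-\alpha_2$ where the cutoff is $1$.
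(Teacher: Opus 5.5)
Your proposal is correct and follows the same route as the paper. Both arguments use the sign of ${\rm Re}\,\Phi_{ij}$ on each ray, uniform bounds on the $\delta$-ratios and on the extensions $R_j$, Remark \ref{rem_v56} to dispose of the near-axis part of $\Gamma^{(3)}_5$ and $\Gamma^{(3)}_6$, and the symmetry \eqref{v3_2} for the remaining contours. The sign check you flag does succeed, so no fallback is needed: on the upper part of $\Gamma^{(3)}_5$ one has ${\rm Re}\,\Phi_{31}=\tfrac{\sqrt3}{2}(u-\sqrt3 v+2k_0)(v+\sqrt3 u)$, both factors are bounded below by a positive multiple of $k_0$ for $u=k_0(\sqrt3-1)/2$ and $k_0/3\le v\le k_0$, and ${\rm Re}\,\Phi_{32}$ on the lower part of $\Gamma^{(3)}_6$ is its mirror image under $v\mapsto -v$.
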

		\begin{proof}
			First, consider the jump matrix $v_1^{(3)}$. According to Lemma \ref{lem_G_bounded}, Proposition \ref{prop_delta}, and Lemma \ref{R_bounded}, we know that $\delta_j^{\pm1}$, $j=1,2,3$, are positive and bounded, and both $R^{(3)}$ and $(R^{(3)})^{-1}$ are uniformly bounded functions. Since ${\rm{Re}}\, \Phi_{21}\leq0$ and ${\rm{Re}}\, \Phi_{32}\geq c>0$ for $k\in\Gamma^{(3)}_1$, $v_1^{(3)}\to I$ and $\partial_xv_1^{(3)}\to O$ as $t\to \infty$. It is worth noting that, due to ${\rm{Re}}\, \Phi_{21}(x,t,k_0)=0$, the element in the $(2,1)$ position of $v^{(3)}_1$ does not converge uniformly to 0 when $k$ is near $k_0$. The remaining three jump matrices $v_2^{(3)}$, $v_3^{(3)}$, and $v_4^{(3)}$ near $k_0$ can be similarly proven.
			
			Next, based on Proposition \ref{R_exist}, Lemma \ref{R_bounded} and Remark \ref{rem_v56}, we consider the case of $k\in \left( \frac{k_0(\sqrt{3}-1)}{2}+\frac{\ri k_0}{3},\right.$ $\left.\frac{k_0(\sqrt{3}-1)}{2}+{\ri k_0} \right)\subseteq\Gamma^{(3)}_5$.  Denote $k=\frac{k_0(\sqrt{3}-1)}{2}+\ri u k_0$, $1/3<u<1$, and by calculation we obtain
			\begin{equation*}
				{\rm{Re}}\,\Phi_{21}(\zeta,k)=3(\sqrt{3}-1)uk_0^2>(\sqrt{3}-1)k_0^2,
			\end{equation*}
			which leads to for $\zeta\in \mathcal{I}$,
	 {\begin{align*}
		\left\|\left( v^{(3)}_5-I\right) _{12}\right\|_{L^\infty\left( \Gamma^{(3)}_5\right) } \leq&C\sup_{k\in \Gamma^{(3)}_5}\left|\frac{\delta_2(k)\delta_3(k)}{\delta_1^2(k)}[R_5(k)-R_2(k)]{\rm{e}}^{-t\Phi_{21}}(\zeta,k) \right| \\
	    \leq&C\sup_{k\in \Gamma^{(3)}_5}\left| \frac{\delta_2(k)\delta_3(k)}{\delta_1^2(k)}\right| \cdot \left|f_5(k)+[r_1({\rm{Re}}\,k)-f_5(k)]\mathcal{K}_5(\arg_0k)\right.\\ &\left.-f_2(k)-[r_1({\rm{Re}}\,k)-f_2(k)]\mathcal{K}_2(\arg_0(k-k_0))\right| \cdot  {\rm{e}}^{-t\,{\rm{Re}}\,\Phi_{21}(\zeta,k)}  \\
	    \leq &C\sup_{k\in \Gamma^{(3)}_5}\left| \frac{\delta_2(k)\delta_3(k)}{\delta_1^2(k)}\right| \cdot \left| f_5(k)-f_2(k) \right| \cdot  {\rm{e}}^{-t\,{\rm{Re}}\,\Phi_{21}(\zeta,k)}\\
	    \leq & C{\rm{e}}^{-ctk_0^2},
  		\end{align*}}
		and
		\begin{align*}
			\left\|\left( v^{(3)}_5-I\right) _{12}\right\|_{L^1\left( \Gamma^{(3)}_5\right) } \leq C\int_{\frac{1}{3}}^{1}{\rm{e}}^{-(\sqrt{3}-1)tk_0^2}\rd u \leq  C{\rm{e}}^{-ctk_0^2}.
		\end{align*}
		As well as $\left\| \left( v^{(3)}_5-I\right) _{13}\right\|_{\left( L^1\cap L^\infty\right) \left( \Gamma^{(3)}_5\right) }=\mathcal{O}({\rm{e}}^{-ct})$ can also be proved similarly. In addition, due to the presence of the exponential term ${\rm{e}}^{-t\,{\rm{Re}}\,\Phi_{21}(\zeta,k)}$, it is possible to obtain $\left| \partial_x \left( v^{(3)}_5-I\right) _{12}\right|=\mathcal{O}({\rm{e}}^{-ct})$ and ${\rm{e}}^{-t\,{\rm{Re}}\,\Phi_{21}(\zeta,k)}$, it is possible to obtain $\left| \partial_x \left( v^{(3)}_5-I\right) _{13}\right|=\mathcal{O}({\rm{e}}^{-ct})$.
		\end{proof}
		 {\section{Treatment RH problem \ref{rhpMLC} $N^{LC}$}\label{sec_4}}
		
		\subsection{Local parametrix at $\omega^jk_0$}
		\ \ \ \
		In the previous section, we introduced that the jump matrix $v^{(3)}$ of the RH problem \ref{rhpMLC} decays exponentially to the unit matrix $I$, except near the three critical points $k_0$, $\omega k_0$ and $\omega^2k_0$. When obtaining the long-time asymptotics of $M^{LC}$, we only need to consider the neighborhoods of three points.
		
		Let $\epsilon=k_0/2$, denote $B_\epsilon(k_0)=\left\lbrace k\in\mathbb{C}|\left| k-k_0\right|<\epsilon  \right\rbrace $ as the open disk centered at $k_0$ with radius $\epsilon$ and $X:=k_0+\cup_{j=1}^4X_j=k_0+\cup_{j=1}^4s\,{\rm{e}}^{\frac{(2j-1)\pi \ri}{4}}$, $0\leq s<\infty$. Given $B_\epsilon=\cup_{j=0}^2B_\epsilon(\omega^j k_0)$,  $X^{\omega^j\epsilon}=\omega^jX\cap B_\epsilon(\omega^j k_0)$ for $j=0,1,2$ and $X_\epsilon=\cup_{j=0}^2 X^{\omega^j\epsilon}$ (see Fig. \ref{figGammak_0}). Perform the following scaling transformation on the variable $k$, and introduce the new variable $z=z(\zeta,k)$:
		\begin{equation*}
			z=3^{1/4}\sqrt{2 t}(k-k_0),\quad k\in B_\epsilon(k_0).
		\end{equation*}
		
		\begin{figure}[htbp]
			\centering
			\begin{subfigure}[b]{0.35\textwidth}
				\centering
				\begin{tikzpicture}
					\draw[dashed,very thick,white!50!blue] (-2,-2) -- (2,2);
					\draw[dashed,very thick,white!50!blue] (2,-2) -- (-2,2);
					
					\draw[dashed,very thick,white!50!blue] (0,0) -- (-1.45,0);
					
					\draw [very thick,black!20!blue](-1,-1) -- (1,1);
					\draw [very thick,black!20!blue](-1,1) -- (1,-1);
					
					\draw[very thick, black!20!blue, -latex]  (0,0) -- (0.7,0.7);
					\draw[very thick, black!20!blue, -latex]  (0,0) -- (-0.7,0.7);
					\draw[very thick, black!20!blue, -latex]  (0,0) -- (0.7,-0.7);
					\draw[very thick, black!20!blue, -latex]  (0,0) -- (-0.7,-0.7);
					
					\draw [dashed,very thick,white!50!blue] (0,0) circle (1.45);
					\fill (0,0) circle (2pt);
					
					\node at (0.45,0) {\small$k_0$};
					\node at (-1,0.2) {$\epsilon$};
					
					\node[red!70!black,above] at (1,0.4) {$1$};
					\node[red!70!black,above] at (-1,0.4) {$2$};
					\node[red!70!black,above] at (-1,-0.9) {$3$};
					\node[red!70!black,above] at (1,-0.9) {$4$};
					
				\end{tikzpicture}
				\caption{The jump contour $X^\epsilon$}
			\end{subfigure}
			\begin{subfigure}[b]{0.3\textwidth}
				\centering
				\begin{tikzpicture}
					\draw[dashed,very thick,white!50!blue] (-2,-2) -- (2,2);
					\draw[dashed,very thick,white!50!blue] (2,-2) -- (-2,2);
					
					\draw[dashed,very thick,white!50!blue] (0,0) -- (-1.45,0);
					
					\draw [very thick,black!20!blue](-1,-1) -- (1,1);
					\draw [very thick,black!20!blue](-1,1) -- (1,-1);
					
					\draw[very thick, black!20!blue, -latex]  (0,0) -- (0.7,0.7);
					\draw[very thick, black!20!blue, -latex]  (0,0) -- (-0.7,0.7);
					\draw[very thick, black!20!blue, -latex]  (0,0) -- (0.7,-0.7);
					\draw[very thick, black!20!blue, -latex]  (0,0) -- (-0.7,-0.7);
					
					\draw [dashed,very thick,white!50!blue] (0,0) circle (1.45);
					\fill (0,0) circle (2pt);
					
					\node at (0.5,0) {\small$\omega k_0$};
					\node at (-1,0.2) {$\epsilon$};
					
					\node[red!70!black,above] at (1,0.4) {$5$};
					\node[red!70!black,above] at (-1,0.4) {$6$};
					\node[red!70!black,above] at (-1,-0.9) {$7$};
					\node[red!70!black,above] at (1,-0.9) {$8$};
					
				\end{tikzpicture}
				\caption{The jump contour $X^{\omega\epsilon}$}
			\end{subfigure}
			\begin{subfigure}[b]{0.3\textwidth}
				\centering
				\begin{tikzpicture}
					\draw[dashed,very thick,white!50!blue] (-2,-2) -- (2,2);
					\draw[dashed,very thick,white!50!blue] (2,-2) -- (-2,2);
					
					\draw[dashed,very thick,white!50!blue] (0,0) -- (-1.45,0);
					
					\draw [very thick,black!20!blue](-1,-1) -- (1,1);
					\draw [very thick,black!20!blue](-1,1) -- (1,-1);
					
					\draw[very thick, black!20!blue, -latex]  (0,0) -- (0.7,0.7);
					\draw[very thick, black!20!blue, -latex]  (0,0) -- (-0.7,0.7);
					\draw[very thick, black!20!blue, -latex]  (0,0) -- (0.7,-0.7);
					\draw[very thick, black!20!blue, -latex]  (0,0) -- (-0.7,-0.7);
					
					\draw [dashed,very thick,white!50!blue] (0,0) circle (1.45);
					\fill (0,0) circle (2pt);
					
					\node at (0.6,0) {\small$\omega^2k_0$};
					\node at (-1,0.2) {$\epsilon$};
					
					\node[red!70!black,above] at (1,0.4) {$9$};
					\node[red!70!black,above] at (-1,0.4) {$10$};
					\node[red!70!black,above] at (-1,-0.9) {$11$};
					\node[red!70!black,above] at (1,-0.9) {$12$};
					
				\end{tikzpicture}
				\caption{The jump contour $X^{\omega^2\epsilon}$}
			\end{subfigure}
			\caption{The jump contours $X_\epsilon$ in the complex $k$-plane.}
			\label{figGammak_0}
		\end{figure}
		
		Under the above transformation, the following expression can be written as
		\begin{align*}
			t \Phi_{21}(\zeta,k)=&t\Phi_{21}(\zeta,k_0)+\sqrt{3}\ri t(k-k_0)^2\\
			=&t\Phi_{21}(\zeta,k_0)+\frac{\ri z^2}{2}.
		\end{align*}
		Combining the above expression and Proposition \ref{R_exist}, the jump matrices $v_j^{(3)}$ $(j=1,\cdots,4)$ for $k\in B_\epsilon(k_0)$ can be written in the following form
		\begin{align*}
			&v^{(3)}_1(z)=\begin{pmatrix}
				1 & 0 & 0\\
				\frac{\left( 2\sqrt{3}t\right)^{\ri\nu} {\rm{e}}^{-2\ri\nu \ln_0(z)}{\rm{e}}^{-2\chi{(\zeta,k_0)}}}{\delta_2(\zeta,k_0)\delta_3(\zeta,k_0)}\rho_1^*(k_0)
				{\rm{e}}^{t \Phi_{21}(\zeta,k_0)+\frac{\ri z^2}{2}} & 1 & -\frac{\delta_{1}\delta_2}{\delta_3^2}r_2^*(0){\rm{e}}^{-t \Phi_{32}(\zeta,z)}\\
				0 & 0 & 1
			\end{pmatrix},\\
			&v_2^{(3)}(z)=\begin{pmatrix}
				1 & \frac{\delta_2(\zeta,k_0)\delta_3(\zeta,k_0)}{\left( 2\sqrt{3}t\right)^{\ri\nu}{\rm{e}}^{-2\ri\nu \ln_0(z)}{\rm{e}}^{-2\chi{(\zeta,k_0)}}} r_1(k_0){\rm{e}}^{-t \Phi_{21}(\zeta,k_0)-\frac{\ri z^2}{2}} & -\frac{\delta_2^2}{\delta_1\delta_3}\alpha_2(k){\rm{e}}^{-t \Phi_{31}(\zeta,z)} \\
				0 & 1 & 0 \\
				0 & 0 & 1
			\end{pmatrix},\\
			&v^{(3)}_3(z)=\begin{pmatrix}
				1 & 0 & 0 \\
				-\frac{\left( 2\sqrt{3}t\right)^{\ri\nu}{\rm{e}}^{-2\ri\nu \ln_0(z)}{\rm{e}}^{-2\chi{(\zeta,k_0)}}}{\delta_2(\zeta,k_0)\delta_3(\zeta,k_0)} r_1^*(k_0) {\rm{e}}^{t \Phi_{21}(\zeta,k_0)+\frac{\ri z^2}{2}} & 1 & \frac{\delta_1\delta_2}{\delta_3^2}\alpha_3^*(k){\rm{e}}^{-t \Phi_{32}(\zeta,z)} \\
				0 & 0 & 1
			\end{pmatrix},\\
			&v^{(3)}_4(z)=\begin{pmatrix}
				1 & -\frac{\delta_2(\zeta,k_0)\delta_3(\zeta,k_0)}{\left( 2\sqrt{3}t\right)^{\ri\nu}{\rm{e}}^{-2\ri\nu \ln_0(z)}{\rm{e}}^{-2\chi{(\zeta,k_0)}}} \rho_1(k_0){\rm{e}}^{-t \Phi_{21}(\zeta,k_0)-\frac{\ri z^2}{2}} & \frac{\delta_2^2}{\delta_{1-}\delta_3}r_2(0){\rm{e}}^{-t \Phi_{31}(\zeta,z)}\\
				0 & 1 & 0\\
				0 & 0 & 1
			\end{pmatrix}.
		\end{align*}
		
		Next, denote $T_0(\zeta)=\left( 2\sqrt{3}t\right)^{-\ri\nu}{\rm{e}}^{2\chi{(\zeta,k_0)}}\delta_2(\zeta,k_0)\delta_3(\zeta,k_0)$, use this condition to define the local transformation:
		\begin{equation*}\label{transT}M^{(4)}(x,t,z)=\left\lbrace
			\begin{aligned}
				&M^{LC}(x,t,z(k))T(\zeta),\quad &&k\in B_\epsilon(k_0),\\
				&M^{LC}(x,t,z(k))\mathcal{A}^{-1}T(\zeta)\mathcal{A},\quad &&k\in B_\epsilon(\omega k_0),\\
				&M^{LC}(x,t,z(k))\mathcal{A}^{-2}T(\zeta)\mathcal{A}^2,\quad &&k\in B_\epsilon(\omega^2k_0),\\
				&M^{LC}(x,t,z(k)), && k\in \mathbb{C}\setminus B_\epsilon,
			\end{aligned}\right.
		\end{equation*}
		where
		\begin{equation*}
			T(\zeta)=
				\begin{pmatrix}
					T_0(\zeta)^{\frac{1}{2}}{\rm{e}}^{-\frac{t}{2} \Phi_{21}(\zeta,k_0)}  & 0 & 0\\
					0 & T_0(\zeta)^{-\frac{1}{2}}{\rm{e}}^{\frac{t}{2} \Phi_{21}(\zeta,k_0)} & 0\\
					0 & 0 & 1\\
				\end{pmatrix}.
		\end{equation*}

		Thus, $M^{(4)}$ is not analytic on $X_\epsilon$  as shown in Fig. \ref{figGammak_0}, and satisfies the jump relations $v^{(4)}_j=T^{-1}v^{(3)}_jT$,  $j=1,2,\cdots,4$, for  $k\in X^\epsilon$ and
		\begin{align*}
			v^{(4)}_n(k)=&\mathcal{A}^{-1}v^{(4)}_{n-4}(\omega^2 k)\mathcal{A},\quad && k\in X^{\omega\epsilon}, \,\,\,n=5,6,7,8,\\
			v^{(4)}_m(k)=&\mathcal{A}^{-2}v^{(4)}_{m-4}(\omega k)\mathcal{A}^2,&& k\in X^{\omega^2\epsilon}, \,m=9,10,11,12,
		\end{align*}
		where
		\begin{align*}
			v^{(4)}_1=&\begin{pmatrix}
				1 & 0 & 0\\
				\rho_1^*(k_0){\rm{e}}^{-2\ri\nu \ln_0(z)}
				{\rm{e}}^{\frac{\ri z^2}{2}} & 1 & -\frac{\delta_{1}\delta_2}{\delta_3^2}T_0^{\frac{1}{2}}r_2^*(0){\rm{e}}^{-\frac{t}{2} \Phi_{21}(\zeta,k_0)}{\rm{e}}^{-t \Phi_{32}(\zeta,z)}\\
				0 & 0 & 1
			\end{pmatrix},\\
				v^{(4)}_2=&\begin{pmatrix}
					1 & r_1(k_0){\rm{e}}^{2\ri\nu \ln_0(z)} {\rm{e}}^{-\frac{\ri z^2}{2}} & -\frac{\delta_2^2}{\delta_1\delta_3}T_0^{-\frac{1}{2}}\alpha_2(k){\rm{e}}^{\frac{t}{2} \Phi_{21}(\zeta,k_0)}{\rm{e}}^{-t \Phi_{31}(\zeta,z)} \\
					0 & 1 & 0 \\
					0 & 0 & 1
				\end{pmatrix},\\
				v^{(4)}_3=&\begin{pmatrix}
					1 & 0 & 0 \\
					-r_1^*(k_0){\rm{e}}^{-2\ri\nu \ln_0(z)}  {\rm{e}}^{\frac{\ri z^2}{2}} & 1 & \frac{\delta_1\delta_2}{\delta_3^2}T_0^{\frac{1}{2}}\alpha_3^*(k){\rm{e}}^{-\frac{t}{2} \Phi_{21}(\zeta,k_0)}{\rm{e}}^{-t \Phi_{32}(\zeta,z)} \\
					0 & 0 & 1
				\end{pmatrix},\\
				v^{(4)}_4=&\begin{pmatrix}
					1 & -\rho_1(k_0){\rm{e}}^{2\ri\nu \ln_0(z)} {\rm{e}}^{-\frac{\ri z^2}{2}} & \frac{\delta_2^2}{\delta_1\delta_3}T_0^{-\frac{1}{2}}r_2(0){\rm{e}}^{\frac{t}{2} \Phi_{21}(\zeta,k_0)}{\rm{e}}^{-t \Phi_{31}(\zeta,z)}\\
					0 & 1 & 0\\
					0 & 0 & 1
				\end{pmatrix}.
		\end{align*}
		
		\begin{lem}\label{T_bounded}
			The matrix function $T(x,t)$ is uniformly bounded, meaning it satisfies
			\begin{equation*}
				\sup_{\zeta\in \mathcal{I}}\sup_{t>1}\left|\partial_x^l\, T(x,t)^{\pm1} \right| <C,\quad l=0,1,
			\end{equation*}
			where $C$ is a real positive constant. Moreover, the following can be obtained for $\zeta\in \mathcal{I}$ and $t>1$
			\begin{align*}
				&\left| T_0(\zeta)\right| ={\rm{e}}^{2\pi\nu},\\
				&\left|\partial_xT_0(\zeta) \right| \leq \frac{C\ln t }{t}.
			\end{align*}
		\end{lem}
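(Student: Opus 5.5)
The plan is to treat $T_0$ factor by factor and then read off the entries of $T^{\pm1}$. Write
\[
T_0(\zeta)=(2\sqrt{3}t)^{-\ri\nu}\,{\rm e}^{2\chi(\zeta,k_0)}\,\delta_2(\zeta,k_0)\,\delta_3(\zeta,k_0).
\]
Since $\nu$ is real, $|(2\sqrt3 t)^{-\ri\nu}|=1$. For $\chi(\zeta,k_0)$, use the representation of Proposition \ref{prop_delta}: $\chi(\zeta,k_0)=\frac{1}{2\pi\ri}\int_{k_0}^\infty \ln_0(k_0-s)\,\rd\ln(1-|r_1(s)|^2)$. For $s>k_0$ we have $\ln_0(k_0-s)=\ln|s-k_0|+\ri\pi$ on the principal branch, so
\[
{\rm Re}\,\chi(\zeta,k_0)=\frac{1}{2}\int_{k_0}^\infty\rd\ln(1-|r_1|^2)=-\frac12\ln(1-|r_1(k_0)|^2)=\pi\nu,
\]
using $r_1(s)\to0$ as $s\to\infty$. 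Hence $|{\rm e}^{2\chi}|={\rm e}^{2\pi\nu}$.

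Next, $\delta_2(\zeta,k_0)=\delta_1(\zeta,\omega^2k_0)$ and $\delta_3(\zeta,k_0)=\delta_1(\zeta,\omega k_0)$ are complex conjugates of each other, by the symmetry $\delta_1(k)\overline{\delta_1(\bar k)}=1$ applied with $\overline{\omega k_0}=\omega^2k_0$. I would check this directly from \eqref{delta1}: the exponent at $\bar k$ is minus the conjugate of the exponent at $k$, because $\ln(1-|r_1|^2)$ is real. This gives $\delta_3(k_0)=1/\overline{\delta_2(k_0)}$, so $|\delta_2\delta_3|=1$ and therefore $|T_0|={\rm e}^{2\pi\nu}$. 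In particular the diagonal entries of $T^{\pm1}$ have moduli ${\rm e}^{\pm\pi\nu}|{\rm e}^{\mp\frac t2\Phi_{21}(\zeta,k_0)}|$. Since $\Phi_{21}(\zeta,k_0)=\omega(\omega-1)k_0^2=-\ri\sqrt3k_0^2$ is purely imaginary, these are bounded by ${\rm e}^{\pi\nu}\le(1-\sigma^2)^{-1/2}$, uniformly for $\zeta\in\mathcal I$, $t>1$.

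For $\partial_x$, I will differentiate each factor, using $\zeta=x/t$ so that $\partial_x=t^{-1}\partial_\zeta$ and $\partial_x k_0=\frac1{2t}$.

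\begin{itemize}
\item $\partial_x(2\sqrt3t)^{-\ri\nu}=-\ri(\partial_x\nu)\ln(2\sqrt3t)(2\sqrt3t)^{-\ri\nu}$, where $\partial_x\nu=\frac{1}{2t}\partial_{k_0}\nu$ is $\mathcal O(1/t)$ because $r_1\in H^{3,4}$ gives bounded $r_1'$. This yields the $\frac{C\ln t}{t}$ term, which is the dominant one.
\item $\partial_x\chi(\zeta,k_0)$ is $\mathcal O(1/t)$ by Proposition \ref{prop_delta}(4).
\item For $\partial_x\delta_j(\zeta,k_0)$, $j=2,3$, the points $\omega^{j}k_0$ stay at distance $\ge\sqrt3k_0$ from $[k_0,\infty)$. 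So the Cauchy integral and its derivatives in both $k$ and the endpoint $k_0$ are bounded, and each derivative carries the factor $1/t$.
\end{itemize}

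Summing these gives $|\partial_xT_0|\le C\ln t/t$.

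The main obstacle is the bound $|\partial_x T^{\pm1}|<C$. The entries involve $T_0^{\pm1/2}{\rm e}^{\mp\frac t2\Phi_{21}(\zeta,k_0)}$, and
\[
\partial_x\bigl(\tfrac t2\Phi_{21}(\zeta,k_0)\bigr)=\tfrac t2\partial_\zeta\Phi_{21}\cdot\tfrac1t=\tfrac12\partial_\zeta\bigl(-\ri\sqrt3\zeta^2/4\bigr)=-\ri\sqrt3\zeta/4,
\]
which is bounded for $\zeta\in\mathcal I$. The oscillatory phase therefore contributes only an $\mathcal O(1)$ derivative. Combined with $|\partial_xT_0^{\pm1/2}|\le\frac12|T_0|^{\mp1/2-1}\cdot\ldots\le C\ln t/t$ and the modulus bound above, this gives $\sup|\partial_xT^{\pm1}|<C$. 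Here the branch of $T_0^{1/2}$ is taken as the product of the square roots of the individual factors, so that it is smooth in $x$.
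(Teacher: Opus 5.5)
Your proposal is correct and follows essentially the same route as the paper: $|T_0|={\rm e}^{2\pi\nu}$ comes from ${\rm Re}\,\chi(\zeta,k_0)=\pi\nu$ and $|\delta_2(\zeta,k_0)\delta_3(\zeta,k_0)|=1$, and $\partial_xT_0=T_0\,\partial_x\ln T_0$ gives the estimate, with the $\ln t$ coming from $\partial_x\nu$. You go slightly further than the paper by checking that under $\partial_x$ the phase ${\rm e}^{\mp\frac t2\Phi_{21}(\zeta,k_0)}$ contributes only the bounded factor $\pm\ri\sqrt3\zeta/4$, which is what the $l=1$ bound on $T^{\pm1}$ actually needs.

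One wording slip: $\delta_2(\zeta,k_0)$ and $\delta_3(\zeta,k_0)$ are not complex conjugates of each other, since that would give $|\delta_2|^2$ rather than $1$. The relation you actually derive and use, $\delta_3(\zeta,k_0)=1/\overline{\delta_2(\zeta,k_0)}$, is the correct one.
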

		\begin{proof}
		   From Proposition \ref{prop_delta}, we can know that
		   \begin{equation*}
		   	\left| \delta_2(\zeta,k_0)\delta_3(\zeta,k_0)\right|=\left|\delta_1(\zeta,\omega^2k_0)\delta_1(\zeta,\omega k_0) \right| =1,
		   \end{equation*}
		   and
		   \begin{equation*}
		   	{\rm{Re}}\,\chi(\zeta,k_0)=\frac{1}{2\pi }\int_{k_0}^{\infty}\pi{\rm{d}}\ln(1-\left|r_1(s)\right|^2 )=-\frac{1}{2}\ln (1-\left|r_1(k_0) \right|^2)=\pi\nu.
		   \end{equation*}
		   Thus,
		   \begin{equation*}
		   	\left| T_0(\zeta)\right| =\left| \left( 2\sqrt{3}t\right)^{-\ri\nu}\right| \left| {\rm{e}}^{2\chi{(\zeta,k_0)}}\right| \left| \delta_2(\zeta,k_0)\delta_3(\zeta,k_0)\right| ={\rm{e}}^{2\pi\nu}.
		   \end{equation*}
		   Therefore, it is also evident that $T(\zeta)$ is uniformly bounded.
		   Furthermore, using Proposition \ref{prop_delta} and relation $\partial_x=\partial_\zeta/t$, we have
		   \begin{align*}
		   	\left| \partial_xT_0(x,t) \right|&= 	\left| T_0(x,t)\partial_x\ln  T_0(x,t)\right|={\rm{e}}^{2\pi\nu}	\left| \partial_x\ln  T_0(x,t)\right|\\
		   	&={\rm{e}}^{2\pi\nu}	\left| \partial_x\left[ -\ri\nu \ln(2\sqrt{3}t)+2\chi{(\zeta,k_0)}+\ln(\delta_2(\zeta,k_0)\delta_3(\zeta,k_0))\right] \right|\\
		   	&\leq C\left(\left|\ln t\, \partial_x\nu \right| +\left|\partial_x \chi{(\zeta,k_0)}\right| +\left|\partial_x \ln(\delta_2(\zeta,k_0)\delta_3(\zeta,k_0))\right| \right) \\
		   	&\leq \frac{C\ln t}{t}.
		   \end{align*}
		\end{proof}
		
		Define $q:=r_1(k_0)$, where $q$ is given in Appendix of Ref. \cite{lenells-wang-good}. Additionally, since ${\rm{Re}}\,{\Phi_{31}}>0$ and ${\rm{Re}}\,{\Phi_{32}}>0$ for $k\in B_\epsilon(k_0)$, ${\rm{e}}^{-t \Phi_{31}}$ and ${\rm{e}}^{-t \Phi_{32}}$ uniformly decay exponentially to zero as $t\to\infty$. This indicates that $v^{(4)}_j$ tends to the jump matrix $v^X_j$ for large $t$, where $v^X_j(x,t,z)$ for $z\in X_j$, $j=1,\cdots,4$ are as follows
		\begin{align*}
			&v^{X}_1=\begin{pmatrix}
				1 & 0 & 0\\
				\frac{q^*}{1-\left| q\right| ^2}z^{-2\ri\nu(q)}
				{\rm{e}}^{\frac{\ri z^2}{2}} & 1 & 0\\
				0 & 0 & 1
			\end{pmatrix},\quad
			&&v^{X}_2=\begin{pmatrix}
				1 & qz^{2\ri\nu(q)} {\rm{e}}^{-\frac{\ri z^2}{2}} & 0 \\
				0 & 1 & 0 \\
				0 & 0 & 1
			\end{pmatrix},\\
			&v^{X}_3=\begin{pmatrix}
				1 & 0 & 0 \\
				-q^*z^{-2\ri\nu(q)} {\rm{e}}^{\frac{\ri z^2}{2}} & 1 & 0 \\
				0 & 0 & 1
			\end{pmatrix},
			&&v^{X}_4=\begin{pmatrix}
				1 & -\frac{q}{1-\left| q\right| ^2}z^{2\ri\nu(q)} {\rm{e}}^{-\frac{\ri z^2}{2}} &0\\
				0 & 1 & 0\\
				0 & 0 & 1
			\end{pmatrix}.
		\end{align*}
		More importantly, $v^X$ is the jump matrix corresponding to the eigenfunction $m^X$, and the behavior of $m^X$ as $z\to\infty$ is given in Appendix RH problem A.1 of Ref. \cite{lenells-wang-good}, with the specific form as follows
		\begin{equation}\label{mX_expand}
			m^X(q,z)=I+\frac{m_1^X(q)}{z}+\mathcal{O}\left(\frac{1}{z^2} \right),\quad z\to\infty,
		\end{equation}
		and
		\begin{equation*}
			m_1^X(q)=\begin{pmatrix}
				0 & \beta_{12} & 0\\
				\beta_{21} & 0 & 0\\
				0 & 0 & 0\\
			\end{pmatrix},
		\end{equation*}
		where $\beta_{12}$ and $\beta_{21}$ are defined by
		\begin{equation*}
			\beta_{12}=\frac{\sqrt{2\pi}{\rm{e}}^{-\pi \ri/4}{\rm{e}}^{-5\pi \nu/2}}{q^*\Gamma(-\ri\nu)},\quad \beta_{21}=\frac{\sqrt{2\pi}{\rm{e}}^{\pi \ri/4}{\rm{e}}^{3\pi \nu/2}}{q\Gamma(\ri\nu)},
		\end{equation*}
		and $\Gamma$ is the Gamma function.
		
		In order to use the solution of the model RH problem for $m^X(q,z)$ to provide the solution for RH problem \ref{rhpMLC} concerning $M^{LC}(x,t,z(k))$, the following definitions are made
		\begin{equation}\label{TMk_01}
			M^{k_0}(x,t,z(k))=T(\zeta)m^X(q(\zeta),z(k))T(\zeta)^{-1},\quad k\in B_\epsilon(k_0),
		\end{equation}
		and it satisfies
		\begin{equation}\label{TMk_02}
			M^{k_0}(x,t,k)=\mathcal{A} M^{k_0}(x,t,\omega k)\mathcal{A}^{-1}=\mathcal{A}^2 M^{k_0}(x,t,\omega^2 k)\mathcal{A}^{-2},\quad k\in B_\epsilon(k_0).
		\end{equation}
		More importantly, $M^{ k_0}\to I$ on $\partial B_\epsilon(\omega^jk_0)$ for $j=0,1,2$, as $t\to \infty$.
		
		For $k\in X^\epsilon$, it can be obtained that
		\begin{equation*}
			v^{k_0}(\zeta,k)=T(\zeta)v^X(q(\zeta),z(k))T(\zeta)^{-1},
		\end{equation*}
		and
			\begin{equation*}
			v^{(3)}-v^{k_0}=Tv^{(4)}T^{-1}-Tv^X T^{-1}=T\left(v^{(4)}-v^X \right) T^{-1}.
		\end{equation*}
		Next, we present a lemma and prove that $M^{k_0}$ is a good asymptotic approximation to $M^{(3)}$ in $B_\epsilon(k_0)$ for large $t$.
		\begin{lem}\label{lem_v3vk0}
			The jump matrices $v^{(3)}$ and $v^{k_0}$ satisfy the following relation for $\zeta\in\mathcal{I}$, $t>1$ and $l=0,1$,
			\begin{equation*}
					\left\|\partial_x^l\left( v^{(3)}-v^{k_0} \right)\right\|_{L^1\cap L^\infty(X^\epsilon)}   \leq \mathcal{O}({\rm{e}}^{-ct}),
			\end{equation*}
			where $c$ is a real positive constant and $\left\|\partial_x^l\left( M^{k_0}(x,t,\cdot)^{-1}-I \right)\right\|_{L^\infty(\partial B_\epsilon(k_0))} =\mathcal{O}\left(t^{-\frac{1}{2}}\right) $. More importantly, as $t\to \infty$
			\begin{equation}\label{Mk_0}
				\frac{1}{2\pi \ri}\int_{\partial  B_\epsilon(k_0)}\left( \left( M^{k_0}(x,t,k)\right) ^{-1}-I \right)\rd k=-\frac{T(\zeta)m^X_1(q(\zeta))T(\zeta)^{-1}}{(2\sqrt{3}t)^{1/2}}+\mathcal{O}\left( \frac{1}{t}\right),
			\end{equation}
			which can be differentiated with respect to $x$ without increasing the error term.
		\end{lem}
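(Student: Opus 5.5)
The argument splits into three parts: a jump comparison on the cross $X^\epsilon$, the bound for $M^{k_0}$ on the circle $\partial B_\epsilon(k_0)$, and the residue-type integral (\ref{Mk_0}).

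\textbf{Jump comparison on $X^\epsilon$.} Since $v^{(3)}-v^{k_0}=T(v^{(4)}-v^X)T^{-1}$ and Lemma \ref{T_bounded} makes $T^{\pm1}$ and $\partial_xT^{\pm1}$ uniformly bounded, it suffices to estimate $v^{(4)}-v^X$ and its $x$-derivative on each ray $X_j\cap B_\epsilon(k_0)$, $j=1,\dots,4$.

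Compare the explicit formulas for $v^{(4)}_j$ with $v^X_j$. The $(1,2)$ and $(2,1)$ entries coincide exactly, because the extensions $R_1,\dots,R_4$ were chosen in Proposition \ref{R_exist} to equal the model functions $f_j$ on $\Gamma^{(3)}_j$. Hence the only nonzero entries of $v^{(4)}-v^X$ are the $(2,3)$ and $(1,3)$ entries. Each carries a factor ${\rm e}^{-t\Phi_{32}}$ or ${\rm e}^{-t\Phi_{31}}$, multiplied by bounded quantities: $\delta_j^{\pm1}$ by Proposition \ref{prop_delta}, $T_0^{\pm1/2}$ and ${\rm e}^{\pm t\Phi_{21}(\zeta,k_0)/2}$ (unimodular, since $\Phi_{21}(\zeta,k_0)$ is purely imaginary), and $r_2(0)$, $\alpha_j$.

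On $\bar B_\epsilon(k_0)$ with $\epsilon=k_0/2$ and $\zeta\in\mathcal I$, I would check directly from the formulas for $\Phi_{31}$ and $\Phi_{32}$ that ${\rm Re}\,\Phi_{31}\ge c>0$ and ${\rm Re}\,\Phi_{32}\ge c>0$. This gives the $L^\infty$ bound $\mathcal O({\rm e}^{-ct})$, and the $L^1$ bound follows because the rays have finite length.

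For $\partial_x$, the derivative either falls on the exponential, producing a factor $\Phi_{ij}$ that is bounded on the disk, or falls on $\delta_j$ and $T_0$, which are controlled by Proposition \ref{prop_delta} and Lemma \ref{T_bounded}. In each case it only costs a polynomial factor in $t$ and $\ln t$, which the exponential absorbs.

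\textbf{Bound on $\partial B_\epsilon(k_0)$.} On the circle, $|z|=3^{1/4}\sqrt{2t}\,\epsilon\asymp\sqrt t$. The expansion (\ref{mX_expand}) then gives $m^X=I+m_1^X/z+\mathcal O(z^{-2})$, uniformly in $q$ because $|q|\le\sigma<1$. Conjugating by the bounded $T$ yields $(M^{k_0})^{-1}-I=-Tm_1^XT^{-1}/z+\mathcal O(t^{-1})=\mathcal O(t^{-1/2})$.

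For $l=1$, the $x$-derivative acts on $T$ (bounded by Lemma \ref{T_bounded}), on $q(\zeta)$ and $\nu$ (giving $\partial_x=t^{-1}\partial_\zeta$, hence small), and on $z$, since $\partial_x z=-3^{1/4}\sqrt{2t}\,\partial_xk_0=\mathcal O(t^{-1/2})$. To differentiate the error term I need the expansion of $m^X$ to hold uniformly with $z$- and $q$-derivatives. I would take this from the explicit parabolic-cylinder solution of \cite{lenells-wang-good}.

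\textbf{The integral (\ref{Mk_0}).} Substituting the expansion into the integral gives
\[
\frac{1}{2\pi\ri}\int_{\partial B_\epsilon(k_0)}\frac{-Tm_1^XT^{-1}}{3^{1/4}\sqrt{2t}\,(k-k_0)}\,\rd k=-\frac{Tm_1^XT^{-1}}{(2\sqrt3 t)^{1/2}}.
\]
The remainder is $\mathcal O(t^{-1})$ times the finite circle length. The orientation convention fixes the sign; I expect the circle to be clockwise as in the later small-norm analysis, and I would verify this against the sign of the displayed formula. Differentiation in $x$ is handled exactly as in the previous step.

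\textbf{Main obstacle.} The hardest step is the uniformity of the $\mathcal O(z^{-2})$ remainder in $m^X$ under $\partial_x$. The derivative $\partial_x T$ carries a $\ln t$ factor through $T_0$, and $\partial_x\nu$ multiplies $\ln z$ terms; I must show both are absorbed into $\mathcal O(t^{-1}\ln t)\subset\mathcal O(t^{-1/2})$ for the first claim, while preserving the stated $\mathcal O(t^{-1})$ error in (\ref{Mk_0}). To do this I would differentiate the parabolic-cylinder representation in $q$ and $z$ directly, and check that the $\ln t$ contributions appear only at order $t^{-3/2}\ln t$.
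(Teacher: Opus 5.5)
Your route is the paper's. You pass to $v^{(4)}-v^X$ through the bounded conjugation by $T$, and note that the extensions $R_1,\dots,R_4$ make the $(1,2)$ and $(2,1)$ entries agree exactly, so only the $(1,3)$/$(2,3)$ entries survive. You bound these by ${\rm e}^{-t{\rm Re}\,\Phi_{31}}$ or ${\rm e}^{-t{\rm Re}\,\Phi_{32}}$, then expand $m^X$ on $\partial B_\epsilon(k_0)$ and apply Cauchy's formula. The $l=0$ estimates and the circle/residue part are fine.

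Your disk bound also checks out. For $k=k_0+w$ one has $\Phi_{32}=k_0^2(\frac92+\frac{\sqrt3}{2}\ri)+6k_0w+(1-\omega)w^2$, so ${\rm Re}\,\Phi_{32}\ge(\frac32-\frac{\sqrt3}{4})k_0^2$ for $|w|\le k_0/2$, and ${\rm Re}\,\Phi_{31}$ follows by conjugation symmetry.

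On orientation: inside the disk $\hat M=M^{LC}(M^{k_0})^{-1}$ and the jump is $\hat v=(M^{k_0})^{-1}$, so the $+$ side is the interior. The circle is therefore counterclockwise, which is exactly what your displayed residue computation uses; a clockwise circle would flip the sign.

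The step that fails is the $l=1$ bound on $X^\epsilon$. The surviving entries contain $\delta_1(\zeta,k)^{\pm1}$ itself; for example $(v^{(4)}_1)_{23}$ carries $\delta_1\delta_2/\delta_3^2$. Proposition \ref{prop_delta}, which you cite, gives
\[
\partial_x\delta_1(\zeta,k)^{\pm1}=\frac{\pm\ri\nu}{2t(k-k_0)}\,\delta_1(\zeta,k)^{\pm1}.
\]
So the derivative does not cost a polynomial factor in $t$. It costs a factor $\nu/(2t|k-k_0|)$, which blows up at the vertex of the cross, where the entry itself is still comparable to ${\rm e}^{-9tk_0^2/2}\neq0$. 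For $\nu\neq0$ the $L^\infty(X^\epsilon)$ norm of the derivative is therefore infinite, and the $L^1$ norm diverges logarithmically. Exponential smallness in $t$ cannot absorb a non-integrable singularity in $k$. The paper's bound $a_1\le C{\rm e}^{-t{\rm Re}\,\Phi_{32}}$ passes over exactly this point, so following it does not help.

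A repair must first say what $\partial_x$ means on the $x$-dependent contour. If you differentiate at fixed $k-k_0$ (equivalently at fixed $z$), the pole cancels against $\frac{1}{2t}\partial_k\delta_1$. What remains is a logarithmic term of size $\frac{C}{t}(1+|\ln|k-k_0||)$ (compare item 4 of Proposition \ref{prop_delta}). This gives $\mathcal O({\rm e}^{-ct})$ in $L^1$ and in every $L^p$ with $p<\infty$, but still not in $L^\infty$. The $l=1$ claim must be weakened accordingly, and the derivative part of Lemma \ref{lem_nu} redone with H\"older's inequality on the bounded set $X_\epsilon$ instead of $\|\partial_x\hat w\|_{L^\infty}$.
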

		\begin{proof}
			Based on Lemma \ref{T_bounded} and the relationship between $v^{(3)}$ and $v^{k_0}$, the proof of this lemma is transformed into proving the following relationship:
			\begin{equation*}
					\left\|\partial_x^l\left( v^{(4)}_j-v^{X}_j \right)\right\|_{L^1\cap L^\infty(X_j^\epsilon)}   \leq \mathcal{O}({\rm{e}}^{-ct}),\qquad j=1,\cdots,4.
			\end{equation*}
			We will only prove the case where $j=1$, the remaining three cases can be similarly proven.
			
			For $k\in X^\epsilon$, the calculation results in $v^{(4)}_1-v^{X}_1$ having a non-zero element only at the $(2,3)$ position. Next
			\begin{align*}
				\left| (v^{(4)}_1-v^{X}_1)_{23}\right|&=\left|-\frac{\delta_{1}\delta_2}{\delta_3^2}T_0^{\frac{1}{2}}r_2^*(0){\rm{e}}^{-\frac{t}{2} \Phi_{21}(\zeta,k_0)}{\rm{e}}^{-t \Phi_{32}(\zeta,k)} \right| \\
				&= {\rm{e}}^{{\rm{Re}}\,\pi\nu}{\rm{e}}^{-\frac{t}{2} {\rm{Re}}\,\Phi_{21}(\zeta,k_0)}{\rm{e}}^{-t {\rm{Re}}\,\Phi_{32}(\zeta,k)}\\
				&\leq C {\rm{e}}^{-t {\rm{Re}}\,\Phi_{32}(\zeta,k)}, \qquad k\in X_1^\epsilon.
			\end{align*}
			Denote $k=k_0+u\, {\rm{e}}^{\frac{\pi \ri}{4}}$, $u\geq 0$, then
			\begin{equation*}
				{\rm{Re}}\,\Phi_{32}(\zeta,k)={\rm{Re}}\,\Phi_{32}(\zeta,k_0+u\, {\rm{e}}^{\frac{\pi \ri}{4}})=\frac{\sqrt{3}u^2+9k_0^2+6\sqrt{2}k_0u}{2}\geq c(u+k_0)^2.
			\end{equation*}
			Substituting yields for the real positive constant $C$,
			\begin{align*}
				\left\| \left[ v^{(4)}_1(x,t,\cdot)-v^{X}_1(x,t,\cdot)\right] _{23}\right\|_{L^1( X_1^\epsilon)} &\leq C\int_{0}^\epsilon {\rm{e}}^{-ct (u+k_0)^2}\rd u=C\int_{0}^{\frac{k_0}{2}} {\rm{e}}^{-ct (u+k_0)^2}\rd u\\
				&=C\int_{k_0}^{\frac{3k_0}{2}} {\rm{e}}^{-ct v^2}\rd v\leq \mathcal{O}({\rm{e}}^{-ct}),
				\end{align*}
			and
			\begin{align*}
				\left\| \left[ v^{(4)}_1(x,t,\cdot)-v^{X}_1(x,t,\cdot)\right] _{23}\right\|_{L^\infty( X_1^\epsilon)} &\leq C\sup_{u\geq0}{\rm{e}}^{-ct (u+k_0)^2}=\mathcal{O}({\rm{e}}^{-ct}).
			\end{align*}
			
			Next, we estimate $\partial_x(v^{(4)}_1-v^{X}_1)_{23}$. Firstly, note that $r^*_2(0)=1$, hence
			\begin{align*}
			\left| \partial_x(v^{(4)}_1-v^{X}_1)_{23}\right| =&\left|\partial_x\left[ -\frac{\delta_{1}\delta_2}{\delta_3^2}T_0^{\frac{1}{2}}r_2^*(0){\rm{e}}^{-\frac{t}{2} \Phi_{21}(\zeta,k_0)}{\rm{e}}^{-t \Phi_{32}(\zeta,k)} \right] \right| \\
			 \leq&\left| \partial_x\left(\frac{\delta_{1}\delta_2} {\delta_3^2} \right)T_0^{\frac{1}{2}}{\rm{e}}^{-\frac{t}{2} \Phi_{21}(\zeta,k_0)}{\rm{e}}^{-t \Phi_{32}(\zeta,k)} \right| +
			 \left| \frac{\delta_{1}\delta_2}{\delta_3^2}\,\partial_x\left(T_0^{\frac{1}{2}}\right) {\rm{e}}^{-\frac{t}{2} \Phi_{21}(\zeta,k_0)}{\rm{e}}^{-t \Phi_{32}(\zeta,k)} \right| \\
			 &+\left| \frac{\delta_{1}\delta_2}{\delta_3^2}\,T_0^{\frac{1}{2}}\partial_x\left({\rm{e}}^{-\frac{t}{2} \Phi_{21}(\zeta,k_0)}\right) {\rm{e}}^{-t \Phi_{32}(\zeta,k)} \right| +
			 \left| \frac{\delta_{1}\delta_2}{\delta_3^2}\,T_0^{\frac{1}{2}} {\rm{e}}^{-\frac{t}{2} \Phi_{21}(\zeta,k_0)}\partial_x\left({\rm{e}}^{-t \Phi_{32}(\zeta,k)}\right)  \right| \\
			 :=& a_1+a_2+a_3+a_4.
			\end{align*}
			Similar to the calculation of $(v^{(4)}_1-v^{X}_1)_{23}$, the above shows that
			\begin{equation*}
			   a_j\leq C {\rm{e}}^{-t {\rm{Re}}\,\Phi_{32}(\zeta,k)}\leq C {\rm{e}}^{-ct (u+k_0)^2},
			\end{equation*}
			and
			\begin{equation*}
					\left\| \partial_x(v^{(4)}_1-v^{X}_1)_{23}\right\|_{L^1\cap L^\infty( X_1^\epsilon)} \leq \mathcal{O}({\rm{e}}^{-ct}).
			\end{equation*}
			
			Since $\left| z\right|=3^{1/3}\sqrt{2t}\left| k-k_0\right|  $, we can conclude that $z$ approaches infinity as $t\to\infty$ for $k\in \partial B_\epsilon(k_0)$. Hence, equation \eqref{mX_expand} can be written as
			\begin{equation*}
				m^X(q(\zeta),z(k))=I+\frac{m_1^X(q(\zeta))}{3^{1/3}\sqrt{2t}\left(k-k_0\right) }+\mathcal{O}\left(\frac{1}{t} \right),\quad t\to\infty,
			\end{equation*}
			uniformly for $\zeta\in \mathcal{I}$, $k\in \partial B_\epsilon(k_0)$ and this asymptotic formula can be differentiated with
			respect to $x$ without increasing the error term. Combining the relationship \eqref{TMk_01} between $M^{k_0}$ and $m^X$, we obtain
			\begin{equation*}
				M^{k_0}(\zeta,k)^{-1}-I=-\frac{T(\zeta)m_1^X(q(\zeta))T(\zeta)^{-1}}{3^{1/3}\sqrt{2t}\left(k-k_0\right) }+\mathcal{O}\left(\frac{1}{t} \right),\quad t\to\infty,
			\end{equation*}
			uniformly for $\zeta\in \mathcal{I}$, $k\in \partial B_\epsilon(k_0)$ and also can be differentiated with
			respect to $x$ without increasing the error term. In other words,
			$\left\|\partial_x^l\left( M^{k_0}(x,t,\cdot)^{-1}-I \right)\right\|_{L^\infty(\partial B_\epsilon(k_0))} =\mathcal{O}\left(t^{-\frac{1}{2}}\right) $. And equation \eqref{Mk_0} can be obtained by the above equation and Cauchy's formula.
			
		\end{proof}
		\subsection{A small-norm RH problem}
		\ \ \ \
		Based on the definition of $M^{k_0}$ in equations \eqref{TMk_01} and \eqref{TMk_02} on $B_\epsilon$, we can define
		\begin{equation*}
			\hat{M}=\left\lbrace
			\begin{aligned}
				&M^{LC}\mathcal{A}^{-j}\left(M^{k_0} \right)^{-1}\mathcal{A}^j,\quad && k\in B_\epsilon(\omega^jk_0),\,j=0,1,2,\\
				& M^{LC},  && {\rm{elsewhere}}.
			\end{aligned}
			\right.
		\end{equation*}
		Let $\hat{\Gamma}=\Gamma^{(3)}\cup \partial B_\epsilon$ as shown in Fig. \ref{fighatgamma}. The eigenfunction $\hat{M}$ corresponds to the jump relation $\hat{M}_+(x,t,k)=\hat{M}_-(x,t,k)\hat{v}(x,t,k)$, where
		\begin{equation*}
			\hat{v}=\left\lbrace
			\begin{aligned}
				&v^{(3)},\quad && k\in \hat{\Gamma}\setminus \bar{B_\epsilon},\\
				&\left( M^{k_0}\right)^{-1},  && k\in \partial B_\epsilon,\\
				& M_-^{k_0} v^{(3)}\left( M^{k_0}_+\right)^{-1},  && k\in X_\epsilon.
			\end{aligned}
			\right.
		\end{equation*}
		\begin{figure}[htbp]
			\centering
			\begin{tikzpicture}[scale=1.2]
				\draw [dashed,very thick,white!50!blue](4,0) -- (-4,0);
				
				\fill (2,0) circle (1.5pt);
				
				\draw [very thick,black!20!blue](2,0) circle (1cm);

				\draw [very thick,black!20!blue](2/2.732,2*1.732/2.732) -- (2-1.414/2,1.414/2);
				\draw [very thick,black!20!blue](2+1.414/2,-1.414/2) -- (3.6,-1.6);
				
				\draw [very thick,black!20!blue](2,0) -- (2-1.414/2,1.414/2);
				\draw [very thick,black!20!blue](2,0) -- (2+1.414/2,-1.414/2);
				
				\draw [very thick,black!20!blue, latex-](0.9,1.1) -- (2-1.414/2,1.414/2);
				\draw [very thick,black!20!blue, -latex](2+1.414/2,-1.414/2) -- (3.3,-1.3);
				
				\draw [very thick,black!20!blue](2,0) -- (2-1.414/2,-1.414/2);
				\draw [very thick,black!20!blue](2,0) -- (2+1.414/2,1.414/2);
				
				\draw [very thick,black!20!blue,-latex](2,0) -- (1.5,-0.5);
				\draw [very thick,black!20!blue,-latex](2,0) -- (2.5,0.5);

				\draw [very thick,black!20!blue, latex-](0.9,-1.1) -- (2-1.414/2,-1.414/2);
				\draw [very thick,black!20!blue, -latex](2+1.414/2,1.414/2) -- (3.3,1.3);

				\draw [very thick,black!20!blue](2/2.732,2*1.732/2.732) -- (2/2.732,-2*1.732/2.732);
				
				\draw [very thick,black!20!blue](2/2.732,-2*1.732/2.732) -- (2-1.414/2,-1.414/2);
				\draw [very thick,black!20!blue](2+1.414/2,1.414/2) -- (3.6,1.6);
				
				\draw [very thick,black!20!blue,-latex](2,0) -- (1.5,0.5);
				\draw [very thick,black!20!blue,-latex](2,0) -- (2.5,-0.5);

				\draw [very thick,black!20!blue, -latex](2/2.732,0) -- (2/2.732,-0.7);
				\draw [very thick,black!20!blue, latex-](2/2.732,0.7) -- (2/2.732,0);
				
				\draw [very thick,black!20!blue, latex-](1.9,1) -- (2,1);
				
				\fill (2/2.732,0) circle (1pt);

				\coordinate (center) at (0,0);
				\begin{scope}[rotate around={120:(center)}]
					\draw [dashed,very thick,white!50!blue](4,0) -- (-4,0);
					
					\fill (2,0) circle (1.5pt);
					
					\draw [very thick,black!20!blue](2,0) circle (1cm);

					\draw [very thick,black!20!blue](2/2.732,2*1.732/2.732) -- (2-1.414/2,1.414/2);
					\draw [very thick,black!20!blue](2+1.414/2,-1.414/2) -- (3.6,-1.6);
					
				\draw [very thick,black!20!blue](2,0) -- (2-1.414/2,1.414/2);
					\draw [very thick,black!20!blue](2,0) -- (2+1.414/2,-1.414/2);
					
					\draw [very thick,black!20!blue, latex-](0.9,1.1) -- (2-1.414/2,1.414/2);
					\draw [very thick,black!20!blue, -latex](2+1.414/2,-1.414/2) -- (3.3,-1.3);
					
					\draw [very thick,black!20!blue](2,0) -- (2-1.414/2,-1.414/2);
					\draw [very thick,black!20!blue](2,0) -- (2+1.414/2,1.414/2);
					
					\draw [very thick,black!20!blue,-latex](2,0) -- (1.5,-0.5);
					\draw [very thick,black!20!blue,-latex](2,0) -- (2.5,0.5);

					\draw [very thick,black!20!blue, latex-](0.9,-1.1) -- (2-1.414/2,-1.414/2);
					\draw [very thick,black!20!blue, -latex](2+1.414/2,1.414/2) -- (3.3,1.3);

					\draw [very thick,black!20!blue](2/2.732,2*1.732/2.732) -- (2/2.732,-2*1.732/2.732);
					
					\draw [very thick,black!20!blue](2/2.732,-2*1.732/2.732) -- (2-1.414/2,-1.414/2);
					\draw [very thick,black!20!blue](2+1.414/2,1.414/2) -- (3.6,1.6);
					
					\draw [very thick,black!20!blue,-latex](2,0) -- (1.5,0.5);
					\draw [very thick,black!20!blue,-latex](2,0) -- (2.5,-0.5);

					\draw [very thick,black!20!blue, -latex](2/2.732,0) -- (2/2.732,-0.7);
					\draw [very thick,black!20!blue, latex-](2/2.732,0.7) -- (2/2.732,0);
					
					\draw [very thick,black!20!blue, latex-](1.9,1) -- (2,1);
					
					\fill (2/2.732,0) circle (1pt);

				\end{scope}
				
				\coordinate (center) at (0,0);
				\begin{scope}[rotate around={240:(center)}]
					\draw [dashed,very thick,white!50!blue](4,0) -- (-4,0);
					
					\fill (2,0) circle (1.5pt);
					
					\draw [very thick,black!20!blue](2,0) circle (1cm);

					\draw [very thick,black!20!blue](2/2.732,2*1.732/2.732) -- (2-1.414/2,1.414/2);
					\draw [very thick,black!20!blue](2+1.414/2,-1.414/2) -- (3.6,-1.6);
					
					\draw [very thick,black!20!blue](2,0) -- (2-1.414/2,1.414/2);
					\draw [very thick,black!20!blue](2,0) -- (2+1.414/2,-1.414/2);
					
					\draw [very thick,black!20!blue, latex-](0.9,1.1) -- (2-1.414/2,1.414/2);
					\draw [very thick,black!20!blue, -latex](2+1.414/2,-1.414/2) -- (3.3,-1.3);
					
					\draw [very thick,black!20!blue](2,0) -- (2-1.414/2,-1.414/2);
					\draw [very thick,black!20!blue](2,0) -- (2+1.414/2,1.414/2);
					
					\draw [very thick,black!20!blue,-latex](2,0) -- (1.5,-0.5);
					\draw [very thick,black!20!blue,-latex](2,0) -- (2.5,0.5);

					\draw [very thick,black!20!blue, latex-](0.9,-1.1) -- (2-1.414/2,-1.414/2);
					\draw [very thick,black!20!blue, -latex](2+1.414/2,1.414/2) -- (3.3,1.3);

					\draw [very thick,black!20!blue](2/2.732,2*1.732/2.732) -- (2/2.732,-2*1.732/2.732);
					
					\draw [very thick,black!20!blue](2/2.732,-2*1.732/2.732) -- (2-1.414/2,-1.414/2);
					\draw [very thick,black!20!blue](2+1.414/2,1.414/2) -- (3.6,1.6);
					
					\draw [very thick,black!20!blue,-latex](2,0) -- (1.5,0.5);
					\draw [very thick,black!20!blue,-latex](2,0) -- (2.5,-0.5);

					\draw [very thick,black!20!blue, -latex](2/2.732,0) -- (2/2.732,-0.7);
					\draw [very thick,black!20!blue, latex-](2/2.732,0.7) -- (2/2.732,0);
					
					\draw [very thick,black!20!blue, latex-](1.9,1) -- (2,1);
					
					\fill (2/2.732,0) circle (1pt);

				\end{scope}
				
				\node[above] at (2,0.1) {\footnotesize$k_0$};
				\node[right] at (-0.9,1.9) {\footnotesize$\omega k_0$};
				\node[right] at (-0.9,-1.9) {\footnotesize $\omega^2 k_0$};

				\node[right] at (4,0) {${\rm{Re}\,k}$};

				\node[red!70!black,above] at (2.3,0.4) {$1$};
				\node[red!70!black,above] at (1.7,0.4) {$2$};
				\node[red!70!black,below] at (1.7,-0.4) {$3$};
				\node[red!70!black,below] at (2.3,-0.4) {$4$};
				\node[black,above] at (2,1) {\footnotesize $\partial B_\epsilon(k_0)$};
				
				\node[red!70!black,above] at (2.3,0.4) {$1$};
				\node[red!70!black,above] at (1.7,0.4) {$2$};
				\node[red!70!black,below] at (1.7,-0.4) {$3$};
				\node[red!70!black,below] at (2.3,-0.4) {$4$};
				\node[black,above] at (2,1) {\footnotesize $\partial B_\epsilon(k_0)$};
				
				\node[red!70!black,above] at (-1.55,1.3*1.45) {$5$};
				\node[red!70!black,above] at (-1.4,1) {$6$};
				\node[red!70!black,above] at (-0.5,1.1) {$7$};
				\node[red!70!black,right] at (-0.85,2.25) {$8$};
				\node[black,left] at (-1.85,1.2) {\footnotesize $\partial B_\epsilon(\omega k_0)$};
				
				\node[red!70!black,above] at (-1.55,-1.3*1.45) {\small$12$};
				\node[red!70!black,right] at (-1.2,-1.1) {\small$11$};
				\node[red!70!black,below] at (-0.5,-1.1) {\small$10$};
				\node[red!70!black,below] at (-1.05,-2.1) {$9$};
				\node[black,right] at (-0.1,-2.3) {\footnotesize $\partial B_\epsilon(\omega^2 k_0)$};

				\fill (0,0) circle (1pt);
				
			\end{tikzpicture}
			\caption{The jump contour $\hat{\Gamma}$ in the complex $k$-plane.}
			\label{fighatgamma}
		\end{figure}
		
		\begin{lem}\label{lem_hatw}
			Let $\hat{w}=\hat{v}-I$ and establish the following estimates for $\zeta\in\mathcal{I}$ and $t>1$:
			\begin{align}
				&\left\|\left( 1+\left|\,\cdot\,\right|  \right) \partial_x^l \hat{w}(x,t,\cdot) \right\|_{L^1\cap L^\infty\left( \hat{\Gamma}\setminus \overline{B_\epsilon}\right)  }\leq \mathcal{O}({\rm{e}}^{-ct}),\label{hatw1}\\
				&\left\|\left( 1+\left|\,\cdot\,\right|  \right) \partial_x^l \hat{w} (x,t,\cdot) \right\|_{L^1\cap L^\infty(X_\epsilon) }\leq \mathcal{O}({\rm{e}}^{-ct}),\label{hatw2}\\
				&\left\|\left( 1+\left|\,\cdot\,\right|  \right) \partial_x^l \hat{w} (x,t,\cdot) \right\|_{L^1\cap L^\infty(\partial B_\epsilon(k_0))}\leq \frac{c}{\sqrt{t}},\label{hatw3}
			\end{align}
			with $l=0,1$.
		\end{lem}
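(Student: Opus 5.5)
The plan is to prove the three estimates separately, according to the three pieces of the definition of $\hat v$. By the symmetry relations \eqref{v3_2} and \eqref{TMk_02}, it suffices to treat the parts of $\hat\Gamma$ in the sector around $k_0$. The pieces near $\omega k_0$ and $\omega^2 k_0$ follow by conjugation with $\mathcal{A}^{\pm1}$ and the change of variable $k\mapsto\omega^{\pm1}k$. This change of variable preserves the weight $1+|k|$ and the $L^1\cap L^\infty$ norms.

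For \eqref{hatw1}, on $\hat\Gamma\setminus\overline{B_\epsilon}$ we have $\hat w=v^{(3)}-I$.
\begin{itemize}
\item On $\Gamma^{(3)}_5,\Gamma^{(3)}_6$ and their rotations, the bound is exactly Lemma \ref{estV56}.
\item On the rays $\Gamma^{(3)}_j$, $j=1,\dots,4$, outside $B_\epsilon(k_0)$, I use the explicit entries in \eqref{v3_1}. The factors $\delta_j^{\pm1}$ are bounded by Proposition \ref{prop_delta}, and $R_j$, $\alpha_j$ are bounded by Proposition \ref{R_exist}.
\item The key point is the sign of the phases on these rays, written as $k=k_0+u\mathrm{e}^{\ri\pi/4}$ and so on with $u\ge\epsilon=k_0/2$. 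There ${\rm Re}\,\Phi_{21}\le -c\,u^2$ on $\Gamma_1^{(3)},\Gamma_3^{(3)}$ and ${\rm Re}\,\Phi_{21}\ge c\,u^2$ on $\Gamma_2^{(3)},\Gamma_4^{(3)}$. Both come from ${\rm Re}(\sqrt3\ri(k-k_0)^2)=\mp\sqrt3 u^2$ together with ${\rm Re}\,\Phi_{21}(\zeta,k_0)=0$.
\item For the same reasons as in the proof of Lemma \ref{lem_v3vk0}, ${\rm Re}\,\Phi_{31},{\rm Re}\,\Phi_{32}\ge c(u+k_0)^2$ where they appear.
\item Hence every nonzero entry of $\hat w$ is bounded by $C\mathrm{e}^{-ct(u+k_0)^2}$, with $k_0$ bounded below since $\zeta\in\mathcal{I}$. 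Multiplying by $1+|k|\le C(1+u)$ and integrating in $u$ over $[\epsilon,\infty)$ gives $\mathcal{O}(\mathrm{e}^{-ct})$ in $L^1\cap L^\infty$.
\item For $l=1$, $\partial_x$ falling on $\mathrm{e}^{\pm t\Phi}$ produces a polynomial factor $|k|+k_0$. $\partial_x$ falling on $\delta_j$ produces $\mathcal{O}(t^{-1}(1+|\ln|k-k_0||))$ by Proposition \ref{prop_delta}. The $R_j$ depend on $x$ only through $k_0$ and the $\delta$-factors. All of these are absorbed by the Gaussian decay.
\end{itemize}

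For \eqref{hatw2}, on $X^\epsilon$ I write
\[
\hat w=M^{k_0}_-\bigl(v^{(3)}-v^{k_0}\bigr)\bigl(M^{k_0}_+\bigr)^{-1},
\]
which holds because $M^{k_0}_-v^{k_0}(M^{k_0}_+)^{-1}=I$.
\begin{itemize}
\item The middle factor is $\mathcal{O}(\mathrm{e}^{-ct})$ in $L^1\cap L^\infty$ together with its $x$-derivative, by Lemma \ref{lem_v3vk0}.
\item The outer factors are uniformly bounded. This uses Lemma \ref{T_bounded} for $T^{\pm1}$ and $\partial_xT^{\pm1}$, and the standard uniform boundedness of the parabolic-cylinder solution $m^X(q,z)$ and its inverse for $z\in\mathbb{C}\setminus X$, uniformly for $|q|\le 1-\epsilon$.
\item For $l=1$, $\partial_x m^X$ involves $\partial_q m^X\,\partial_xq$ and $\partial_z m^X\,\partial_x z$, with $\partial_x z=-3^{1/4}\sqrt{2t}\,\partial_xk_0=\mathcal{O}(t^{-1/2})$. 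Since $\partial_z m^X=\mathcal{O}(1)$, this is bounded. Because $|k|\le C$ on $X^\epsilon$, the weight $1+|k|$ is harmless.
\end{itemize}

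For \eqref{hatw3}, $\hat w=(M^{k_0})^{-1}-I$ on $\partial B_\epsilon(k_0)$, and the bound in $L^\infty$ with $l=0,1$ is already contained in Lemma \ref{lem_v3vk0}. The circle has finite length and $|k|\le 2k_0$ on it, so the weighted $L^1$ bound follows at once with constant $c/\sqrt t$.

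The main obstacle I expect is the $x$-derivative bound on $X^\epsilon$ near $z=0$. There $\partial_x$ hits $z^{\pm2\ri\nu}$ and the $\ln$-terms from $\delta_1$, producing logarithmic factors $\ln|k-k_0|$ that must stay integrable. I would first check that they are only logarithmic, hence integrable on the segments $X^\epsilon$, and are dominated by the $\mathrm{e}^{-ct}$ factor coming from $\Phi_{31},\Phi_{32}$, just as in the computation of $a_1,\dots,a_4$ in Lemma \ref{lem_v3vk0}.
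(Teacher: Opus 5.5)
Your three-part structure is the paper's. You get \eqref{hatw1} from the signs of ${\rm Re}\,\Phi_{ij}$ on the rays plus Lemma~\ref{estV56}, \eqref{hatw2} from $\hat w=M^{k_0}_-(v^{(3)}-v^{k_0})(M^{k_0}_+)^{-1}$ plus Lemma~\ref{lem_v3vk0}, and \eqref{hatw3} directly from Lemma~\ref{lem_v3vk0}. Your phase computations are correct: ${\rm Re}\,\Phi_{21}=\mp\sqrt3u^2$ on the rays, and ${\rm Re}\,\Phi_{31}$, ${\rm Re}\,\Phi_{32}$ are bounded below where they occur, including on the finite segments $\Gamma^{(3)}_2,\Gamma^{(3)}_3$.

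\textbf{The gap: the $l=1$ case on $X_\epsilon$.} The claim $\partial_z m^X=\mathcal{O}(1)$ is false near $z=0$. Sector by sector, $m^X(q,z)=\Psi(q,z)\,{\rm diag}\bigl(z^{-\ri\nu}{\rm e}^{\ri z^2/4},z^{\ri\nu}{\rm e}^{-\ri z^2/4},1\bigr)$ with $\Psi$ entire in $z$ (parabolic cylinder functions). Hence $\partial_z m^X$ is of order $|z|^{-1}$ and $\partial_\nu m^X$ of order $|\ln z|$ as $z\to0$.
\begin{itemize}
\item Differentiating at fixed $k$ as you do, the term $\partial_z m^X\,\partial_x z$ has size $(t|k-k_0|)^{-1}$. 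This is neither bounded nor integrable along the segments of $X^\epsilon$ through $k_0$.
\item The factor $\delta_1$ inside $v^{(3)}-v^{k_0}$ gives the same singularity, since $\partial_x\delta_1=\frac{\ri\nu}{2t(k-k_0)}\delta_1$ by Proposition~\ref{prop_delta}.
\item So the singular terms you anticipate in your last paragraph are not merely logarithmic. Logarithmic ones would in any case give the $L^1$ bound but not the $L^\infty$ bound.
\end{itemize}
The paper's proof only asserts that $\partial_\zeta^l M^{k_0}_\pm$ and its inverse are uniformly bounded on $X_\epsilon$. That is not literally true either, so this is exactly where an argument must be supplied.

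\textbf{How to close it.} The estimate holds because of a cancellation, not separate bounds on the three factors.
\begin{enumerate}
\item $X^\epsilon$ moves with $k_0=x/(2t)$. Take $\partial_x$ along the parametrization $k=k_0+u\,{\rm e}^{(2j-1)\pi\ri/4}$ at fixed $u$. Then $z$ does not depend on $x$, and no $\partial_z m^X$ term appears.
\item On $X^\epsilon_1$, $v^{(4)}_1-v^X_1=b\,E_{23}$, where $E_{23}$ is the matrix unit and $b$ the $(2,3)$ entry of $v^{(4)}_1$. Since $T$ and $m^X$ act trivially on the third coordinate, $\hat w=b\,T(m^X_-)_{\cdot 2}\,(0,0,1)$, with $(m^X_-)_{\cdot2}$ the second column.
\item In $b$, the product $\delta_1(\zeta,k)T_0^{1/2}$ equals $z^{-\ri\nu}{\rm e}^{\chi(\zeta,k_0)-\chi(\zeta,k)}\bigl(\delta_2(\zeta,k_0)\delta_3(\zeta,k_0)\bigr)^{1/2}$. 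This cancels the $z^{\ri\nu}$ in $(m^X_-)_{\cdot 2}$.
\item What remains is $T$ (controlled by Lemma~\ref{T_bounded}), $\Psi$ (entire in $z$, smooth in $\nu$), the exponentially small ${\rm e}^{-t\Phi_{32}}$, smooth bounded factors, and ${\rm e}^{\chi(\zeta,k_0)-\chi(\zeta,k)}$. At fixed $u$ one has $\chi(\zeta,k)=\frac{1}{2\pi\ri}\int_0^\infty\ln_0(u{\rm e}^{\pi\ri/4}-\sigma)\,\rd\ln(1-|r_1(k_0+\sigma)|^2)$, whose $x$-derivative is $\mathcal{O}(t^{-1})$ uniformly in $u$.
\item The other three segments work the same way: $\delta_1^{\mp1}T_0^{\mp1/2}$ pairs with the second or first column of $m^X_-$.
\end{enumerate}
This yields $\|(1+|\cdot|)\partial_x\hat w\|_{L^1\cap L^\infty(X_\epsilon)}=\mathcal{O}({\rm e}^{-ct})$.
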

		\begin{proof}
			Due to the presence of symmetry, we only need to prove the case near $k_0$. For $k\in \Gamma^{(3)}_1\setminus\hat\Gamma_1$, the matrix $\hat{w}=v^{(3)}_1-I$, and according to the equation \eqref{v3_1}, we can deduce that ${\rm{Re}}\,\Phi_{21}<0$ and ${\rm{Re}}\,\Phi_{32}>0$. In other words, $\hat{w}$ decays exponentially to the zero matrix uniformly for $k\in \Gamma^{(3)}_1\setminus\hat\Gamma_1$. Combining Lemma \ref{estV56} and using the symmetry \eqref{v3_2} satisfied by the jump matrix, it is clear that we can prove equation \eqref{hatw1}.
			
			For $k\in X_\epsilon$,
			\begin{equation*}
			\hat{w}=M_-^{k_0} v^{(3)}\left( M^{k_0}_+\right)^{-1}-I=M_-^{k_0} \left( v^{(3)}-v^{k_0}\right) \left( M^{k_0}_+\right)^{-1}.
			\end{equation*}
			On one hand, combining Lemma \ref{lem_v3vk0}, and on the other hand, $\partial_\zeta^l M_{\pm}^{k_0}$ and its inverse are uniformly bounded for $k\in X_\epsilon$ and $l=0,1$. Hence, equation \eqref{hatw2} is proved.
			
			The proof of the last equation \eqref{hatw3} can be directly obtained from Lemma \ref{lem_v3vk0}.
		\end{proof}
		
		According to the Beals-Coifman theory \cite{Beals-Coifman-1984}, for a given function $f$ defined on $\hat{\Gamma}$, define the Cauchy operator $\mathcal{C}$ on the contour $\hat{\Gamma}$ as
		\begin{equation*}
			(\mathcal{C}f)(k):=\frac{1}{2\pi \ri}\int_{\hat{\Gamma}}\frac{f(y)}{y-k}\rd y,\quad k\in\mathbb{C}\setminus \hat{\Gamma}.
		\end{equation*}
			If $ f \in \dot{E}^3(\hat{\Gamma})$, then $\mathcal{C}f\in \dot{E}^3(\mathbb{C}\setminus\hat{\Gamma})$, $ f^\pm $ exists almost everywhere on $\hat{\Gamma}$, and $f^\pm \in \dot{L}^3(\hat{\Gamma})$ (the definitions of the symbols and the specific form of the theorem can be found in \cite{lenells-2018}).
			
			Lemma \ref{lem_hatw} shows that
			\begin{equation*}
				\left\|\left( 1+\left|\,\cdot\,\right|  \right) \partial_x^l \hat{w} \ \right\|_{L^1\cap L^\infty(\hat{\Gamma})}\leq \frac{c}{\sqrt{t}},\quad t>1,\, \zeta\in\mathcal{I},\,l=0,1,
			\end{equation*}
			and then the computation follows
			\begin{equation*}
				\left\|\left( 1+\left|\,\cdot\,\right|  \right) \partial_x^l \hat{w} \ \right\|_{L^p(\hat{\Gamma})}\leq \left\|\left( 1+\left|\,\cdot\,\right|  \right) \partial_x^l \hat{w} \ \right\|_{L^1(\hat{\Gamma})}^{1/p}\left\|\left( 1+\left|\,\cdot\,\right|  \right) \partial_x^l \hat{w} \ \right\|_{L^\infty(\hat{\Gamma})}^{(p-1)/p}
				=\frac{c}{\sqrt{t}},
			\end{equation*}
			for each $1\leq p\leq \infty$. The above equation can also be reduced to $\hat{w}\in \left( \dot{L}^3\cap L^\infty\right) ( \hat{\Gamma}) $.
			
			 The trivial decomposition of the jump matrix $\hat{v}$ is considered
		\begin{equation*}
			\hat{v}:=(b_-)^{-1}b_+=I\hat{v},
		\end{equation*} and define $\hat{w}=\hat{w}_++\hat{w}_-$, $\hat{w}_+=b_+-I=\hat{v}-I$, $\hat{w}_-=I-b_-=O$. Then define $\mathcal{C}_{\hat{w}}:\, \dot{L}^3(\hat{\Gamma})+ L^\infty (\hat{\Gamma})\to \dot{L}^3(\hat{\Gamma}) $, and the operator $\mathcal{C}_{\hat{w}} f:=\mathcal{C}_+(f\hat{w}_-)+\mathcal{C}_-(f\hat{w}_+)=\mathcal{C}_-(f\hat{w}_+)$,
		where the Cauchy projection operator
		\begin{equation*}
			(\mathcal{C}_\pm f)(k)=\lim_{{\scriptsize {\begin{aligned}
							&k'\rightarrow k\in\hat{\Gamma}\\[-1ex]
							&k'\in \pm side\, of~ \hat{\Gamma}
			\end{aligned}}}}\frac{1}{2\pi \ri}\int_{\hat{\Gamma}}\frac{f(y)}{y-k'} {\rm{d}} y.
		\end{equation*}
		 It can also be proven that if $f\in \dot{E}^3(\hat{\Gamma}) $, then $\mathcal{C}_{\hat{w}} f$ is the bounded linear operator on $\dot{E}^3(\hat{\Gamma})$. Let $\mu\in I+\dot{E}^3(\hat{\Gamma})$ be the solution of the  equation
		\begin{equation*}
			\mu=I+\mathcal{C}_{\hat{w}} \mu.
		\end{equation*}
		If $I-\mathcal{C}_{\hat{w}}$ is invertible, we can get
		\begin{equation*}
			\mu=I+\left( 1-\mathcal{C}_{\hat{w}}\right)^{-1} \mathcal{C}_{\hat{w}}I\in I+\dot{L}^3(\hat{\Gamma}),
		\end{equation*}
		 and the RH problem for the eigenfunction $\hat{M}$ has the unique solution of the form
		\begin{equation}\label{hatMBC}
		\hat{M}(k)=I+\frac{1}{2 \pi \ri}\int_{\hat{\Gamma}}\frac{\mu(y)\hat{w}(y)}{y-k}{\rm{d}}y,
		\end{equation}
		and the specific proof can be directly obtained from Ref. \cite{lenells-wang-good}.
		\begin{lem}\label{lem_nu}
			For $1<p<\infty$, the following estimate holds for $t\to\infty$
			\begin{equation*}
				\left\|\partial_x^l\left(\mu-I \right)  \right\|_{L^p(\hat{\Gamma})}\leq \frac{c}{\sqrt{t}} ,\quad \zeta\in\mathcal{I},\,l=0,1.
			\end{equation*}
		\end{lem}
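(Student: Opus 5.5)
The plan is a Neumann series for the Beals--Coifman singular integral equation $\mu=I+\mathcal{C}_{\hat w}\mu$, with all constants controlled by Lemma \ref{lem_hatw}.

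\textbf{Step 1: operator norm of $\mathcal{C}_{\hat w}$.} Since $\hat w_-=O$, we have $\mathcal{C}_{\hat w}f=\mathcal{C}_-(f\hat w)$. The Cauchy projection $\mathcal{C}_-$ is bounded on $L^p(\hat\Gamma)$ for $1<p<\infty$. This holds uniformly in $t$, because $\hat\Gamma$ is a finite union of rays and circles of fixed radius $\epsilon=k_0/2$, whose geometry depends only on $\zeta\in\mathcal I$. Hence
\[
\|\mathcal{C}_{\hat w}\|_{L^p\to L^p}\le C_p\|\hat w\|_{L^\infty(\hat\Gamma)}\le \frac{C}{\sqrt t},
\]
using (\ref{hatw1})--(\ref{hatw3}). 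So for $t$ large, $I-\mathcal{C}_{\hat w}$ is invertible on $L^p$, with inverse bounded by $2$.

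\textbf{Step 2: the case $l=0$.} Write
\[
\mu-I=(I-\mathcal{C}_{\hat w})^{-1}\mathcal{C}_{\hat w}I=(I-\mathcal{C}_{\hat w})^{-1}\mathcal{C}_-(\hat w),
\]
which gives
\[
\|\mu-I\|_{L^p}\le 2C_p\|\hat w\|_{L^p(\hat\Gamma)}.
\]
The interpolation estimate displayed just before the lemma gives $\|\hat w\|_{L^p}\le c/\sqrt t$ for every $p$. This yields the bound for $l=0$.

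\textbf{Step 3: the case $l=1$.} Differentiate the integral equation in $x$. The contour $\hat\Gamma$ depends on $x$ only through $k_0=\zeta/2$. I would handle this either by working in a rescaled variable $k/k_0$, or by noting that the family of contours is uniformly Lipschitz in $\zeta$, so that $\partial_x$ acts only on $\hat w$ up to $O(1/t)$ corrections. Treating it this way, we get
\[
\partial_x\mu=\mathcal{C}_-(\partial_x\hat w)+\mathcal{C}_{\hat w}\partial_x\mu+\mathcal{C}_-\big((\mu-I)\partial_x\hat w\big).
\]
Hence
\[
\partial_x\mu=(I-\mathcal{C}_{\hat w})^{-1}\Big[\mathcal{C}_-(\partial_x\hat w)+\mathcal{C}_-\big((\mu-I)\partial_x\hat w\big)\Big].
\]
The first term is $O(t^{-1/2})$ in $L^p$ by Lemma \ref{lem_hatw} with $l=1$. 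The second is bounded by
\[
C\|\mu-I\|_{L^p}\,\|\partial_x\hat w\|_{L^\infty}\le \frac{C}{t}.
\]
This gives $\|\partial_x(\mu-I)\|_{L^p}\le c/\sqrt t$.

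\textbf{Main obstacle.} The delicate point is justifying the $x$-differentiation. This requires that $\hat w$ be $C^1$ in $x$ with the stated bounds; Lemma \ref{lem_hatw} supplies exactly this through Lemmas \ref{T_bounded} and \ref{lem_v3vk0}, where $\partial_x T_0=O(\ln t/t)$. It also requires that the $x$-dependence of the contour be harmless. If rescaling the contour is awkward, I would instead compare $\mu(x+h)$ and $\mu(x)$ via the resolvent identity, and pass to the limit $h\to0$ using the uniform bounds.
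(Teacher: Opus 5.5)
Your proposal is correct and follows essentially the same route as the paper: small-norm invertibility of $1-\mathcal{C}_{\hat w}$ from $\|\mathcal{C}_{\hat w}\|_{L^p\to L^p}\le\|\mathcal{C}_-\|\,\|\hat w\|_{L^\infty}\le c/\sqrt{t}$, the bound $\|\mu-I\|_{L^p}\lesssim\|\hat w\|_{L^p}$ for $l=0$, and for $l=1$ a bound by $\|\partial_x\hat w\|_{L^p}+\|\partial_x\hat w\|_{L^\infty}\|\hat w\|_{L^p}$. The only differences are cosmetic. You differentiate the integral equation and invert $1-\mathcal{C}_{\hat w}$ again, whereas the paper differentiates the Neumann series term by term; these are equivalent computations. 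Your remark about $\hat\Gamma$ depending on $x$ through $k_0$ addresses a point the paper leaves implicit.
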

		\begin{proof}
			The Neumann series theorem tells us that
			\begin{align*}
				\left\|\left(\mu-I \right)  \right\|_{L^p(\hat{\Gamma})}&\leq \left\|\left( 1-\mathcal{C}_{\hat{w}}\right)^{-1} \mathcal{C}_{\hat{w}}I\right\|_{L^p(\hat{\Gamma})}
				\leq \sum_{j=1}^{\infty} \left\|\mathcal{C}_{\hat{w}} \right\|_{L^p(\hat{\Gamma})\to L^p(\hat{\Gamma})}^{j-1}\left\|\mathcal{C}_{\hat{w}}I \right\|_{L^p(\hat{\Gamma})}\\
				&\leq \sum_{j=1}^{\infty} \left\|\mathcal{C}_- \right\|_{L^p(\hat{\Gamma})\to L^p(\hat{\Gamma})}^{j}\left\|{\hat{w}} \right\|_{L^\infty(\hat{\Gamma})}^{j-1}\left\|{\hat{w}} \right\|_{L^p(\hat{\Gamma})}\\
				&\leq \frac{\left\|\mathcal{C}_- \right\|_{L^p(\hat{\Gamma})\to L^p(\hat{\Gamma})}\left\|{\hat{w}} \right\|_{L^p(\hat{\Gamma})}}{1-\left\|\mathcal{C}_- \right\|_{L^p(\hat{\Gamma})\to L^p(\hat{\Gamma})}\left\|{\hat{w}} \right\|_{L^\infty(\hat{\Gamma})}}
				\leq \frac{c}{\sqrt{t}}.
			\end{align*}
			
			For the case when $l=1$,
			\begin{align*}
				\left\|\partial_x \left(\mu-I \right)  \right\|_{L^p(\hat{\Gamma})}&= \left\|\partial_x \left(\sum_{j=1}^{\infty}\left( \mathcal{C}_{\hat{w}}\right)^jI \right)  \right\|_{L^p(\hat{\Gamma})} \\
				&\leq \sum_{j=2}^{\infty}(j-1)\left\|\mathcal{C}_{\hat{w}} \right\|_{L^p(\hat{\Gamma})\to L^p(\hat{\Gamma})}^{j-2}\left\|\partial_x\mathcal{C}_{\hat{w}} \right\|_{L^p(\hat{\Gamma})\to L^p(\hat{\Gamma})}\left\|\mathcal{C}_{\hat{w}}I \right\|_{L^p(\hat{\Gamma})}\\
				&\quad +\sum_{j=1}^{\infty}\left\|\mathcal{C}_{\hat{w}} \right\|_{L^p(\hat{\Gamma})\to L^p(\hat{\Gamma})}^{j-1}\left\|\partial_x\mathcal{C}_{\hat{w}}I \right\|_{L^p(\hat{\Gamma})}\\
				&\leq  c\frac{\left\|\partial_x \hat{w} \right\|_{L^\infty(\hat{\Gamma})}\left\|\hat{w} \right\|_{L^p(\hat{\Gamma})} + \left\|\partial_x \hat{w} \right\|_{L^p(\hat{\Gamma})}}{1-\left\|\mathcal{C}_- \right\|_{L^p(\hat{\Gamma})\to L^p(\hat{\Gamma})}\left\|{\hat{w}} \right\|_{L^\infty(\hat{\Gamma})}}\leq \frac{c}{\sqrt{t}}.	
			\end{align*}
		\end{proof}

	\subsection{Asymptotics of $M^{LC}$}\label{sub443}
	\ \ \ \
	Calculate the following nontangential limits according to the requirements of the reconstruction formula \eqref{reconst} and equation \eqref{hatMBC}:
		\begin{align*}
			\hat{M}_1(x,t)&=\lim_{k\to\infty}k(\hat{M}(x,t,k)-I)=\lim_{k\to\infty}\frac{k}{2\pi \ri}\int_{\hat{\Gamma}}\frac{\mu(y)\hat{w}(y)}{y-k}\rd y=-\frac{1}{2\pi \ri}\int_{\hat{\Gamma}}{\mu(k)\hat{w}(k)}\rd k\\
			&=-\frac{1}{2\pi \ri}\int_{\partial B_\epsilon}{\hat{w}(k)}\rd k-\frac{1}{2\pi \ri}\int_{\hat{\Gamma}\setminus\partial B_\epsilon}{\hat{w}(k)}\rd k-\frac{1}{2\pi \ri}\int_{\hat{\Gamma}}{(\mu(k)-I)\hat{w}(k)}\rd k\\
			&:=Q_1(x,t)+Q_2(x,t)+Q_3(x,t).
		\end{align*}
		By Lemma \ref{lem_hatw} and Lemma \ref{lem_nu}, we have 
		\begin{equation*}
		\partial_x^lQ_2(x,t)\leq \mathcal{O}({\rm{e}}^{-ct}), \quad \partial_x^lQ_3(x,t)\leq\frac{c}{t},\quad l=0,1.
		\end{equation*}
		Then the above equation can be rewritten as
		\begin{equation}\label{hatM-I}
			\hat{M}_1(x,t)=-\frac{1}{2\pi \ri}\int_{\partial B_\epsilon}{\hat{w}(k)}\rd k+\mathcal{O}(t^{-1}).
		\end{equation}
		
		Based on the equation \eqref{Mk_0}, the function $Q(x,t)$ is defined as follows
		\begin{align*}
				Q(x,t)&=-\frac{1}{2\pi i}\int_{\partial B_\epsilon(k_0)}{\hat{w}(k)}\rd k=-\frac{1}{2\pi i}\int_{\partial  B_\epsilon(k_0)}\left( M^{k_0}(x,t,k)^{-1}-I \right)\rd k\\
				&=\frac{T(\zeta)m^X_1(q(\zeta))T(\zeta)^{-1}}{(2\sqrt{3}t)^{1/2}}+\mathcal{O}\left( \frac{1}{t}\right),  \qquad  t\to\infty.
		\end{align*}
		In addition, $\hat{M}$ also satisfies the symmetry
		\begin{equation*}
			\hat{M}(x,t,k)=\mathcal{A}\hat{M}(x,t,\omega k)\mathcal{A}^{-1},\qquad k\in\mathbb{C}\setminus\hat{\Gamma}.
		\end{equation*}
		That is to say, both $\hat{v}$ and $\hat{w}$ satisfy the aforementioned symmetry, and thus by using equation \eqref{hatM-I}, we can obtain that the following expression holds uniformly for all 
		$\zeta\in\mathcal{I}$ as $t\to\infty$
		\begin{align*}
			\hat{M}_1(x,t)&=-\frac{1}{2\pi \ri}\int_{\partial B_\epsilon}{\hat{w}(k)}\rd k+\mathcal{O}(t^{-1})\\
			&=-\frac{1}{2\pi \ri}\left(\int_{\partial B_\epsilon(k_0)}+\int_{\partial B_\epsilon(\omega k_0)}+\int_{\partial B_\epsilon(\omega^2k_0)} \right) {\hat{w}(k)}\rd k+\mathcal{O}(t^{-1})\\
			&=-\frac{1}{2\pi \ri}\left(\int_{\partial B_\epsilon(k_0)}+\int_{\omega\,\partial B_\epsilon( k_0)}+\int_{\omega^2\,\partial B_\epsilon(k_0)} \right) {\hat{w}(k)}\rd k+\mathcal{O}(t^{-1})\\
			&=Q(x,t)+\omega\mathcal{A}^{-1}Q(x,t)\mathcal{A}+\omega^2\mathcal{A}^{-2}Q(x,t)\mathcal{A}^2+\mathcal{O}(t^{-1})\\
			&=\frac{\sum_{j=0}^{2}\omega^j\mathcal{A}^{-j}T(\zeta)m^X_1(q(\zeta))T(\zeta)^{-1}\mathcal{A}^j}{(2\sqrt{3}t)^{1/2}}+\mathcal{O}\left( \frac{1}{t}\right),
		\end{align*}
		and similarly
		 {\begin{align*}
			\partial_x\left[(\omega,\omega^2,1) \hat{M}_1(x,t)\right]_3 =&\partial_x\left[(\omega,\omega^2,1) \frac{\sum_{j=0}^{2}\omega^j\mathcal{A}^{-j}T(\zeta)m^X_1(q(\zeta))T(\zeta)^{-1}\mathcal{A}^j}{(2\sqrt{3}t)^{1/2}}\right]_3+\mathcal{O}\left( \frac{1}{t}\right)\\
			=&-\frac{3^{5/4}k_0\sqrt{\nu}}{\sqrt{2t}}\sin\left(\frac{19\pi}{12}+\nu\ln \left(6\sqrt{3}tk_0^2 \right) -\sqrt{3}tk_0^2-\arg q-\arg\Gamma\left(\ri\nu \right)  \right.\\
			&+ \left.\frac{1}{\pi} \int_{k_0}^{\infty}\ln\left(\frac{\left|s-k_0 \right| }{\left| s-\omega k_0\right| } \right)\rd \ln\left(1-\left|r_1(s) \right|^2  \right)  \right)+\mathcal{O}\left( \frac{1}{t}\right).
		\end{align*}}
		
		\section{The $\bar\partial$ problems \ref{DbarEG} and \ref{DbarER} }\label{sec_5}
		\ \ \ \
		Reviewing the several transformations we have made to RH problem \ref{rhp_M} earlier, we can obtain
		\begin{equation*}\left\lbrace
			\begin{aligned}
				M^{LC}(k)&=\hat M(k),\\
				M^{RH}(k)&=E^R(k)\hat{M}(k)\left(R^{(3)}(k) \right)^{-1}\Delta^{-1}(k),
			\end{aligned}\right.\quad k\in \mathbb{C}\setminus B_\epsilon.
		\end{equation*}
		In addition, based on the conclusions from the previous sections, the $L^\infty$ norms of $M^{RH}$ and $M^{LC}$ are bounded, and the matrix functions $Y(k)$ and $W(k)$ composed of them also meet different forms in various regions of Fig. \ref{figGamma} and Fig. \ref{figGamma3}. We will also need to utilize the specific forms of these matrices in the proof process below, which will simplify our proof.
		Next, we will estimate the two $\bar{\partial}$ problems \ref{DbarEG} and \ref{DbarER} separately.

		\subsection{The $\bar\partial$ problem \ref{DbarER}: $E^{R}$}
		\ \ \ \
		For the second $\bar{\partial}$ problem \ref{DbarER}, we consider the following problem from equation \eqref{M3factorisation},
		\begin{equation*}
			E^R(x,t,k)=M^{(3)}(x,t,k)M^{LC}(x,t,k)^{-1},
		\end{equation*}
		and it is equivalent to
		\begin{equation}\label{ERexpress}
			E^R(x,t,k) = I - \frac{1}{\pi} \iint_{\mathbb{C}} \frac{\bar{\partial} E^R(k)}{s - k} \rd A(s) =I - \frac{1}{\pi} \iint_{\mathbb{C}} \frac{E^R(s) W(s)}{s - k} \rd A(s),
		\end{equation}
		where $\rd A(s)$ is Lebesgue measure on the plane.
		Define the solid Cauchy operator ${\rm{S}}_1$ with
		\begin{equation*}
			{\rm{S}}_1\left[ f\right] (k)=-\frac{1}{\pi} \iint_{\mathbb{C}} \frac{f(s) W(s)}{s - k} \rd A(s),
		\end{equation*}
		and equation \eqref{ERexpress} can be written in
		\begin{equation}\label{ERsolve}
			(\textbf{1}-{\rm{S}}_1)\left[ E^R(k)\right]=I.
		\end{equation}
		
		The following lemma will illustrate and prove that for sufficiently large $t$, the operator ${\rm{S}}_1$ is of small-norm, ensuring that the inverse of $\left( \textbf{1}-{\rm{S}}_1\right) $ exists and can be expressed in the form of a Neumann series.
		Since the construction of $R^{(3)}$ in the equation \eqref{R3} is more complex, proofs for several additional regions are required here.

		\begin{lem}\label{lem_S1_smallnorm}
			 {For $r_1(k)\in H^{3,4}(0,\infty)$, there exists a constant $C$ such that for $t>0$ and $k\in \mathbb{C}\setminus B_\epsilon$, the operator ${\rm{S}}_1$ satisfies the following  estimates:}
			\begin{align}
				&\left\|{\rm{S}}_1(x,t,\cdot) \right\|_{L^\infty\to L^\infty}\leq C t^{-1/4},\label{S1}\\
				&\left\|\partial_x{\rm{S}}_1[I] \right\|_{L^\infty\to L^\infty}\leq C t^{-1/4}.\label{parxS1}
			\end{align}
		\end{lem}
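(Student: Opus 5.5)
The plan is the standard $\bar\partial$-steepest descent estimate for the solid Cauchy operator. For $f\in L^\infty$ we have
\begin{equation*}
\left|{\rm{S}}_1[f](k)\right|\leq \|f\|_{L^\infty}\frac{1}{\pi}\iint_{\mathbb{C}}\frac{|W(s)|}{|s-k|}\rd A(s),
\end{equation*}
so (\ref{S1}) reduces to bounding $\iint |W(s)|\,|s-k|^{-1}\rd A(s)$ by $Ct^{-1/4}$, uniformly in $k$. The bound on $W$ comes from $W=M^{LC}\bar\partial R^{(3)}(M^{LC})^{-1}$: outside $B_\epsilon$ we have $M^{LC}=\hat M$, which is uniformly bounded by Lemma \ref{lem_nu} and (\ref{hatMBC}); inside $B_\epsilon$ it is $\hat M M^{k_0}$, also bounded. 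Hence $|W|\leq C|\bar\partial R^{(3)}|$.

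By the symmetry $\bar\partial R^{(3)}_j(k)=\mathcal{A}^{-1}\bar\partial R^{(3)}_{j-8}(\omega^2 k)\mathcal{A}$ in Lemma \ref{lemDbarR3}, it suffices to treat the sectors $V_1,\dots,V_6$ around $k_0$, since $V_7,V_8$ contribute nothing. In each sector, Lemma \ref{lemDbarR3} shows that each nonzero entry of $\bar\partial R^{(3)}$ is a product of three factors: $\bar\partial R_j$ (or $\bar\partial\alpha_j=r_2^*(0)\bar\partial R_j$), bounded $\delta$-ratios (Proposition \ref{prop_delta}), and an exponential ${\rm e}^{\pm t\Phi_{ij}}$. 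Using Proposition \ref{R_exist}, I will bound $|\bar\partial R_j|\leq c|s-k_0|^{-1/2}+c|r_1'({\rm Re}\,s)|$ (or $|\rho_1'|$) in $V_1$--$V_4$, and by $c|s|^{-1/2}+c|r_1'|$ in $V_5,V_6$. Writing $s=k_0+u+\ri v$, a direct computation gives ${\rm Re}\,\Phi_{21}=-2\sqrt3\,uv$ up to sign, so in $V_1$ the decay factor is $|{\rm e}^{t\Phi_{21}}|\leq {\rm e}^{-ctuv}$. In $V_2,V_3$ the terms with ${\rm e}^{-t\Phi_{31}},{\rm e}^{-t\Phi_{32}}$ are exponentially small, because ${\rm Re}\,\Phi_{31},{\rm Re}\,\Phi_{32}\geq c>0$ there as in Lemma \ref{lem_v3vk0}. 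In $V_5,V_6$, which stay away from $k_0$, ${\rm Re}\,\Phi_{21}$ is bounded below by $c\,{\rm Im}\,s$ times a positive constant, which gives ${\rm e}^{-ct|v|}$.

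The core computation is the usual split $I_1+I_2$. For the $|r_1'|$ part, Cauchy--Schwarz in $u$ gives
\begin{equation*}
\int_0^\infty\!\!\int_v^\infty \frac{|r_1'(u)|{\rm e}^{-ctuv}}{|s-k|}\rd u\,\rd v\leq \|r_1'\|_{L^2}\int_0^\infty {\rm e}^{-ctv^2}\,\big\|(s-k)^{-1}\big\|_{L^2(\rd u)}\rd v\leq C\int_0^\infty {\rm e}^{-ctv^2}|v-\beta|^{-1/2}\rd v\leq Ct^{-1/4}.
\end{equation*}
For the $|s-k_0|^{-1/2}$ part, Hölder's inequality with $p>2$ and $1/p+1/q=1$ gives $\||s-k_0|^{-1/2}\|_{L^p(\rd u)}\leq Cv^{1/p-1/2}$ and $\|(s-k)^{-1}\|_{L^q(\rd u)}\leq C|v-\beta|^{1/q-1}$. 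Integrating against ${\rm e}^{-ctv^2}$ then yields $Ct^{-1/4}$ after rescaling. The $V_5,V_6$ terms with the $|s|^{-1/2}$ singularity at the origin are handled the same way, with even better decay ${\rm e}^{-ct|v|}$, giving $O(t^{-1/2})$. The requirement $kr_1\in L^2$ mentioned in the remark is what makes $\rho_1'\in L^2$ on the unbounded sector $V_1$.

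The second estimate (\ref{parxS1}) is proved the same way, starting from $\partial_x{\rm S}_1[I]=-\frac1\pi\iint \partial_x W(s)(s-k)^{-1}\rd A(s)$. Differentiating $W$ in $x$ produces four contributions. The term $\partial_x M^{LC}$ is bounded by Lemma \ref{lem_nu} together with Lemma \ref{T_bounded}. The term $\partial_x\delta_j$ costs a factor $t^{-1}|s-k_0|^{-1}$, which is harmless because of ${\rm e}^{-ctuv}$. The term $\partial_x{\rm e}^{t\Phi_{21}}$ brings down $(\omega^2-\omega)(s-\omega^2 k_0\cdots)$, i.e. a factor of order $|s|$, with no loss in $t$ since $\partial_x=\partial_\zeta/t$ when acting through $\zeta$ and $\partial_x(t\Phi)=\partial_\zeta\Phi$. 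The main obstacle is this last factor: the $|s|$ growth on the unbounded sector $V_1$ has to be absorbed. I would first try to absorb it into the Gaussian-type decay ${\rm e}^{-ctuv}$ near infinity together with $s\rho_1'\in L^2$ (equivalently $kr_1\in H^1$, guaranteed by $H^{3,4}$), and rerun the two integrals above to get $Ct^{-1/4}$ again.
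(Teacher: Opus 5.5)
Your proof of (\ref{S1}) is the paper's argument: bound $|W|\le C|\bar\partial R^{(3)}|$, split $|\bar\partial R_j|$ into the $|s-k_0|^{-1/2}$ and $|\rho_1'|$ parts, then apply H\"older with $p>2$ and Cauchy--Schwarz against $\mathrm{e}^{-ctv^2}$. Your remarks that the $\mathrm{e}^{-t\Phi_{31}}$, $\mathrm{e}^{-t\Phi_{32}}$ entries are $\mathcal{O}(\mathrm{e}^{-ct})$ on $V_2,V_3$, and that $V_5,V_6$ contribute only $\mathcal{O}(t^{-1/2})$, are correct.

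The gap is in (\ref{parxS1}), exactly at the step you leave as ``I would first try''. Literally rerunning the two integrals fails for the singular part. Since $|s|\,|s-k_0|^{-1/2}\le k_0|s-k_0|^{-1/2}+|s-k_0|^{1/2}$, you must control $\iint_{V_1}|s-k_0|^{1/2}\mathrm{e}^{-ctuv}|s-k|^{-1}\,\rd A$. Once $\mathrm{e}^{-ctuv}$ has been replaced by $\mathrm{e}^{-ctv^2}$, the norm $\||s-k_0|^{1/2}\|_{L^p(\rd u;\,u>v)}$ is infinite for every $p$. The decay $\mathrm{e}^{-ctuv}$ is not Gaussian near the real axis; it is only exponential in $u$, at the degenerate rate $tv$.

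What works is to keep the exponential inside the $u$-norm: $\||s-k_0|^{1/2}\mathrm{e}^{-ctuv}\|_{L^p(\rd u;\,u>v)}\le C(tv)^{-1/2-1/p}(1+tv^2)^{1/2}\mathrm{e}^{-ctv^2}$. The resulting $v$-integral against $|v-\beta|^{-1/p}$ is bounded uniformly in $\beta$ only if $p>4$, and then it gives $\mathcal{O}(t^{-3/4})$. The other two pieces do rerun to $\mathcal{O}(t^{-1/4})$: the $k_0|s-k_0|^{-1/2}$ piece directly, and the $|s|\,|\rho_1'|$ piece using $|s|\le 2\,\mathrm{Re}\,s$ on $V_1$ together with $k\rho_1'(k)\in L^2(0,\infty)$. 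This is the mechanism of $I_8,I_9$ in Lemma \ref{lem_R_error}. The paper's own $I_4$ here instead writes $|s|\le|s-k|+|k|$ and then drops the factor $|k|$, which is not uniform over $k\in\mathbb{C}\setminus B_\epsilon$. So your absorption route is the sound one, but it has to be carried out as above.

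Second, your list of $x$-dependent factors in $W$ omits $\bar\partial R_j$ itself (the paper's $I_3$). $R_1$ depends on $x$ through $k_0=x/(2t)$, both via $f_1$ and via the polar coordinates $\rho=|s-k_0|$, $\phi_1=\arg(s-k_0)$ in (\ref{R1}). Differentiating produces terms bounded by $t^{-1}\big(|s-k_0|^{-3/2}+|s-k_0|^{-1}|\rho_1'(\mathrm{Re}\,s)|+|s-k_0|^{-1}(1+|\ln|s-k_0||)\big)$. These, like your $\partial_x\delta_j$ term, are harmless, but not ``because of $\mathrm{e}^{-ctuv}$'', which is close to $1$ near $s=k_0$. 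The bound $t^{-1}|s-k_0|^{-3/2}|s-k|^{-1}$ is not integrable when $k=k_0$. What saves you is the hypothesis $k\in\mathbb{C}\setminus B_\epsilon$, which you never invoke: it gives $|s-k|\ge\epsilon/2$ on $B_{\epsilon/2}(k_0)$, and these terms are then $\mathcal{O}(t^{-5/4})$.

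Finally, a small misattribution: $\rho_1'\in L^2$ follows from $r_1'\in L^2$ and $\|r_1\|_{L^\infty}<1$, not from $kr_1\in L^2$. The weight is needed only for $k\rho_1'\in L^2$ in the derivative estimate.
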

		
		\begin{proof}
			Based on the distribution of regions $V_j$ $(j=1,2,\cdots,6)$ shown in Fig. \ref{figGamma3}, here we provide proofs for regions $V_1$, $V_2$ and $V_5$. The remaining three regions can be proven by using the conjugate symmetry.
			
			When the proof is restricted to $V_1$, let $f\in L^\infty(V_1)$, $k=k_0+\alpha+\ri\beta$ and  $s=u+\ri v$ $(u>k_0,v>0)$, hence we have
			\begin{align*}
				{\rm{Re}}\,(t\Phi_{21}(\zeta,s))=-2\sqrt{3}tv (u-k_0).
			\end{align*}
			Then according to the third condition of $\bar{\partial}$ problem \ref{DbarER} and Proposition \ref{prop_delta}, we obtain $\left\|\left( M^{LC}\right) ^{\pm1} \right\|_{L^\infty}=\left\|\hat{M}^{\pm1} \right\|_{L^\infty} \leq C$ for $t>1$, $k\in \mathbb{C}\setminus B_\epsilon$ and
			\begin{align*}
				\left| {\rm{S}}_1[f](k)\right| &\leq \iint_{V_1}\frac{\left|f(s)\hat M(s)\bar{\partial}R_1^{(3)}(s)
				\hat M(s) ^{-1} \right| }{\left|s-k \right| }\rd A(s)\\
				&\leq \left\| f \right\|_{L^\infty(V_1)}\left\|\hat M \right\|_{L^\infty(V_1)}\left\|\hat M^{-1} \right\|_{L^\infty(V_1)} \left\| \frac{\delta_{1+}^2}{\delta_2\delta_3} \right\|_{L^\infty(V_1)}\iint_{V_1}\frac{\left|\bar{\partial}R_1(s){\rm{e}}^{t \Phi_{21}} \right|  }{\left|s-k \right| }\rd A(s)\\
				&\leq  \iint_{V_1}\frac{\left( c \left|s-k_0 \right| ^{-1/2}+c \left|\rho'_1({\rm{Re}}\,s) \right|\right)   {\rm{e}}^{-2\sqrt{3}tv (u-k_0)}  }{\left|s-k \right| }\rd A(s)\\
				&:=I_1+I_2.
			\end{align*}
			And $I_1$, $I_2$ respectively satisfy the following estimates. Similar to the Lemma \ref{lem_S1_smallnorm}, for $p>2$ and ${1}/{p}+{1}/{q}=1$,
			\begin{align*}
				I_1 &= \iint_{V_1}\frac{ c \left|s-k_0 \right| ^{-1/2} {\rm{e}}^{-2\sqrt{3}tv (u-k_0)}  }{\left|s-k \right| }\rd A(s)\\
				&\leq C\int_{0}^{\infty}{\rm{e}}^{-2\sqrt{3}tv^2}\left\|(s-k_0)^{-1/2} \right\|_{L_u^p\left( v+k_0,\infty\right) } \left\|(s-k)^{-1} \right\|_{L_u^q\left( v+k_0,\infty\right) }\rd v\\
				&\leq C \int_{0}^{\infty}{\rm{e}}^{-2\sqrt{3}tv^2}v^{1/p-1/2} \left| v-\beta\right|^{1/q-1}\rd v\\
				&\leq Ct^{-1/4},
			\end{align*}
			and
			\begin{align*}
				I_2 &= \iint_{V_1}\frac{ c \left|\rho'_1({\rm{Re}}\,s) \right| {\rm{e}}^{-2\sqrt{3}tv (u-k_0)}  }{\left|s-k \right| }\rd A(s)\\
				&\leq C\int_{0}^{\infty}{\rm{e}}^{-2\sqrt{3}tv^2}   \int_{v+k_0}^{\infty}\frac{\left|\rho'_1({\rm{Re}}\,s) \right|}{\left|s-k \right|}\rd u\rd v\\
				&\leq C\left\|\rho_1' \right\|_{L^2_u(0,\infty)} \int_{0}^{\infty}{\rm{e}}^{-2\sqrt{3}tv^2} \left\|(s-k)^{-1} \right\|_{L_u^2\left(v+k_0,\infty\right) }\rd v\\
				&\leq  Ct^{-1/4}.
			\end{align*}
			Thus, when $k\in V_1$, there exists $\left\|{\rm{S}}_1 \right\|_{L^\infty(V_1)\to L^\infty(V_1)}\leq C t^{-1/4}$.

			For the case $k\in V_2$, we provide a brief proof. Using similar notation as in the previous case, but satisfying $u\in\left(\frac{(\sqrt{3}-1)k_0}{2},k_0 \right) $ and $v\in\left(0,\frac{(3-\sqrt{3})k_0}{2} \right) $ in region $V_2$.
			First, we calculate
			\begin{align*}
				{\rm{Re}}\,(-t\Phi_{31}(\zeta,s))=\frac{3t}{2}(v^2-2uk_0-v^2)+\sqrt{3}tv(u-k_0).
			\end{align*}
			Then
			\begin{align*}
				\left| {\rm{S}}_1[f](k)\right| &\leq \iint_{V_2}\frac{\left|f(s)\hat M(s)\bar{\partial}R_2^{(3)}(s)\hat M^{-1}(s) \right| }{\left|s-k \right| }\rd A(s)\\
				&\leq C \iint_{V_2}\frac{\left|\bar{\partial}R_2(s){\rm{e}}^{-t \Phi_{21}} \right|+ \left|\bar{\partial}\alpha_2(s){\rm{e}}^{-t \Phi_{31}} \right| }{\left|s-k \right| }\rd A(s)\\
				&\leq  \iint_{V_2}\frac{\left(c \left|s-k_0 \right| ^{-1/2}+c \left|r'_1({\rm{Re}}\,s) \right|\right)   \left(   {\rm{e}}^{2\sqrt{3}tv (u-k_0)} +{\rm{e}}^{\frac{3t}{2}(v^2-2uk_0-u^2)+\sqrt{3}tv(u-k_0)}\right)   }{\left|s-k \right| }\rd A(s)\\
				&\leq \int_{0}^{\frac{(3-\sqrt{3})k_0}{2} }\left(  {\rm{e}}^{-2\sqrt{3}tv^2}+{\rm{e}}^{\frac{3-2\sqrt{3}}{2}tv^2}  \right)  \rd v \int_{\frac{(\sqrt{3}-1)k_0}{2}}^{k_0-v} \frac{\left(  c \left|s-k_0 \right| ^{-1/2}+c \left|r'_1({\rm{Re}}\,s) \right|\right)   }{\left|s-k \right| } \rd u\\
				&\leq Ct^{-1/4}.
			\end{align*}
			
			Finally, we complete the proof for region $V_5$. In this region, the constraints $u\in\left(0,\frac{(\sqrt{3}-1)k_0}{2} \right) $ and $v\in\left(0,\frac{(3-\sqrt{3})k_0}{2} \right) $ are satisfied. And
			\begin{align*}
				\left| {\rm{S}}_1[f](k)\right| &\leq \iint_{V_5}\frac{\left|f(s)\hat M(s)\bar{\partial}R_5^{(3)}(s)\hat M^{-1}(s) \right| }{\left|s-k \right| }\rd A(s)\\
				&\leq C \iint_{V_5}\frac{\left|\bar{\partial}R_5(s){\rm{e}}^{-t \Phi_{21}} \right|+ \left|\bar{\partial}\alpha_5(s){\rm{e}}^{-t \Phi_{31}} \right| }{\left|s-k \right| }\rd A(s)\\
				&\leq  \iint_{V_5}\frac{\left( c \left|s \right| ^{-1/2}+c \left|r'_1({\rm{Re}}\,s) \right|\right)  \left(  {\rm{e}}^{2\sqrt{3}tv (u-k_0)} +{\rm{e}}^{\frac{3t}{2}(v^2-2uk_0-u^2)+\sqrt{3}tv(u-k_0)}\right)  }{\left|s-k \right| }\rd A(s)\\
				&\leq \int_{0}^{\frac{(3-\sqrt{3})k_0}{2} }\left(  {\rm{e}}^{(3-2\sqrt{3})k_0tv}+{\rm{e}}^{2tv(v-\sqrt{3}k_0)}  \right) \rd v \int_{\frac{\sqrt{3}}{3}v}^{\frac{(\sqrt{3}-1)k_0}{2}} \frac{\left( c \left|s \right| ^{-1/2}+c \left|r'_1({\rm{Re}}\,s) \right|\right)  }{\left|s-k \right| } \rd u\\
				&\leq \int_{0}^{\frac{(3-\sqrt{3})k_0}{2} }\left(  {\rm{e}}^{(3-2\sqrt{3})k_0tv}+{\rm{e}}^{-\sqrt{3}tv^2}  \right) \rd v \int_{\frac{\sqrt{3}}{3}v}^{\frac{(\sqrt{3}-1)k_0}{2}} \frac{\left( c \left|s \right| ^{-1/2}+c \left|r'_1({\rm{Re}}\,s) \right|\right)  }{\left|s-k \right| } \rd u\\
				&\leq Ct^{-1/4}.
			\end{align*}
			
		 {For $k\in V_1$, based on $r_1(k)\in H^{3,4}(0,\infty)$,} we have $\left\|  \frac{\rd\rho_1(k_0)}{\rd{k_0}}\right\|_{L^2}< \infty$, $\left\|  \frac{\rd\nu(k_0)}{\rd{k_0}}\right\|_{L^2}< \infty$. Combining Proposition \ref{prop_delta}, we can calculate and obtain the following expression
			\begin{align*}
				\left\| \partial_x\left( \bar \partial R_1\right)\right\|_{L^2_u(v+k_0,\infty)}  &=\left\| \frac{\ri{\rm{e}}^{\ri\phi_1}}{\rho}\sin(2\phi_1)\partial_xf_1(s)\right\|_{L^2_u(v+k_0,\infty)}  \leq \left\| \frac{{\rm{e}}^{2\ri\phi_1}}{\rho^2}\sin^2(2\phi_1)\right\|^{1/2}_{L^\infty_u(v+k_0,\infty)}\\
				&\quad\left\| \partial_x\left(\rho_1^*(k_0)(s-k_0)^{-2\ri\nu}{\rm{e}}^{-2\chi{(k_0)}}\delta_2^{-1}(k_0)\delta_3^{-1}(\zeta,k_0)\frac{\delta_2(s)\delta_3(s)}{\delta_{1+}^2(s)} \right)\right\|_{L^2_u(v+k_0,\infty)}  \\
				&\leq \frac{C}{t} \left\| \partial_{k_0}\rho_1^*(k_0)\right\|_{L^2_u(v+k_0,\infty)}\left\| {\rm{e}}^{2[\chi{(s)}-\chi{(k_0)}]}\frac{\delta_2(s)\delta_3(s)}{\delta_2(k_0)\delta_3(k_0)} \right\|_{L^\infty_u(v+k_0,\infty)}  \\
				&\quad + C \left\| \partial_{x}\left({\rm{e}}^{2[\chi{(s)}-\chi{(k_0)}]}\frac{\delta_2(s)\delta_3(s)}{\delta_2(k_0)\delta_3(k_0)} \right)\right\|_{L^\infty_u(v+k_0,\infty)}\left\| \rho_1^*(k_0)\right\|_{L^2_u(v+k_0,\infty)}\\
				&=\mathcal{O}\left(\frac{1}{t} \right).
			\end{align*}
			Then, the calculation is carried out as follows by utilizing the distribution and particularities of the elements in the $W(x,t,k)$ matrix as $t\to\infty$
			\begin{align*}
				\left\|\partial_x {\rm{S}}_1[I]\right\|_{ L^\infty} &\leq C \iint_{V_1}\frac{\left|\partial_xW(x,t,s) \right| }{\left|s-k \right| }\rd A(s)\\
				&\leq C \iint_{V_1}\frac{\max\limits_{i,j=1,2,3}\left|\partial_xW_{ij}(x,t,s) \right| }{\left|s-k \right| }\rd A(s)\\
				&\leq C  \left\| \frac{\delta_{1+}^2}{\delta_2\delta_3} \right\|_{L^\infty(V_1)}\iint_{V_1}\frac{\left|\partial_x\left( \bar{\partial}R_1(s){\rm{e}}^{t \Phi_{21}}\right)  \right|  }{\left|s-k \right| }\rd A(s)\\
				& \quad +C  \left\|\partial_x \frac{\delta_{1+}^2}{\delta_2\delta_3} \right\|_{L^\infty(V_1)}\iint_{V_1}\frac{\left| \bar{\partial}R_1(s){\rm{e}}^{t \Phi_{21}} \right|  }{\left|s-k \right| }\rd A(s)\\
				& \leq C\iint_{V_1}\frac{\left|\partial_x\left( \bar{\partial}R_1(s)\right) {\rm{e}}^{t \Phi_{21}} \right|+\left|s\bar{\partial}R_1(s){\rm{e}}^{t \Phi_{21}}\right|   }{\left|s-k \right| }\rd A(s)\\
				&\quad +\frac{C}{t}\iint_{V_1}\frac{\left| \bar{\partial}R_1(s){\rm{e}}^{t \Phi_{21}} \right|  }{\left|s-k \right| }\rd A(s)\\
				&:=I_3+I_4+I_5,
			\end{align*}
			where
			\begin{align*}
				I_3& \leq\frac{C}{t}\int_{0}^{\infty}\rd v\left( \int_{v+k_0}^{\infty}\frac{\left| {\rm{e}}^{2t \Phi_{21}} \right|}{\left|s-k \right|^2 }\rd u\right) ^{\frac{1}{2}}\\
				&\leq \frac{C}{t}\int_{0}^{\infty}{\rm{e}}^{-4\sqrt{3}tv^2}\left\|(s-k)^{-1} \right\|_{L_u^2\left(v+k_0,\infty\right) }\rd v\\
				&\leq Ct^{-5/4},
			\end{align*}
			\begin{align*}
				I_4&= C\iint_{V_1}\frac{\left|s\bar{\partial}R_1(s){\rm{e}}^{t \Phi_{21}}\right|   }{\left|s-k \right| }\rd A(s)\\
				&\leq C\iint_{V_1} \left| \bar{\partial}R_1(s){\rm{e}}^{t \Phi_{21}}\right|   \rd A(s)+C\iint_{V_1}\frac{\left|k\right|  \left| \bar{\partial}R_1(s){\rm{e}}^{t \Phi_{21}}\right|   }{\left|s-k \right| }\rd A(s)\\
				&\leq Ct^{-3/4}+C\max\left\lbrace \iint_{V_1} \left| \bar{\partial}R_1(s){\rm{e}}^{t \Phi_{21}}\right| \rd A(s),\iint_{V_1}\frac{  \left| \bar{\partial}R_1(s){\rm{e}}^{t \Phi_{21}}\right|   }{\left|s-k \right| }\rd A(s)\right\rbrace \\
				&\leq Ct^{-1/4},
			\end{align*}
			and
			\begin{align*}
				I_5= \frac{C}{t}\iint_{V_1}\frac{\left| \bar{\partial}R_1(s){\rm{e}}^{t \Phi_{21}} \right|  }{\left|s-k \right| }\rd A(s)\leq Ct^{-5/4}.
			\end{align*}
		And in the above calculation, we used $\iint_{V_1} \left| \bar{\partial}R_1(s){\rm{e}}^{t \Phi_{21}}\right|   \rd A(s)\leq Ct^{-3/4}$ proved in Lemma \ref{lem_R_error}. This step can also be obtained in the same way within this lemma. However, since the main calculation result needs to be reflected in Lemma \ref{lem_R_error}, we directly provide the result here.
		
			Overall, we have completed the proof of equation \eqref{parxS1} in region $V_1$.
			And in these two regions $V_2$ and $V_5$, equation \eqref{parxS1} can also be obtained through similar calculations as above.

			Up to now, we have completed the proofs of equations \eqref{S1} and \eqref{parxS1} for the three regions. The proofs for the remaining regions can also be directly obtained based on the above analysis.

		\end{proof}

		Using the conclusions of the above lemma, we can obtain that equation \eqref{ERsolve} is solvable, and
		\begin{equation*}
			E^R=I+\mathcal{O}\left( t^{-\frac{1}{4}}\right).
		\end{equation*}
		Before providing the estimate for the $\bar{\partial}$ problem \ref{DbarER}, it is necessary to prove the following lemma.
		
	\begin{lem}\label{lem_parxER}
		 {For $r_1(k)\in H^{3,4}(0,\infty)$, there exists a constant $C$ such that for $t>0$ and $k\in \mathbb{C}\setminus B_\epsilon$, the following  estimate holds }
			\begin{equation*}
				\left\|\partial_x E^R(x,t,\cdot)\right\|_{L^\infty}\leq Ct^{-1/4} .
			\end{equation*}
		\end{lem}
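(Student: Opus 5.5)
We plan to differentiate the integral equation \eqref{ERsolve} with respect to $x$ and then use the small-norm bounds of Lemma \ref{lem_S1_smallnorm}. The key structural fact is that $x$ enters ${\rm{S}}_1$ only through the kernel $W(x,t,s)$. Applying $\partial_x$ to $(\textbf{1}-{\rm{S}}_1)[E^R]=I$ therefore gives
\begin{equation*}
	(\textbf{1}-{\rm{S}}_1)\left[\partial_x E^R\right](k)=-\frac{1}{\pi}\iint_{\mathbb{C}}\frac{E^R(s)\,\partial_xW(s)}{s-k}\,\rd A(s)=:\left(\partial_x{\rm{S}}_1\right)\left[E^R\right](k).
\end{equation*}
By \eqref{S1}, $\|{\rm{S}}_1\|_{L^\infty\to L^\infty}\le Ct^{-1/4}$. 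Hence for $t$ large, $(\textbf{1}-{\rm{S}}_1)^{-1}$ exists as a Neumann series with norm at most $2$. This reduces the lemma to the bound $\|(\partial_x{\rm{S}}_1)[E^R]\|_{L^\infty}\le Ct^{-1/4}$.

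For that bound, we write $E^R=I+(E^R-I)$ and split
\begin{equation*}
	(\partial_x{\rm{S}}_1)[E^R]=\partial_x{\rm{S}}_1[I]+(\partial_x{\rm{S}}_1)[E^R-I].
\end{equation*}
The first term is exactly \eqref{parxS1}, which gives $Ct^{-1/4}$. For the second term we use $\|E^R-I\|_{L^\infty}\le Ct^{-1/4}$, which follows from the Neumann series. This bounds it by $Ct^{-1/4}\sup_k\iint|\partial_xW(s)|/|s-k|\,\rd A(s)$.

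It then remains to show that the operator $f\mapsto\iint f\,\partial_xW/(s-k)$ is bounded on $L^\infty$ uniformly in $t$; a bound of order $O(1)$ suffices, since it is multiplied by $t^{-1/4}$. To get this we reuse the region-by-region decomposition from the proof of Lemma \ref{lem_S1_smallnorm}, over $V_1$, $V_2$, $V_5$ and their symmetric images. In $V_1$ we have
\begin{equation*}
	W=M^{LC}\bar\partial R^{(3)}(M^{LC})^{-1},
\end{equation*}
so $\partial_xW$ splits into three kinds of contribution. The first is $\partial_x M^{LC}=\partial_x\hat M$; this is bounded, since differentiating \eqref{hatMBC} and using Lemmas \ref{lem_hatw} and \ref{lem_nu} gives $\|\partial_x\hat M\|_{L^\infty}\le C$ away from $\hat\Gamma$. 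The second is $\partial_x$ of the $\delta$-ratios, which is $O(t^{-1}\ln t)$ by Proposition \ref{prop_delta}. The third is $\partial_x$ acting on $\bar\partial R_1\,{\rm e}^{t\Phi_{21}}$. These are precisely the pieces $I_3,I_4,I_5$ estimated in the proof of \eqref{parxS1}, and that estimate used only $L^\infty$ control of the function being integrated against. The same estimates therefore bound them here, after pulling out $\|f\|_{L^\infty}$.

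We expect the main obstacle to be the factor $\partial_x{\rm e}^{t\Phi_{21}}=\Phi_{21}'\,{\rm e}^{t\Phi_{21}}$ with $\Phi_{21}'\sim s$. This produces the growing weight $|s|$ in $I_4$. Exactly as there, we will handle it by writing
\begin{equation*}
	\frac{s}{s-k}=1+\frac{k}{s-k}
\end{equation*}
and using the $L^1$ bound $\iint_{V_1}|\bar\partial R_1\,{\rm e}^{t\Phi_{21}}|\,\rd A\le Ct^{-3/4}$. The term with $k/(s-k)$ requires care for large $|k|$; for $|k|$ large we would instead bound $|k|/|s-k|$ by a constant whenever $|s|\le|k|/2$, and use the Gaussian decay in $v$ otherwise. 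The regions $V_2$ and $V_5$, including the $\bar\partial\alpha_j$ entries, follow by the same computations with the exponents ${\rm e}^{-t\Phi_{21}}$ and ${\rm e}^{-t\Phi_{31}}$ used there. Combining the two terms then yields $\|\partial_xE^R\|_{L^\infty}\le Ct^{-1/4}$.
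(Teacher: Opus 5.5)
Your argument is correct and is essentially the paper's. The paper writes $E^R=(\textbf{1}-{\rm S}_1)^{-1}I$ as a Neumann series and differentiates term by term using \eqref{S1} and the kernel bound behind \eqref{parxS1}; summing those terms gives exactly your identity $\partial_xE^R=(\textbf{1}-{\rm S}_1)^{-1}(\partial_x{\rm S}_1)[E^R]$. Your packaging needs only an $O(1)$ operator bound for $\partial_x{\rm S}_1$ beyond the estimate of $\partial_x{\rm S}_1[I]$, whereas the paper's bound $n(Ct^{-1/4})^n$ tacitly uses an $O(t^{-1/4})$ operator bound.

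Two details of your sketch should be tightened. First, the bound on $\partial_x\hat M^{\pm1}$ is needed uniformly up to $\Gamma^{(3)}$, not just away from $\hat\Gamma$, because the sectors $V_j$ carrying $W$ abut the contour; the paper uses $\|\partial_x\hat M^{\pm1}\|_{L^\infty}\le Ct^{-1/2}$. Second, for large $|k|$ on $\{|s|>|k|/2\}\cap V_1$, it is the factor ${\rm e}^{-2\sqrt{3}tv(u-k_0)}\le{\rm e}^{-ctv|k|}$ that absorbs $|k|$; the Gaussian ${\rm e}^{-ctv^2}$ alone leaves a bound growing with $|k|$.
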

		
		\begin{proof}
	Based on equation \eqref{ERsolve} and the conclusion of Lemma \ref{lem_S1_smallnorm}, we can obtain
			\begin{align*}
				E^R=(\textbf{1}-{\rm{S}}_1)^{-1}I=I+\sum_{n=0}^{\infty}{\rm{S}}_1^n[I].
			\end{align*}
			For $n\geq1$, thanks to equation \eqref{parxS1} in Lemma \ref{lem_S1_smallnorm}, the following equation holds for all positive integers $n$ and sufficiently large $t$:
			\begin{align*}
				\partial_x\left(  {\rm{S}}_1^n[I]\right)  ={\rm{S}}_{1x}\left(  {\rm{S}}_1^{n-1}[I] \right)  +{\rm{S}}_1\left(  \partial_x\left(  {\rm{S}}_1^{n-1}[I] \right)  \right)\leq n\left(\frac{C}{t^{1/4}} \right)^n.
			\end{align*}
			
		In summary as $t \to \infty$,
			\begin{align*}
				\sum_{n=0}^{\infty}\partial_x\left(  {\rm{S}}_1^n[I]\right)(x,t,k)\leq \sum_{n=0}^{\infty}n\left(\frac{C}{t^{1/4}} \right)^n=Ct^{-1/4} ,
			\end{align*}
			converges uniformly in the region $\zeta\in\mathcal{I}$, and holds
			\begin{align*}
				\left\|\partial_x E^R(x,t,\cdot)\right\|_{L^\infty}=\left\|\partial_x \left(\sum_{n=0}^{\infty}{\rm{S}}_1^n[I] (x,t,\cdot)\right) \right\|_{L^\infty}=\left\| \sum_{n=0}^{\infty}\partial_x\left(  {\rm{S}}_1^n[I]\right)(x,t,\cdot)\right\|_{L^\infty}\leq Ct^{-1/4}.
			\end{align*}

		\end{proof}

		Further expanding equation \eqref{ERexpress} in the region $V_j$ $(j=1,2,\cdots,6)$ into
		\begin{equation*}
			E^R(k)=I+\frac{E^R_1}{k}+\mathcal{O}\left( \frac{1}{k^2}\right),
		\end{equation*}
		where
		\begin{equation*}
			E^R_1=\frac{1}{\pi}\iint_{V_j}E^R(s)W(s)\rd A(s).
		\end{equation*}

	\begin{lem}\label{lem_R_error}
			For sufficiently large $t$ and $\zeta\in \mathcal{I}$, there exists a positive constant $C$ such that the following two estimates hold:
				\begin{align}
					&\left|E^R_1 \right| \leq Ct^{-3/4},\label{E1R}\\
					&\left|\partial_xE^R_1 \right| \leq Ct^{-3/4}.\label{parxE1R}
			\end{align}
		\end{lem}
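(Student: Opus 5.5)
Since $E^R=I+\mathcal{O}(t^{-1/4})$ in $L^\infty$ (Lemma \ref{lem_S1_smallnorm}) and $\hat M^{\pm1}$ are uniformly bounded off $B_\epsilon$, I would bound
\begin{equation*}
\left|E^R_1\right|\leq \frac{1}{\pi}\left\|E^R\right\|_{L^\infty}\left\|\hat M\right\|_{L^\infty}\left\|\hat M^{-1}\right\|_{L^\infty}\sum_{j}\iint_{V_j}\left|\bar\partial R^{(3)}(s)\right|\rd A(s).
\end{equation*}
By the symmetry $\bar\partial R^{(3)}_j(k)=\mathcal{A}^{-1}\bar\partial R^{(3)}_{j-8}(\omega^2k)\mathcal{A}$ and conjugation symmetry, it suffices to treat $V_1$, $V_2$, $V_5$, exactly as in Lemma \ref{lem_S1_smallnorm}. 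The difference from that lemma is that there is now no Cauchy kernel $|s-k|^{-1}$, so Hölder is applied directly to the area integral. This is what upgrades $t^{-1/4}$ to $t^{-3/4}$.

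\textbf{Region $V_1$.} Write $s=u+\ri v$ with $u>k_0+v$. Use $|\bar\partial R_1|\le c|s-k_0|^{-1/2}+c|\rho_1'(u)|$ and ${\rm Re}\,t\Phi_{21}=-2\sqrt3 tv(u-k_0)$. The $\rho_1'$ term gives, by Cauchy--Schwarz in $u$,
\begin{equation*}
\int_0^\infty\|\rho_1'\|_{L^2}\left(\int_{v+k_0}^\infty {\rm e}^{-4\sqrt3tv(u-k_0)}\rd u\right)^{1/2}\rd v\le C\int_0^\infty (tv)^{-1/2}{\rm e}^{-2\sqrt3tv^2}\rd v\le Ct^{-3/4}.
\end{equation*}
For the singular term, take $p\in(2,4)$ and Hölder conjugate $q$. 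We have $\|(s-k_0)^{-1/2}\|_{L^p_u(v+k_0,\infty)}\le Cv^{1/p-1/2}$ and $\|{\rm e}^{-2\sqrt3tv(u-k_0)}\|_{L^q_u}\le C(tv)^{-1/q}{\rm e}^{-2\sqrt3tv^2}$. Integrating in $v$ and rescaling $v=t^{-1/2}w$ gives $t^{-1/q}\cdot t^{-(1/p-1/2+1-1/q)/2}=t^{-3/4}$.

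\textbf{Regions $V_2$ and $V_5$.} In $V_2$ the entries involve ${\rm e}^{-t\Phi_{21}}$, with decay ${\rm e}^{-2\sqrt3tv(k_0-u)}$, and ${\rm e}^{-t\Phi_{31}}$. The latter is strictly exponentially small on the bounded set $V_2$ away from $k_0$; near $k_0$ the $\Phi_{21}$-type decay dominates, so the same Hölder computation applies with the roles of $u-k_0$ and $k_0-u$ swapped. Region $V_5$ is bounded and lies at distance $\gtrsim k_0$ from $k_0$, where ${\rm Re}(-t\Phi_{21})\le -ctv$ and ${\rm Re}(-t\Phi_{31})\le -ct v$. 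The bound $|\bar\partial R_5|\le c|s|^{-1/2}+c|r_1'|$ then gives a contribution $\mathcal{O}(t^{-1})$ (even $\mathcal O(t^{-1/2}\cdot t^{-1/2})$ from the $v$-integration), which is better than needed. The $\bar\partial\alpha_j$ terms are handled identically, since $\bar\partial\alpha_j=r_2^*(0)\bar\partial R_j$.

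\textbf{The $x$-derivative, \eqref{parxE1R}.} Differentiate:
\begin{equation*}
\partial_xE^R_1=\frac{1}{\pi}\iint\left(\partial_xE^R\,W+E^R\,\partial_xW\right)\rd A.
\end{equation*}
The first term is bounded by Lemma \ref{lem_parxER} times the $t^{-3/4}$ area bound, giving $t^{-1}$. For $\partial_xW$, the derivative hits $\hat M^{\pm1}$ (bounded, by Lemma \ref{lem_nu}), the $\delta$-ratios ($\mathcal O(t^{-1}\ln)$ by Proposition \ref{prop_delta}), $\bar\partial R_j$ ($\mathcal O(1/t)$ in $L^2_u$, as computed in Lemma \ref{lem_S1_smallnorm}), and the exponential. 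The last produces a factor $\partial_x(t\Phi_{21})=(l_2-l_1)$, which is linear in $s$.

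\textbf{Main obstacle.} I expect the main obstacle to be this unbounded factor $|s|$ on the unbounded region $V_1$. To control it, I would write $|s|\le|s-k_0|+k_0$. The extra $|s-k_0|$ combines with ${\rm e}^{-2\sqrt3tv(u-k_0)}$ to give
\begin{equation*}
\int_{v+k_0}^\infty (u-k_0)^{a}\,{\rm e}^{-ctv(u-k_0)}\rd u\lesssim (tv)^{-1-a}.
\end{equation*}
Redoing the Hölder estimate with this weight loses at most powers of $(tv)^{-1}$. These are compensated by ${\rm e}^{-ctv^2}$ and the region $u-k_0\ge v$, which yields again $\mathcal{O}(t^{-3/4})$; I would check carefully that the $v$-integral stays convergent near $v=0$. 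Alternatively, the $H^{3,4}$ weight on $r_1$, which gives $s\rho_1'\in L^2$, would absorb the factor $s$ in the $\rho_1'$ term.
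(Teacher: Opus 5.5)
\textbf{Gap in the $x$-derivative estimate \eqref{parxE1R}.} Your proof of \eqref{E1R} is correct and is the paper's argument: the same reduction to $V_1,V_2,V_5$, H\"older with $2<p<4$ for the $|s-k_0|^{-1/2}$ part, and Cauchy--Schwarz for the $\rho_1'$ part. Handling the $\rho_1'$ part of the derivative through $u\rho_1'\in L^2$ also matches the paper's $I_9$.

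The gap is the singular part of the term produced by $\partial_x{\rm e}^{t\Phi_{21}}$ in the unbounded sectors $V_1,V_4$ and their rotations. With $|\bar\partial R_1|\le c|s-k_0|^{-1/2}+c|\rho_1'|$, write $w=u-k_0$, so that $V_1=\{w>v>0\}$. Then
\begin{equation*}
\iint_{V_1}|s|\,|s-k_0|^{-1/2}{\rm e}^{-2\sqrt3tv(u-k_0)}\rd A(s)\geq c\int_0^\infty\rd v\int_v^\infty w^{1/2}{\rm e}^{-2\sqrt3tvw}\rd w ,
\end{equation*}
and the inner integral equals
\begin{equation*}
(2\sqrt3tv)^{-3/2}\int_{2\sqrt3tv^2}^\infty\sqrt{y}\,{\rm e}^{-y}\rd y\sim c\,(tv)^{-3/2}\quad\text{as } v\to0 .
\end{equation*}
This is not integrable at $v=0$, and the factor ${\rm e}^{-ctv^2}$ helps only for large $v$. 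So the integral is $+\infty$. Splitting $|s|\le|s-k_0|+k_0$ and redoing H\"older cannot repair this: the $v$-integral diverges exactly at the point you said you would check. The paper's $I_8$ has the same defect. Its passage from $t^{-3/2}\int_0^\infty v^{-3/2}\rd v\int_{2\sqrt3tv^2}^\infty\sqrt{y}\,{\rm e}^{-y}\rd y$ to $t^{-1}\int_0^\infty v^{-1/2}{\rm e}^{-2\sqrt3tv^2}\rd v$ drops a boundary term at $v=0$ that is infinite.

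\textbf{The fix.} The bound $c|s-k_0|^{-1/2}$ is far too crude away from $k_0$, because it discards the angular factor. Return to the exact formula, with $\rho=|s-k_0|$:
\begin{equation*}
\bar\partial R_1=-\frac{\ri{\rm e}^{\ri\phi_1}}{\rho}\sin(2\phi_1)\left[\rho_1^*({\rm Re}\,s)-f_1(s)\right]+\frac12\cos(2\phi_1)\rho_1^{*\prime}({\rm Re}\,s).
\end{equation*}
Both $\rho_1^*$ and $f_1$ are bounded on $V_1$, and $\sin(2\phi_1)\le2\sin\phi_1=2v/\rho$. Hence on $\{\rho\ge1\}\cap V_1$ you have $|\bar\partial R_1|\le Cv\rho^{-2}+c|\rho_1'(u)|$. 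There $w\ge1/\sqrt2$ and $|s|\le Cw$, so $|s|\,|\bar\partial R_1|\le Cv/w+C|s||\rho_1'(u)|$, and
\begin{equation*}
\iint_{\{\rho\geq1\}\cap V_1}\frac{v}{w}\,{\rm e}^{-2\sqrt3tvw}\rd v\,\rd w\leq\int_{1/\sqrt2}^\infty\frac{1}{w}\int_0^\infty v\,{\rm e}^{-2\sqrt3tvw}\rd v\,\rd w=\int_{1/\sqrt2}^\infty\frac{\rd w}{12t^2w^3}=\mathcal{O}(t^{-2}).
\end{equation*}
On $\{\rho\le1\}$ the factor $|s|\le k_0+1$ is bounded, so your $t^{-3/4}$ H\"older estimate applies unchanged. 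The $\rho_1'$ part is handled as you proposed, using $|s|\le\sqrt2u$ on $V_1$ and $u\rho_1'\in L^2$. Alternatively, close the lenses at finite distance so that $V_1$ is bounded.

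With this modification \eqref{parxE1R} goes through; as proposed, it does not.
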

		\begin{proof}
			The proof can be referred to Lemma \ref{lem_G_error} and Lemma \ref{lem_S1_smallnorm}. Here, we only provide the proof for the case of $V_1$. Let $k=\alpha+\ri\beta$ and  $s=u+\ri v$ $(u>k_0,v>0)$, then the calculation yields
			\begin{align*}
				\left| E^R_1\right| &\leq \frac{1}{\pi}\iint_{V_1} \left|E^R(s) \hat M(s)\bar{\partial}R_1^{(3)}(s)\hat M^{-1}(s) \right|\rd A(s) \\
				&\leq \frac{1}{\pi}\left\| E^R \right\|_{L^\infty(V_1)}\left\|\hat M \right\|_{L^\infty(V_1)}\left\| \hat M^{-1} \right\|_{L^\infty(V_1)}\iint_{V_1}{\left|\bar{\partial}R_1(k)\frac{\delta_{1+}^2}{\delta_2\delta_3}{\rm{e}}^{t \Phi_{21}} \right| }\rd A(s)\\
				&\leq C \iint_{V_1} \left|s-k_0 \right| ^{-1/2}
				{\rm{e}}^{-2\sqrt{3}tv (u-k_0)}\rd A(s)+C \iint_{V_1}\left|\rho'_1({\rm{Re}}\,s) \right|
				{\rm{e}}^{-2\sqrt{3}tv (u-k_0)}\rd A(s)\\
				&:=I_6+I_7.
			\end{align*}
			Using the H\"older's inequality with $2<p<4$ and ${1}/{p}+{1}/{q}=1$, we have
			\begin{align*}
				\left| I_6\right|&\leq   C\int_{0}^{\infty}v^{1/p-1/2}\left( \int_{v+k_0}^{\infty}{\rm{e}}^{-2\sqrt{3}qtv (u-k_0)}\rd u\right)^{1/q} \rd v\\
				&\leq C t^{-1/q} \int_{0}^{\infty}v^{2/p-3/2}{\rm{e}}^{-2\sqrt{3}tv^2} \rd v\\
				&\leq Ct^{-3/4}.
			\end{align*}
			And using the Cauchy-Schwarz inequality
			\begin{align*}
				\left| I_7\right|&\leq   C\int_{0}^{\infty}\left\|\rho'_1 \right\|_{L_u^2\left(v+k_0,\infty\right) }\left( \int_{v+k_0}^{\infty}{\rm{e}}^{-4\sqrt{3}tv (u-k_0)}\rd u\right)^{1/2} \rd v\\
				&\leq C t^{-1/2} \int_{0}^{\infty}\frac{{\rm{e}}^{-2\sqrt{3}tv^2}}{\sqrt{v}} \rd v\\
				&\leq Ct^{-3/4}.
			\end{align*}
			
		When proving the second formula, Lemma \ref{lem_S1_smallnorm} and \ref{lem_parxER} are required. The detailed process is as follows
			\begin{align*}
				\left|\partial_x E^R_1\right| &\leq \frac{1}{\pi}\iint_{V_1} \left|\partial_x  \left( E^R(s)W(s) \right) \right|\rd A(s) \\
				&\leq \frac{1}{\pi}\left\| \partial_xE^R \right\|_{L^\infty(V_1)}\iint_{V_1}{\left|W(s) \right| }\rd A(s)+\frac{1}{\pi}\left\|E^R \right\|_{L^\infty(V_1)}\iint_{V_1}{\left|\partial_xW(s) \right| }\rd A(s)\\
				&\leq C  t^{-1/4} t^{-3/4}+C \iint_{V_1}\max\limits_{i,j=1,2,3}\left|\partial_xW_{ij}(x,t,s) \right| \rd A(s) \\
				&\leq C t^{-1}+  \left\| \frac{\delta_{1+}^2}{\delta_2\delta_3} \right\|_{L^\infty(V_1)}\iint_{V_1}\left|\partial_x\left( \bar{\partial}R_1(s){\rm{e}}^{t \Phi_{21}}\right)  \right|  \rd A(s) \\
				&\quad +  \left\|\partial_x \frac{\delta_{1+}^2}{\delta_2\delta_3} \right\|_{L^\infty(V_1)}\iint_{V_1}\left| \bar{\partial}R_1(s){\rm{e}}^{t \Phi_{21}} \right|  \rd A(s) \\
				& \leq  C t^{-1}+C\iint_{V_1}\left|\partial_x\left( \bar{\partial}R_1(s)\right) {\rm{e}}^{t \Phi_{21}} \right|+\left|s\bar{\partial}R_1(s){\rm{e}}^{t \Phi_{21}}\right|   \rd A(s)\\
				&\quad +\frac{C}{t}\iint_{V_1}\left| \bar{\partial}R_1(s){\rm{e}}^{t \Phi_{21}} \right|  \rd A(s)\\
				& \leq  C t^{-1}+ Ct^{-7/4}+C\iint_{V_1}\left|s\bar{\partial}R_1(s){\rm{e}}^{t \Phi_{21}}\right|   \rd A(s),
			\end{align*}
				where
			\begin{align*}
				\iint_{V_1}&\left|s\bar{\partial}R_1(s){\rm{e}}^{t \Phi_{21}}\right|  \rd A(s)\\
				&\leq C \iint_{V_1} \left| s\right| \left|s-k_0 \right| ^{-1/2}
				{\rm{e}}^{-2\sqrt{3}tv (u-k_0)}\rd A(s)+C \iint_{V_1}\left| s\right|\left|\rho'_1(u) \right|
				{\rm{e}}^{-2\sqrt{3}tv (u-k_0)}\rd A(s)\\
				&:=I_8+I_9.
			\end{align*}
		 The calculation is similar to the one in the theorem mentioned above:
			\begin{align*}
				I_8&= C\int_{0}^{\infty}\rd v\int_{v+k_0}^{\infty}\frac{\sqrt{u^2+v^2}}{\sqrt[4]{(u-k_0)^2+v^2}}{\rm{e}}^{-2\sqrt{3}tv (u-k_0)}\rd u\\
				&=C\int_{0}^{\infty}\rd v\int_{v}^{\infty}\frac{\sqrt{(w+k_0)^2+v^2}}{\sqrt[4]{w^2+v^2}}{\rm{e}}^{-2\sqrt{3}tvw}\rd w\\
				&\leq C\int_{0}^{\infty}\rd v\int_{v}^{\infty}\sqrt{w}{\rm{e}}^{-2\sqrt{3}tvw}\rd w+C\int_{0}^{\infty}\rd v\int_{v}^{\infty}\frac{{\rm{e}}^{-2\sqrt{3}tvw}}{\sqrt{w}}\rd w\\
				&\leq C t^{-3/2}\int_{0}^{\infty}v^{-3/2}\rd v\int_{2\sqrt{3}tv^2}^{\infty}\sqrt{s}e^{-s}\rd s+C \int_{0}^{\infty}\rd v\int_{v}^{\infty}{\rm{e}}^{-2\sqrt{3}tvw}\rd \sqrt{w}\\
				&\leq C t^{-1}\int_{0}^{\infty}v^{-1/2}e^{-2\sqrt{3}tv^2}\rd v+C\int_{0}^{\infty}\sqrt{v}e^{-2\sqrt{3}tv^2}\rd v+Ct\int_{0}^{\infty}v\rd v\int_{v}^{\infty}\sqrt{w}{\rm{e}}^{-2\sqrt{3}tvw}\rd w\\
				&\leq Ct^{-5/4}\int_{0}^{\infty}y^{-1/2}e^{-2\sqrt{3}y^2}\rd y+Ct^{-3/4}\int_{0}^{\infty}\sqrt{y}{\rm{e}}^{-2\sqrt{3}y^2}\rd y\\
				&\quad+Ct^{-1/2}\int_{0}^{\infty}v^{-1/2}\rd v\int_{2\sqrt{3}tv^2}^{\infty}\sqrt{s}{\rm{e}}^{-s}\rd s\\
				&\leq Ct^{-3/4},
			\end{align*}
			and
			\begin{align*}
				I_9&= C \iint_{V_1}\left| s\right|\left|\rho'_1(u) \right|
				{\rm{e}}^{-2\sqrt{3}tv (u-k_0)}\rd A(s)\\
				&\leq C \int_{0}^{\infty}\rd v\int_{v+k_0}^{\infty}u\left|\rho'_1(u) \right|
				{\rm{e}}^{-2\sqrt{3}tv (u-k_0)}\rd u\\
				&\leq C \left\|u\rho'_1(u) \right\|_{L^2(0,\infty)}\int_{0}^{\infty}\left( \int_{v+k_0}^{\infty}
				{\rm{e}}^{-4\sqrt{3}tv (u-k_0)}\rd u\right) ^{1/2}\rd v\\
				&\leq C t^{-1/2}\int_{0}^{\infty}v^{-1/2}{\rm{e}}^{-4\sqrt{3}tv^2}\rd v\\
				&\leq C t^{-3/4}.
			\end{align*}
			Combining all the above estimates, we can get for $k\in V_1$
			\begin{align*}
				\left|\partial_x E^R_1\right|\leq \frac{C}{t^{3/4}},
			\end{align*}
			and finally obtain equations \eqref{E1R} and \eqref{parxE1R} in the region $V_1$.
			The proofs for other regions only require some modifications of the integration intervals in the above process, which will not be elaborated here to save space. The proof of this lemma is now complete.

		\end{proof}
		
		\subsection{The $\bar\partial$ problem \ref{DbarEG}: $E^{G}$}
		\ \ \ \
		For the $\bar{\partial}$ problem \ref{DbarEG}, the treatment process is similar to that of the previous subsection. Moreover, since the region of $G^{(1)}$ in the construction of transformation \eqref{trans1} satisfies certain symmetries, it is only necessary to prove the case for the $D_1$ region in Fig. \ref{figGamma} during the proof process.

		From equation \eqref{M1factorisation}, we obtain the following matrix-valued function
		\begin{equation*}
			E^G(x,t,k)=M^{(1)}(x,t,k)M^{RH}(x,t,k)^{-1},
		\end{equation*}
		and $\bar\partial$ problem \ref{DbarEG} is equivalent to the  integral equation
		\begin{equation}\label{EGexpress}
			E^G(x,t,k) =  I - \frac{1}{\pi} \iint_{\mathbb{C}} \frac{E^G(s) Y(s)}{s - k} \rd A(s).
		\end{equation}
	The above equation can be written in operator notation as follows
		\begin{equation}\label{EGsolve}
			(\textbf{1}-{\rm{S}}_2)\left[ E^G(k)\right]=I,
		\end{equation}
		where
		\begin{equation*}
			{\rm{S}}_2\left[ f\right] (k)=-\frac{1}{\pi} \iint_{\mathbb{C}} \frac{f(s) Y(s)}{s - k} \rd A(s).
		\end{equation*}
		
		\begin{lem}\label{lem_S2_smallnorm}
			 {For $r_2(k)\in H^{3,4}(-\infty,0)$,} there exists a constant $C$ such that for $t>0$, the operator ${\rm{S}}_2$ satisfies the following estimates for $k\in\mathbb{C}\setminus B_\epsilon$
			\begin{equation}\label{parxS2}
				\begin{aligned}
					&\left\|{\rm{S}}_2(x,t,\cdot) \right\|_{L^\infty\to L^\infty}\leq C t^{-1/4},\\
					&\left\| \partial_x{\rm{S}}_2[I] \right\|_{L^\infty}\leq C t^{-1/4}.
				\end{aligned}
			\end{equation}
		\end{lem}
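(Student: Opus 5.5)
The plan mirrors the proof of Lemma \ref{lem_S1_smallnorm}, now in the regions $D_j$ of Fig. \ref{figGamma}. By the symmetries $G_j(k)=G_{j+2}(\omega k)$, $G_j(k)=G_{j-2}(\omega^2k)$ and the corresponding conjugation by $\mathcal{A}$, it suffices to treat $k$-integration over $D_1$. There $\bar\partial G^{(1)}_1$ has the single nonzero entry $-\bar\partial G_1(k){\rm{e}}^{-t\Phi_{32}}$ by \eqref{DbarG1}. We use $\|M^{RH}\|_{L^\infty},\|(M^{RH})^{-1}\|_{L^\infty}\le C$ outside $B_\epsilon$. This comes from $M^{RH}=E^R\hat M(R^{(3)})^{-1}\Delta^{-1}$ together with Lemmas \ref{R_bounded}, \ref{lem_S1_smallnorm} and Proposition \ref{prop_delta}. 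With it we get
\begin{equation*}
\left|{\rm{S}}_2[f](k)\right|\le C\|f\|_{L^\infty}\iint_{D_1}\frac{\left(|s|^{-1/2}+|r_2'({\rm{Re}}\,\omega s)|\right){\rm{e}}^{-t{\rm{Re}}\,\Phi_{32}(\zeta,s)}}{|s-k|}\rd A(s),
\end{equation*}
where Proposition \ref{prop_G_exist} supplies the bound on $\bar\partial G_1$.

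The key input is the sign of ${\rm{Re}}\,\Phi_{32}$ on $D_1$. From the proof of Lemma \ref{lem_G_bounded}, for $s=u+\ri v$ with $0<v<\sqrt3u$ we have $t\,{\rm{Re}}\,\Phi_{32}=\frac{\sqrt3t}{2}(u+\sqrt3v+2k_0)(\sqrt3u-v)$. Unlike the $V_j$ case, the phase has no stationary point in $\bar D_1$ here. The factor $u+\sqrt3v+2k_0\ge 2k_0>0$ is bounded below uniformly for $\zeta\in\mathcal I$. So the only degeneracy is along the boundary ray $\Gamma_2$ ($v=\sqrt3u$), where the weight is ${\rm{e}}^{-ct\,{\rm{dist}}(s,\Gamma_2)(1+|s|)}$.

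We rotate coordinates so that $\Gamma_2$ becomes the real axis. Write $s=\omega(a+\ri b)$ with $a>0$ and $b\in(-\frac{\sqrt3}{2}a,0)$; then ${\rm{Re}}\,\Phi_{32}\ge c|b|(a+k_0)\ge ck_0|b|$. We split into $I_1$ (the $|s|^{-1/2}$ term) and $I_2$ (the $r_2'$ term) and estimate as for $I_1,I_2$ in Lemma \ref{lem_S1_smallnorm}. For $I_1$, Hölder in $a$ with $p>2$ gives $\||s|^{-1/2}\|_{L^p_a}\lesssim |b|^{1/p-1/2}$ and $\|(s-k)^{-1}\|_{L^q_a}\lesssim |b-\beta|^{1/q-1}$. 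For $I_2$, Cauchy--Schwarz uses $\|r_2'\|_{L^2}<\infty$ and $\|(s-k)^{-1}\|_{L^2_a}\lesssim|b-\beta|^{-1/2}$. The remaining $b$-integrals carry the exponential factor ${\rm{e}}^{-ctk_0|b|}$. That is only linear decay, not Gaussian, and it gives at least $t^{-1/2}\le Ct^{-1/4}$.

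For the second estimate we differentiate $Y=M^{RH}\bar\partial G^{(1)}(M^{RH})^{-1}$ in $x$. The $\partial_x$ can fall on the exponential, giving a factor $|s|$ after using $\partial_x(t\Phi_{32})=(\omega^2-\omega)s$. It can fall on $G_1$, which is $x$-independent since $G_j$ is built from $r_2$ and constants. Or it can fall on $M^{RH\,\pm1}$, which requires $\|\partial_xM^{RH}\|_{L^\infty}\le C$ outside $B_\epsilon$. That bound comes from Lemmas \ref{lem_parxER}, \ref{lem_nu} and \ref{R_bounded} and Proposition \ref{prop_delta}. The factor $|s|$ is handled by $|s|\le|s-k|+|k|$, as in $I_4$. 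The piece $\iint_{D_1}|\bar\partial G_1|{\rm{e}}^{-t{\rm{Re}}\,\Phi_{32}}\rd A$ is bounded via the weight $a+k_0$ in the exponent, which absorbs $|s|$ at large $a$. For the $r_2'$ piece we use that $kr_2\in L^2$, as the paper remarks.

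The main obstacle is the region near $k=0$ and along $\Gamma_2$. There $|s|^{-1/2}$ is singular and the exponent $(a+k_0)|b|$ gives no help in $a$. Large $|k|$ is a further difficulty in the $\partial_x$ estimate, since the extra factor $|s|$ appears. I would first try to handle it with the uniform lower bound $a+k_0\ge k_0$ and the Hölder splitting above. If the $|k|$ term resists, I would restrict to $k\notin B_\epsilon$ and use boundedness of $\partial_x M^{RH}$ as stated.
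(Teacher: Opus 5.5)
Your strategy is the paper's. The paper works in $D_3$, bounds ${\rm e}^{-t\Phi_{21}}$ by ${\rm e}^{-2tv^2}$ and uses the same H\"older/Cauchy--Schwarz splitting. For $\partial_x{\rm S}_2[I]$ it uses the same product rule over $M^{RH}=E^R\hat M(R^{(3)})^{-1}\Delta^{-1}$, the $x$-independence of $\bar\partial G_j$, and $|s|\le|s-k|+|k|$. Working in $D_1$ with the linear decay ${\rm e}^{-ctk_0|b|}$ instead of a Gaussian is a legitimate variant; it even gives $t^{-1/2}$, since $k_0$ is bounded below on $\mathcal I$.

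Your coordinates, however, are wrong. With $s=\omega(a+\ri b)$ and $-\frac{\sqrt3}{2}a<b<0$ you get $\arg s\in(\frac{2\pi}{3}-\arctan\frac{\sqrt3}{2},\frac{2\pi}{3})\subset D_2$, where ${\rm Re}\,\Phi_{32}<0$. To send $\Gamma_2$ to the real axis, take $s={\rm e}^{\ri\pi/3}(a+\ri b)$ with $-\sqrt3a<b<0$. Then ${\rm Re}\,\Phi_{32}=2\sqrt3(a+k_0)|b|$ exactly, and $\omega s=-(a+\ri b)$. So $|r_2'({\rm Re}\,\omega s)|=|r_2'(-a)|$ depends on $a$ alone, which is what makes your Cauchy--Schwarz step in $a$ against $\|r_2'\|_{L^2(-\infty,0)}$ valid.

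The \emph{genuine gap} is the large-$|k|$ part of the $\partial_x$ estimate, which must hold uniformly for $k\in\mathbb C\setminus B_\epsilon$. After $|s|\le|s-k|+|k|$ you are left with $|k|\iint_{D_1}|\bar\partial G_1|\,{\rm e}^{-t{\rm Re}\,\Phi_{32}}|s-k|^{-1}\rd A(s)$, and neither of your fallbacks controls it. The bound $a+k_0\ge k_0$ cannot absorb a growing factor, and excluding $B_\epsilon$ says nothing about $|k|\to\infty$. The paper's ``$C\max\{\cdots\}$'' step simply drops the factor $|k|$, so you cannot borrow it.

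Avoid the split instead. On $D_1$, $|b|<\sqrt3a$ gives $|s|\le2a$, hence $|s|\,|\bar\partial G_1(s)|\lesssim a^{1/2}+a|r_2'(-a)|$.
\begin{itemize}
\item For the first term, H\"older in $a$ with $p>4$ gives $\|a^{1/2}{\rm e}^{-2\sqrt3ta|b|}\|_{L^p_a}\lesssim(t|b|)^{-1/2-1/p}$. Integrating in $b$ against ${\rm e}^{-2\sqrt3tk_0|b|}|b-\beta|^{-1/p}$ then gives $O(t^{-1+1/p})$.
\item For the second term, Cauchy--Schwarz with $\|a\,r_2'(-a)\|_{L^2(0,\infty)}$ gives $O(t^{-1/2})$. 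What is needed here is $kr_2'\in L^2$, not $kr_2\in L^2$; it follows from $r_2\in H^{3,4}$.
\end{itemize}
Both bounds are uniform in $k$.

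Two smaller points. First, $\partial_x(t\Phi_{32})=(1-\omega^2)s$; this has the same modulus as what you wrote, so it is harmless. Second, the real difficulty near $s=0$ is not the integrable weight $|s|^{-1/2}$. It is the uniform bound on $(M^{RH})^{\pm1}$ that you take over from the paper. Since $M^{(1)}=MG^{(1)}$ carries the double pole of $M$ at $k=0$ and $E^G$ is bounded, $M^{RH}$ cannot be bounded near the origin. That step is therefore not justified in a neighbourhood of $0$, in your argument as in the paper's.
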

		
		\begin{proof}
			We will only prove the case within region $D_3$, with the proofs for the other regions being similar.
			Let $f\in L^\infty(D_3)$, $k=\alpha+\ri\beta$ and  $s=u+\ri v$ $(u<0,v>0)$,
			Then according to the third condition of $\bar{\partial}$ problem \ref{DbarEG}, we obtain
			\begin{align*}
				\left| {\rm{S}}_2[f](k)\right| &\leq \iint_{D_3}\frac{\left|f(s)\hat{M}(s)\bar{\partial}G^{(1)}_3(s)\hat{M}^{-1}(s) \right| }{\left|s-k \right| }\rd A(s)\\
				&\leq \left\| f \right\|_{L^\infty(D_3)}\left\|\hat{M} \right\|_{L^\infty(D_3)}\left\| \hat{M}^{-1} \right\|_{L^\infty(D_3)}\iint_{D_3}\frac{\left|\bar{\partial}G_3(s){\rm{e}}^{-t\Phi_{21}} \right| }{\left|s-k \right| }\rd A(s)\\
				&\leq  \iint_{D_3}\frac{\left( c \left|s \right| ^{-1/2}+c \left|r'_2({\rm{Re}}\,s) \right|\right)   {\rm{e}}^{2\sqrt{3}tv (u-k_0)}  }{\left|s-k \right| }\rd A(s)\\
				&:=I'_1+I'_2.
			\end{align*}
			Below, we estimate $I'_1$ and $I'_2$ separately. For $p>2$ and ${1}/{p}+{1}/{q}=1$
			\begin{align*}
				I'_1 &= \iint_{D_3}\frac{ c \left|s \right| ^{-1/2} {\rm{e}}^{2\sqrt{3}tv (u-k_0)}  }{\left|s-k \right| }\rd A(s)\\
				&\leq C\int_{0}^{\infty}{\rm{e}}^{-2tv^2}   \int_{-\infty}^{-\frac{\sqrt{3}}{3}v}\frac{1}{\left|s \right| ^{1/2}\left|s-k \right|}\rd u\rd v\\
				&\leq C\int_{0}^{\infty}{\rm{e}}^{-2tv^2}\left\|s^{-1/2} \right\|_{L_u^p\left( -\infty,-\frac{\sqrt{3}}{3}v\right) } \left\|(s-k)^{-1} \right\|_{L_u^q\left( -\infty,-\frac{\sqrt{3}}{3}v\right) }\rd v\\
				&\leq C \int_{0}^{\infty}{\rm{e}}^{-2tv^2}v^{1/p-1/2} \left| v-\beta\right|^{1/q-1}\rd v\\
				&\leq Ct^{-1/4},
			\end{align*}
			and
			\begin{align*}
				I'_2 &= \iint_{D_3}\frac{ c \left|r'_2({\rm{Re}}\,s) \right| {\rm{e}}^{2\sqrt{3}tv (u-k_0)}  }{\left|s-k \right| }\rd A(s)\\
				&\leq C\int_{0}^{\infty}{\rm{e}}^{-2tv^2}   \int_{-\infty}^{-\frac{\sqrt{3}}{3}v}\frac{\left|r'_2({\rm{Re}}\,s) \right|}{\left|s-k \right|}\rd u\rd v\\
				&\leq C\left\|r_2' \right\|_{L_u^2(-\infty,0)} \int_{0}^{\infty}{\rm{e}}^{-2tv^2} \left\|(s-k)^{-1} \right\|_{L_u^2\left( -\infty,-\frac{\sqrt{3}}{3}v\right) }\rd v\\
				&\leq  Ct^{-1/4}.
			\end{align*}
			Thus the proof of $\left\|{\rm{S}}_2 \right\|_{L^\infty(D_3)\to L^\infty(D_3)}\leq C t^{-1/4}$ is completed.
			According to Proposition \ref{prop_G_exist}, both $G_3$ and $\bar \partial G_3$ are independent of the variables $x$ and $t$, which means
				\begin{equation*}
					\partial_x\left( \bar \partial G_3(k)\right)=0.
				\end{equation*}
			Before proceeding with the next step of the proof, we first state several conclusions:
			\begin{align*}
				&\left\| \hat{M}^{\pm1}\right\|_{L^\infty}<\infty, \qquad\quad \left\| \Delta^{\pm1}\right\|_{L^\infty}<\infty,\qquad\,
				\left\|\left(  E^R\right) ^{\pm1}\right\|_{L^\infty}<\infty, \qquad\quad \,\left\|\left(  R^{(3)}\right) ^{\pm1}\right\|_{L^\infty}<\infty,\\
				&\left\| \partial_x\hat{M}^{\pm1} \right\|_{L^\infty}\leq Ct^{-1/2},\, \left\| \partial_x\Delta^{\pm1}\right\|_{L^\infty}\leq Ct^{-1},
				\,\left\| \partial_x\left( E^R\right) ^{\pm1}\right\|_{L^\infty}\leq Ct^{-1/4},\, \left\| \partial_x\left(  R^{(3)}\right) ^{\pm1}\right\|_{L^\infty}<C{\rm{e}}^{-ct},
			\end{align*}
			which can be obtained from subsection \ref{sub443}, Proposition \ref{prop_delta}, Lemmas \ref{lem_R_error} and \ref{R_bounded}.
			Then, as $t\to\infty$, calculate  the following expression
			\begin{align*}
				\left\|\partial_x {\rm{S}}_2[I]\right\|_{ L^\infty} &\leq C \iint_{D_3}\frac{\left|\partial_x\left( \hat{M}\bar{\partial}G^{(1)}_3\hat{M}^{-1} \right)  \right| }{\left|s-k \right| }\rd A(s)\\
				&= C \iint_{D_3}\frac{ \left| \partial_x\left( E^R\hat{M}\left( R^{(3)}\right)^{-1}\Delta^{-1} \bar{\partial}G^{(1)}_3\Delta R^{(3)}\hat{M}^{-1}\left( E^R\right)^{-1} \right)\right|  }{\left|s-k \right| }\rd A(s)\\
				&\leq C \left\| \partial_x E^R\right\|_{L^\infty(D_3)} \iint_{D_3}\frac{  \left| \bar{\partial}G^{(1)}_3\right|  }{\left|s-k \right| }\rd A(s)+C \left\| \partial_x \hat{M}\right\|_{L^\infty(D_3)} \iint_{D_3}\frac{  \left| \bar{\partial}G^{(1)}_3 \right| }{\left|s-k \right| }\rd A(s)\\
				&\quad+C \left\| \partial_x R^{(3)}\right\|_{L^\infty(D_3)} \iint_{D_3}\frac{  \left| \bar{\partial}G^{(1)}_3 \right| }{\left|s-k \right| }\rd A(s)+C \left\| \partial_x\Delta\right\|_{L^\infty(D_3)} \iint_{D_3}\frac{  \left| \bar{\partial}G^{(1)}_3\right|  }{\left|s-k \right| }\rd A(s)\\
				&\quad +C\iint_{D_3}\frac{\left| s\right|   \left| \bar{\partial}G_3\right| {\rm{e}}^{2\sqrt{3}tv(u-k_0)} }{\left|s-k \right| }\rd A(s)\\
				&\leq Ct^{-1/2}+C\int_{0}^{\infty}\rd v\int_{-\infty}^{-\frac{\sqrt{3}}{3}v}\frac{\left| s\right|   \left| \bar{\partial}G_3\right| {\rm{e}}^{2\sqrt{3}tv(u-k_0)} }{\left|s-k \right| }\rd u.
			\end{align*}
			Denote
			\begin{align*}
				I'_3&:=\int_{0}^{\infty}\rd v\int_{-\infty}^{-\frac{\sqrt{3}}{3}v}\frac{\left| s\right|   \left| \bar{\partial}G_3\right| {\rm{e}}^{2\sqrt{3}tv(u-k_0)} }{\left|s-k \right| }\rd u\\
				&\leq C \int_{0}^{\infty}\rd v\int_{-\infty}^{-\frac{\sqrt{3}}{3}v}\ \left| \bar{\partial}G_3\right| {\rm{e}}^{2\sqrt{3}tv(u-k_0)} \rd u+C\int_{0}^{\infty}\rd v\int_{-\infty}^{-\frac{\sqrt{3}}{3}v}\frac{\left| k\right|   \left| \bar{\partial}G_3\right| {\rm{e}}^{2\sqrt{3}tv(u-k_0)} }{\left|s-k \right| }\rd u\\
				&\leq Ct^{-3/4}+C\max\left\lbrace \int_{0}^{\infty}\rd v\int_{-\infty}^{-\frac{\sqrt{3}}{3}v}  \left| \bar{\partial}G_3\right| {\rm{e}}^{2\sqrt{3}tv(u-k_0)} \rd u,\int_{0}^{\infty}\rd v\int_{-\infty}^{-\frac{\sqrt{3}}{3}v}\frac{  \left| \bar{\partial}G_3\right| {\rm{e}}^{2\sqrt{3}tv(u-k_0)} }{\left|s-k \right| }\rd u\right\rbrace \\
				&\leq Ct^{-1/4},
			\end{align*}
			and the specific calculation of $t^{-3/4}$ in the above formula is referred to Lemma \ref{lem_G_error}, where the process is the same. Here, we only provide the result.
		Therefore, we finally obtain $\left\|\partial_x {\rm{S}}_2[I]\right\|_{ L^\infty(D_3)}\leq Ct^{-1/4}$.
		\end{proof}
		
		For sufficiently large $t$, equation \eqref{EGsolve} is solvable, and its solution satisfies
		\begin{equation*}
			E^G=I+\mathcal{O}\left( t^{-\frac{1}{4}}\right).
		\end{equation*}
		
		\begin{lem}\label{lem_parxEG}
		 {For $r_2(k)\in H^{3,4}(-\infty,0)$,} there exists a constant $C$ such that for $t>0$, we have the following  estimate for $k\in \mathbb{C}\setminus B_\epsilon$
				\begin{equation*}
					\left\|\partial_x E^G(x,t,\cdot)\right\|_{L^\infty}\leq Ct^{-1/4} .
				\end{equation*}
		\end{lem}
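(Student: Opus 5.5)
The plan mirrors the proof of Lemma \ref{lem_parxER}, with ${\rm{S}}_1$ replaced by ${\rm{S}}_2$ and Lemma \ref{lem_S1_smallnorm} replaced by Lemma \ref{lem_S2_smallnorm}.

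First, by Lemma \ref{lem_S2_smallnorm}, $\|{\rm{S}}_2\|_{L^\infty\to L^\infty}\leq Ct^{-1/4}<1$ for $t$ large. Hence $\textbf{1}-{\rm{S}}_2$ is invertible on $L^\infty$, and \eqref{EGsolve} gives the Neumann series
\begin{equation*}
E^G=(\textbf{1}-{\rm{S}}_2)^{-1}I=\sum_{n=0}^{\infty}{\rm{S}}_2^n[I],
\end{equation*}
which converges uniformly for $\zeta\in\mathcal{I}$ and $k\in\mathbb{C}\setminus B_\epsilon$. The term $n=0$ is $I$, so its $x$-derivative vanishes. For $n\geq1$ I write ${\rm{S}}_2^n[I]={\rm{S}}_2[{\rm{S}}_2^{n-1}[I]]$ and differentiate under the integral sign:
\begin{equation*}
\partial_x{\rm{S}}_2[f](k)=-\frac{1}{\pi}\iint_{\mathbb{C}}\frac{\partial_xf(s)\,Y(s)+f(s)\,\partial_xY(s)}{s-k}\,\rd A(s)={\rm{S}}_2[\partial_xf](k)+{\rm{S}}_{2x}[f](k),
\end{equation*}
where ${\rm{S}}_{2x}$ is the operator with kernel $\partial_xY$.

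The key point is to bound ${\rm{S}}_{2x}$ as an operator $L^\infty\to L^\infty$, not only on $I$:
\begin{equation*}
|{\rm{S}}_{2x}[f](k)|\leq\|f\|_{L^\infty}\iint\frac{|\partial_xY(s)|}{|s-k|}\,\rd A(s).
\end{equation*}
The integral on the right is exactly the quantity estimated in the proof of the second bound in \eqref{parxS2}. There, $Y=\hat M\bar\partial G^{(1)}\hat M^{-1}$ was written in terms of $E^R,\hat M,R^{(3)},\Delta$, and $\partial_x$ falls either on those bounded factors, whose derivatives are $\mathcal{O}(t^{-1/4})$ or better, or on the exponential. In the latter case it produces the factor $|s|$, and that term was shown to be $\mathcal{O}(t^{-1/4})$. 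So $\|{\rm{S}}_{2x}\|_{L^\infty\to L^\infty}\leq Ct^{-1/4}$.

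Induction then gives $\|\partial_x{\rm{S}}_2^n[I]\|_{L^\infty}\leq n(Ct^{-1/4})^n$. Summing,
\begin{equation*}
\|\partial_xE^G\|_{L^\infty}\leq\sum_{n\geq1}n(Ct^{-1/4})^n\leq Ct^{-1/4},
\end{equation*}
and the uniform convergence of the differentiated series justifies exchanging $\partial_x$ and the sum.

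The main obstacle is the third step: checking that the $|s|$-weighted term with $\bar\partial G_3$ in $D_3$ (and its symmetric counterparts) stays $\mathcal{O}(t^{-1/4})$ uniformly in $k$. Here the decay comes only from ${\rm{e}}^{2\sqrt3 tv(u-k_0)}$ with $u<0$, and $\bar\partial G_3$ has the $|s|^{-1/2}$ singularity at the origin. I would split $|s|\leq|s-k|+|k|$, handle the $|s-k|$ part by the $L^1$ bound of Lemma \ref{lem_G_error}, and handle the $|k|$ part by the same H\"older argument as for $I'_1,I'_2$.
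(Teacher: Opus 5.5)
Your skeleton is the paper's: the Neumann series for $(\textbf{1}-{\rm S}_2)^{-1}I$, the Leibniz identity $\partial_x{\rm S}_2^n[I]={\rm S}_{2x}[{\rm S}_2^{n-1}[I]]+{\rm S}_2[\partial_x{\rm S}_2^{n-1}[I]]$, induction to $n(Ct^{-1/4})^n$, and summation. Your remark that the induction needs $\|{\rm S}_{2x}\|_{L^\infty\to L^\infty}=\sup_k\iint|\partial_xY(s)|\,|s-k|^{-1}\rd A(s)$, and not merely the bound on $\partial_x{\rm S}_2[I]$ stated in \eqref{parxS2}, is correct and sharper than the paper.

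The gap is in the step you single out. After $|s|\le|s-k|+|k|$, the $|s-k|$ piece is fine: it is the $L^1$ bound from Lemma \ref{lem_G_error}. The $|k|$ piece, however, is $|k|\iint_{D_3}|\bar\partial G_3(s)|\,|{\rm e}^{-t\Phi_{21}(s)}|\,|s-k|^{-1}\rd A(s)$. The H\"older argument used for $I_1'$, $I_2'$ bounds this integral by $Ct^{-1/4}$ with no decay in $k$, because it sees $k$ only through $|v-{\rm Im}\,k|$. You therefore get $C|k|t^{-1/4}$, which is unbounded on $\mathbb{C}\setminus B_\epsilon$. So ${\rm S}_{2x}$ is not shown to be small on $L^\infty$, and the induction does not start. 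Following the paper will not help here: in its proof of Lemma \ref{lem_S2_smallnorm} the factor $|k|$ is simply dropped in the ``$\max$'' step, and the same happens in $I_4$ of Lemma \ref{lem_S1_smallnorm}.

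The fix is not to split off $|k|$ at all. On $D_3$, write $s=u+\ri v$ and $\beta={\rm Im}\,k$. Then $|u|>v/\sqrt3$, $|{\rm e}^{-t\Phi_{21}(s)}|={\rm e}^{-2\sqrt3tv(|u|+k_0)}$, and $|s|\,|\bar\partial G_3(s)|\le c(|u|+v)^{1/2}+c(|u|+v)|r_2'(u)|$. For fixed $v$, let ${\rm e}^{-2\sqrt3tv|u|}$ absorb the growing weight.

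For the first term, $\|(|u|+v)^{1/2}{\rm e}^{-2\sqrt3tv|u|}\|_{L^p_u}\le C[(tv)^{-1/2-1/p}+v^{1/2}(tv)^{-1/p}]$ with $p>4$. Pair this with $\|(s-k)^{-1}\|_{L^q_u}\le C|v-\beta|^{-1/p}$, where $1/p+1/q=1$.

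For the second term, pair $\|ur_2'(u)\|_{L^2}$, which is finite since $r_2\in H^{3,4}$, with $\|(s-k)^{-1}\|_{L^2_u}\le C|v-\beta|^{-1/2}$.

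What remains are integrals $\int_0^\infty{\rm e}^{-2\sqrt3tvk_0}v^{-a}|v-\beta|^{-b}\rd v$ with $a,b<1$ and $a+b<1$. After the substitution $v\mapsto v/t$ these are $\le Ct^{a+b-1}$ uniformly in $\beta$. This gives $\sup_{k}\iint_{D_3}|s|\,|\bar\partial G_3|\,|{\rm e}^{-t\Phi_{21}}|\,|s-k|^{-1}\rd A(s)\le Ct^{-1/2}$, uniformly for $\zeta\in\mathcal{I}$ since $k_0$ is bounded below. With that input, your induction and summation go through as written.
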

		
		\begin{proof}
		According to Lemma \ref{lem_S2_smallnorm}, the operator $\textbf{1}-{\rm{S}}_2$ is invertible, which can be derived from equation \eqref{EGsolve}
				\begin{align*}
					E^G=(\textbf{1}-{\rm{S}}_2)^{-1}I=I+\sum_{n=0}^{\infty}{\rm{S}}_2^n[I].
				\end{align*}
				For $n\geq1$, thanks to equation \eqref{parxS2} in Lemma \ref{lem_S2_smallnorm}, the following equation holds
				\begin{align*}
					\partial_x\left(  {\rm{S}}_2^n[I]\right)  ={\rm{S}}_{2x}\left(  {\rm{S}}_2^{n-1}[I] \right)  +{\rm{S}}_2\left(  \partial_x\left(  {\rm{S}}_2^{n-1}[I] \right)  \right)\leq n\left({C}{t^{-1/4}} \right)^n,
			\end{align*}
		for all positive integers $n$ and sufficiently large $t$.
			
		In summary as $t \to \infty$,
				\begin{align*}
					\sum_{n=0}^{\infty}\partial_x\left(  {\rm{S}}_2^n[I]\right)(x,t,k)\leq \sum_{n=0}^{\infty}n\left(\frac{C}{t^{1/4}} \right)^n=Ct^{-1/4} ,
				\end{align*}
				converges uniformly in the region $\zeta\in\mathcal{I}$, and holds
				\begin{align*}
					\left\|\partial_x E^G(x,t,\cdot)\right\|_{L^\infty}=\left\|\partial_x \left(\sum_{n=0}^{\infty}{\rm{S}}_2^n[I] (x,t,\cdot)\right) \right\|_{L^\infty}=\left\| \sum_{n=0}^{\infty}\partial_x\left(  {\rm{S}}_2^n[I]\right)(x,t,\cdot)\right\|_{L^\infty}\leq Ct^{-1/4}.
			\end{align*}

		\end{proof}

		Further expanding equation \eqref{EGexpress} in the region $D_j$ $(j=1,2,\cdots,6)$ into
		\begin{equation*}
			E^G(k)=I+\frac{E^G_1}{k}+\mathcal{O}\left( \frac{1}{k^2}\right),
		\end{equation*}
		where
		\begin{equation*}
			E^G_1=\frac{1}{\pi}\iint_{D_j}E^G(s)Y(s)\rd A(s).
		\end{equation*}

		\begin{lem}\label{lem_G_error}
			For sufficiently large $t$ and $\zeta\in \mathcal{I}$, there exists a constant $C$ such that the following two estimates hold:
			\begin{align*}
				&\left|E^G_1 \right| \leq Ct^{-3/4},\\
				&\left|\partial_xE^G_1 \right| \leq Ct^{-3/4}.
			\end{align*}
		\end{lem}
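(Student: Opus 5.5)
We follow the proof of Lemma \ref{lem_R_error}, now with $E^G$, $Y$ and the regions $D_j$. By the symmetries $G_j(k)=G_{j+2}(\omega k)$ and the $\mathcal{A}$-conjugation structure of $G^{(1)}$, it suffices to treat one region; we take $D_3$, the same region used in Lemma \ref{lem_S2_smallnorm}. There $Y=M^{RH}\bar\partial G^{(1)}_3(M^{RH})^{-1}$ has a single nonzero scalar factor $\bar\partial G_3(s)\,\mathrm{e}^{-t\Phi_{21}}$. We write $M^{RH}=E^R\hat M(R^{(3)})^{-1}\Delta^{-1}$. Each factor in this product and its inverse is uniformly bounded: $E^R$ by Lemma \ref{lem_S1_smallnorm}, $\hat M$ by the small-norm analysis, $R^{(3)}$ by Lemma \ref{R_bounded}, and $\Delta$ by Proposition \ref{prop_delta}. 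Together with $\|E^G\|_{L^\infty}\le C$ from Lemma \ref{lem_S2_smallnorm}, this gives
\[
|E^G_1|\le C\iint_{D_3}|\bar\partial G_3(s)|\,\mathrm{e}^{-t\,\mathrm{Re}\,\Phi_{21}(s)}\,\mathrm{d}A(s).
\]
With $s=u+\mathrm{i}v$, $u<0$, $v>0$, the exponent satisfies $\mathrm{e}^{-t\mathrm{Re}\Phi_{21}}\le \mathrm{e}^{2\sqrt3 tv(u-k_0)}\le \mathrm{e}^{-ctv(|u|+k_0)}$. The region $D_3$ is bounded away from the stationary point $k_0>0$, so the decay is at least as good as in $V_1$.

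Next we insert the bound $|\bar\partial G_3|\le c|s|^{-1/2}+c|r_2'(\mathrm{Re}\,s)|$ from Proposition \ref{prop_G_exist}, which splits the integral into $J_1+J_2$. For $J_1$ we apply H\"older in $u$ with $2<p<4$. Since $|s|^{-1/2}\le |u|^{-1/2}$ and $\|\,|u|^{-1/2}\|_{L^p_u(-\infty,-v/\sqrt3)}\lesssim v^{1/p-1/2}$, while $\|\mathrm{e}^{-ctv|u|}\|_{L^q_u}\lesssim (tv)^{-1/q}\mathrm{e}^{-ctv^2}$, we get
\[
J_1\lesssim t^{-1/q}\int_0^\infty v^{2/p-3/2}\mathrm{e}^{-ctv^2}\,\mathrm{d}v\lesssim t^{-3/4},
\]
exactly as for $I_6$. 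For $J_2$ we apply Cauchy–Schwarz in $u$, using $r_2'\in L^2$, which gives
\[
J_2\lesssim t^{-1/2}\int_0^\infty v^{-1/2}\mathrm{e}^{-ctv^2}\,\mathrm{d}v\lesssim t^{-3/4},
\]
as for $I_7$. The near-origin part of $D_3$ needs no separate treatment, because $k_0>0$ makes the factor $\mathrm{e}^{-ctvk_0}$ available throughout.

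For $\partial_xE^G_1$ we differentiate under the integral:
\[
\partial_x(E^GY)=(\partial_xE^G)Y+E^G\,\partial_xY.
\]
The first term is bounded by $Ct^{-1/4}\cdot Ct^{-3/4}$, using Lemma \ref{lem_parxEG} and the estimate just proved. For the second term, $\bar\partial G_3$ is independent of $x$ (this was already noted in Lemma \ref{lem_S2_smallnorm}). So $\partial_x$ falls either on the conjugating factors or on the exponential.
\begin{itemize}
\item On the conjugating factors, the derivative bounds listed in the proof of Lemma \ref{lem_S2_smallnorm} give $\|\partial_x E^R\|\le Ct^{-1/4}$, $\|\partial_x\hat M\|\le Ct^{-1/2}$, $\|\partial_x\Delta\|\le Ct^{-1}$, and an exponentially small bound for $\partial_x R^{(3)}$. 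Each resulting term is at most $Ct^{-1/4}\cdot t^{-3/4}$.
\item On the exponential, $\partial_x\mathrm{e}^{-t\Phi_{21}}$ produces a factor proportional to $s$. This leads to $\iint_{D_3}|s|\,|\bar\partial G_3|\,\mathrm{e}^{-t\mathrm{Re}\Phi_{21}}\,\mathrm{d}A$.
\end{itemize}

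The last integral is the main obstacle, as it was for $I_8$ and $I_9$. We split $|s|\le |u|+v$. The $|s|^{-1/2}$ part becomes $\iint |u|^{1/2}\mathrm{e}^{-ctv|u|}$ plus a lower-order piece; substituting $w=|u|$ reduces this to the computation of $I_8$, giving $t^{-3/4}$. The $r_2'$ part uses $\|u\,r_2'(u)\|_{L^2(-\infty,0)}<\infty$, which follows from $r_2\in H^{3,4}$, and then proceeds as for $I_9$ to give $t^{-3/4}$. If the $v$-integral in the $|s|^{-1/2}$ part fails to converge at $v\to 0$, we will first extract the factor $\mathrm{e}^{-ctvk_0}$ before integrating. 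That extraction yields extra decay and closes the estimate. Collecting all terms gives $|\partial_xE^G_1|\le Ct^{-3/4}$.
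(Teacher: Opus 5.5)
You follow the paper's proof essentially step by step: region $D_3$, H\"older and Cauchy--Schwarz for $|E^G_1|$, the product rule for $\partial_xE^G_1$, and the weight $|s|$ coming from $\partial_x\mathrm{e}^{-t\Phi_{21}}=\ri\sqrt{3}\,k\,\mathrm{e}^{-t\Phi_{21}}$. Several parts are fine as written: the bound $|E^G_1|\le Ct^{-3/4}$, the terms where $\partial_x$ hits the conjugating factors, and the $r_2'$-part of the weighted integral.

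The step that fails is the $|s|^{-1/2}$-part of $\iint_{D_3}|s|\,|\bar\partial G_3|\,\mathrm{e}^{-t\,\mathrm{Re}\,\Phi_{21}}\,\rd A$. If all you use is $|\bar\partial G_3|\le c|s|^{-1/2}$, the majorant $\iint_{D_3}|s|^{1/2}\mathrm{e}^{-2\sqrt{3}tv(|u|+k_0)}\,\rd A$ is infinite. To see this, take $0<v<1/t$. The $u$-integral of $|u|^{1/2}\mathrm{e}^{-2\sqrt{3}tv(|u|+k_0)}$ over $|u|>v/\sqrt{3}$ equals $\mathrm{e}^{-2\sqrt{3}tvk_0}(2\sqrt{3}tv)^{-3/2}\int_{2tv^2}^\infty y^{1/2}\mathrm{e}^{-y}\,\rd y\ge c\,(tv)^{-3/2}$, which is not integrable at $v=0$.

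Extracting $\mathrm{e}^{-ctvk_0}$ cannot cure this, because that factor tends to $1$ as $v\to0$. The divergence comes from large $|u|$ near $\Gamma_4$, where $\mathrm{Re}\,\Phi_{21}=2\sqrt{3}\,v(k_0-u)$ degenerates, not from any lack of decay in $v$. Nor can you appeal to $I_8$. The paper's estimates of $I_8$, and of the analogous term $I_6'$ in its proof of this lemma, have the same defect. They pass from $t^{-3/2}\int_0^\infty v^{-3/2}\int_{2tv^2}^\infty\sqrt{s}\,\mathrm{e}^{-s}\,\rd s\,\rd v$ to $t^{-1}\int_0^\infty v^{-1/2}\mathrm{e}^{-2tv^2}\,\rd v$ by an integration by parts whose boundary term at $v=0$ is infinite.

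The repair is to use more of the structure of $\bar\partial G_3$ than Proposition~\ref{prop_G_exist} records. In its proof, the part of $\bar\partial G_3$ not involving $r_2'$ is $(r_2^*(\mathrm{Re}\,s)-r_2^*(0))$ times $\bar\partial$ of the angular interpolation factor. That factor's $\psi$-derivative vanishes on $\Gamma_4$; this holds for the paper's $\cos 2\psi$ and is easy to arrange in general. Write $s=\rho\,\mathrm{e}^{\ri(\pi-\phi)}$ with $\phi\in(0,\pi/3)$. Then this part is bounded by $C\phi\min\{1,\rho^{1/2}\}/\rho$. Since $v|u|=\tfrac12\rho^2\sin2\phi\ge c\rho^2\phi$, its contribution to the weighted integral is at most
\[
C\int_0^\infty\rho\min\{1,\rho^{1/2}\}\int_0^{\pi/3}\phi\,\mathrm{e}^{-ct\rho^2\phi}\,\rd\phi\,\rd\rho\le C\int_0^\infty\rho\min\{1,\rho^{1/2}\}\min\{1,(t\rho^2)^{-2}\}\,\rd\rho\le Ct^{-5/4}.
\]
Alternatively, build $G_3$ with a smooth angular cutoff identically equal to $1$ near $\Gamma_4$, as the paper does for $R_2$ and $R_5$. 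Then this part of $\bar\partial G_3$ is supported where $v\ge c|s|$, the exponential is bounded by $\mathrm{e}^{-ct|s|^2}$, and the weight $|s|$ is harmless. With either modification the rest of your argument goes through and yields $|\partial_xE^G_1|\le Ct^{-3/4}$.
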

		\begin{proof}
			For the case of the region $D_3$, let $k=\alpha+\ri\beta$ and  $s=u+\ri v$ $(u<0,v>0)$.
			 Similar to the proof of Lemma \ref{lem_S2_smallnorm}, here
			\begin{align*}
				\left| E^G_1\right| &\leq \frac{1}{\pi}\iint_{D_3} \left|E^G(s)M^{RH}(s)\bar{\partial}G^{(1)}_3(s)\left(M^{RH}(s) \right)^{-1} \right|\rd A(s) \\
				&\leq \frac{1}{\pi}\left\| E^G \right\|_{L^\infty(D_3)}\left\|M^{RH} \right\|_{L^\infty(D_3)}\left\| \left(M^{RH} \right)^{-1} \right\|_{L^\infty(D_3)}\iint_{D_3}{\left|\bar{\partial}G_3(k){\rm{e}}^{-t\Phi_{21}} \right| }\rd A(s)\\
				&\leq C \iint_{D_3} \left|s \right| ^{-1/2}
				{\rm{e}}^{2\sqrt{3}tv (u-k_0)}\rd A(s)+C \iint_{D_3}\left|r'_2({\rm{Re}}\,s) \right|
				{\rm{e}}^{2\sqrt{3}tv (u-k_0)}\rd A(s)\\
				&:=I'_4+I'_5.
			\end{align*}
			Estimate $I'_4$ using the H\"older's inequality with $2<p<4$ and $\frac{1}{p}+\frac{1}{q}=1$
			\begin{align*}
				\left| I'_4\right|&\leq   C\int_{0}^{\infty}v^{1/p-1/2}\left( \int_{-\infty}^{-\frac{\sqrt{3}}{3}v}{\rm{e}}^{2\sqrt{3}qtv (u-k_0)}\rd u\right)^{1/q} \rd v\\
				&\leq C t^{-1/q} \int_{0}^{\infty}v^{2/p-3/2}{\rm{e}}^{-2tv^2} \rd v\\
				&\leq C t^{-3/4} \int_{0}^{\infty}w^{2/p-3/2}{\rm{e}}^{-2w^2}\rd w \\
				&\leq Ct^{-3/4}.
			\end{align*}
			To estimate $I'_5$ using the Cauchy-Schwarz inequality
			\begin{align*}
				\left| I'_5\right|&\leq   C\int_{0}^{\infty}\left\|r'_2 \right\|_{L_u^2\left( -\infty,0\right)}\left( \int_{-\infty}^{-\frac{\sqrt{3}}{3}v}{\rm{e}}^{4\sqrt{3}tv (u-k_0)}\rd u\right)^{1/2} \rd v\\
				&\leq C t^{-1/2} \int_{0}^{\infty}\frac{{\rm{e}}^{-2tv^2}}{\sqrt{v}} \rd v\\
				&\leq C t^{-3/4} \int_{0}^{\infty}\frac{{\rm{e}}^{-2w^2}}{\sqrt{w}}\rd w \\
				&\leq Ct^{-3/4}.
			\end{align*}
			
		Next, based on the conclusions of Lemmas \ref{lem_S2_smallnorm} and \ref{lem_parxEG}, we will prove the second equation of this lemma.
			\begin{align*}
					\left| \partial_xE^G_1\right| &\leq \frac{1}{\pi}\iint_{D_3} \left|\partial_x\left( E^G(s)\hat{M}(s)\bar{\partial}G^{(1)}(s)\hat{M}^{-1}(s) \right)  \right|\rd A(s) \\
					&= C \iint_{D_3} \left| \partial_x\left(E^G E^R\hat{M}\left( R^{(3)}\right)^{-1}\Delta^{-1} \bar{\partial}G^{(1)}_3\Delta R^{(3)}\hat{M}^{-1}\left( E^R\right)^{-1} \right)\right|  \rd A(s)\\
					&\leq C \left\| \partial_x E^R\right\|_{L^\infty(D_3)} \iint_{D_3}  \left| \bar{\partial}G^{(1)}_3\right|  \rd A(s)+C \left\| \partial_x \hat{M}\right\|_{L^\infty(D_3)} \iint_{D_3} \left| \bar{\partial}G^{(1)}_3 \right| \rd A(s)\\
					&\quad+C \left\| \partial_x R^{(3)}\right\|_{L^\infty(D_3)} \iint_{D_3} \left| \bar{\partial}G^{(1)}_3 \right| \rd A(s)+C \left\| \partial_x\Delta\right\|_{L^\infty(D_3)} \iint_{D_3} \left| \bar{\partial}G^{(1)}_3\right|  \rd A(s)\\
					&\quad +C \left\| \partial_x E^G\right\|_{L^\infty(D_3)} \iint_{D_3} \left| \bar{\partial}G^{(1)}_3 \right| \rd A(s)+C\iint_{D_3}\left| s\right|   \left| \bar{\partial}G_3\right| {\rm{e}}^{2\sqrt{3}tv(u-k_0)} \rd A(s)\\
					&\leq Ct^{-1}+C\int_{0}^{\infty}\rd v\int_{-\infty}^{-\frac{\sqrt{3}}{3}v} \left(\sqrt[4]{u^2+v^2} +\sqrt{u^2+v^2}\left| r_2'(u)\right| \right)  {\rm{e}}^{2\sqrt{3}tv(u-k_0)} \rd u\\
					&\leq Ct^{-1}+I'_6+I'_7,
				\end{align*}
				where as $t\to\infty$
			\begin{align*}
					I'_6&:=\int_{0}^{\infty}\rd v\int_{-\infty}^{-\frac{\sqrt{3}}{3}v} \sqrt[4]{u^2+v^2}   {\rm{e}}^{2\sqrt{3}tv(u-k_0)} \rd u\\
					&\leq C\int_{0}^{\infty}\rd v\int_{-\infty}^{-\frac{\sqrt{3}}{3}v} \sqrt[4]{u^2+v^2}   {\rm{e}}^{2\sqrt{3}tvu} \rd u\\
					&\leq C\int_{0}^{\infty}\rd v\int_{\frac{\sqrt{3}}{3}v}^{\infty} \sqrt{w}  {\rm{e}}^{-2\sqrt{3}tvw} \rd w+Ct^{-1}\int_{0}^{\infty}v^{-1/2} {\rm{e}}^{-2tv^2}\rd v\\
					&\leq Ct^{-3/2} \int_{0}^{\infty}v^{-3/2}\rd v\int_{2tv^2}^{\infty} \sqrt{s} {\rm{e}}^{-s} \rd s+Ct^{-5/4}\\
					&\leq C t^{-1} \int_{0}^{\infty}v^{-1/2} {\rm{e}}^{-2tv^2} \rd v+Ct^{-5/4}\\
					&\leq Ct^{-5/4},
				\end{align*}
				and
				\begin{align*}
					I'_7&:=\int_{0}^{\infty}\rd v\int_{-\infty}^{-\frac{\sqrt{3}}{3}v} \sqrt{u^2+v^2}  \left| r_2'(u)\right| {\rm{e}}^{2\sqrt{3}tv(u-k_0)} \rd u\\
					&\leq C \int_{0}^{\infty}\rd v\int_{\frac{\sqrt{3}}{3}v}^{\infty} \sqrt{w^2+v^2}  \left| r_2'(-w)\right| {\rm{e}}^{-2\sqrt{3}tv(w+k_0)} \rd w\\
					&\leq C \int_{0}^{\infty}\rd v\int_{\frac{\sqrt{3}}{3}v}^{\infty} w \left| r_2'(-w)\right| {\rm{e}}^{-2\sqrt{3}tvw} \rd w\\
					&\leq C \left\|ur_2'(u)\right\|_{L^2(-\infty,0)}\int_{0}^{\infty}\left( \int_{\frac{\sqrt{3}}{3}v}^{\infty}  {\rm{e}}^{-4\sqrt{3}tvw} \rd w\right)^{1/2}\rd v\\
					&\leq C t^{-3/4}.
				\end{align*}
				Combining the above two equations, we have proved $\left|\partial_xE^G_1 \right|\leq Ct^{-3/4} $.
		
			So far, the proof for region $D_3$ has been completed. Similar proofs can also be carried out for the other several regions.
		\end{proof}

		\section{Long-time saymptotics of $u(x,t)$}\label{sec_u}
		\ \ \ \
			Recall the reconstruction formula \eqref{reconst}, we have
		 {\begin{align*}
			u(x,t)&=-\frac{3}{2}\frac{\partial}{\partial x}\left(\lim_{k\to\infty}k \left(N_3(x,t,k)  -1 \right) \right)\\
			&=-\frac{3}{2}(\omega,\omega^2,1)\frac{\partial}{\partial x}\left[ \lim_{k\to\infty}k \left(M(x,t,k)  -I\right) \right] _{3}.
		\end{align*}}
		Based on the several transformations, we performed on the eigenfunction $M(x,t,k)$ of the RH problem \ref{rhp_M} in the preceding text
		\begin{align*}
			M=E^{G}E^{R}\hat{M}(R^{(3)})^{-1}\Delta^{-1}\left( G^{(1)}\right) ^{-1}.
		\end{align*}
		The series of preparations made above are all aimed at obtaining the following results. Overall, through the calculations, we have obtained
		 {\begin{align*}
		\frac{\partial}{\partial_x}\lim_{k\to\infty}k \left(M(x,t,k)  -I\right)&=\partial_xE^{G}_1+\partial_xE^{R}_1+\partial_x\hat{M}_1+\partial_x\lim_{k\to\infty}k \left( \Delta^{-1}-I\right)\\
		&\quad +\partial_x\lim_{k\to\infty}k \left( \left( R^{(3)}\right)^{-1}-I\right) +\partial_x\lim_{k\to\infty}\left( \left( G^{(1)}\right) ^{-1}-I\right) \\
		&=\partial_x\hat{M}_1+\mathcal{O}(t^{-3/4}),
		\end{align*}}
		and
		 {\begin{align*}
			u(x,t)&=-\frac{3}{2}	\partial_x\left[(\omega,\omega^2,1) \hat{M}_1(x,t)\right]_3+\mathcal{O}\left(t^{-3/4}\right)\\ &=-\frac{3^{5/4}k_0\sqrt{\nu}}{\sqrt{2t}}\sin\left(\frac{19\pi}{12}+\nu\ln \left(6\sqrt{3}tk_0^2 \right) -\sqrt{3}tk_0^2-\arg q-\arg\Gamma\left(i\nu \right)  \right.\\
			&\quad + \left.\frac{1}{\pi} \int_{k_0}^{\infty}\ln\left(\frac{\left|s-k_0 \right| }{\left| s-\omega k_0\right| } \right)\rd \ln\left(1-\left|r_1(s) \right|^2  \right)  \right)+\mathcal{O}\left(t^{-3/4}\right),
		\end{align*}}
		uniformly for $\zeta\in\mathcal{I}$. The proof of Theorem \ref{theo_asymptotics} is completed.

		\bigskip
		
		\noindent{\bf Data availability}
		Data sharing not applicable to this article as no data sets were generated or analysed during the current study.
		
		\subsection*{Statements and Declarations}
		
		\noindent{\bf Competing Interests}
		The authors declare that they have no conflict of interest.

		\end{document}